\def\draft{0} 
\def\anonymous{0}
\def\toc{0} 
\def\iacr{1}
\newcommand{\norm}[1]{}
\crefname{claim}{Claim}{Claims}
\crefname{construct}{Construction}{Constructions}
\newcommand{\hildd}[1]{\ensuremath{\mathscr{#1}}}
\newcommand{\botOWF}{\bot\text{-}\mathsf{OWF}}
\newcommand{\botUOWHF}{\bot\text{-}\mathsf{UOWHF}}
    \newcommand{\louis}[1]{{\color{red}\textbf{Louis}: #1}}
    \newcommand{\mo}[1]{{\color{blue}\textbf{Mohammed}: #1}}
    \newcommand{\amit}[1]{{\color{green}\textbf{Amit}: #1}}
    \newcommand{\anote}[1]{{\color{green}\textbf{Amit}: #1}}
    \newcommand{\onote}[1]{{\color{orange}\textbf{Or}: #1}}
    \newcommand{\lnote}[1]{{\color{magenta}\textbf{Lior}: #1}}
    \newcommand{\louis}[1]{}
    \newcommand{\mo}[1]{}
    \newcommand{\amit}[1]{}
    \newcommand{\anote}[1]{}
    \newcommand{\onote}[1]{}
    \newcommand{\lnote}[1]{}
\newcommand{\UOWHF}{\ensuremath{\textsf{UOWHF}}}
\newcommand{\OWF}{\ensuremath{\textsf{OWF}}}
\newcommand{\PDOWF}{\ensuremath{\textsf{PD-OWF}}}
\renewcommand{\qed}{{\hfill{$\blacksquare$}}}
\DeclareMathAlphabet{\mathpzc}{OT1}{pzc}{m}{it}
\newcommand{\PDPRG}{\mathsf{PD}\text{-}\mathsf{PRG}}
\newcommand{\pdprg}{{pseudodeterministic}\text{ }{pseudorandom generator}\xspace}
\newcommand{\Klam}{\mathcal{K}_\secpar}
\newcommand{\Gvote}{G^{\textsf{vote}}}
\newcommand{\PRS}{\mathsf{PRS}}
\newcommand{\PRG}{\mathsf{PRG}}
\newcommand{\PRF}{\mathsf{PRF}}
\newcommand{\SPRS}{\mathsf{SPRS}}
\newcommand{\LPRS}{\mathsf{LPRS}}
\newcommand{\botPRG}{\bot\text{-}\mathsf{PRG}}
\newcommand{\botPRF}{\bot\text{-}\mathsf{PRF}}
\newcommand{\botPR}{\bot\text{-pseudorandomness}}
\newcommand{\PRFS}{\mathsf{PRFS}}
\newcommand{\isbot}{\mathsf{Is}\text{-}\bot}
\newcommand{\expt}{\mathsf{Expt}}
\newcommand{\distingexpt}{\mathsf{Distinguish}\text{-}\expt}
\newcommand{\prfdistingexpt}{\distingexpt^\mathcal{F}_D}
\newcommand{\ch}{\mathsf{Ch}}
\newcommand{\hybrid}{\mathsf{Hybrid}}
\newcommand{\treeroot}{\mathsf{Root}}
\newcommand{\treeprf}{\mathsf{Tree}\text{-}\mathsf{PRF}}
\newcommand{\XX}{\mathbb{X}}
\newcommand{\EE}{\mathbb{E}}
\newcommand{\Xx}{\mathbf{X}}
\newcommand{\Yy}{\mathbf{Y}}
\newcommand{\Zz}{\mathbf{Z}}
\newcommand{\Cc}{\mathbf{C}}
\newcommand{\gprob}{\mathsf{guess}\text{-}\mathsf{prob}}
\newcommand{\skgen}{\mathsf{SKGen}}
\newcommand{\pkgen}{\mathsf{PKGen}}
\renewcommand{\enc}{\mathsf{Enc}}
\renewcommand{\dec}{\mathsf{Dec}}
\newcommand{\ct}{\mathsf{ct}}
\newcommand{\qpke}{\text{tamper-proof quantum public key encryption}}
\newcommand{\QPKE}{\mathsf{QPKE}}
\newcommand{\cpatamp}{\mathsf{IND}\text{-}\mathsf{pkT}\text{-}\mathsf{CPA}}
\newcommand{\PDPRF}{\mathsf{PD}\text{-}\mathsf{PRF}}
\newcommand{\BPRG}{\mathrm{BPRG}}
\spnewtheorem{construct}{Construction}{\bfseries}{\itshape}
\spnewtheorem*{thm}{Theorem}{\bfseries}{\itshape}
\spnewtheorem*{lem}{Lemma}{\bfseries}{\itshape}
\begin{document}
\title{
Signatures From Pseudorandom States via \texorpdfstring{$\bot$}{bot}-PRFs
}
\ifnum\anonymous=0
    \author{Mohammed Barhoush\inst{1} \and Amit Behera \inst{2} \and Lior Ozer\inst{2} \and Louis Salvail\inst{1} \and Or Sattath\inst{3}}
    \authorrunning{Barhoush et al.}
    %
     \institute{Universit\'e de Montr\'eal (DIRO), Montr\'eal, Canada\\   \email{mohammed.barhoush@umontreal.ca}, \email{salvail@iro.umontreal.ca}  \and
     Department of Computer Science, Ben-Gurion University of the Negev, Beersheba, Israel \\
     \email{behera@post.bgu.ac.il}, \email{lioroz@post.bgu.ac.il}, \email{sattath@bgu.ac.il}}
\else

    \author{}
    \institute{}
\fi
        
\maketitle

\begin{abstract}

Different flavors of quantum pseudorandomness have proven useful for various cryptographic applications, with the compelling feature that these primitives are potentially weaker than post-quantum one-way functions. 
Ananth, Lin, and Yuen (2023) have shown that logarithmic pseudorandom states can be used to construct a pseudo-deterministic $\PRG$ : informally, for a fixed seed, the output is the same with $1-1/poly$ probability.

In this work, we introduce new definitions for $\botPRG$ and $\botPRF$. The correctness guarantees are that, for a fixed seed, except with negligible probability, the output is either the same (with probability $1-1/poly$) or recognizable abort, denoted $\bot$. Our approach admits a natural definition of multi-time $\PRG$  security, as well as the adaptive security of a $\PRF$. We construct a $\botPRG$ from any pseudo-deterministic $\PRG$ and, from that, a $\botPRF$.

Even though most Minicrypt primitives, such as symmetric key encryption, commitments, MAC, and length-restricted one-time digital signatures, have been shown based on various quantum pseudorandomness assumptions, digital signatures remained elusive. Our main application is a (quantum) digital signature scheme with classical public keys and signatures, thereby addressing a previously unresolved question posed in Morimae and Yamakawa's work (Crypto, 2022). Additionally, we construct CPA secure public-key encryption with tamper-resilient quantum public keys.



 \keywords{Quantum Cryptography \and Quantum Pseudorandomness  \and Pseudodeterminism \and Pseudorandom States \and Digital Signatures}
\end{abstract}
  
\ifnum\toc=0
    \setcounter{secnumdepth}{3}
    \setcounter{tocdepth}{3}
    \newpage
    \tableofcontents 
    \newpage
\fi

\onote{Todo:
}
  
\anote{For next version:
\begin{enumerate}
\item Mo: Remove colon notation and add a remark to clarify that bot-functions are QPT. 
\item Fix definition of F. 
    \item Add the construction of CPA symmetric encryption from $\botPRF$.
    \item \textbf{Generalization lemma for bot primitives that takes care of repetitions as well.} I think it could be possible to formulate a generalization lemma for the $\botPRF$ and $\botPRG$. Something along the form that if there is an experiment $\exp^F$ which remains secure when $F$ is instantiated with a classical pseudorandom function, then there exists a repeated version of the experiment ${\exp'}^F$, that is secure when $F$ is instantiated with a $\botPRF$.
    Currently, we construct digital signatures from a one-time strongly unforgeable digital signature that we build from PD-OWHFs and then use the $\botPRF$ to argue security for the digital signature scheme and then amplify the correctness by repeating several times. I think currently we use the security of the one-time strongly unforgeable digital signature and the $\botPRF$ guarantees together. However, if we can separate these two steps out, then it could be possible to use the generalization lemma to directly do the step involving the $\botPRF$ guarantees and the amplification. 
    \item Equivalence between $\botPRG$ and $\SPRS$. We can use the output of the $\botPRG$ as an inverse to the extractor, and hopefully, have the tomography of the quantum state. Given the classical description of the state, it should be easy to construct the state efficiently (since it has only logarithmic size).  
    \item Length extensions of $\botPRG$. Seems easy: the classical trick where we apply the PRG again and against seems to work.
    \item Length extension of $\SPRS$. This can be done by using the two results above: use the equivalence to construct a $\botPRG$ from a $\SPRS$; apply a length extension; go back to a $\SPRS$ of longer length using the equivalence.
    \item Constructing $\LPRS$ from $\botPRF$. This is still wide open. But this is a great motivation for security with superposition queries. If this was the case, we could have used the JLS construction.
    \item We know PD-PRG implies bot-PRG. But what about the other way? Yes, it does, by simply treating bot as all zeroes and amplifying the resulting weak $\PDPRG$. And what about $\botPRF$ implies $\PDPRF$? Probabyly yes as well, because the definition of $\PDPRF$ is so weak, and therefore, Yao's XOR lemma or its variants could be applies as well, by replacing the bots with all 0's. 
\end{enumerate}

}
\anote{Self reminder:
\begin{enumerate}
    \item \sout{Move the sections around as decided to see if it fits within 30 pages.}
    \item \sout{Remember to cross check that all the notations such as $\prg_b$ used in the bot prf section is properly defined.}
    \item \sout{Define the parallel repetition game for the multi-time pseudorandomness of $\botPRG$.}
    \item \sout{Lior: Check if the name $\botPRG$ and $\botPRF$ is used everywhere.}
    \item \sout{Make sure that all the properties/result I used in the GGM proof have been stated or proven in the $\bot$ $\PRG$  section. }
    \item \sout{Check the anotes for self reminders.}
\end{enumerate}

}
\section{Introduction}
Introduced by Ji, Liu, and Song in \cite{JLS18}, pseudorandom quantum state ($\PRS$) generators capture pseudorandomness in the quantum realm similar to pseudorandom generators ($\PRG$) in the classical setting \cite{BM84}.

Intuitively, a $m$-$\PRS$ generator \cite{JLS18} is a quantum polynomial-time algorithm that takes an input $k\in \{0,1\}^\lambda$ and outputs an $m$-qubit state $\ket{\phi_k}$ that is indistinguishable from a random Haar state, even when provided with 
polynomially many copies of $\ket{\phi_k}$. We consider three $\PRS$  sizes: 
\begin{enumerate}
    \item $m=c\cdot \log(\lambda)$ with $c\ll 1$.
    \item $m=c\cdot \log(\lambda)$ with $c\geq 1$, which we call \emph{short pseudorandom states ($\SPRS$s)}.
    \item $m=\Omega (\lambda)$, which we call \emph{long pseudorandom states ($\LPRS$s)}.
\end{enumerate} 

These different cases behave quite differently and provide different applications. The first type of $\PRS$s can be constructed unconditionally~\cite{ZO20} and are cryptographically inept, similar to how $\PRG$s of output size logarithmic in input size. However, despite the small output size, this is not the case for the second type, i.e., $\SPRS$s because many cryptographic primitives such as symmetric encryption and bit commitments with classical communications can be built from $\SPRS$s~\cite{ALY23}. Finally, $\LPRS$s have shown numerous applications including quantum pseudo-encryption, quantum bit-commitments, length-restricted one-time signatures with quantum public keys, and private-key quantum money \cite{JLS18,AQY22,MY22}.  
 Unlike the classical analogues, the relationship between $\SPRS$ and $\LPRS$ is not well understood since no implication is known in either direction.

In relation to $\PRG$s, both $\LPRS$s and $\SPRS$s can be constructed from $\PRG$s \cite{JLS18,ZO20}. In the other direction, Kretschmer~\cite{K21} showed a black-box separation between $\PRG$s and $\LPRS$s\footnote{The separation is with respect to post-quantum one-way function, but since any $\PRG$ is also a one-way function, the separation holds for $\PRG$s as well.}, but this separation does not imply a separation between $\PRG$s and $\SPRS$s. 

Despite the plethora of cryptographic application of $\PRS$, a significant gap persists between the cryptographic applications of $\PRG$s and these quantum pseudorandom notions in cryptography. Specifically, (many-time) Digital Signatures (DS), arguably one of the most fundamental primitives in cryptography, can be based on $\PRG$s \cite{CW97,BGH23}, but there are no construction of DS based solely on $\PRS$s (of any size). Concurrent to our work, it was shown in~\cite{CM24} that DS\footnote{The separation in~\cite{CM24} also works for the case of digital signatures with quantum public keys.} cannot be constructed in a black-box manner from $\LPRS$s. This leaves the question of whether we can construct DS  based on $\SPRS$s, which is the main question that we study in this work.

\subsection{Our Results}
 Since DS are known to be separated from $\LPRS$s~\cite{CM24}, it is natural to ask why it might be possible to build it from $\PRS$ from smaller output size such as $\SPRS$. So what makes $\SPRS$ more useful than $\LPRS$ in constructing DS? The main advantage that $\SPRS$ has over $\LPRS$ is that the logarithmic output size of the $\SPRS$ unlocks the possibility to deterministically extract randomness by doing efficient tomography on the random state using only polynomial number of copies of the state. This property of $\SPRS$ was utilized in the work of Ananth, Lin, and Yuen~\cite{ALY23} to generate pseudorandom strings, yielding a cryptographically useful primitive known as pseudodeterministic $\PRG$s ($\PDPRG$) \cite{ALY23}. These functions have classical inputs and outputs yet incorporate quantum computations, which imply that the output is pseudo-deterministic (see \cref{def:aly23}). Similarly, they define pseudodeterministic $\PRF$s ($\PDPRF$) and show a construction based on logarithmic size $\PRFS$. 

Our work further studies the cryptographic application $\PDPRG$ to achieve similar applications as those obtained from $\PRG$s, including DS. To this end, we first address the following question: under what circumstances can a $\PRG$  be replaced with a $\PDPRG$? 
Unfortunately, in many constructions involving $\PRG$s,  
the pseudodeterminism is problematic and prevents us from substituting $\PRG$s with $\PDPRG$. While completely eliminating the pseudodeterminism appears difficult, we show that the non-deterministic evaluations, i.e., those that are obtained with small probability, can be recognized. This enables us to replace these evaluations with $\perp$, naturally leading to the novel notion of $\prg$ with recognizable abort\footnote{The term \emph{recognizable abort} was used to describe a similar phenomenon in~\cite{AQY22}, which we follow.}, denoted $\botPRG$. These functions, except with negligible probability, yield fixed outputs on the majority of inputs, and for the rest, the output is either some fixed value or $\bot$. 
We construct $\botPRG$ from $\PDPRG$ with multi-time $\botPR$ guarantees, which ensures security even if the same seed is evaluated multiple times (see the technical overview for further details), giving the following result:

\begin{theorem}[Informal version of \cref{thm:pdprg_implies_botPRG}] $\PDPRG$ implies $\botPRG$.
\label{thm:informal_pd_prg_implies_bot_prg}
\end{theorem}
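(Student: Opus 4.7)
The plan is to construct a $\botPRG$ from a $\PDPRG$ by an amplification-and-majority-vote scheme. Given a $\PDPRG$ $G$ with correctness $1 - 1/q(\lambda)$ for a sufficiently large polynomial $q$ (which, if not already the case, can be obtained by standard correctness amplification on $\PDPRG$s), define $G'(k)$ as follows: run $G(k)$ independently $t = t(\lambda) = \poly(\lambda)$ times to obtain outputs $y_1, \ldots, y_t$; if some value $y$ appears at least, say, $3t/4$ times, output that $y$; otherwise output $\bot$. The intent is that the majority vote will identify the canonical value $G^*(k)$ on most seeds, while on atypical executions the construction either still recovers $G^*(k)$ or recognizes its failure via the absence of a dominant plurality and outputs $\bot$.

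The correctness/recognizable-abort analysis I would carry out in two steps. First, call a seed $k$ \emph{good} if $G(k)$ equals its canonical value $G^*(k)$ with probability at least $1 - 1/q(\lambda)$; by the $\PDPRG$ definition all but a $1/\poly$ fraction of seeds are good. For good $k$, Chernoff bounds show $G^*(k)$ appears at least $3t/4$ times in $t$ independent trials with all but negligible probability, so $G'(k) = G^*(k)$; by the same bound (and a union bound over at most $t$ distinct sampled values), no other value reaches the $3t/4$ threshold except with negligible probability. Hence on good seeds $G'(k) = G^*(k)$ with overwhelming probability, and on non-good seeds $G'(k) \in \{G^*(k), \bot\}$ with overwhelming probability, which is exactly the $\botPRG$ correctness guarantee.

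For pseudorandomness, I would argue by reduction. Any distinguisher against $G'$ that makes polynomially many evaluation calls on a challenge seed $k$ can be simulated by one that makes $t \cdot \poly(\lambda)$ calls to $G(k)$ and applies the majority-vote post-processing itself. Since a random $k$ is good with probability $1 - 1/\poly$, with overwhelming probability each of the polynomially many $G'$-evaluations returns $G^*(k)$; thus distinguishing the multi-evaluation output of $G'$ from a single truly random string (padded with a distinguished-from-$\bot$ flag) reduces to distinguishing $G^*(k)$ from uniform, which is precisely the pseudodeterministic pseudorandomness of $G$. The multi-time $\botPR$ game is handled simultaneously because the same canonical value $G^*(k)$ is recovered across all queries.

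The main obstacle I anticipate is ensuring the amplification works uniformly in the multi-time setting: a naive analysis might lose because \emph{per-query} failure probabilities accumulate, and because the $\PDPRG$'s $1/q$ failure bound is a priori only over a single invocation with fresh randomness. I would handle this by choosing $t$ large enough that each individual $G'$-call succeeds except with probability negligible in $\secpar$, and then union-bounding over the polynomially many adversarial queries; the parameter $q$ must be chosen after fixing the adversary's query polynomial, which in turn requires the preliminary $\PDPRG$ correctness amplification to be set to an \emph{arbitrarily small} inverse-polynomial error. I would state and apply this preliminary amplification as a separate lemma before presenting the main construction.
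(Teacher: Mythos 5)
Your voting construction matches the paper's first step (the paper uses a $60\%$ threshold and a short combinatorial argument to verify the recognizable-abort property; your $3t/4$ threshold is an equivalent parameter choice), but the pseudorandomness argument has a genuine gap. The $\PDPRG$ that is actually available from $\SPRS$ (and the one the paper assumes) is only \emph{weakly} pseudorandom, with a $1/\poly$ distinguishing advantage, and moreover a $1/\poly$ fraction of seeds are bad. Both of these are inverse-polynomial, not negligible, and your reduction inherits both errors: the claim that ``with overwhelming probability each of the polynomially many $G'$-evaluations returns $G^*(k)$'' is false, since already with probability $1/\poly$ the sampled seed is bad. Consequently your argument gives at best an inverse-polynomial bound on the distinguisher's advantage in the multi-time game, whereas the $\botPRG$ definition requires a negligible bound. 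Your acknowledgment at the end that one should amplify the $\PDPRG$ ``to an arbitrarily small inverse-polynomial error'' does not close this gap: arbitrarily small inverse-polynomial is still not negligible, and the definition quantifies over all polynomial-time distinguishers, so the error must vanish faster than every polynomial. It is also not clear that the ``standard correctness amplification'' you gesture at exists for $\PDPRG$s in a form that preserves (weak) pseudorandomness.

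There is a second, subtler problem with the claim that ``the multi-time $\botPR$ game is handled simultaneously because the same canonical value $G^*(k)$ is recovered across all queries.'' The paper's technical overview exhibits a concrete $\botPRG$ that is single-time $\bot$-pseudorandom but not multi-time $\bot$-pseudorandom: even though the $\bot$-pattern is identical in the real and ideal cases, polynomially many repeated evaluations let the distinguisher estimate the $\bot$-\emph{frequency}, which conditions the distribution of the non-$\bot$ output of the key and can create a constant distinguishing advantage. Your argument does not address this leakage. The paper's proof closes both gaps in its second step: it applies an XOR construction $G^{\bot\oplus}(k_1,\ldots,k_\secpar)=\bot\bigoplus_i G(k_i)$ with $\secpar$ independent seeds and proves (via an adaptation of Levin's version of Yao's XOR lemma to the quantum-advice setting, and a careful hybrid argument that splits the good and bad seeds, treats the good block as a one-time pad, and isolates the $\bot$-pattern from the non-$\bot$ value) that the result is a multi-time $\bot$-pseudorandom $\botPRG$ with negligible advantage. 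This amplification step is the technical heart of the theorem and is entirely missing from your proposal.
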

An important example where the $\PRG$s are invoked on structured inputs is the GGM construction~\cite{GGM86}, which constructs a $\PRF$  from a $\PRG$. Indeed, we use the multi-time $\botPR$ of our $\botPRG$, and show for the first time a similar reduction:
\begin{theorem}[Informal version of \cref{thm:bot_prf_from_bot_prg}]
$\botPRG$ implies $\botPRF$.    \label{thm:informal-bot_prf_from_bot_prg}
\end{theorem}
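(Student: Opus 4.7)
The plan is to follow the classical Goldreich-Goldwasser-Micali (GGM) tree construction, with $\botPRG$ playing the role of the length-doubling $\PRG$. Assuming (via a length-extension step for $\botPRG$) a length-doubling $\botPRG$ $G : \{0,1\}^\lambda \to \{0,1\}^{2\lambda}$, parse its output as $G(s) = G_0(s) \| G_1(s)$. For input $x = x_1 \dots x_n \in \{0,1\}^n$ and key $k$, define $k_0 = k$ and $k_i = G_{x_i}(k_{i-1})$, and output $F_k(x) = k_n$, with the convention that $\bot$ propagates through any subsequent evaluation. This matches the standard GGM syntax and immediately inherits the efficient recursive evaluation.

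For the $\bot$-correctness of $F_k$, fix a seed $k$ and an input $x$. Along the path from the root to the leaf labeled by $x$, there are $n = \poly(\lambda)$ internal nodes. By the $\botPRG$ guarantee, each node's evaluation either equals a single fixed non-$\bot$ string (with overwhelming probability over the $\botPRG$'s internal randomness) or equals $\bot$; taking a union bound over the $n$ nodes on the path shows that $F_k(x)$ is either a fixed string or $\bot$ with $1 - \negl(\lambda)$ probability, and the fraction of inputs where the "$\bot$ mode" is triggered with non-negligible probability is itself negligible, giving the $\botPRF$ correctness.

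For security, I would use a hybrid argument over the depth of the tree. In hybrid $H_i$, every node at depth $\leq i$ that is touched by any adversarial query is relabeled with a freshly, lazily sampled uniform $\lambda$-bit string (treated as an "ideal" seed), while the subtree below such a node is still computed by $G$. Hybrid $H_0$ is the real $\botPRF$; $H_n$ is an oracle returning independently pseudorandom leaves (up to recognizable aborts), which is in turn indistinguishable from a $\botPRF$-flavored random function. The key step is $H_{i-1} \approx H_i$: a reduction to the multi-time $\botPR$ security of $G$ replaces the at-most-polynomially-many depth-$(i{-}1)$ seeds that the adversary's queries actually reach by its $\botPRG$ challenge oracle, and routes each subsequent descendant query through the challenger. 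Since a single depth-$(i{-}1)$ seed may be evaluated many times across different adversarial queries (for the two child directions $G_0, G_1$ and across distinct inputs sharing the same depth-$(i{-}1)$ prefix), this is precisely where the multi-time flavor of $\botPR$ (Theorem~\ref{thm:informal_pd_prg_implies_bot_prg}) is indispensable, rather than a single-shot variant.

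The main obstacle I anticipate is the interaction between $\bot$ propagation and adaptive queries. Unlike the deterministic classical GGM, here each node's label carries both a pseudorandom value and a possible $\bot$ event, and the adversary's view on query $x$ depends on the $\bot$ pattern across all $n$ nodes on $x$'s path \emph{and} on any sibling nodes that were materialized for earlier queries. I would address this by ensuring each hybrid step replaces a $\botPRG$-distributed pair $(G_0(s), G_1(s))$ with an independently drawn pair from the \emph{same} $\bot$-marginal distribution (as delivered by the multi-time challenger), so consecutive hybrids match exactly up to the $\botPR$ distinguishing advantage and no distribution on the $\bot$ pattern changes artificially. A standard adaptive-GGM accounting (charging each adversary query to the unique node it introduces at depth $i-1$, for a total of at most $q \cdot n$ hybrid steps with $q$ the query bound) then closes the argument, with a final correctness-amplification step if the raw $\botPRF$ $\bot$-rate needs to be driven below a target threshold.
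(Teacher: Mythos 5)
Your high-level plan matches the paper's: instantiate GGM with the $\botPRG$, argue correctness, and prove pseudorandomness by a depth-indexed hybrid that swaps $\botPRG$ outputs for $\isbot$-filtered random strings using the multi-time $\botPR$ of $G$. Your security sketch is essentially the paper's (they use $n$ hybrids paired with the $t$-fold parallel multi-time game, you propose roughly $q\cdot n$ finer-grained hybrids reducing to a single-instance game; these are interchangeable by the standard parallel-repetition hybrid).

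However, there is a genuine gap in your correctness argument. You justify the ``pointwise far from $\bot$'' property with a union bound: along the path from the root to leaf $x$ there are $n$ nodes, each failing with probability at most $\mu + \negl$, so the total $\bot$-probability is roughly $n\mu$. That bound is only available when the seed feeding each node is \emph{uniformly random}: the $\botPRG$ guarantee says that a $(1-\mu)$ fraction of uniformly random seeds are good, and that good seeds abort with negligible probability. But after the first level, $k_1 = G_{x_1}(k_0)$ is \emph{not} uniformly distributed --- it is the output of the $\botPRG$ --- and there is no a priori reason that a $(1-\mu)$ fraction of $G$'s outputs land in the good set $\mathcal K_\lambda$ (membership in $\mathcal K_\lambda$ is not even an efficiently testable event). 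The paper closes this gap by an induction whose step invokes the \emph{single-time pseudorandomness} of the underlying $\botPRG$: it introduces a hybrid replacing $k_1$ by a fresh uniform string $U$ (multiplied by the $\isbot$ filter so the $\bot$ pattern is preserved), argues that any noticeable change in $\Pr[\bot]$ across that swap would yield a $\botPR$ distinguisher (since ``the output equals $\bot$'' is efficiently checkable), and only then applies the inductive bound for uniformly random seeds, losing one pseudorandomness error $\epsilon$ per level. This is precisely why the paper remarks that the correctness proof is ``somewhat non-standard'' and already uses the security of the $\botPRG$; your sketch omits this step, and without it the stated bound on the $\bot$-error does not follow.

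A second, smaller imprecision: the quantifiers in your correctness claim are inverted. The $\botPRF$ correctness condition is: for \emph{every} input $x$, $\Pr_k[f_k(x)=\bot]\leq\mu'$. Your phrasing ``the fraction of inputs where the $\bot$ mode is triggered \ldots is negligible'' is a statement averaged over inputs, which is a different (and weaker) claim, and also calls the bound negligible whereas the construction actually yields a $1/\poly$ $\bot$-error (negligible only after a later amplification step in the applications, not inside the $\botPRF$ itself).
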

Just like the definition of $\botPRG$ manages to incorporate multi-time $\botPR$, the definition of $\botPRF$ provides security even against adaptive queries. On the contrary, the existing pseudorandomness definition of $\PDPRF$~\cite{ALY23} is nonadaptive, thereby restricting the cryptographic applications of $\PDPRF$. We remove these restrictions to achieve new applications.

Furthermore, $\botPRG$ are, in some sense, better well-behaved than $\PDPRG$. This enables us to use them to build $\botUOWHF$s. Our proof follows a similar avenue to the deterministic construction given in \cite{HHR+10}, however, non-trivial adaptations and new notions of entropy are required to deal with the non-determinism of $\bot$ evaluations. \ifnum\iacr=1 At the end, we prove that $\botPRG$ implies $\botUOWHF$. \else We obtain the following result:

\begin{theorem}[Informal version of \cref{thm:uowhf}]
    $\botPRG$ implies $\botUOWHF$.
\end{theorem}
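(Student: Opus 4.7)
\medskip
\noindent\textbf{Proof plan.} The plan is to adapt the inaccessible-entropy framework of \cite{HHR+10} to the $\bot$-setting. In the classical construction, one starts from a PRG (or more generally an OWF) and first extracts an \emph{inaccessible-entropy generator}---a generator whose output distribution carries Shannon entropy that an efficient adversary cannot faithfully sample---and then compiles it into a UOWHF via parallel repetition, universal hashing, and a standard hybrid argument. I would mimic both stages, but with a new entropy notion that treats $\bot$ as a recognizable signal rather than as random noise.

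First, I would introduce a $\bot$-aware analogue of real and accessible entropy. For a $\botPRG$ $G$, the \emph{$\bot$-real entropy} is the entropy of $G(s)$ contributed by non-$\bot$ outputs, and the \emph{$\bot$-accessible entropy} caps how much of this entropy an efficient adversary can reproduce when asked to output a preimage consistent with a prescribed output. The multi-time $\botPR$ guarantee inherited from \cref{thm:informal_pd_prg_implies_bot_prg} is the key tool to show a polynomial gap between these quantities: an adversary producing inconsistent preimages could be converted into a distinguisher against the pseudodeterministic behavior of $G$ by re-evaluating on the same seed.

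Next, I would compile the resulting $\bot$-inaccessible-entropy generator into a $\botUOWHF$ along the lines of the \cite{HHR+10} blueprint. Concretely, the hash is evaluated on $t$ parallel copies of $G$, composed with a pairwise-independent hash function keyed by the UOWHF index, and returns $\bot$ whenever any copy aborts. The parallel repetition plays a dual role: it amplifies the entropy gap so that the final compressive hash can absorb the accessible entropy while leaving the real entropy large enough for statistical collision-resistance, and it drives the per-instance $1-1/\mathrm{poly}$ correctness of the $\botPRG$ to negligible aborting. A target-collision finder on the compiled hash would then yield an adversary violating $\bot$-accessible entropy on a random block, contradicting the previous step.

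The main obstacle I anticipate is formalizing the $\bot$-aware entropy notion so that it is simultaneously strong enough to drive the compiler and actually implied by the $\botPR$ of $G$. The accessible-entropy adversary can exploit $\bot$ as a signal---for instance by refusing to answer on aborted seeds---so the accounting must carefully separate honest $\bot$ events from adversarial $\bot$ events used to sidestep producing a valid second preimage. Moreover, the compiler evaluates $G$ on inputs that depend on the adversary's committed target, so the security reduction reuses the same seed across multiple stages of the hybrid, which is precisely where the multi-time flavor of $\botPR$ for our $\botPRG$ becomes indispensable.
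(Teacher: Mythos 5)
Your high-level template---adapt the HHR+10 inaccessible-entropy framework to the $\bot$-setting---matches the paper's route, but the key technical lemma you propose would not work as described. You claim the entropy gap between real and accessible entropy is driven by multi-time $\bot$-pseudorandomness, via a distinguisher that re-evaluates $G$ on the same seed. This conflates two different phenomena. The paper (following HHR+10) first passes through an intermediate primitive, $\botOWF$, showing that any length-tripling $\botPRG$ is a $\botOWF$ by a single-time hybrid argument (\cref{thm:OWF}); it then constructs the prefix function $F(x,i)$ = first $i$ bits of $f(x)$ and argues the gap between real and accessible Shannon entropy of $F_\top^{-1}$ purely from \emph{one-wayness}: a $\top$-collision-finder achieving high accessible entropy yields an efficient $\textsf{Sam}$-oracle inverter for $f$ (\cref{thm:inaccessible entropy}). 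Multi-time $\botPR$ never appears in that argument; it is used elsewhere, for the GGM-style $\botPRF$ construction. You appear to be reaching for PRG pseudoentropy (output looks high-entropy) where the construction actually needs OWF inaccessible entropy (preimage sets are large but an adversary cannot access them) --- these are dual notions and only the latter drives the HHR+10 compiler.

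Two further concrete issues. Your proposed entropy notion measures entropy of the \emph{output} of $G$, but the compiler hashes the \emph{input} and must control preimage-set sizes; the paper accordingly defines real/accessible entropy over the collision sets $\textsf{Supp}_F(x) = \{x' : F[[x']] = F[[x]]\}$, introducing the $F[[\cdot]]$ and $F_\top$ formalism precisely to make ``preimage'' well-defined when evaluations can output $\bot$ or disagree with negligible probability. Without that preimage-side formulation, the second stage (direct-product repetition, input hashing, output hashing) has nothing to compress. Finally, your statement that parallel repetition ``drives the per-instance $1-1/\mathrm{poly}$ correctness to negligible aborting'' is backwards: $t$-fold repetition makes an abort \emph{more} likely, and the paper instead requires a stringent upper bound on $\mu$ (e.g., $\mu < 1/n^{30}$ in \cref{thm:uowhf}) so that the accumulated abort probability after repetition stays small. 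Amplification of correctness to $1-\negl$ is done later, in the signature-scheme application, by taking the best of several independent instances, not inside the $\botUOWHF$ construction.
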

\fi

With $\botPRF$ and $\botUOWHF$ in hand, we build digital signatures with some desirable properties:
\begin{theorem}[Informal version of \cref{thm:main sig}]
    \label{thm:informal_pd_prg_implies_signatures}
    $\PDPRG$ implies an unforgeable digital signature scheme, with a classical secret key, a classical public key, and classical signatures.
\end{theorem}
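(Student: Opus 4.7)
The plan is to combine the three primitives already obtained from $\PDPRG$ in the preceding theorems: $\botPRG$, $\botPRF$, and $\botUOWHF$. The overall strategy mirrors the classical chain $\OWF \Rightarrow \UOWHF \Rightarrow$ one-time signatures $\Rightarrow$ many-time signatures via a tree construction, but with two modifications tailored to the $\bot$-setting. First, I would build a one-time strongly unforgeable signature scheme with classical keys and classical signatures using a pseudo-deterministic one-way hash function (obtainable from $\botUOWHF$ by fixing the target). Second, I would extend it to a many-time scheme via a stateless Goldreich-style tree where per-node randomness is derived from a $\botPRF$, and finally amplify correctness by $\lambda$-fold parallel repetition to drive the abort probability to negligible.

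For Step 1, I would instantiate a Lamport/Bellare–Rogaway-style one-time signature using a $\botUOWHF$ (or an $\botOWF$ obtained from it): the signer samples $2n$ random preimages $x_{i,b}$, publishes $y_{i,b} = h(x_{i,b})$ as the public key, and reveals the preimages corresponding to the bits of the message. Unforgeability follows from the target-collision resistance of $\botUOWHF$, modulo the fact that with inverse polynomial probability some $h$-evaluation returns $\bot$. This would be handled at the scheme level: the signer re-samples its secret key until all $2n$ evaluations are non-$\bot$, which succeeds in expected polynomial time; and the verifier simply rejects signatures containing $\bot$. Strong unforgeability is inherited from the target-collision resistance since a forgery either re-uses a released preimage (blocked by message binding) or constructs a new collision against a fixed target.

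For Step 2, I would use a GGM-style binary tree of depth $n = |m|$ where the root seed $s$ is the master secret key. Each internal node's seed is expanded via $\botPRG$ into two child seeds, and each leaf seed is used as the one-time signing key of the scheme from Step 1. The public key is obtained by committing to the root via $\botUOWHF$ (Merkle-style), so it remains classical and short. To sign $m$, the signer descends the path indexed by $m$, derives the one-time keys along the way with $\botPRF$ (instantiated from the tree of $\botPRG$s by the second informal theorem), signs the children's public keys at each internal node, and signs $m$ at the leaf; the full signature is the authentication path plus the chain of one-time signatures. Adaptive many-time unforgeability reduces, through a hybrid argument over the path to the forgery point, to (i) adaptive $\botPRF$ security, which replaces the internal seeds encountered by the reduction with truly independent random seeds, and (ii) the one-time security of the underlying scheme.

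The main obstacle is the non-determinism introduced by $\bot$ outputs propagating through the tree: a single $\bot$ anywhere on the signing path could abort the entire signature, and more subtly, $\bot$-events could potentially leak side-channel information about the seeds. I would address correctness by parallel repetition, running $\lambda$ independent instances of the whole tree scheme and accepting any non-$\bot$ signature; since the per-evaluation abort probability is $1/\poly$, the joint abort probability is negligible. For security, I would appeal to the multi-time $\botPR$ guarantee of $\botPRG$ (explicitly designed to handle exactly this kind of repeated evaluation of the same seed along an adversarially-chosen query pattern) and the adaptive pseudorandomness of $\botPRF$, so that the hybrids remain indistinguishable in the presence of $\bot$s. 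This gives a classical-key, classical-signature digital signature scheme constructed from any $\PDPRG$.
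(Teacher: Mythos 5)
Your overall plan — Lamport-style one-time signatures from $\botUOWHF$, then an authentication tree for many messages, then $\botPRF$ to make it stateless, then $\lambda$-fold repetition to make the abort probability negligible — is essentially the same chain the paper follows.

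Two points, though. First, the ``re-sample the secret key until all $2n$ evaluations are non-$\bot$'' idea does not work as stated. Under the $\botUOWHF$ pseudodeterminism guarantee, a non-good input $x \notin \mathcal{G}$ is allowed to evaluate to its canonical $y$ with (say) constant probability and to $\bot$ the rest of the time. Observing one non-$\bot$ outcome at key generation therefore does \emph{not} certify that $x \in \mathcal{G}$, and the verifier's fresh re-evaluation of $h$ on that preimage can still return $\bot$ with inverse-polynomial probability. There is no efficient membership test for $\mathcal{G}$, only the evaluation oracle, so this ``rejection sampling at $\keygen$'' cannot be used to remove the correctness error; the paper instead accepts a $(1-\mu)^{\poly}$-correct base scheme (\cref{lemma:cor-digital-sig}) and relies solely on parallel repetition. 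Since you also invoke parallel repetition, your overall proof survives, but the re-sampling step should be dropped — it is both unnecessary and unsound.

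Second, you claim strong unforgeability for the final scheme. The one-time and stateless base schemes are indeed SUF, but once you apply $\lambda$-fold parallel repetition with a verifier that accepts if \emph{any} component signature verifies, strong unforgeability is lost: an adversary can take a valid repeated signature and perturb one of the non-verifying components to get a new valid pair for the same message. The paper explicitly notes this trade-off — the repeated scheme achieves only standard unforgeability, which matches the statement of \cref{thm:informal_pd_prg_implies_signatures}, so your construction is still correct for what is being claimed, but the SUF claim in your Step 1 should not be propagated to the final, amplified scheme.
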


The construction follows a similar approach to the standard construction of digital signatures from $\PRF$s and $\UOWHF$s~\cite{L79,NY89,R90} (see \cite[Chapter 6]{Gol04} for a modern presentation). 

Interestingly, \cite{CM24} recently demonstrated that digital signatures are black-box separated from $\LPRS$. Along with our result (\Cref{thm:informal_pd_prg_implies_signatures}), this implies that $\SPRS$s are fully black-box separated from $\LPRS$s. 

We then turn to encryption. Notably, \cite{KMNY23} have constructed QPKE with tamper-resilient quantum public keys from \emph{deterministic} strongly unforgeable digital signatures. Remarkably, security is guaranteed even if an adversary can tamper with the quantum public keys, provided a classical verification key can be authentically distributed. Our digital signatures are not deterministic due to the potential $\bot$ evaluations of the $\botPRF$. Nevertheless, we show that their approach can be adapted to our signature scheme, hence showing: 
\begin{theorem}[Informal version of \cref{thm:qpke-from-pdprf-restated}]
    \label{thm:informal_pd_prg_implies_QPKE}
    $\PDPRG$ implies CPA-secure QPKE with reusable (tamper-resilient) quantum public keys.
\end{theorem}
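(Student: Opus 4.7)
The plan is to route the implication through the digital signature scheme of \Cref{thm:informal_pd_prg_implies_signatures} and then plug it into the KMNY23 template~\cite{KMNY23} for QPKE with tamper-resilient quantum public keys, handling along the way the fact that our signatures are not fully deterministic. First, starting from a $\PDPRG$, I would invoke the chain $\PDPRG \Rightarrow \botPRG \Rightarrow \botPRF$ and $\botUOWHF$, and combine these via the standard NY/R/L template to obtain a classical unforgeable digital signature scheme. A routine upgrade to strong unforgeability (e.g., by binding the message to a $\botUOWHF$ hash and verifying hash-consistency at signing and verification time) should not affect the parameter budget, so we may assume the scheme is strongly unforgeable.

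The next step is to recall the structure of the KMNY23 construction. There the quantum public key consists of a cloneable quantum state (together with a classical description that is signed by the signer's classical secret key), and decryption verifies both the signature on the classical payload and the consistency of the quantum portion via a swap-test / projection-based check. The only place in the proof where determinism of the signature is used is to ensure that the receiver's verification predicate is a deterministic function of the adversary's tampered public key, so that hybrid arguments can cleanly condition on verification outcomes. In our setting the signing procedure instead outputs either the canonical signature (with probability $1-1/\poly$) or $\bot$ (except with negligible probability), and \emph{verification} can therefore be made deterministic on the classical signature token: it accepts exactly when the token lies in the support of honest signatures.

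To bridge the remaining gap, I would apply standard correctness amplification at the signing stage: the sender re-runs the signer $\poly(\lambda)$ times on the same message and outputs the first non-$\bot$ signature, driving the failure probability to negligible. By the multi-time $\botPR$ guarantee that is built into our definitions (\Cref{thm:informal-bot_prf_from_bot_prg}), repeated invocations on the same input still produce a single consistent signature token, so the overall signer becomes effectively deterministic up to negligible error. With this drop-in replacement, the KMNY23 security reduction transfers line by line: any CPA adversary against our QPKE scheme that tampers with the quantum public key is turned into either a strong-unforgeability adversary against the signature scheme or an adversary against the underlying $\botPRF$-based encapsulation.

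The main obstacle I anticipate is precisely this interaction between the $\bot$ outputs and the tamper-resilience argument. The KMNY23 reduction relies on being able to simulate many consistent copies of the public key to the adversary, and on the reduction being able to recognize tampered versions; with $\bot$ outputs, the simulator must be careful that the (negligible but nonzero) event of a $\bot$ at a critical hybrid step is not correlated with the adversary's view in a way that breaks the reduction. I would handle this by a hybrid that first moves to an idealized signer that never outputs $\bot$ (justified by the amplified correctness and the $\botPR$ property), performs the KMNY23 hybrids in the deterministic-signature world, and then undoes the idealization. Once this is set up, the quantitative loss is negligible and we conclude $\cpatamp$-security of the resulting QPKE scheme with reusable, tamper-resilient quantum public keys.
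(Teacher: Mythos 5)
Your high-level plan — route $\PDPRG$ through the signature scheme and plug it into the KMNY23 template — matches the paper's strategy, but the step where you amplify correctness is where the argument breaks, and it is exactly the point the paper is forced to work around.

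You propose to make the signer ``effectively deterministic up to negligible error'' by re-running $\textsf{Sign}(\sk,m)$ polynomially many times on the \emph{same} key and message and taking the first non-$\bot$ output. This does not drive the failure probability to negligible. The $\botPRF$ correctness guarantee (Definition~\ref{def:bot-prf-correctness}, Condition~\ref{it:far-from-bot}) only says that for a fixed input $x$, $\Pr_{k}[f_k(x)\ne\bot]\ge 1-\mu$, i.e., the $\bot$-probability is bounded \emph{on average over the key}. Nothing rules out a $\mu$-fraction of keys $k$ for which $\Pr[f_k(x)=\bot]$ is $1$ (or arbitrarily close to $1$); for such $(k,x)$ pairs, $T$ re-runs fail with probability close to $p^T$ where $p$ can be $1$, so the overall failure probability after amplification is still $\Omega(\mu) = 1/\poly$. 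The same issue propagates through the signature construction (Lemma~\ref{lemma:cor-digital-sig}): the $(1-\mu)^{O(n\lambda)}$ correctness error is dominated by ``unlucky'' secret keys for which some node on the authentication path is a bad input to the corresponding $\botPRF$, and re-signing with the same key repeatedly queries that same bad input. The only way to escape a bad key is to resample the key.

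The paper's actual argument is a cleaner decomposition that avoids this trap. It first produces a $(1-\frac{1}{\poly})$-correct but \emph{strongly unforgeable} signature scheme with unique non-$\bot$ signatures (note the paper explicitly observes at the end of \Cref{sec:sig 3} that amplifying signature correctness by parallel signing under independent keys destroys strong unforgeability, so correctness cannot be amplified at the signature level without losing a property KMNY23 needs). It then instantiates KMNY23 with this imperfect signature to get a $(1-\frac{1}{\poly})$-correct, $\cpatamp$-secure QPKE (\Cref{thm:qpke-from-pdprf-restated-imperfect-correctness}), and only then performs $\lambda$-parallel repetition \emph{at the QPKE level} with $\lambda$ independently sampled $(\sk,\vk,\pk)$ tuples (\Cref{thm:qpke-parallel-repetition-lift}). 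Because these key tuples are independent, the probability that all $\lambda$ instances land on a bad key is at most $\delta^\lambda$, which is negligible — this is the independence that your same-key re-running lacks. The hybrid over parallel repetitions then preserves $\cpatamp$-security by a standard LOR-style argument. If you want to repair your proposal, replace the signer-re-running step with this QPKE-level parallel repetition using fresh keys, and be careful that you rely on the \emph{unamplified} signature scheme (so that strong unforgeability and unique signatures are still available when you invoke KMNY23).
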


These achievements signify that such $\SPRS$s can replace $\PRG$s in the two fundamental tasks of authentication and encryption. 
Security is guaranteed against adversaries making classical queries to the signing or encryption oracle as  
 it was done previously in the works of \cite{AGQ22,AQY22,MY22}.

\subsection{Related Works}

Ever since the introduction of $\PRS$s \cite{JLS18} as a potentially weaker assumption than $\PRG$s, a series of works have attempted to replace $\PRG$s with $\PRS$s. Non-interactive statistically binding commitments~\cite{MY22}, non-interactive statistically hiding~\cite{Yan22,HMY23} commitments as well as quantum MPC were shown (see~\cite{MY22} and references therein).

However, certain applications remained elusive which prompted the introduction of the (potentially) stronger assumption of \emph{pseudorandom function-like states} ($\PRFS$s)\footnote{An $m$-$\PRFS$ is a generator $G$ which takes a seed $k\in \{0,1\}^\lambda$ and an input $x\in \{0,1\}^\ell$ and outputs an $m$-qubit state such that oracle access to $G$ is indistinguishable from oracle access to a truly random Haar state generator. It was shown that $\PRFS$ with $\ell=O(\log(\lambda))$ can be constructed from $\PRS$, but $\PRFS$ with $\ell=\omega(\log(\lambda))$ seems like a stronger assumption and is more useful for applications.} \cite{AQY22}. This primitive was used to construct CPA-secure symmetric encryption, message-authentication codes, and quantum public-key encryption (QPKE) with non-reusable keys (each encryption necessitates a new public-key copy) \cite{AQY22,ALY23,BGH23}. 

Morimae and Yamakawa \cite{MY22} constructed length-restricted one-time signatures with quantum verification keys from $\PRS$s. 
Notably, they left an intriguing question unanswered, which is whether digital signatures---the only major missing protocol in Minicrypt---could be constructed based on $\PRS$s (or variants thereof). We answer this question affirmatively in \cref{thm:informal_pd_prg_implies_signatures}. As mentioned before, a recent work~\cite{CM24} has shown that digital signatures are black-box separated from linear-length $\PRS$, which in conjunction with \Cref{thm:informal_pd_prg_implies_signatures} implies a black-box separation between $\LPRS$ and $\SPRS$. It should be noted that an independent concurrent work~\cite{BM24} also shows black-box separation between $\LPRS$ and $\SPRS$.

An even (potentially) stronger pseudorandom notion is $\PRFS$s \emph{with proof of destruction} (\textsf{PRFSPD})~\cite{BBS23}, which requires that the $\PRFS$ can be destructed in a way that generates a classical certificate proving destruction. The classical nature of these certificates enables various applications, such as one-time signatures with classical keys and QPKE with reusable quantum public keys \cite{BGH23}. Note that both QPKE schemes discussed are not resistant to tampering attacks. This is a drawback because a quantum state cannot be signed~\cite{BCG02}, implying that their construction requires secure quantum channels for the distribution of the quantum public keys, which is a very strong setup assumption. This motivated the work~\cite{KMNY23}, which provides a QPKE scheme with quantum public-keys that is resistant to tampering attacks based on \textsf{OWF}s. Our scheme is the first to achieve this based on an assumption that is potentially weaker than \textsf{OWF}s -- thereby answering a question posed in the same work~\cite{KMNY23} regarding the possibility of this result.

Most important for our work is the recent paper by Ananth, Lin, and Yuen~\cite{ALY23}, where they demonstrated how to construct $\PDPRG$ from $\SPRS$s and $\PDPRF$ from logarithmic $\PRFS$s. It should be noted that the definition of $\PDPRF$ considered in~\cite{ALY23} only satisfies selective security with unique queries. Consequently, they provide as applications, commitments with classical communication and CPA-secure symmetric encryption with classical ciphertexts. 
We leverage their construction for $\PDPRG$ to construct $\botPRF$. Notably, our $\botPRF$ are based on $\SPRS$s, whereas their $\PDPRF$ are based on logarithmic $\PRFS$s which is a potentially stronger assumption than $\SPRS$s. Moreover, our $\botPRF$ provides an easier and more structured definition to work with, thereby enabling the development of more powerful cryptographic applications.

\subsection{Technical overview}
\label{ssec:tech_overview}



We will first recall the definition of a $\PDPRG$. Informally, this is an efficient quantum algorithm $G$, that takes a classical seed of length $\secpar$, and outputs a string $y$ of length $\ell>\secpar$. 
Unlike a standard $\PRG$, the $\PDPRG$ output need not be deterministic. 
The correctness guarantee is as follows: There exists a large set of good keys $\mathcal K$ consisting of $1-1/\poly$ fraction of all keys, such that for every good key $k \in K$, there exists a string $y \in \{0,1\}^\ell$ s.t. $\Pr(G(k)=y)\geq 1-1/\poly$. 
The pseudorandomness property is defined as standard $\PRG$s: A QPT adversary cannot distinguish between $G(k)$, where $k\xleftarrow{\$} \{0,1\}^\secpar$ and a random string $r \xleftarrow{\$} \{0,1\}^\ell$. 
The formal details are given in~\cref{def:PD-PRG}. Ref.~\cite{ALY23} proved that $\SPRS$ implies $\PDPRG$, in some parameter regimes (see \cref{thm:short_prs_implies_pdprg}).

Following the classical template to construct digital signatures, the first step would be to construct universal one-way hash functions (\textsf{UOWHF}) from a $\PRG$ in the psuedodeterministic setting in order to lift a one-time signature scheme to a many-time stateful scheme. The next step would be to construct an adaptively secure $\PRF$ from a $\PRG$ to enable a full-fledged stateless signature scheme, as we discuss in further detail later. 

To achieve the goals mentioned above, we need to develop appropriate notions for $\PRF$s and $\UOWHF$s that can be realized from $\PDPRG$.

\paragraph{Need for $\botPRF$.} 
A natural attempt would be to build pseudodeterministic $\PRF$s from $\PDPRG$, following the constructions in the deterministic setting. However, there are obstacles with this approach:
\begin{enumerate}
    \item \emph{Need for stronger notions of security.} \label{item:stronger notions of security} 
    In the classical template to construct digital signatures, the signing algorithm chooses a random leaf in the authentication tree, and the corresponding one-time signing key of the leaf finally signs the message, where the one-time signing and public keys for each of the internal node are obtained by evaluating the $\PRF$ on the node. Therefore, signing multiple messages would involve evaluating the $\PRF$ evaluation of the same string (especially the string representing the root and some of its nearest successors) multiple times. Hence, it is crucial that the $\PRF$ satisfies multi-time security, i.e., the adversary in the distinguishing game is allowed to query on the same input twice. Of course, multiple evaluations do not affect security for classical $\PRF$s since they are deterministic, but it is harder to achieve multi-time security in the pseudodeterministic setting. Indeed, the only definition of $\PDPRF$ studied in~\cite{ALY23} is not multi-time secure since the security game only allows unique queries from the adversary. This is a necessary restriction because querying a $\PDPRF$ on the same input twice may result in two different outcomes, which will trivially make the $\PDPRF$ distinguishable from a random function.
    

    Secondly, it is also desirable to make the signing algorithm deterministic. In the classical template, the signing algorithm is derandomized by replacing the randomness required to choose the leaf in the authentication tree for signing the message, with the $\PRF$ evaluation of the $\PRF$ on the message. With this modification, it becomes crucial in order to argue CMA security that the $\PRF$ is secure even when the adversary can make adaptive queries.
    Alongside the construction of digital signatures with deterministic signing, adaptively secure $\PRF$ is widely used in simple constructions of other fundamental primitives. For instance, one of the simplest constructions of Message Authentication Codes (MAC) is where signing a message $m$ is simply evaluating the $\PRF$ on $m$ and verification of an alleged message tag pair $(m,\sigma)$ is to evaluate the $\PRF$ on $m$ and checking if it the output is same as $\sigma$. Clearly, the adaptive security of $\PRF$ is crucial to argue the CMA-security of the MAC thus constructed.  However, the (only) $\PDPRF$ definition studied in~\cite{ALY23} considers only selective security. Furthermore, the proof techniques used in~\cite{ALY23} to construct $\PDPRF$ seem inadequate to prove adaptive security.
    \item \emph{Need for structured pseudodeterminism.} \label{item:structured-pseudodeterminism} The psuedodeterminism in $\PDPRF$ prevents us from utilizing them in the construction of digital signatures. Recall that in the classical template, the public key of the digital signature is associated with the public key of the one-time digital signature scheme at the root of the authentication tree. This public key is used to sign the public keys of its two children, which are the $\PRF_k(0)$ and $\PRF_k(1)$.
    However, the pseudodeterministic errors of the $\PDPRF$ may cause the one-time signature scheme to sign two different messages using the same secret key, which may compromise the security of the one-time scheme and, thus, of the entire authentication tree.
    For instance, if the adversary queries the signing oracle twice, and during the signing of the two messages, the two different evaluations of the root yield different outputs, then the adversary would learn two signatures under the same one-time signature secret key at the root.
\end{enumerate}

The main conceptual contribution of our work is solving these issues by introducing the notion of recognizable abort. Recognizable abort for $\PRF$s means that for every input $x$, the evaluation either outputs a fixed string $y$ or aborts by outputting the special symbol $\bot$. This clearly provides a more structured form of pseudodeterminism that solves the issues mentioned in \Cref{item:structured-pseudodeterminism} above. Surprisingly, the structured pseudodeterminism thus obtained also leads us to natural definitions for stronger security guarantees, namely, multi-time security and adaptive security, as mentioned above in \Cref{item:stronger notions of security}. 
More generally, we believe that this notion provides a simpler route toward achieving applications from pseudodeterministic primitives.

Formally, a $\botPRF$ family is a family of quantum algorithms $\{f_k\}_{k\in \{0,1\}^\secpar}$, with $m$ bits of inputs, which always yield a  \emph{classical} output from the set $\{0,1\}^\ell \cup \{\bot\}$ (i.e., the algorithm can abort).  For every key $k$ and input $x$, there exists $y$ such the evaluation $f_k$ is in $\{y,\bot\}$ except with negligible probability. Moreover for every $x\in \{0,1\}^m$, $\Pr(f_k(x) \neq \bot : k\xleftarrow{\$}\{0,1\}^\secpar) \geq 1-\frac{1}{\poly}.$
The recognizable abort property lets us define multi-time adaptive security (i.e., the adversary can adaptively query as well query at the same input more than once) for $\botPRF$ in the following way, which suffices for constructing digital signatures in the classical template.
The challenger samples a key $k\xleftarrow{\$}\{0,1\}^\secpar$, a random function $f:\{0,1\}^m\to \{0,1\}^\ell$, and a bit $b\xleftarrow{\$} \{0,1\}$. 
If $b$ equals $0$, it provides (classical) oracle access to $f_k(\cdot)$. If $b=1$, the oracle that the challenger provides behaves as follows. Given an input $x$, the oracle outputs $\bot$ if $f_k(x)$ evaluates to $\bot$, and $f(x)$ otherwise. Note that the adversary may query the same element $x$ many times. 


Now that we have an multi-time adaptively secure definition of $\PRF$ in the pseudodeterministic setting, the next challenge is how to construct $\botPRF$ from $\SPRS$.

\paragraph{Construction of $\botPRF$, and the need for $\botPRG$} A na\"ive attempt to construct $\botPRF$ would be to instantiate the famous Goldreich, Goldwasser, and Micali (GGM) transformation~\cite{GGM86} from $\PRG$-to-$\PRF$ with a $\PDPRG$. Note that in the proof of pseudorandomness for the GGM construction, the reduction uses multiple evaluations of the $\PRG$ at the same random seed. Of course, multiple evaluations do not affect security in the classical setting since the $\PRG$ there is deterministic, but in the pseudodeterministic setting, multiple evaluations may output distinct values, and hence security does not hold for $\PDPRG$ if multiple evaluations are allowed. Indeed, even if two evaluations of a $\PDPRG$ on the same seed are allowed, then with non-negligible probability, we will get different outputs for different evaluations, which is trivially distinguishable from giving a uniformly random string twice.

As a remedy, we attempt to define an intermediate primitive called $\PRG$ with recognizable abort or $\botPRG$.
Similarly, to $\PDPRG$, a $\botPRG$ is a quantum algorithm $G$ mapping $\{0,1\}^\secpar$ to $\{0,1\}^{\ell}$, except the algorithm can also output a special string $\bot$, which represents aborting. We require $\ell>\secpar$, for non-triviality, as all forms of $\PRG$. As for the output, it is guaranteed that for every $k\in \{0,1\}^\secpar$, there exists a string $y \in \{0,1\}^\ell$ s.t. $\Pr(G(k)\in \{y,\bot\})\geq 1-\negl$, for some negligible function $\negl$. Moreover, just-like $\PDPRG$, there exists a good set $\mathcal{K}$ consisting of $1-1/\poly$ fraction of all keys, such that for every good key $k\in \mathcal{K}$,
$\Pr(G(k)=\bot)$ is negligible. A $\botPRG$ can be easily distinguished from a random string since a random string is never $\bot$. Therefore, standard pseudorandomness cannot hold for $\botPRG$.

A natural alternative would be to demand conditional pseudorandomness, i.e., indistinguishability from a random string conditioned on the evaluation of the $\botPRG$ not being $\bot$. We will refer to it as $\bot$-pseudorandomness. An advantage of this alternate definition is that it can be extended to incorporate $\botPR$, even when the $\botPRG$ is evaluated multiple times on the same random $k$. This notion of pseudorandomness that we call multi-time $\botPR$ can be explained via the following indistinguishability game where we demand negligible distinguishing advantage. For some $q(\secpar)\in \poly$ denoting the number of repeated evaluations, the challenger samples a key $k\xleftarrow{\$}\{0,1\}^\secpar$, and then either provides $q$ evaluations of $G(k)$ to the adversary or sends $y_1,\ldots,y_q$ to the adversary where $y_1,\ldots,y_q$ are generated as follows: The challenger samples a string $y\xleftarrow{\$}\{0,1\}^\ell$, evaluates $G(k)$ $q$-times and if the $i^{th}$ evaluation is successful (i.e., the evaluation did not abort) sets $y_i$ to $y$ else sets $y_i$ to $\bot$. It can be verified that for $q=1$, the above indistinguishability game coincides with the conditional pseudorandomness.
Note that a $\botPRG$ satisfying single-time $\botPR$ (i.e., $q=1$) does not generically satisfy multi-time $\botPR$. 
In particular, the multi-time $\botPR$ game, unlike the single-time $\botPR$ game, allows the distinguisher to learn the probability of outputting $\bot$ for the sampled key, which can leak information regarding the non-$\bot$ output of the key. 
For example, consider a boolean output $\botPRG$ $G$ and, for simplicity, assume that there are only four keys $k_1,\ldots,k_4$, out of which $k_4$ is the only non-good key which evaluates to $\bot$ with probability $\frac{1}{2}$, and outputs $0$ with the remaining probability. For $i\in [3]$, $k_i$ outputs $0$ with probability $\frac{5}{12}$ and $1$ with the remaining probabibility. It can be easily checked that $G$ satisfies the conditional pseudorandomness or single-time $\botPR$. However, $G$ does not satisfy multi-time $\botPR$ because, using polynomial evaluations, the distinguisher in the multi-time $\botPR$ game can learn the probability of outputting $\bot$ for the key $k$ that was sampled by the challenger and, in particular, distinguish between the case when $k$ is a good key, i.e.,  $k\in \{k_1,k_2,k_3\}$ or $k$ is non-good, i.e., $k=k_4$. Since there is an asymmetry between the output distribution of good keys and the non-good key, the distinguisher can guess the non-$\bot$ output of the key with a constant advantage, using which he can distinguish if the non-$\bot$ evaluations provided to him were from the $\botPRG$ or if they were provided using a uniformly random string. Concretely, in this example, if the distinguisher learns that the key is from the good set and the non-$\bot$ evaluation is $1$ or that the key is non-good and the non-$\bot$ evaluation is $1$, then he outputs $0$ or else outputs $1$. Clearly, the distinguisher outputs $1$ with a significantly larger probability in the random string case than in the $\botPRG$ case.

Intuitively, the security of $\botPRF$ is an extension of the multi-time $\botPR$ property of $\botPRG$ to $\botPRF$ similar to the relation between the pseudorandomness of $\PRG$ and $\PRF$ in the classical setting. 
This enables the construction of $\botPRF$ from $\botPRG$ via the GGM transform, via the same classical arguments in the security proof up to adaptations required due to the pseudodeterminism. Interestingly, the main technical difficulty in proving the GGM transformation for $\botPRG$-to-$\botPRF$ is the proof of the correctness for the resulting $\botPRF$ (i.e., it satisfies the relevant $\bot$ properties), for which we had to use the security properties, i.e., $\botPR$ along with the correctness of the underlying $\botPRG$, see the proof of \Cref{prop:correctness-bot-PRF-family}. This is in contrast to the classical GGM construction, where correctness holds trivially. 



\paragraph{$\botPRG$ from $\PDPRG$.} We show a construction of $\botPRG$ from any $\PDPRG$, in two steps. 
In the first step, we run the $\PDPRG$ with the same seed $\secpar$ times, output the outcome that appeared in at least $60\%$ of the evaluations, and output $\bot$ if no such outcome occurs. Note that for a good seed $k\in \mathcal K$, the generator will not output $\bot$, except with negligible probability. A standard argument is used to prove that this construction satisfies the standard notion of weak-PRG pseudorandomness (the output can easily be distinguished from random whenever the output is $\bot$, which may occur
when a bad seed is picked with 
probability $1/\poly$).

The goal of the second step is to kill two birds with one stone: to reduce the (weak) pseudorandomness guarantees from $1/\poly$ advantage to a negligible advantage and achieve that in the multi-time $\botPR$ setting, thereby achieving both amplification and multi-time $\botPR$. To achieve that, we use Levin's version of Yao's XOR lemma adapted to the quantum setting. Our construction works as follows. The new $\botPRG$ uses $\secpar$ times longer seeds compared to the previous one, which we use 
as $k_1,\ldots,k_\secpar$. We first evaluate $G(k_1),\ldots, G(k_\secpar)$ and output $\bot$ if any of them aborts, and continue to output the XOR of all of the values otherwise. Note that any seed of the new $\botPRG$ that consists of $\secpar$ original good seeds is a good seed.

\paragraph{$\botUOWHF$ and its construction from $\botPRG$}
Toward building digital signatures, we will need to construct some form of $\UOWHF$s. The works \cite{R90,HHR+10} have showed how to build a $\UOWHF$ from any $\textsf{OWF}$. 
Naturally, we would like to adapt the deterministic constructions~\cite{R90,HHR+10} for $\UOWHF$ from $\OWF$, so that it can be instantiated with a pseudodeterministic form of $\OWF$s ($\PDOWF$)\footnote{Here, we expect that just like in the classical setting, every $\PDPRG$ with sufficient stretch is a $\PDOWF$.}. 
However, it is not at all clear how to do this since the standard constructions rely on structural properties of $\OWF$s that do not hold for pseudodeterministic functions. For instance, the constructions require intricately working with the entropy of a uniformly random preimage of an image element under the $\OWF$. This is problematic for pseudodeterministic functions since the number of preimages of an image element is ambiguous, as many inputs may map to the image with some probability. As a result, many bounds do not carry over. In fact, we do not know how to solve these problems for $\PDOWF$.

To address these obstacles, we introduce the notion of $\botOWF$s. Similar to $\botPRG$s, for a $\botOWF$ $F$ (see \Cref{def:botOWF}), it is guaranteed that for every input $x$, there exists an output $y$ s.t. $\Pr(F(x)\in \{y,\bot\})\geq 1-\negl$, for some negligible function $\negl$. Moreover, there exists a good set $\mathcal{X}$ consisting of $1-1/\poly$ fraction of all inputs, such that for any good input $x\in \mathcal{X}$, $\Pr(F(x)=\bot)$ is negligible. For security, we require that it is difficult for an efficient adversary to find an inverse to a non-$\bot$ image. This is sufficient for our applications as we simply abort when we obtain a $\bot$. Similar to the classical $\PRG$ and $\OWF$ case, we show that every $\botPRG$ (with sufficient stretch) is a $\botOWF$. Our construction uses standard techniques along with some adaptations to deal with the $\bot$ evaluations. 

While $\botOWF$s still exhibit some level of non-determinism, it is far more manageable than $\PDOWF$. Intuitively, when we consider the inverses to an element $y$, we are interested in inputs that evaluate to $y$ with non-negligible probability, as these are the inputs that can potentially pass as valid inverses for $y$. For each input of a $\botOWF$ $F$, there exists a value $y$ such that $F(x)\in \{\bot, y\}$ except with negligible probability. Hence, for each input, there exists a unique non-$\bot$ value in the image which it can evaluate to with non-negligible probability. This allows us to define several notions of entropy for the inverse sets that act in the same way as standard notions of entropy used in \cite{HHR+10} for deterministic functions. 

Consequently, we build $\botUOWHF$ (formally defined in \Cref{def:botUOWHF}) from $\botOWF$ (formally defined in \Cref{def:botOWF}) following a similar avenue to the deterministic construction given in \cite{HHR+10}. However, we note that the proof in the non-deterministic setting requires non-trivial adaptations. For instance, in \cite{HHR+10}, it is noted that if $\beta$ portion of the inputs evaluate to an output that begins with 0 under a $\textsf{OWF}$, then there is $1-\beta$ portion of inputs that evaluate to an output that begins with 1. This trivial argument does not hold in our case, since many inputs may map to $\bot$ with high probability. As a result, some of the consecutive steps in their proof break down, necessitating further analysis. See \ifnum\iacr=1Supplement~\ref{sec:botUOWHF construction}\else \Cref{sec:botUOWHF construction}\fi for our proof. 

\paragraph{How to use pseudo-determinism with recognizable abort.} \label{pg:template-for-applications-intro}
Suppose primitive A (e.g., MAC) can be constructed from a deterministic primitive $B$ (e.g., $\PRF$).
How can we achieve a similar result using a pseoudeterministic version of $B$ (for example, MAC based on $\botPRF$)? 
We use the following template in our work: 
In the first step, we replace the underlying determinstic primitive (e.g., the $\PRF$) in that construction with its recognizable abort counterpart (e.g., a $\botPRF$), which, in most cases can easily be shown to be secure\footnote{
Note that this step may not work if we have pseudodeterminism without recognizable abort because the security proof may require multi-time adaptive security from $B$ (e.g., $\PRF$) that may not hold for the pseudodeterministic version of $B$ (e.g. $\PDPRF$). 
}%
, but only with $(1-\frac{1}{\poly})$-correctness, due to $\bot$ errors. 
In the next step of the template, we consider a variant of a $\secpar$-repetition of the $(1-\frac{1}{\poly})$-correct but secure construction at the end of the previous step to amplify the correctness to $(1-\negl)$. Using standard techniques, we then show that the resulting $\secpar$-fold construction is secure if the underlying base construction is secure. For example, to show a MAC based on $\botPRF$, we sample $\secpar$ signing keys, and signing a message $m$ would be done by evaluating the $\botPRF$ on this message using all the keys. The verification procedure applies the $\botPRF$ on the message with all the signing keys, and accepts if at least one of the successful un-aborted evaluations agrees with the signature.

\paragraph{Digital Signatures.}

In the classical setting, basing (stateless) digital signatures (DS) on the minimal assumption of \textsf{OWF}s took a sequence of works \cite{L79,NY89,R90,Gol04}, culminating in a complex final scheme. We first briefly describe this effort and how we adapt it to our work, replacing \textsf{OWF}s with $\SPRS$s.

Lamport \cite{L79} first showed how to construct a simple one-time signature scheme from any \textsf{OWF}. For single-bit messages, the secret key is two randomly sampled inputs $\sk\coloneqq (x_0,x_1)$, while the verification key is the images $\vk\coloneqq (F(x_0),F(x_1))$ under a \textsf{OWF} $F$. The signer can simply reveal $x_b$ to sign message $b$; unforgeability follows easily from the difficulty of inverting $F$. 

Furthermore, as shown in \cite{Gol04}, by using a \textsf{UOWHF} instead of an \textsf{OWF}, we can ensure that the adversary cannot generate a new signature for a message $m$ even given a signature of $m$. This property is called \emph{strong unforgeability} and is used in the application of QPKE, see \ifnum\iacr=1Supplement~\ref{subsec:construct-qpke-proof}\else \Cref{subsec:construct-qpke-proof}\fi. 

Next, Naor and Young \cite{NY89} showed how to sign many-time signature scheme using \textsf{UOWHF}s. However, in their scheme, the signer needs to remember all previous signatures produced. As a result, this scheme is called ``\emph{stateful.}'' They also show how to construct a \textsf{UOWHF} from any injective \textsf{OWF}. Rompel in \cite{R90} then showed that \textsf{UOWHF} can be constructed from any \textsf{OWF} and, therefore, from any $\PRG$. Rompel's proof is quite involved, prompting a follow-up work \cite{HHR+10} that provides a simpler construction. 



Later, Goldreich \cite{Gol04} leveraged $\PRF$s to achieve the final result: a many-time ``\emph{stateless}'' DS scheme, where the signer does not need to store previous signatures.


With $\botPRF$s and $\botUOWHF$, we carefully adapt the techniques used in the aforementioned works to our setting. The overall construction of the digital signature follows the same template as mentioned above on \cpageref{pg:template-for-applications-intro}. On the downside, by relying on pseudodeterministic functions, our signatures have a non-negligible probability of failing verification. This issue can be addressed by repetition, however, the resulting scheme is no longer strongly unforgeable and only satisfies standard unforgeability -- it is easy to modify one of the 'bad' signatures in the repeated scheme without being detected. Hence, our work gives the option to choose between statistical correctness and strong unforgeability. 
In conclusion, we achieve statistical correct unforgeable (many-time) stateless DSs from $\SPRS$s and $(1/4)$-correct strongly unforgeable signatures from $\SPRS$s. 

\paragraph{Tamper-proof QPKE.} A generic construction of CPA secure tamper-proof QPKE from strongly unforgeable digital signatures with unique signatures was shown in~\cite{KMNY23}. We use their generic construction to show that tamper-proof QPKE can be constructed based on $\SPRS$ as follows. First, we show that instantiating the construction in~\cite{KMNY23} with a $(1-\frac{1}{\poly})$-correct digital signature scheme that can be based on $\SPRS$ results in a $(1-\frac{1}{\poly})$-correct $\QPKE$ scheme $\Pi$ that is $\cpatamp$ secure, i.e., CPA security in the tamper-proof setting as defined in~\cite{KMNY23}. Next, we do a variant of $\secpar$-parallel repetition of $\Pi$ to get a $\QPKE$ scheme $\Pi^\lambda$ that satisfies negligible correctness error. In the repeated scheme $\Pi^\secpar$, we use $\secpar$ independently generated secret keys, public keys, and verification keys of $\Pi$ as the new secret key, public key, and verification key, respectively. Encryption is performed by encrypting the message $\secpar$-times using the $\secpar$ public key independently. The decryption interprets the cipher as $\secpar$ ciphers of $\Pi$ runs the decryption of $\Pi$ on each of them with the respective secret keys and outputs $\bot$ if all of them are $\bot$. Else, checks if all the decrypted messages that are non-$\bot$ are the same or not. If they are the same, output that message, else output $\bot$. We show that $\Pi^\secpar$ is $\cpatamp$ secure if $\Pi$ is $\cpatamp$ secure, via a standard reduction similar to that used in the black-box construction of secure multi-bit encryption from secure single-bit encryption via parallel repetition. The full proof is given in Supplement~\ref{subsec:construct-qpke-proof}.

\subsection{Open problems}

An interesting avenue for future work is to upgrade the security of $\bot$-PRF, or variants thereof, in order to permit adversaries quantum oracle access to the function. In our work, we have only demonstrated security against QPT adversaries that are given \emph{classical} access to the various functionalities. This limitation has implications for our applications as well. For instance, in the context of MACs and CCA-secure symmetric encryption, an adversary is only allowed to query the authentication or encryption oracles on classical inputs rather than quantum inputs. There exist security notions allowing adversaries to query on a superposition of inputs (see \cite{BZ13,GSM20}), and hence, upgrading the security of our $\bot$-PRFs would allow these notions to become attainable. 
\anote{Do we need to add anything from the next version to-do list for added motivation? }

\ifnum\anonymous=0
    \subsubsection*{Acknowledgments}
        We wish to thank an anonymous reviewer for helpful suggestions regarding a previous version of this manuscript. 
                \BeforeBeginEnvironment{wrapfigure}{\setlength{\intextsep}{0pt}}
                \begin{wrapfigure}{r}{100px}
                    \includegraphics[width=100px]{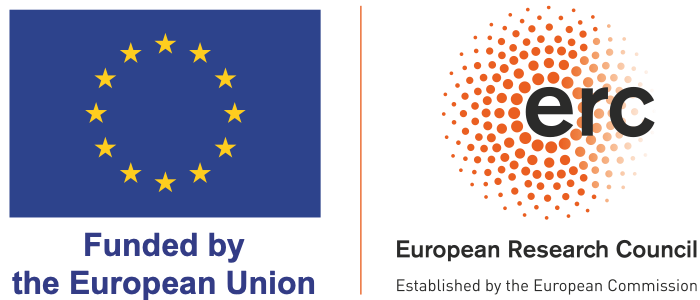}
                \end{wrapfigure}
                The work was funded by the European Union (ERC-2022-COG, ACQUA, 101087742). Views and opinions expressed are however those of the author(s) only and do not necessarily reflect those of the European Union or the European Research Council Executive Agency. Neither the European Union nor the granting authority can be held responsible for them.
\fi
\section{Preliminaries}

\subsection{Notations}

We denote the density matrix of a quantum state in a register $E$ is denoted as $\rho_E$. The trace distance is given as $\delta(\rho,\sigma)\coloneqq \frac{1}{2}\| \rho-\sigma\|$. 

We say $x\leftarrow X$ if $x$ is chosen from the values in $X$ according to the distribution $X$. If $X$ is a set, then $x\leftarrow X$ denotes an element chosen uniformly at random from the set.

We use the notation and convention in~\cite{ALY23} to define quantum algorithms. We say that a quantum algorithm $A$ is \emph{QPT} if it consists of a family of generalized quantum circuits (i.e., circuits that have an additional subset of input qubits called ancillae qubits that are initialized to $0$) $\{A_\secpar\}_{\secpar}$ such that each circuit $A_\secpar$ is bounded a polynomial $p(\secpar)$. We call a QPT algorithm $A$ non-uniform if it consists of a family $\{A_\secpar,\rho_\secpar\}_\secpar$ where $\{A_\secpar\}_\secpar$ are circuits of polynomial size (not necessarily uniformly generated), such that for each $\secpar$, there is additionally a subset of input qubits of $A_\secpar$ that are designated to be initialized
with the density matrix $\rho_\secpar$ of polynomial length. All the algorithms used in the security definitions are non-uniform unless mentioned otherwise.

We let $A^P$ mean that $A$ is given access to a polynomial number of oracle queries to $P$. We let $[n]\equiv \{1,2,\ldots,n\}$ for every $n\in \NN$ and $\negl[x]$ denote any function that is asymptotically smaller than the inverse of any polynomial. As in~\cite{ALY23}, we work in the non-uniform setting, meaning all the security guarantees in this work are against non-uniform QPT adversaries, i.e., the adversaries are quantum poly time algorithms with quantum advice states.

We let $\textsf{Haar}(\mathbb{C}^d)$ denote the Haar measure over $\mathbb{C}^d$ which is the uniform measure over all $d$-dimensional unit vectors. For simplicity, we let $\textsf{Haar}_m$ denote the Haar measure over $m$-qubit space i.e. $\textsf{Haar}((\mathbb{C}^2)^{\otimes m})$. We let $\mathcal{O}_{\textsf{Haar}_m}$ denote an oracle that generates random $m$-qubit Haar states. 

\subsection{Probability Theorems}
\begin{theorem}[Chernoff bound for $n$ independent Poisson trials cited from Theorems 4.4,4.5~\cite{MU05}]
    Let $X_1,\dots,X_n$ be independent Poisson trials such that $\Pr[X_i]= p_i$. Let $X = \sum_{i=1}^n X_i$ and $\mu = E[X]$. Then, for $0<\delta<1$:
    \begin{enumerate}
        \item $\Pr[X\geq (1+\delta)\mu]\leq e^{-\mu\delta^2/2}$
        \item $\Pr[X\leq (1-\delta)\mu]\leq e^{-\mu\delta^2/2}$ 
    \end{enumerate}  
\label{thm:chernoffbound}
\end{theorem}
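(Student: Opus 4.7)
The plan is to establish both tail bounds via the classical Chernoff technique: apply Markov's inequality to a tilted random variable $e^{tX}$ (for the upper tail) and $e^{-tX}$ (for the lower tail), exploit independence of the $X_i$ to factor the moment generating function, and then optimize the parameter $t$. Since the theorem is cited from Mitzenmacher--Upfal, the paper itself does not need to supply the argument, but the derivation I would write is entirely standard.

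First I would handle the upper tail. For any $t>0$, Markov's inequality applied to the non-negative random variable $e^{tX}$ yields
\[
\Pr[X\geq(1+\delta)\mu]\;=\;\Pr\bigl[e^{tX}\geq e^{t(1+\delta)\mu}\bigr]\;\leq\; e^{-t(1+\delta)\mu}\,E[e^{tX}].
\]
By independence of the $X_i$, one has $E[e^{tX}]=\prod_{i=1}^n E[e^{tX_i}]$. For each Bernoulli factor, a direct computation gives $E[e^{tX_i}]=1+p_i(e^t-1)\leq e^{p_i(e^t-1)}$, using the elementary inequality $1+x\leq e^x$. Multiplying over $i$ gives $E[e^{tX}]\leq e^{\mu(e^t-1)}$, so
\[
\Pr[X\geq(1+\delta)\mu]\;\leq\;\exp\!\bigl(\mu(e^t-1)-t(1+\delta)\mu\bigr).
\]
Differentiating in $t$ and setting the derivative to zero yields the optimal choice $t=\ln(1+\delta)$, producing the familiar closed form $\bigl(e^\delta/(1+\delta)^{1+\delta}\bigr)^\mu$.

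The main obstacle --- and essentially the only non-mechanical step --- is converting this closed form into the clean Gaussian-style exponential stated in the theorem. That reduces to a scalar calculus estimate on $(0,1)$ bounding $\delta-(1+\delta)\ln(1+\delta)$ from above by a negative quadratic in $\delta$ of the appropriate constant, which is done by Taylor-expanding $\ln(1+\delta)$ around $0$ and controlling the sign of the remainder, or equivalently by differentiating the difference and checking monotonicity at the endpoints. The lower-tail bound proceeds symmetrically: apply Markov to $e^{-tX}$ with $t>0$, reuse the independence-based MGF factorization together with $1+x\leq e^x$, optimize at $t=-\ln(1-\delta)$ to obtain $\bigl(e^{-\delta}/(1-\delta)^{1-\delta}\bigr)^\mu$, and conclude via the analogous scalar inequality $-\delta-(1-\delta)\ln(1-\delta)\leq -\delta^2/2$ on $(0,1)$. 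No probabilistic ingredient beyond Markov's inequality and independence is required, and no structure of the trials beyond being $\{0,1\}$-valued with prescribed means is used.
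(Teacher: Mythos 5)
Your approach is the standard Chernoff/moment-generating-function argument, which is exactly how Mitzenmacher--Upfal prove Theorems 4.4 and 4.5; the paper itself offers no proof and simply cites that textbook, so there is no competing argument to compare against. The lower-tail half of your sketch is sound as written: Markov on $e^{-tX}$, factor the MGF, bound each factor by $e^{p_i(e^{-t}-1)}$, optimize at $t=-\ln(1-\delta)$, and then the scalar inequality $-\delta-(1-\delta)\ln(1-\delta)\leq -\delta^2/2$ on $(0,1)$ does hold (its derivative is $-\delta-\ln(1-\delta)\leq 0$ there), which delivers $e^{-\mu\delta^2/2}$.

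The upper-tail half, however, contains a real gap hidden behind the phrase ``the appropriate constant.'' After optimizing at $t=\ln(1+\delta)$ you arrive at $\exp(\mu[\delta-(1+\delta)\ln(1+\delta)])$, and to match the theorem as stated you would need $\delta-(1+\delta)\ln(1+\delta)\leq -\delta^2/2$ on $(0,1)$. That inequality is \emph{false}: setting $g(\delta)=\delta-(1+\delta)\ln(1+\delta)+\delta^2/2$ gives $g(0)=0$ and $g'(\delta)=\delta-\ln(1+\delta)>0$ for $\delta>0$, so $g>0$, i.e.\ the exponent is strictly larger than $-\delta^2/2$. The best constant achievable uniformly on $(0,1]$ by this route is $1/3$, giving $e^{-\mu\delta^2/3}$ for the upper tail --- which is in fact what Mitzenmacher--Upfal Theorem~4.4 states. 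In other words, the theorem as transcribed in the paper mis-cites the upper-tail constant (it should be $\delta^2/3$, not $\delta^2/2$), and your sketch, by claiming both tails reduce to ``a negative quadratic of the appropriate constant,'' inherits that error rather than catching it. You should either flag the discrepancy or explicitly split the two scalar inequalities and use $1/3$ on the upper side.
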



\subsection{Pseudorandom States and Variants}

We recall the definitions of pseudorandom states (PRSs) \cite{JLS18}. In a follow up work \cite{ALY23}, $\SPRS$s were used to construct $\PDPRG$ which we define as well. 

\begin{definition}[Pseudorandom State Generator]
    A family of QPT algorithms $G=\{G_\secpar\}_{\secpar \in \mathbb{N}}$ is called $m(\lambda)$-\emph{pseudorandom state generator (PRS)}, if the following conditions hold:
    \begin{itemize}
        \item For all $\lambda\in \mathbb{N}$ and ${k}\in \{0,1\}^\lambda$, $G_\lambda(\textsf{k})=\rho_\textsf{k}$ for some $m(\lambda)$-qubit state $\rho_{{k}}$.
        \item For any polynomial $t(\cdot)$ and QPT distinguisher $\adv$:
        \begin{align*}
            \left|  \Pr_{{k}\leftarrow \{0,1\}^\lambda} [\adv_\lambda (G_\lambda ({k})^{\otimes t(\lambda)})=1]-\Pr_{\ket{\phi}\leftarrow \textsf{Haar}_{m(\lambda)}} [\adv_\lambda (\ket{\phi}^{\otimes t(\lambda)})=1]\right| \leq \negl[\lambda].
        \end{align*}
    \end{itemize}
    We divide $\PRS$ into three regimes, based on the state size $m(\secpar)$:
    \begin{enumerate}
        \item $m=c\cdot \log(\lambda)$ with $c\ll 1$.
        \item $m=c\cdot \log(\lambda)$ with $c\geq 1$, which we call \emph{short pseudorandom states ($\SPRS$s)}.
        \item $m=\Omega (\lambda)$, which we call \emph{long pseudorandom states ($\LPRS$s)}.
    \end{enumerate} 
\end{definition}

\begin{definition}[Pseudodeterministic Pseudorandom Generators, adapted from Definition 4.1 in~\cite{ALY23}]\label{def:aly23}
    A family of QPT algorithms $G = \{G_\secpar\}_{\secpar\in\mathbb{N}}$ which on input seed $k\in \{0,1\}^{\secpar}$ $G_\secpar(k)$ outputs a string of length $\ell(\secpar)$ is a \emph{pseudodeterministic pseudorandom generator} ($\PDPRG$) if the following conditions hold:
    \begin{itemize}
        \item \textbf{Pseudodeterminism:} There exists a constant $c>0$ and functions $\mu(\secpar),\nu(\secpar) = O(\secpar^{-c})$ such that for sufficiently large $\secpar\in \mathbb{N}$, there exists a set $\Klam \subset \{0,1\}^\secpar$ such that:
        \begin{enumerate}
            \item $\Pr_{k\xleftarrow{\$}\{0,1\}^\secpar}[k\in \Klam]\geq 1-\mu(\secpar).$ \label{def:pseudodeterminism-part1}
            \item For every $k\in \Klam$:
            \begin{align}
                \max_{y\in\{0,1\}^{\ell(\secpar)}}\Pr[Y\gets G_\secpar(k): Y=y]\geq 1-\nu(\secpar).
                \label{def:pseudodeterminism-part2}
            \end{align}
        \end{enumerate}
        \item \textbf{Stretch:} The output length of $G_\secpar$, namely $\ell(\secpar)$, is strictly greater than $\secpar$. 
    \end{itemize}
    We let $\left(\mu,\nu,\ell\right)-\PDPRG$ denote a generator with stretch $\ell = \ell(\secpar)$  satisfying pseudodeterminism with the functions $\mu = \mu(\secpar),\nu= \nu(\secpar)$.
    \label{def:PD-PRG}
\end{definition}

\begin{definition}[Weak/Strong Pseudorandomness for $\PDPRG$, adapted from Definition 4.1 in~\cite{ALY23}]    
    For every non-uniform QPT distinguisher $\adv$, there exists a function $\epsilon$ such that: 
        \begin{equation}
            \left|  \Pr_{k\gets \{0,1\}^\secpar}[y\gets G_\secpar(k):\adv(y)=1]-\Pr_{y\gets \{0,1\}^{\ell(\secpar)}}[\adv(y)=1]\right|\leq \epsilon(\secpar).
        \end{equation}
        We say that $G$ is a strong $\PDPRG$ if for every $\adv$, $\epsilon(\secpar)$ is a negligible function, and is a weak $\PDPRG$ if for every $\adv$, $\epsilon(\secpar)\leq\frac{1}{v(\secpar)}$ for some polynomial $v$. 
        We say that the $\PDPRG$ is $\epsilon$-pseudorandom for a function $\epsilon$ if for every $\adv$ there is a $\secpar_0 \in \mathbb{N}$ such that the bound holds with the \emph{same} function $\epsilon$ for all $\secpar>\secpar_0$.
\label{def:pdprg_pseudorandomness}
\end{definition}

\section{\texorpdfstring{$\bot$}{bot} Pseudorandom Generators}
\subsection{Definitions}

Next, we introduce a different notion of pseudo-deterministic $\PRG$ , with a recognizable abort property. 

\begin{definition}[Pseudorandom Generator with Recognizable Abort ($\botPRG$)] 
    A family of QPT algorithms $G=\{G_\secpar\}_{\secpar\in\mathbb{N}}$ which on input seed ${k}\in \{0,1\}^{\secpar}$,  $G_\secpar(k)$ outputs a string of length $\ell(\secpar)$ is a \emph{pseudodeterministic pseudorandom generator with recognizable abort} ($\botPRG$) if the following conditions hold:
    \begin{enumerate}
        \item \textbf{Pseudodeterminism:} There exist a constant $c>0$ and a function $\mu(\secpar)= O(\secpar^{-c})$ such that for sufficiently large $\secpar\in \mathbb{N}$ there exists a set $\Klam\subset \{0,1\}^\secpar$ such that the following holds:
        \begin{enumerate}
            \item \[\Pr_{k\xleftarrow{\$}\{0,1\}^\secpar}\left[k\in\Klam \right] \geq 1-\mu(\secpar).\] \label{def:bot-pseudodeterminism-part1}
            \item For every $k\in \Klam$ there exists a $y\in\{0,1\}^{\ell(\secpar)}$ and a negligible function such that: 
            \begin{align}
                \Pr\left[Y\gets G_\secpar(k): Y=y \right] \geq 1 - \negl.
            \end{align} \label{def:bot-pseudodeterminism-part2}
            \item For every $k\in\{0,1\}^\secpar$ there exists a $y\in\{0,1\}^{\ell(\secpar)}$ and a negligible function such that: 
            \begin{align}
                \Pr\left[Y \gets G_\secpar(k) : Y \in \{y,\bot\}\right] \geq 1 - \negl.
            \end{align}  \label{def:bot-pseudodeterminism-part3}
        \end{enumerate}  
        \label{def:bot-pseudodeterminism}
        \item \textbf{Stretch:} The output length of $G_\secpar$, namely $\ell(\secpar)$, is strictly greater than $\secpar$.
        
    \end{enumerate}
    We let $(\mu,\ell)$-$\botPRG$ denote a $\botPRG$ with stretch $\ell=\ell(\secpar)$ and $\mu=\mu(\secpar)$ pseudodeterminism.
    \label{def:bot-PRG}
\end{definition}

A $\botPRG$ differs from a \pdprg in two ways.
First, for every input $x$, there exists a $y$ such that the output is either $y$ or $\bot$, except with negligible probability.
Second, it has better pseudodeterminism for keys in the good set: For those keys, the $\bot$ probability is negligible.

 \lnote{Review}
\anote{Lior and I think the following is not required anymore.}\anote{The advantage of using a $\PRG$  with recognizable abort is that security is easier to show, as will be demonstrated later. The main challenge when using this type of $\PRG$  is proving correctness; one approach that can be often used is parallel repetition: even if we have only $1-\frac{1}{\poly}$ correctness, repeating the construction polynomially many times with independent keys can have $1-\negl$ correctness in some cases.}\onote{Reference concrete examples where this holds.}
\onote{Explain the term recognizable abort, which was borrowed from~\cite{AQY22}, from a slightly different context.}

Next, we address the pseudorandomness property of the $\botPRG$. Since the $\botPRG$ may often output the $\bot$ value, there is no hope of achieving the strong security as defined in \cref{def:pdprg_pseudorandomness}.
We modify the pseudorandomness definition in two important ways. First, we make sure that the adversary cannot obtain any advantage from seeing abort, by providing the abort value both in the pseudorandom, and in the truly random case---see below for details. The main advantage of this approach is the following.
Recall that in the pseudorandomness property of a \pdprg, as given in~\cref{def:pdprg_pseudorandomness}, the adversary receives only a single copy from the output. For this reason, the constructions have to be careful so that the adversary would never receive two outputs sampled using the same key. We use a different definition, that takes into account the recognizable abort, which allows the adversary to receive many samples from the $\botPRG$ output. Our definition uses the following auxiliary operator:

\begin{definition}[$\isbot$]
We define the operator 
\begin{align*}
    \isbot(a,b):=\begin{cases}
    \bot        & \text{if } a = \bot \\
        b        & \text{otherwise}.
    \end{cases}
\end{align*}
\end{definition}
\begin{definition}[Multi-Time $\bot$-Pseudorandomness for $\botPRG$]\label{def:multi-time-pseudorandomness}
    Let $G=\{G_\secpar\}_{\secpar\in\mathbb{N}}$ be a $\botPRG$. $G$ has multi-time $\bot$-pseudorandomness if for any non-uniform QPT distinguisher $\adv$ and for any polynomial $q=q(\secpar)$ there exists a negligible function $\epsilon$ such that:
    \begin{align*}
        \left| \Pr \left[ \begin{matrix}
            k\gets \{0,1\}^\secpar\\
            y_1\gets G_\secpar(k)\\
            \vdots \\
            y_q \gets G_\secpar(k)      
        \end{matrix} : \adv(y_1,...,y_q) = 1 \right] - \Pr \left[ \begin{matrix}
            k\gets \{0,1\}^\secpar \\
            y\gets \{0,1\}^{\ell(\secpar)} \\
            y_1\gets \isbot(G_\secpar(k),y)\\
            \vdots \\
            y_q \gets \isbot(G_\secpar(k),y)      
        \end{matrix}: \adv(y_1,\ldots,y_q) = 1    
        \right] \right| \leq \epsilon(\secpar)
    \end{align*}
    Similarly, $G$ has $t$-fold multi-time $\botPR$ for a polynomial $t(\secpar)$ if for every adversary $\adv$, the distinguishing advantage in the $t$-fold parallel repetition of the above indistinguishability game is negligible.
\end{definition}

\begin{theorem}\label{thm:parallel-repetition-multi-sample-bot-pseudorandomness}
    A $\botPRG$ that satisfies multi-time $\bot$ pseudorandomness, also satisfies the $t$-fold parallel version of multi-sample $\bot$ pseudorandomness for $t\equiv t(\secpar)\in \poly$.
\end{theorem}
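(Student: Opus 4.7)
The plan is to use a standard hybrid argument over the $t$ parallel coordinates. Define hybrids $H_0,H_1,\ldots,H_t$ where in $H_i$ the first $i$ coordinates are sampled from the ``random'' distribution of \cref{def:multi-time-pseudorandomness} (i.e., sample $k_j\gets \{0,1\}^\secpar$, $y_j\gets \{0,1\}^{\ell(\secpar)}$, and output $\isbot(G_\secpar(k_j),y_j)$ for each of the $q$ queries of this coordinate), while the remaining $t-i$ coordinates are sampled from the ``pseudorandom'' distribution (i.e., sample $k_j\gets \{0,1\}^\secpar$ and output $q$ independent evaluations of $G_\secpar(k_j)$). By definition, $H_0$ corresponds to the $t$-fold pseudorandom experiment and $H_t$ corresponds to the $t$-fold random experiment.

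Now I would argue that consecutive hybrids $H_i$ and $H_{i+1}$ are computationally indistinguishable via a reduction to multi-time $\bot$-pseudorandomness. Given a $t$-fold distinguisher $\adv$ with non-negligible advantage between $H_i$ and $H_{i+1}$, there exists some $i$ (by an averaging/triangle-inequality argument) such that $\adv$ distinguishes $H_i$ from $H_{i+1}$ with advantage at least a $1/t$ fraction of its overall advantage, which remains non-negligible. Build a single-instance distinguisher $\bdv$ as follows: on input $(z_1,\ldots,z_q)$, $\bdv$ simulates the $t$ coordinates by placing the received challenge in coordinate $i+1$; for coordinates $j\leq i$, $\bdv$ samples a fresh $k_j\gets \{0,1\}^\secpar$ and $y_j\gets \{0,1\}^{\ell(\secpar)}$ and produces the $q$ values $\isbot(G_\secpar(k_j),y_j)$ itself; for coordinates $j> i+1$, $\bdv$ samples a fresh $k_j\gets \{0,1\}^\secpar$ and evaluates $G_\secpar(k_j)$ a total of $q$ times. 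Finally, $\bdv$ feeds the resulting $t\times q$ tuple to $\adv$ and outputs whatever $\adv$ outputs.

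If the challenge is pseudorandom (i.e., $q$ evaluations of $G_\secpar(k)$), then $\bdv$ perfectly simulates $H_i$; if the challenge is random (i.e., $\isbot(G_\secpar(k),y)$ with uniform $y$), then $\bdv$ perfectly simulates $H_{i+1}$. Hence $\bdv$'s distinguishing advantage matches $\adv$'s advantage between $H_i$ and $H_{i+1}$, contradicting the multi-time $\bot$-pseudorandomness of $G$. A final triangle inequality over the $t$ hybrid transitions, combined with the fact that $t=t(\secpar)$ is a fixed polynomial and each transition contributes a negligible amount, shows that $H_0$ and $H_t$ are computationally indistinguishable, establishing the theorem.

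I do not foresee a genuine obstacle: this is the standard reduction from single-sample to many-sample security, and the only subtle point is that the reduction must know the uniformly random string $y_j$ used by the $\isbot$ in the already-random coordinates; since $\bdv$ samples these $y_j$ itself, the simulation is exact. The only ingredient specific to the $\bot$ setting is the consistent use of the $\isbot$ operator with the \emph{same} $y_j$ across the $q$ repetitions of a single coordinate, which mirrors the structure of \cref{def:multi-time-pseudorandomness}.
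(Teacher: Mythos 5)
Your proposal is correct and is precisely the ``standard hybrid argument for parallel repetition of indistinguishability games'' that the paper invokes in a single sentence without spelling it out. The hybrid sequence, the placement of the single-instance challenge in one coordinate while the reduction simulates the others (sampling the auxiliary $y_j$ itself for the already-random coordinates), and the concluding triangle inequality over $t\in\poly$ transitions are all exactly what the paper means.
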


\begin{proof}
    The proof follows by standard hybrid arguments for parallel repetition of indistinguishability games. 
    \qed
\end{proof}

\subsection{Multi-Time \texorpdfstring{$\botPRG$}{bot-PRG} From Weak \texorpdfstring{$\PDPRG$}{PD-PRG}}
The main result of this section is as follows.

\begin{theorem}[Multi-Time $\botPRG$ from weak $\PDPRG$]
    Assuming weak $(\mu,\nu,\ell)$-$\PDPRG$ exists for which $\ell(\secpar)>\secpar^2$ and there exist a constant $c>1$ such that $\mu(\secpar)= O(\secpar^{-c})$, then there exists multi-time
    $(\secpar\mu,\ell') - \botPRG$, where $\ell'(\secpar^2)=\ell(\secpar)$.
    \label{thm:pdprg_implies_botPRG}
\end{theorem}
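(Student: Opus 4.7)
The plan is a two-stage amplification of the given weak $\PDPRG$: a majority-vote stage converting $1/\poly$ correctness error on good keys into a negligible one (with recognizable abort on all other keys), followed by an XOR amplification boosting weak $1/\poly$ pseudorandomness to negligible advantage in the multi-time $\botPR$ setting.

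\textbf{Stage 1 (majority amplification).} I would define $G_1(k)$ to run $G_\secpar(k)$ independently $T = \poly(\secpar)$ times and return the unique value appearing strictly more than $2T/3$ times, or $\bot$ if no such value exists; the seed length and output length are unchanged. On the same good set $\Klam$ as the input $\PDPRG$, clauses (a)--(b) of \Cref{def:bot-PRG} for $G_1$ follow from \Cref{thm:chernoffbound} applied to the distinguished value. For the more delicate clause (c), letting $y^\star$ denote the most likely output of $G_\secpar(k)$, I would bound the probability that some other value achieves the $2T/3$ majority by splitting $y' \neq y^\star$ into "heavy" values (bounded in number, each controlled by Chernoff) and "light" values whose total contribution is absorbed by a sufficiently large $T$ relative to the output length $\ell$. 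Weak pseudorandomness of $G_1$ reduces to that of $G$ by a direct simulation, up to the $\mu$ loss contributed by $\bot$ outputs (still $1/\poly$, so weak pseudorandomness is preserved).

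\textbf{Stage 2 (XOR amplification).} I then set
\[
    G_2(k_1, \ldots, k_\secpar) := \begin{cases} \bigoplus_{i=1}^\secpar G_1(k_i) & \text{if no } G_1(k_i) = \bot, \\ \bot & \text{otherwise.} \end{cases}
\]
The new seed length is $\secpar^2$ and the output length is $\ell(\secpar) = \ell'(\secpar^2)$; stretch holds since $\ell(\secpar) > \secpar^2$. The new good set $\Klam^\secpar$ has measure $\geq 1 - \secpar\mu$ by a union bound, matching the claimed parameter. Pseudodeterminism clauses (a)--(c) for $G_2$ follow from those of $G_1$: XORing pseudo-deterministic values is pseudo-deterministic, and any single $\bot$ component forces overall $\bot$.

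\textbf{Stage 3 (multi-time $\botPR$ via Yao's XOR lemma).} This is the main obstacle. The key observation is that clause (c) for $G_2$ implies that for every super-seed, $G_2$ outputs either the fixed value $Y := \bigoplus_i y^\star(k_i)$ or $\bot$, except negligibly. Hence in both the real and ideal worlds of \Cref{def:multi-time-pseudorandomness}, each of the $q$ queries returns $\bot$ (with the same super-seed-dependent probability) or the "other value" ($Y$ in real, uniform $y$ in ideal), up to negligible slack. Distinguishing therefore reduces to distinguishing $Y$ from a uniform string, with the $\bot$-pattern carrying no extra information. Since $Y$ is the XOR of $\secpar$ independent samples of $y^\star(k)$ for uniform $k$, and the distribution of $y^\star(k)$ inherits the $1/\poly$-weak pseudorandomness of $G_1(k)$ (conditioning on non-$\bot$ incurs only $1/\poly$ statistical slack), I would invoke Levin's version of Yao's XOR lemma in its non-uniform quantum-compatible formulation to conclude that $Y$ is negligibly indistinguishable from uniform against non-uniform QPT adversaries. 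Adding negligible slack from the pseudodeterminism conditioning yields the desired multi-time $\botPR$. The most delicate point is verifying the quantum adaptation of Levin's XOR lemma and its interaction with the recognizable-abort mechanism, together with a hybrid over the $q$ queries (trivial here since, conditioned on the super-seed, the queries are independent repeats of the same experiment).
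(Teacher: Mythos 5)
Your two-stage plan (majority vote, then $\bot$-aware XOR) matches the paper's construction, and your Stage~1 and Stage~2 arguments are essentially correct: the $60\%$-majority in the paper versus your $2T/3$-threshold are interchangeable, your heavy/light sketch for clause~(c) is a plausible alternative to the paper's partition-into-three-sets lemma, and the pseudodeterminism bookkeeping for $G_2$ is right. The reliance on a non-uniform quantum version of Levin's XOR lemma is also the tool the paper uses.

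The gap is in Stage~3, specifically the claim that \emph{``the $\bot$-pattern carr[ies] no extra information.''} This is exactly the subtlety the multi-time definition is designed to capture, and it is false as stated. The $\bot$-pattern across the $q$ repetitions is a (noisy) function of \emph{which component seeds $k_i$ are bad}, and the non-$\bot$ output $Y=\bigoplus_i y^\star(k_i)$ is also a function of the $k_i$. In the real world these two quantities are jointly determined by the super-seed; in the ideal world the non-$\bot$ value is replaced by a fresh uniform string independent of the super-seed. So even if $Y$ is marginally pseudorandom, an adversary who learns (approximately) the bad-key pattern from the $\bot$-frequencies could potentially predict $Y$, without this being a contradiction to marginal pseudorandomness of $Y$. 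Your reduction ``distinguish $Y$ from a uniform string'' silently discards the $\bot$-pattern, so it does not actually bound the multi-time advantage.

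What the paper does instead is condition on the good/bad partition explicitly, hand the full set of bad keys $K_{\mathrm{bad}}$ to the reduction adversary as non-uniform advice, and then prove that the XOR over the \emph{good} keys alone is negligibly pseudorandom against non-uniform QPT adversaries (this is where the XOR amplification is invoked, and why the proof insists on $|S|\ge\lambda/2$ via a Chernoff bound over the Bernoulli good/bad sampling). The pseudorandom XOR of the good-key outputs then acts as a one-time pad masking both the bad-key contribution and any information the $\bot$-pattern could reveal, after which the $b=0$ and $b=1$ worlds coincide. Your Stage~3 needs this conditioning: instead of invoking the XOR lemma on all $\secpar$ samples of $y^\star(k)$ for uniform $k$, you need to apply it to the at least $\secpar/2$ good samples, and you need the resulting pseudorandomness to hold \emph{given} the bad keys (hence given the $\bot$-pattern) as auxiliary input. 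Without this, the proof does not go through.
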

The proof is given in \ifnum\iacr=1Supplement\else Section\fi~\ref{sec:proof_pdprg_implies_botPRG}.

Ananth, Lin and Yuen proved that a $\SPRS$ implies a weak $\PDPRG$:
\begin{theorem}[Theorem 4.2 in~\cite{ALY23}]
    Assuming the existence of $(c\log\secpar)$-PRS for some constant $c>6$, then there exists a $O(\secpar^{-c/6})$-pseudorandom $(\mu = O(\secpar^{-c/12}),\nu = O(\secpar^{-c/12}),\ell = \secpar^{c/6})$-$\PDPRG$. 
    \label{thm:short_prs_implies_pdprg}
\end{theorem}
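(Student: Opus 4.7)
The plan is to use the logarithmic output size of the $\SPRS$ in an essential way: because $m=c\log\secpar$, the state $\ket{\phi_k}$ lives in a space of polynomial dimension $d=2^m=\secpar^c$, and so efficient state tomography on polynomially many copies produces a classical description of $\ket{\phi_k}$. Rounding this description onto a carefully chosen discretization grid yields a classical string that inherits pseudorandomness from the $\SPRS$ and is pseudodeterministic by concentration of the tomography outcome.

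\textbf{Construction.} On input $k\in\{0,1\}^\secpar$, run $G_\secpar(k)$ to produce $T=\poly(\secpar)$ copies of $\ket{\phi_k}$, where $T$ is chosen so that a standard pure-state tomography algorithm (e.g.\ Haah~et~al.) returns a classical description $\widehat{\phi}_k$ satisfying $\|\ket{\widehat{\phi}_k}\!\bra{\widehat{\phi}_k}-\ket{\phi_k}\!\bra{\phi_k}\|_1\le \epsilon$ except with probability $\epsilon$, for $\epsilon=\Theta(\secpar^{-c/12})$. Since $T=\widetilde{O}(d/\epsilon^2)=\widetilde{O}(\secpar^c\cdot\secpar^{c/6})=\poly(\secpar)$, the procedure is QPT. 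Fix once and for all an explicit $\epsilon$-net $\mathcal{N}\subset(\mathbb{C}^d)_1$ of pure states of size $2^{\ell(\secpar)}$ with $\ell(\secpar)=\secpar^{c/6}$, encoded as binary strings (e.g.\ truncating the real and imaginary parts of each amplitude to suitable precision, or using any explicit net on the projective unit sphere of $\mathbb{C}^d$). Define $G'_\secpar(k)$ to output the nearest net point to $\widehat{\phi}_k$ in $\mathcal{N}$.

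\textbf{Pseudodeterminism.} Let $\Klam$ be the set of seeds $k$ for which $\ket{\phi_k}$ lies at trace distance at least $3\epsilon$ from every Voronoi boundary of $\mathcal{N}$. For $k\in\Klam$, any tomography output within $\epsilon$ of $\ket{\phi_k}$ rounds to the same net point, so \cref{def:pseudodeterminism-part2} holds with $\nu=O(\secpar^{-c/12})$. To bound $\Pr_{k}[k\notin\Klam]$, consider sampling $k$ uniformly and a Haar-random state $\ket{\phi}$ independently; by the $\PRS$ property the two are computationally (hence in particular statistically, after polynomial averaging over \emph{fixed} measurable events) indistinguishable up to $\negl$. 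The Haar measure of the $3\epsilon$-neighborhood of the $\mathcal{N}$-boundary is $O(\epsilon\cdot\mathrm{poly}(d))$, which for our choice of $\epsilon$ and a well-spread net yields $\mu=O(\secpar^{-c/12})$, giving \cref{def:pseudodeterminism-part1}. Finally, $\ell(\secpar)=\secpar^{c/6}>\secpar$ since $c>6$, so the stretch condition holds.

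\textbf{Pseudorandomness.} Fix any QPT distinguisher $\adv$. By a hybrid argument, replace $G_\secpar(k)$ for uniform $k$ with $T$ copies of a Haar-random $\ket{\phi}$; by the $\PRS$ security this changes $\Pr[\adv=1]$ by at most $\negl$. It remains to show that rounding the tomography output of a Haar-random state to $\mathcal{N}$ is $O(\secpar^{-c/6})$-close in statistical distance to uniform on $\mathcal{N}\cong\{0,1\}^{\ell}$. This is the content of a Haar-concentration estimate: the push-forward of the Haar measure onto $\mathcal{N}$ via the tomography-and-round map has total-variation distance $O(\secpar^{-c/6})$ to uniform on $\mathcal{N}$, because (i) the Haar measure of each Voronoi cell is equal up to the variation of the Fubini–Study volume form across a cell, and (ii) boundary cells contribute at most the $O(\epsilon\cdot\mathrm{poly}(d))$ measure computed above, which is dominated by $\epsilon^2=\Theta(\secpar^{-c/6})$.

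\textbf{Main obstacle.} The main technical step, and the one driving the final parameter $O(\secpar^{-c/6})$, is the Haar-uniformity estimate on $\mathcal{N}$: one has to choose an explicit net whose Voronoi cells are sufficiently regular that the push-forward of Haar measure is provably close to uniform in statistical distance, and then balance the tomography error $\epsilon$ (which lower bounds both $\mu$ and $\nu$) against the loss in pseudorandomness (controlled by $\epsilon^2$ through the measure of the boundary region and the tomography failure probability). Calibrating these so that $\mu,\nu=O(\secpar^{-c/12})$, the pseudorandomness advantage is $O(\secpar^{-c/6})$, and the output length is exactly $\secpar^{c/6}$ is the only delicate part of the argument; everything else is a direct hybrid using the $\SPRS$ security.
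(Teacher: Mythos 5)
The paper does not prove this statement; it is imported verbatim as Theorem~4.2 of~\cite{ALY23}, so there is no in-paper proof to compare against. Your high-level template (tomography on polynomially many copies, then a classical discretization, with a hybrid to Haar for pseudorandomness) is indeed the strategy in~\cite{ALY23}, and your pseudodeterminism and hybrid steps are structurally fine.

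The gap is in the pseudorandomness step, and it is more serious than the ``calibration'' framing you give it. You want a set $\mathcal{N}$ of exactly $2^{\ell}$ points with $\ell=\secpar^{c/6}$ in the projective unit sphere of $\mathbb{C}^{d}$ with $d=\secpar^{c}$; that sphere has $2d-2=\Theta(\secpar^{c})$ real dimensions, so a set of this cardinality has Voronoi cells of essentially constant diameter (indeed $2^{-\ell/(2d-2)}\to 1$), i.e.\ it is not an $\epsilon$-net for $\epsilon=\Theta(\secpar^{-c/12})$ at all. Consequently the argument you invoke in item~(i) --- that the Haar measure of each cell is equal ``up to the variation of the Fubini--Study volume form across a cell'' --- has no force: the cells are macroscopic, the FS density varies by large (super-polynomial) factors across a cell, and a naive construction such as truncating the real and imaginary parts of the amplitudes produces cells whose Haar masses differ by enormous factors (Haar measure is very far from Lebesgue measure in amplitude coordinates). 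To make the push-forward of Haar to $\{0,1\}^{\ell}$ statistically close to uniform you need an explicit, nearly Haar-measure-equalizing map from the projective sphere to the cube --- e.g.\ composing tomography with inverse-CDF-type reshaping of the estimated coordinates, or an explicitly constructed equal-measure partition. That map is the actual technical content of the cited theorem, and it is precisely the step your proposal leaves undone while implicitly suggesting a generic net suffices; it does not.
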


By combining \cref{thm:pdprg_implies_botPRG,thm:short_prs_implies_pdprg}, we conclude:
\begin{corollary}\label{cor:bot-prg-from-sprs}
     Assuming the existence of $(c\log\secpar)$-PRS for some constant $c>12$, then there exists multi-time
    $(\mu = O(\secpar^{-c/12 +1}),\ell = \secpar^{c/12}) - \botPRG$.
\end{corollary}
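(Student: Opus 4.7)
The plan is to construct the desired multi-time $\botPRG$ from the given weak $\PDPRG$ $G$ in two successive transformations, each addressing a distinct weakness: first, turning the pseudo-determinism into recognizable abort with unique non-$\bot$ value, and second, amplifying the weak pseudorandomness into multi-time $\botPR$.

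\emph{Stage 1 -- recognizable abort by majority vote.} Define an intermediate generator $G_1$ on seed $k$ that runs $G(k)$ independently $t=\Theta(\secpar)$ times and outputs the value appearing strictly more than $t/2$ times, and $\bot$ otherwise. Because a strict majority is unique, the output of $G_1(k)$ lies in $\{y_k,\bot\}$ for some single $y_k$; this already yields item~(c) of \Cref{def:bot-pseudodeterminism}. For any $k$ in the good set $\Klam$ of $G$, the indicator of ``$G(k)$ outputs its pseudodeterministic value'' is a Bernoulli with bias at least $1-\nu(\secpar) \geq 1/2 + \Omega(1)$, so \Cref{thm:chernoffbound} implies that the $y_k$-count exceeds $t/2$ except with negligible probability. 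Thus $G_1$ inherits the good-key fraction $1-\mu$ and satisfies items~(a) and~(b). Weak pseudorandomness of $G_1$ reduces to that of $G$ by a standard hybrid over the $t$ internal evaluations: any poly-advantage distinguisher for $G_1$ against uniform can be turned into one for a single $G(k)$ call by honestly simulating the remaining $t-1$ calls and checking majority.

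\emph{Stage 2 -- XOR amplification.} Define the final generator $G'$ on seed $(k_1,\ldots,k_\secpar)\in\{0,1\}^{\secpar^2}$ by computing $G_1(k_i)$ for each $i$, outputting $\bot$ if any $G_1(k_i)=\bot$, and $\bigoplus_i G_1(k_i)$ otherwise. A union bound over sub-seeds bounds the bad-seed fraction of $G'$ by $\secpar\mu$, matching the parameter in the conclusion; the $\{y,\bot\}$-structure is preserved because the XOR of the (almost surely unique) non-$\bot$ values is itself a single string. The stretch condition $\ell'(\secpar^2) = \ell(\secpar) > \secpar^2$ is given by hypothesis. Security amplification then proceeds by a Levin-style version of Yao's XOR lemma applied to $\secpar$ independent copies of $G_1$: the $1/\poly$-advantage of any distinguisher against $G_1$ (conditioned on the event that $G_1$ does not abort) is boosted to a negligible one against $G'$.

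The main obstacle is not the single-sample XOR argument itself but its lift to the \emph{multi-time} $\botPR$ of \Cref{def:multi-time-pseudorandomness}, where the adversary sees $q$ evaluations on the same seed. Here, the $\bot$ pattern and the non-$\bot$ XOR values are correlated across the $q$ queries because the same sub-seeds $k_1,\ldots,k_\secpar$ are reused. The approach I would take is to separate each query's output into (i) the coordinate-wise $\bot$-pattern, which depends only on the internal randomness of the $\secpar$ sub-PRG calls and is therefore identically distributed in the real and ideal experiments, and (ii) the non-$\bot$ XOR value. A reduction can then generate (i) honestly and use its challenge (a single pair of $q$-tuples produced as in \Cref{def:multi-time-pseudorandomness}, but for $G_1$) to populate (ii), reducing the $q$-query multi-time game for $G'$ to the corresponding $q$-query multi-time game for $G_1$ under an XOR-lemma-style hybrid. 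The crux is thus proving a $\bot$-aware XOR lemma which says: if $G_1$ is $1/\poly$-indistinguishable from uniform conditioned on non-abort, then the XOR of $\secpar$ independent copies is negligibly indistinguishable from uniform conditioned on the joint non-abort event, even when the adversary sees $q$ samples reusing the same seeds -- exactly what the definition of multi-time $\botPR$ demands.
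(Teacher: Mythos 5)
Your two-stage skeleton --- a majority-vote wrapper to turn pseudodeterminism into recognizable abort, followed by an XOR over $\secpar$ fresh seeds for amplification --- is precisely the paper's construction behind \cref{thm:pdprg_implies_botPRG} (\cref{con:weak_botPRG_from_weak_pdprg,con:XorbotPRG}), so the decomposition is right; the corollary then just plugs in Theorem~4.2 of~\cite{ALY23}. Two details, however, do not close. In Stage~1 the strict-majority rule ($>t/2$ out of $t$) does \emph{not} give item~(c) of \cref{def:bot-PRG}: uniqueness of the winner within one run is not the same as having one fixed $y_k$ across the randomness of $G_1$. For a bad key $k$ with, say, $\Pr[G(k)=y_1]=\Pr[G(k)=y_2]=1/2$, each of $y_1,y_2$ wins the strict majority on roughly half the runs, so $G_1(k)\notin\{y,\bot\}$ whp for any single $y$. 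The paper uses a $60\%$ threshold precisely to open a constant Chernoff gap: if the mode of $G(k)$ has mass $\geq 1/2$ then no other value can clear $60\%$ except with negligible probability (\cref{vote-50}), and if every value has mass $<1/2$ a three-way partition argument (\cref{lem: Set-Division}) shows nothing clears $60\%$, so the output is $\bot$ whp. Your Stage-1 pseudorandomness reduction is also off --- you cannot ``honestly simulate the remaining $t-1$ calls'' to $G(k)$ without knowing $k$ --- but this is easily repaired, as the paper does, by observing that $\Gvote(k)$ and $G(k)$ are statistically $(\mu+\nu+\negl)$-close on a random $k$ and then invoking weak pseudorandomness of $G$ once.

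The larger gap is in Stage~2. You correctly isolate the crux as a ``$\bot$-aware XOR lemma that survives $q$ samples reusing the same seeds,'' but leave it as an unproven claim, and a multi-sample XOR lemma is genuinely nontrivial: with repeated queries the adversary can, for instance, estimate per-coordinate abort probabilities, so the standard single-sample hardness-amplification machinery does not apply off the shelf. The paper's proof avoids needing such a lemma at all. The hybrid sequence in \cref{prop:PropPDPRG2} first conditions on which sub-seeds are good (more than $\secpar/2$ of them whp), then uses the bot-pseudodeterminism just established in Stage~1 to observe that the good coordinates' contribution $y^{\mathsf{good}}$ is a \emph{single fixed value} across all $q$ queries; this collapses the entire multi-time interaction, as far as the good coordinates are concerned, to one sample. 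A standard single-sample Levin/Yao XOR amplification (\cref{thm:Xor-Amp}, proved via next-bit unpredictability and correlation amplification \`a la~\cite{GNW11}) then shows $y^{\mathsf{good}}$ looks uniform, and the final hybrid treats $y^{\mathsf{good}}$ as a one-time pad over the bad coordinates, whose outputs only affect the abort pattern. So the pseudodeterminism from Stage~1 is not merely a correctness ingredient --- it is exactly what lets the proof reuse the single-sample XOR machinery rather than invent a multi-sample variant, which is the step your proposal leaves open.
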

Note that $c$ in the corollary above must satisfy $c>12$, so that the stretch $\ell(\secpar)$ of the $\botPRG$ would satisfy $\ell(\secpar)>\secpar$. Even though this would not be used in our work, we mention that Ref.~\cite{ALY23} shows that the exact same assumption as in the above, also implies a strong $\PDPRG$:

\begin{theorem}[Corollary 4.6 in~\cite{ALY23}]
    Assuming the existence of $(c\log\secpar)$-PRS for some constant $c>12$, then there exists a strong pseudorandom $(\mu = O(\secpar^{-(c/12-1)}),\nu = O(\secpar^{-(c/12-1)}),\ell = \secpar^{c/6})$-$\PDPRG$. 
\end{theorem}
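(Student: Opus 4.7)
The plan is to obtain the strong $\PDPRG$ by composing the existence of a weak $\PDPRG$ with a pseudorandomness amplification step. First, I would invoke~\cref{thm:short_prs_implies_pdprg} to produce, from the $(c\log\secpar)$-PRS assumption, a weak $(\mu,\nu,\ell)$-$\PDPRG$ $G$ with $\mu,\nu = O(\secpar^{-c/12})$, stretch $\ell(\secpar) = \secpar^{c/6}$, and inverse-polynomial distinguishing advantage $\epsilon = O(\secpar^{-c/6})$. Since $c>12$, the stretch is still strictly larger than $\secpar$. What remains is to drive the distinguishing advantage from $\epsilon$ down to negligible while preserving the stretch and only mildly worsening the pseudodeterminism parameters.

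For the amplification I would define
\[
    G'(k_1,\ldots,k_\secpar) := G(k_1)\oplus G(k_2)\oplus\cdots\oplus G(k_\secpar),
\]
where the $k_i\in\{0,1\}^\secpar$ are sampled independently. XOR preserves the output length, so the stretch of $G'$ is still $\secpar^{c/6}$, while the total seed length becomes $\secpar\cdot\secpar$. Pseudorandomness of $G'$ then follows from a quantum analog of Yao's/Levin's XOR lemma (the same amplification tool the paper's technical overview invokes for the companion $\botPRG$ construction): the single-sample advantage $\epsilon=O(\secpar^{-c/6})$ of $G$ amplifies to roughly $\epsilon^{\Omega(\secpar)} = 2^{-\Omega(\secpar\log\secpar)}$ on the XOR, which is negligible.

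For pseudodeterminism, I would declare $\Klam' := \Klam^{\secpar}$ to be the new good set. A uniform tuple lies in $\Klam'$ with probability at least $(1-\mu)^\secpar \geq 1 - \secpar\mu = 1 - O(\secpar^{-(c/12-1)})$, matching the claimed bound on $\mu$. Conditional on $(k_1,\ldots,k_\secpar)\in\Klam'$, each sub-evaluation $G(k_i)$ produces its canonical string $y_i$ except with probability at most $\nu$, so by a further union bound the XOR equals the fixed value $y_1\oplus\cdots\oplus y_\secpar$ except with probability at most $\secpar\nu = O(\secpar^{-(c/12-1)})$, matching the claimed bound on $\nu$.

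The step I expect to require the most care is the clean invocation of the XOR lemma in the pseudodeterministic regime: unlike a truly deterministic $\PRG$, $G(k_i)$ can deviate from its canonical value with probability $\nu$ even on good seeds, so one must argue that conditioned on the high-probability event that every sub-seed is good \emph{and} every sub-evaluation is canonical --- an event of probability $1 - O(\secpar^{-(c/12-1)})$ --- the XORed output coincides with that of a deterministic weakly pseudorandom generator to which the XOR lemma applies verbatim. The remaining $1/\poly$ contribution is absorbed into the pseudodeterminism parameters of $G'$ and does not harm the final negligible distinguishing bound, since the pseudorandomness definition of a strong $\PDPRG$ requires only a single output sample.
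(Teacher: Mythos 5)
This statement is not proved in the paper at all — it is a black-box import of Corollary~4.6 from~\cite{ALY23}, mentioned explicitly as ``not used in our work.'' So there is no internal proof to compare against, and your proposal must be assessed on its own terms. At a high level your plan (weak $\PDPRG$ from \cref{thm:short_prs_implies_pdprg}, then $\secpar$-fold XOR amplification of pseudorandomness) is a plausible reconstruction and is in the same spirit as the companion $\botPRG$ amplification this paper actually carries out in \cref{prop:PropPDPRG2}. The parameter accounting also roughly lines up: XORing $\secpar$ copies multiplies the seed length by $\secpar$, so requiring $c>12$ is exactly what keeps $\secpar^{c/6} > \secpar^{2}$, and the union bounds $\mu' \leq \secpar\mu$, $\nu' \leq \secpar\nu$ give $O(\secpar^{-(c/12-1)})$ as claimed (modulo being careful about whether $\secpar$ in the cited statement denotes the new seed length or the original PRS parameter, a re-indexing the paper handles explicitly in its own \cref{thm:pdprg_implies_botPRG} via the convention $\ell'(\secpar^2)=\ell(\secpar)$).

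The genuine gap is in your last paragraph. You propose to condition on the event $E$ that every $k_i$ is good \emph{and} every $G(k_i)$ produces its canonical value, note $\Pr[\bar E]=O(\secpar^{-(c/12-1)})$, reduce to a deterministic weak PRG, apply the XOR lemma, and then assert that the $1/\poly$ conditioning loss ``is absorbed into the pseudodeterminism parameters of $G'$ and does not harm the final negligible distinguishing bound.'' That last step does not go through. The distribution $G'(K)$ and the conditioned distribution $G'(K)\,|\,E$ differ by $\Theta(\Pr[\bar E]) = \Omega(1/\poly)$ in statistical distance, so even a perfect XOR lemma for the conditioned process only delivers a distinguishing advantage of $\negl + O(\secpar^{-(c/12-1)})$ for $G'$, which is not negligible. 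Moreover, $\bar E$ is also an event over $G$'s \emph{internal randomness}, not just over the key, so it cannot be folded into $\mu'$ (which by definition only governs the measure of ``bad'' keys). The resolution is that the conditioning step is unnecessary: the kind of XOR/hardness-amplification machinery one needs here — and which this paper uses in \cref{thm:Xor-Amp} via next-bit unpredictability and the GNW isolation lemma — applies to arbitrary probabilistic ensembles, not just deterministic functions. The weak pseudorandomness of $G$ from \cref{thm:short_prs_implies_pdprg} already bounds the distinguishing advantage of the whole random variable $G(K_i)$ (non-canonical evaluations included) by $O(\secpar^{-c/6})$, and that is exactly the per-coordinate hypothesis the amplification theorem consumes. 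Amplify that quantity directly on the random variable and the negligible distinguishing bound follows with no residual $1/\poly$ term.
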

\begin{remark}
    One can also show a result in the other direction, that is, multi-time $\botPRG$ with certain parameters implies strong \pdprg (with worse parameters). The proof is essentially the same construction, except we treat $\bot$ as some fixed value, such as $0^{\ell(\secpar)}$. The proof is omitted.
\end{remark}

\section{\texorpdfstring{$\bot$}{bot} Pseudorandom Functions}

\begin{definition}\label{def:bot-prf-correctness}
Let $\bot$ be a special symbol called $\bot$. 
    A family of quantum algorithms (i.e., the output on a fixed input is a random variable) $\mathcal{F}=\{f_k\}_{k\in \{0,1\}^\secpar}$ with domain $\{0,1\}^m$ and co-domain $\{0,1\}^\ell$ where $m=m(\secpar)$ and $\ell=\ell(\secpar)$ is called a $\bot$-function family.
    \paragraph*{Efficient Implementation:} There exists a QPT algorithm $A$ that implements $\mathcal{F}$.
    \paragraph*{Correctness:} There exists an inverse polynomial $\mu(\secpar)$,
    \begin{enumerate}
        \item \textbf{Pointwise far from bot:} For every $x\in \{0,1\}^m$,
        \[\Pr[k\xleftarrow{\$}\{0,1\}^\secpar: f_k(x) \neq \bot] \geq 1-\mu(\secpar).\] \label{it:far-from-bot}
        \item \textbf{Bot-pseudodeterminism:} There exists a negligible function $\negl$ such that for every $x\in \{0,1\}^m$ and $k\in \{0,1\}^\secpar$, there exists a $y\in \{0,1\}^\ell$ and negligible function $\negl$ such that $\Pr[f_k(x)\in \{y,\bot\}]\geq 1-\negl$.\label{it:bot-determinism}
    \end{enumerate}
The inverse polynomial $\mu(\secpar)$ is the $\bot$-error, and the negligible function $\negl$ is the pseudodeterminism error. We use $(\mu,m,\ell)$-$\botPRF$ to denote a $\botPRF$ family with $\bot$-error $\mu(\secpar)$, input length $m(\secpar)$ and output length $\ell(\secpar)$.
\end{definition}

\begin{definition}\label{def:bot-prf-security} 
    A $\bot$-function family $\mathcal{F}=\{f_k\}_{k\in \{0,1\}^\secpar}$  with domain $\{0,1\}^m$ and co-domain $\{0,1\}^\ell$ is (adaptively) pseudorandom if for every QPT distinguisher $D$, there exists a negligible function $\negl$ such that
    \[\Pr[\prfdistingexpt(\secpar)=1]\leq \frac{1}{2}+\negl,\]
    where $\prfdistingexpt(\secpar)$ is as defined in \cref{fig:botprfadaptexpt}.
\end{definition}

\begin{figure}[!htb]
   \begin{center} 
   \begin{tabular}{|p{16cm}|}
    \hline 
\begin{center}
\underline{$\prfdistingexpt(\secpar)$}: 
\end{center}
\begin{enumerate}
\item Challenger $\ch$ samples a bit $b\xleftarrow{\$}\{0,1\}$ and a key $K\xleftarrow{\$}\{0,1\}^{\secpar}$.
\item $\ch$ samples a random function $F\xleftarrow{\$}\left(\{0,1\}^{\ell(\secpar)}\right)^{\{0,1\}^{m(\secpar)}}$.
\item Let $G$ be the function that on input $x$, runs $f_K(x)$ and if the outcome is $\bot$, outputs $\bot$, else outputs $F(x)$.
\item If $b=0$, $\ch$ gives $D$ oracle access to $f_K(\cdot) $, and if $b=1$, gives oracle access to $G(\cdot)$.
\item $D$ outputs a bit $b'$.
\item Output $1$ (i.e., distinguisher wins) if $b'=b$.
\end{enumerate}
\ \\ 
\hline
\end{tabular}
    \caption{$\botPRF$ pseudorandomness experiment}
    \label{fig:botprfadaptexpt}
    \end{center}
\end{figure}

\subsection{Construction of \texorpdfstring{$\botPRF$}{bot-PRF} From Multi-Time \texorpdfstring{$\botPRG$}{bot-PRG}}
\label{sec:construct-GGM}
Our construction of $\botPRF$ from $\botPRG$ is the same as the classical construction of PRF from $\PRG$  by Goldreich, Goldwasser, and Micali~\cite{GGM86}. In short, we will be referring to this construction as the GGM construction.

\begin{theorem}
    Assume $\mu(\secpar)\in\frac{1}{\poly}$ and $m(\secpar)\in \poly$ be such that $m(\secpar)\cdot \mu(\secpar)\in \frac{1}{\poly}$.
    Then, there exists a black box construction of a $\left(m\mu+\delta,m,\secpar\right)$-$\botPRF$ family (see \cref{def:bot-prf-correctness,def:bot-prf-security}) from any multi-time $\bot$-pseudorandom $(\mu,2\secpar)$-$\botPRG$ (see \cref{def:bot-prf-correctness}), for some negligible function $\delta(\secpar)$.
    \label{thm:bot_prf_from_bot_prg}
\end{theorem}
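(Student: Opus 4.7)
The plan is to instantiate the classical Goldreich--Goldwasser--Micali (GGM) tree construction with the given $\botPRG$ $G$. Split the $2\secpar$-bit output of $G$ into halves $G_0, G_1 : \{0,1\}^\secpar \to \{0,1\}^\secpar$, with the convention $G_b(k) = \bot$ whenever $G(k) = \bot$. For $x = x_1 \cdots x_m \in \{0,1\}^m$ and seed $k \in \{0,1\}^\secpar$, set
\[ f_k(x) \;:=\; G_{x_m}\!\bigl(G_{x_{m-1}}\!\bigl(\cdots G_{x_1}(k)\cdots\bigr)\bigr), \]
aborting with output $\bot$ as soon as any level returns $\bot$. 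Three things must then be verified: bot-pseudodeterminism (Item~\ref{it:bot-determinism}), pointwise far-from-bot with parameter $m\mu+\delta$ (Item~\ref{it:far-from-bot}), and adaptive pseudorandomness (Definition~\ref{def:bot-prf-security}).

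\textbf{Correctness.} For bot-pseudodeterminism, I fix $k,x$ and walk down the evaluation path: Item~3 of Definition~\ref{def:bot-PRG} pins each intermediate seed into a set $\{y_i,\bot\}$ except with negligible probability, and a union bound over the $m=\poly(\secpar)$ levels keeps the error negligible. For pointwise far-from-bot, I would carry out an $m$-step hybrid in which the first $j$ intermediate seeds of the $j$-th hybrid are replaced by $\bot$-preserving uniform strings. Adjacent hybrids are distinguishable only with negligible advantage by single-sample multi-time $\botPR$: the reduction takes its challenge value (either $G_{x_j}$ applied to a uniform seed or a $\bot$-preserved uniform string) and extends it through the remaining levels to test for $\bot$. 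In the all-uniform hybrid, Items~1 and~2 of Definition~\ref{def:bot-PRG} give a per-level $\bot$-probability of at most $\mu+\negl$, so a further union bound over the $m$ levels yields $m\mu+m\negl$. Summing the hybrid losses gives the claimed bound $m\mu+\delta$ for some negligible $\delta$.

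\textbf{Pseudorandomness.} The security proof mirrors the classical GGM hybrid argument: introduce $m+1$ hybrids $H_0,\ldots,H_m$ in which $H_i$ lazily samples a fresh uniform label for each distinct $i$-bit prefix appearing among the adversary's queries, then applies $G_{x_{i+1}},\ldots,G_{x_m}$ to that label (running the $\botPRG$ freshly for each call so that the recognizable-abort behaviour is faithfully preserved). Hybrid $H_0$ is the real $\botPRF$ oracle while $H_m$ matches the ideal experiment in Definition~\ref{def:bot-prf-security}. A distinguisher between $H_{i-1}$ and $H_i$ reduces to an adversary against the $t$-fold parallel multi-time $\botPR$ of $G$ via Theorem~\ref{thm:parallel-repetition-multi-sample-bot-pseudorandomness}, where $t$ upper bounds the number of distinct $i$-bit prefixes issued and each parallel instance must supply enough samples to service every query that extends its prefix. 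This is exactly where the \emph{multi-time} feature of $\botPR$ is indispensable: multiple queries sharing an $i$-bit prefix reuse the same hidden seed at level $i$, and a single-shot definition would not suffice. Summing the negligible gap over the $m$ hybrids yields negligible overall advantage.

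\textbf{Main obstacle.} I expect the pointwise far-from-bot part of correctness to be the chief technical hurdle, since the classical GGM proof gets correctness for free and has no analogue of this step. It is not a priori clear that $\bot$ events along the evaluation path accumulate only linearly in $m$ rather than compounding uncontrollably, because intermediate seeds are only pseudorandom and a single $\bot$ annihilates the whole computation. The multi-time $\botPR$ definition introduced earlier in the paper is precisely the tool that lets us substitute pseudorandom seeds by genuine uniform ones at negligible per-level cost, reducing the analysis to a straightforward linear union bound; this is the step where the strengthened pseudorandomness notion earns its keep.
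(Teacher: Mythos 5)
Your proposal matches the paper's proof essentially step for step: the same GGM construction, the same observation that the pointwise far-from-$\bot$ bound requires invoking (single-time) $\botPR$ of the underlying $\botPRG$ because intermediate seeds are only pseudorandom rather than uniform, and the same level-by-level hybrid reduction of adaptive security to $t$-fold multi-time $\botPR$. The paper phrases the far-from-$\bot$ correctness argument as an induction on tree depth rather than an explicit $m$-step hybrid, but these are the same argument unrolled differently, and your identification of this correctness step as the central technical novelty is precisely the paper's own emphasis.
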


\ifnum\iacr=1
    The construction is the same as the~\cite{GGM86} construction. 
 \fi
The proof is given in \ifnum\iacr=1Supplement~\ref{sec:bot_prf_from_bot_prg-proof}\else \Cref{sec:bot_prf_from_bot_prg-proof}\fi. 

Combining \Cref{thm:bot_prf_from_bot_prg} with \Cref{cor:bot-prg-from-sprs}, we get the following result.
\begin{corollary}\label{cor:bot-PRF-from-SPRS}
    Assume $m(\secpar)\in \poly$ and a constant $c$ be such that $m(\secpar)\cdot \secpar^{-c/6 +1}\in \frac{1}{\poly}$. Let $\eta(\secpar)\equiv m(\secpar)\cdot \secpar^{-c/6 +1} $.
    Then, there exists a black box construction of a $\left(\eta,m,\secpar\right)$-$\botPRF$ family (see \cref{def:bot-prf-correctness,def:bot-prf-security}) from any $(c\log\secpar)$-PRS (see \cref{def:bot-prf-correctness}), for some negligible function $\delta(\secpar)$.
\end{corollary}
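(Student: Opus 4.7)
The plan is to prove Corollary \ref{cor:bot-PRF-from-SPRS} as a direct compositional consequence of Corollary \ref{cor:bot-prg-from-sprs} and Theorem \ref{thm:bot_prf_from_bot_prg}. First I would invoke Corollary \ref{cor:bot-prg-from-sprs} on the hypothesized $(c\log\secpar)$-PRS to obtain a multi-time $\bot$-pseudorandom $(\mu',\ell')$-$\botPRG$ with $\mu'(\secpar) = O(\secpar^{-c/12+1})$ and $\ell'(\secpar) = \secpar^{c/12}$. Since Theorem \ref{thm:bot_prf_from_bot_prg} expects a $\botPRG$ with output length exactly $2\secpar$, and the constant $c$ is large enough that $\ell'(\secpar) \geq 2\secpar$ for sufficiently large $\secpar$, I would next truncate the $\botPRG$'s output to its first $2\secpar$ bits. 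The truncation step preserves the three pseudodeterminism conditions of \Cref{def:bot-PRG} by inspection and preserves multi-time $\bot$-pseudorandomness by a trivial reduction: any distinguisher against the truncated version lifts to one against the untruncated version by simply discarding the extra bits before forwarding to the distinguisher.

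Next I would verify that this truncated multi-time $\botPRG$ satisfies the quantitative hypotheses of Theorem \ref{thm:bot_prf_from_bot_prg}. The condition $\mu'(\secpar) \in 1/\poly$ is immediate from the explicit form of $\mu'$ above, while $m(\secpar) \cdot \mu'(\secpar) \in 1/\poly$ follows from the hypothesis of the corollary that $m(\secpar) \cdot \secpar^{-c/6+1} \in 1/\poly$ together with the fact that $\secpar^{-c/12+1}$ is polynomially related to $\secpar^{-c/6+1}$ in the relevant regime (i.e., they differ by a factor $\secpar^{c/12}$, which is absorbed by adjusting $c$ if needed). Applying Theorem \ref{thm:bot_prf_from_bot_prg} then produces an $(m\mu' + \delta', m, \secpar)$-$\botPRF$ family for some negligible $\delta'$, and absorbing $\delta'$ into the stated inverse polynomial $\eta$ yields the desired $(\eta, m, \secpar)$-$\botPRF$ family.

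I do not anticipate any conceptual obstacle; the only real work is parameter bookkeeping. The main things to verify are that (i) the lower bound on $c$ implicitly required by Corollary \ref{cor:bot-prg-from-sprs} (namely $c>12$) is consistent with or can be assumed alongside the hypotheses of Corollary \ref{cor:bot-PRF-from-SPRS}, and (ii) the asymptotics of the $\bot$-error inherited from chaining the two results through truncation match, up to constants and the absorption of $\delta'$, the stated $\eta(\secpar) = m(\secpar) \cdot \secpar^{-c/6+1}$. Both amount to elementary algebraic manipulations on the exponents of $\secpar$ appearing in the parameters of the intermediate $\botPRG$ and the resulting $\botPRF$.
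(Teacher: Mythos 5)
Your proposal follows the paper's proof exactly: the paper's entire argument is the single sentence ``Combining \Cref{thm:bot_prf_from_bot_prg} with \Cref{cor:bot-prg-from-sprs}, we get the following result.'' Your decomposition — instantiate \Cref{cor:bot-prg-from-sprs} to get a multi-time $\botPRG$, then feed it into \Cref{thm:bot_prf_from_bot_prg} — is precisely this. You also make explicit something the paper elides: \Cref{thm:bot_prf_from_bot_prg} wants a $(\mu,2\secpar)$-$\botPRG$, whereas \Cref{cor:bot-prg-from-sprs} hands you one with stretch $\secpar^{c/12}$, so a truncation step (with the easy reduction you sketch) is indeed needed. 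That is a genuine gap in the paper's one-liner that you correctly fill.

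One caution about your parameter bookkeeping: you assert that the factor-of-two mismatch in the exponent ($\secpar^{-c/12+1}$ from \Cref{cor:bot-prg-from-sprs} versus the stated $\secpar^{-c/6+1}$) ``is absorbed by adjusting $c$ if needed.'' This does not work as written, because $c$ appears simultaneously in the PRS assumption and in the claimed error $\eta$, so it cannot be rescaled on one side only. Moreover, the hypothesis $m(\secpar)\cdot\secpar^{-c/6+1}\in 1/\poly$ does not by itself imply $m(\secpar)\cdot\secpar^{-c/12+1}\in 1/\poly$ (take $m$ close to $\secpar^{c/6-1}$), which is what the hypothesis of \Cref{thm:bot_prf_from_bot_prg} actually requires after plugging in. Chaining the cited results literally delivers a $\botPRF$ with $\bot$-error $O\bigl(m(\secpar)\cdot\secpar^{-c/12+1}\bigr)+\delta(\secpar)$, not the $m(\secpar)\cdot\secpar^{-c/6+1}$ stated in the corollary. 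This discrepancy originates in the paper's own parameter tracking (the loose renaming of $\secpar$ across \Cref{thm:pdprg_implies_botPRG} and \Cref{cor:bot-prg-from-sprs}), so it is not an error you introduced — but your proof as written does not establish the numerically stated $\eta$, and you should flag it rather than paper over it with ``adjusting $c$.''
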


\ifnum\iacr=0

 \subsection{Proof of Theorem \ref{thm:bot_prf_from_bot_prg}}
\label{sec:bot_prf_from_bot_prg-proof}

    The proof follows by combining \Cref{prop:correctness-bot-PRF-family,prop:pseudorandomness-adaptive-PRF}.
\paragraph*{Notations.}\label{pg:notations-ggm}
In this section, we will use capital letters such as $X,Y$ to denote random variables. Moreover, for $y\in \{0,1\}^*$, $r\leq |y|$ and $b\in \{0,1\}$, $y_{r,b}$ denotes the first $r$ prefix of $y$ if $b=0$, and the last $r$ suffix of $y$ if $b=1$. Similarly, for any algorithm $F$ with input $\{0,1\}^\secpar$ and output $\{0,1\}^{2\secpar}\cup\{\bot\}$, and for $b\in \{0,1\}$ and any input $x\in \{0,1\}^\secpar$, we denote $F_b(x)$ to denote $\{F(x)\}_{\secpar,b}$, i.e., the first half of $F(x)$ if $b=0$ or the last half of $F(x)$ if $b=1$. 

\begin{figure}[!htb]
   \begin{center} 
   \begin{tabular}{|p{16cm}|}
    \hline 
\noindent\textbf{Assumes:} $(\mu(\secpar),2\secpar)$-$\botPRG$ $\BPRG$ (see \cref{def:multi-time-pseudorandomness,def:bot-PRG})\\ 
\ \\
\noindent$\treeprf(k,x)$: 
\begin{compactenum}
    \item Set $k_0=k$.
    \item For $i\in [m]$,
    \begin{enumerate}
        \item Compute $\BPRG(k_{i-1})$. \label{line:k_i}
        \item If the outcome is $\bot$, abort and output $\bot$.
        \item Else set $k_i=(\BPRG(k_{i-1}))_{\secpar,x_i}$, where for $y\in \{0,1\}^{2\secpar}$ and $b\in \{0,1\}$, $y_{\secpar,b}$ denotes the first $\secpar$ prefix of $y$ if $b=0$, and the last $\secpar$ suffix of $y$ if $b=1$.
    \end{enumerate}
    \item Output $k_m$.
\end{compactenum}
\ \\ 
\hline
\end{tabular}
    \caption{A $\botPRF$ family $\mathcal{F}=\{\treeprf(k,\cdot)\}_{k\in \{0,1\}^\secpar}$ with domain $\{0,1\}^{m(\secpar)}$ and co-domain $\{0,1\}^{\secpar}$, for any polynomial $m(\secpar)$. This is a natural extension of the GGM construction~\cite{GGM86} of $\prf$ from $\prg$, in the recognizable abort setting.}
    \label{fig:bot-PRF-GGM-construction}
    \end{center}
\end{figure}

\begin{remark}
    Even though the construction in $\Cref{fig:bot-PRF-GGM-construction}$, only considers as input $x\in\{0,1\}^m$, the $\treeprf$ is well defined for every $x\in \{0,1\}^*$. This fact will be heavily used both in the correctness and security proofs.
\end{remark}

\begin{proposition}[Correctness]\label{prop:correctness-bot-PRF-family}
Assuming $\BPRG$ is a $(\mu(\secpar),2\secpar)$-$\botPRG$, the construction in \cref{fig:bot-PRF-GGM-construction} is a $(m\mu+\delta,m,\secpar)$-$\bot$-function family provided $m(\secpar)\cdot \mu(\secpar)\in \frac{1}{\poly}$, for some negligible function $\delta(\secpar)$.
\end{proposition}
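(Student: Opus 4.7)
The plan is to prove the two correctness conditions of \cref{def:bot-prf-correctness} separately, item by item. I would do bot-pseudodeterminism first because it is essentially a union-bound argument that uses only the $\botPRG$ correctness (property 3 of \cref{def:bot-PRG}), and save the more delicate pointwise-far-from-bot for last since it requires a hybrid argument invoking the multi-time $\botPR$ security.

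For bot-pseudodeterminism, I would fix an arbitrary $x \in \{0,1\}^m$ and $k \in \{0,1\}^\secpar$ and argue inductively on the level $i$ of the tree walk. By item 3 of \cref{def:bot-PRG}, for the initial key $k_0 = k$ there exists a value $v_0 \in \{0,1\}^{2\secpar}$ such that $\Pr[\BPRG(k_0) \in \{v_0, \bot\}] \geq 1 - \negl$. Conditioned on $\BPRG(k_0) = v_0$ (the only non-$\bot$ case up to negligible error), the value $k_1 = (v_0)_{\secpar, x_1}$ is fixed; applying item 3 again to $k_1$ yields a fixed $v_1$ such that $\BPRG(k_1) \in \{v_1, \bot\}$ up to negligible error, and so on. A union bound over the $m$ levels gives a single fixed candidate output $y = y(k,x)$ and $\Pr[\treeprf(k,x) \in \{y, \bot\}] \geq 1 - m \cdot \negl$, which is still negligible since $m \in \poly$. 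This proves item 2 of \cref{def:bot-prf-correctness} with pseudodeterminism error $\delta(\secpar) = m \cdot \negl$.

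For the pointwise far-from-bot property, I would fix $x \in \{0,1\}^m$ and define, for $i = 0, 1, \ldots, m$, a hybrid $H_i$ in which the first $i$ key-evaluations along the $x$-path are replaced by independent uniform samples (conditioned on the natural $\bot$-event). More precisely, in $H_i$ the values $k_1, \ldots, k_i$ are generated as follows: at each of the first $i$ levels $j$, with the same probability as $\Pr[\BPRG(k_{j-1}) = \bot]$ the computation aborts with $\bot$, and otherwise $k_j$ is drawn uniformly from $\{0,1\}^\secpar$; for $j > i$ we revert to the real rule $k_j = (\BPRG(k_{j-1}))_{\secpar, x_j}$. Hybrid $H_0$ is exactly the real experiment, and in $H_m$ every call to $\BPRG$ along the path is made on an (independent) uniformly random key, so by item 1 of \cref{def:bot-PRG} the probability of a $\bot$ at any given level is at most $\mu$, and by union bound $\Pr_{H_m}[\treeprf(k,x) = \bot] \leq m \cdot \mu$. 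To bound $|\Pr_{H_i}[\bot] - \Pr_{H_{i+1}}[\bot]|$, I would reduce to single-time $\botPR$: a distinguisher $D$ receives one sample $z$, simulates $H_i$ up to level $i$ to get $k_i$, then uses $z$ in place of $\BPRG(k_i)$, sets $k_{i+1} = z_{\secpar, x_{i+1}}$ if $z \neq \bot$ (aborting otherwise), and continues the real tree walk to the end, outputting $1$ iff the walk aborts with $\bot$. When $z = \BPRG(k_i)$ this exactly simulates $H_i$, and when $z = \isbot(\BPRG(k_i), y)$ for uniform $y$ this simulates $H_{i+1}$, so the hybrid gap is bounded by a negligible function. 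Chaining the $m$ hybrids yields $\Pr[\treeprf(k,x) = \bot] \leq m\mu + \delta'(\secpar)$ for a negligible $\delta'$, and absorbing $\delta'$ into a single negligible $\delta$ along with the pseudodeterminism error from part one gives the claimed $(m\mu + \delta, m, \secpar)$-$\bot$-function family.

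The main obstacle I anticipate is formalizing the hybrid reduction cleanly: the distinguisher $D$ must be able to simulate levels $1, \ldots, i$ of the hybrid $H_i$ on its own, which requires sampling $k_1, \ldots, k_i$ from the prescribed ``uniform-or-$\bot$'' distribution without knowing $\Pr[\BPRG(k_{j-1}) = \bot]$ a priori. A clean way around this is to modify the hybrid so that in $H_i$, the first $i$ key transitions are implemented by evaluating $\BPRG$ on freshly drawn uniform keys and then only using the $\bot$-indicator (discarding the actual output and replacing it by a fresh uniform string if not $\bot$); this is statistically identical to the hybrid defined above and makes the simulation by $D$ straightforward. The condition $m \cdot \mu \in 1/\poly$ in the hypothesis is then precisely what ensures the final bound $m\mu + \delta$ is an inverse polynomial, as required by \cref{def:bot-prf-correctness}.
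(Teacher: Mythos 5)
Your proposal is correct and follows essentially the same approach as the paper: bot-pseudodeterminism by a union bound along the tree path using property 3 of the $\botPRG$ definition, and pointwise-far-from-bot by replacing the $\BPRG$ evaluations along the path one at a time with $\isbot$-of-uniform samples, bounding each swap by single-time $\botPR$ and then bounding the fully-replaced endpoint by $m(\mu+\negl)$. The only cosmetic difference is that the paper organizes the second part as an induction on $m$ (one hybrid swap at the root, then recurse), whereas you unroll it into an explicit chain of $m$ hybrids $H_0,\dots,H_m$; your fix in the last paragraph---replacing the ``bot-with-the-right-probability'' rule by ``evaluate $\BPRG$ on a fresh uniform key and keep only the $\bot$-indicator''---is exactly the $\isbot$ trick the paper uses in its $\hybrid_1$.
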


We remark that the proof of correctness is somewhat non-standard in the sense that it requires the pseudorandomness of the underlying $\botPRG$. However,  we only need single-time pseudorandomness of the $\botPRG$.

\begin{proof}[Proof of \Cref{prop:correctness-bot-PRF-family}]

\textbf{Condition~\ref{it:far-from-bot} of correctness (\cref{def:bot-prf-correctness}) with respect to the parameters in \cref{prop:correctness-bot-PRF-family}.}

We will use induction on $m$, the height of the GGM tree. 
As defined in the proposition, let $\mu$ be an inverse polynomial function denoting an upper bound on the fraction of non-good or bad keys (see \Cref{def:bot-pseudodeterminism-part1} in \Cref{def:bot-PRG}) for $\BPRG$, the underlying $\botPRG$, and let $\epsilon$ be a negligible function denoting a bound on the pseudorandomness error for $\BPRG$. Let $\negl$ be the negligible function denoting the bound on the probability of seeing $\bot$ under a good key.

Induction statement $P(m)$: For every $x\in \{0,1\}^m$, $\Pr[K\xleftarrow{\$}\{0,1\}^\secpar: \prf(K,x)= \bot]\leq m(\mu+\negl)+(m-1)\epsilon$.

Base Case $(m=1)$: Fix an input $b\in \{0,1\}$. Then,
\begin{align}
    &\Pr[K\xleftarrow{\$}\{0,1\}^\secpar: \prf(K,b)= \bot]\\
    &=\Pr[K\xleftarrow{\$}\{0,1\}^\secpar: \BPRG(K)=\bot]\\
    &\leq \mu + \negl. \label{eq:ind-hypothesis}
\end{align}
The last inequality follows from the fact that either $K$ happens to be a good key in which case it will evaluate to $\bot$ with probability at most $\negl$, or $K$ is a bad key but that happens with probability at most $\mu$.

Induction step: Assuming $P(m)$ holds, we need to prove that $P(m+1)$ holds.
Fix an input  $x'\in \{0,1\}^{m+1}$, and let ${x'}_{m,1}$ be the last $m$ suffix of $x'$. 
Note that we want to bound the probability of the event $Y= \bot$ where $Y$ is the random variable defined by the following process.

\begin{itemize}
    \item Sample $K \xleftarrow{\$}\{0,1\}^\secpar$.
    \item Evaluate $Y\gets  \treeprf(K,x')$.
\end{itemize}

Note that the definition of  $Y$ can be equivalently rewritten as the following hybrid.

\noindent \underline{$\hybrid_0$}:
\begin{itemize}
    \item Sample $K\xleftarrow{\$}\{0,1\}^\secpar$.
    \item Evaluate $Y'\gets  \BPRG_{x_1}(K)$.
    \item If the outcome $Y'\neq \bot$, then perform $Y\gets \treeprf(Y',x'_{1,m})$ else set $Y=\bot$.
\end{itemize}
It is enough to show that $\Pr[Y\sim \hybrid_0 : Y= \bot]\leq (m+1)\mu+m\epsilon$.
Next we define the following hybrid definition of $Y$ in $\hybrid_1$.

\noindent \underline{$\hybrid_1$}:
\begin{enumerate}
    \item Sample $K\xleftarrow{\$}\{0,1\}^\secpar$, and $U\xleftarrow{\$}\{0,1\}^\secpar$. 
    \item Evaluate $Y''\gets  \BPRG_{x_1}(K)$. 
    \item If $Y''=\bot$, set $Y'=\bot$ else set $Y'=U$.\label{it:bot-check-first-step.}
    \item If the outcome $Y'\neq \bot$, then perform $Y\gets \treeprf(Y',x'_{1,m})$, else set $Y=\bot$.
\end{enumerate}

Note that if the probabilities of the events $Y=\bot$ in $\hybrid_0$ and $\hybrid_1$ differ by a non-negligible amount $p$, then we get an efficient distinguisher between $\hybrid_0$ and $\hybrid_1$ with advantage $p$. It is easy to see that if the processes $\hybrid_0$ and $\hybrid_1$ can be distinguished efficiently, then there exists a distinguisher against single-time $\botPR$ of $\BPRG$ with the same advantage. (Given a sample $Y'$, just perform the last step that is common in both $\hybrid_0$ and $\hybrid_1$ and output $1$ if the outcome $Y= \bot$.)
 Therefore by the single-time $\botPR$ of $\BPRG$,
 \begin{equation}\label{eq:pseudorandomness-in-correctness}
 \Pr[Y\sim \hybrid_0: Y= \bot]\leq \Pr[Y\sim \hybrid_1: Y= \bot]+\epsilon.
 \end{equation}
 Note that in $\hybrid_1$, the event $Y=\bot$ happens only if either the event $Y''=\bot$ happens or the event $\treeprf(U,x'_{1,m})$ outputs $\bot$, happens.
Therefore,
\begin{align}\label{eq:hybrid_1}
&\Pr[Y\sim \hybrid_1: Y= \bot]\\
&\leq \Pr[K\xleftarrow{\$}\{0,1\}^\secpar, Y''\gets \BPRG(K):Y''=\bot]\\
+&\Pr[Y\gets \treeprf(U,x'_{1,m}), U\xleftarrow{\$}\{0,1\}^\secpar:Y=\bot].
\end{align}
Note that the first term in the last equation is at most $(\mu+\negl)$, by \Cref{eq:ind-hypothesis}. 
Since $x'_{1,m}$ is of length $m$ by the induction hypothesis,
\[\Pr[Y\gets \treeprf(K,x'_{1,m}), K\xleftarrow{\$}\{0,1\}^\secpar:Y=\bot]\leq m(\mu + \negl) +(m-1)\epsilon.\]

Combining the last two bounds with \cref{eq:hybrid_1}, we conclude that,
\[\Pr[Y\sim \hybrid_1:Y=\bot]\leq \mu +\negl +  m(\mu +\negl) +(m-1)\epsilon=(m+1)(\mu + \negl) + (m-1)\epsilon.\]

Combining this with \cref{eq:pseudorandomness-in-correctness}, we conclude that
\begin{align*}
&\Pr[Y\sim~\hybrid_0:Y=\bot]\\
&\leq \Pr[Y\sim \hybrid_1:Y=\bot] +\epsilon \\
&\leq  (m+1)(\mu +\negl) + (m-1)\epsilon+\epsilon=(m+1)(\mu+\negl) + m\epsilon.
\end{align*}

Since $x'$ was fixed arbitrarily, this completes the proof for the induction step.
Hence, we conclude that $P(m)$ holds for every $m\in \NN$.

Since $m\in \poly$, and $\negl,\epsilon(\secpar)$ are negligible functions, it further holds that $\delta(\secpar)\equiv m\negl + (m-1)\epsilon$ is a negligible function. Combining this with the fact that $P(m)$ holds for every $m\in\NN$, we conclude that, there exists a negligible function $\delta(\secpar)$ such that for every $x\in \{0,1\}^m$,
\[\Pr[Y\gets\prf(K,x); K\xleftarrow{\$}\{0,1\}^\secpar: Y= \bot]\leq m\mu+\delta(\secpar)\]
which concludes Condition~\ref{it:far-from-bot} with parameter $m\mu+\delta(\secpar)$ of correctness (see \cref{def:bot-prf-correctness}) for the construction given in \cref{fig:bot-PRF-GGM-construction}.


\textbf{Condition~\ref{it:bot-determinism} of correctness (\cref{def:bot-prf-correctness})}
The proof idea is to do induction on $m$ just as in the proof of the first condition.

   Induction statement: $P(m)$ states that for every input $x\in \{0,1\}^m$ and key $k\in \{0,1\}^\secpar$ there exists a string $y\in\{0,1\}^\secpar$, such that the tree $\PRF$  evaluation (see \cref{fig:bot-PRF-GGM-construction}) output is not in $\{y,\bot\}$ with probability at most$m\cdot \negl$, where $\negl$ is the negligible pseudo determinism bound on the $\botPRG$ $\BPRG$.

Base step ($m=1$): Fix an input bit $b$ and key $k\in\{0,1\}^\secpar$ arbitrarily. Note that the tree $\PRF$  evaluation is simply to evaluate $\BPRG_b(k)$. By the pseudo determinism of $\BPRG$, there exists $y\in \{0,1\}^{2\secpar}$ such that $\Pr[\BPRG(k)\not\in \{y,\bot\}]\leq \negl$. Since the event $\BPRG(k)\in \{y,\bot\}$ implies $\BPRG_b(k)\in \{y_{\secpar,b}\}$, we conclude that
\[\Pr[\BPRG_b(k)\not \in \{y_{\secpar,b},\bot\}]\leq \Pr[\BPRG(k)\not\in \{y,\bot\}] \leq \negl.\]

Induction step:    We assume that $P(m)$ holds, i.e., for every input $x\in \{0,1\}^m$ and key $k\in \{0,1\}^\secpar$ there exists a string $y\in\{0,1\}^{\ell(\secpar)}$, such that the tree $\PRF$  evaluation output is not in $\{y,\bot\}$ with probability at most $m\cdot\negl$.
 We need to show $P(m+1)$ holds.
Let $k\in \{0,1\}^\secpar$ and $x'\in \{0,1\}^{m+1}$, be fixed, and as seen before, let $x'_{m,1}$ be the last $m$ suffix of $x'$. The evaluation of $\treeprf(k,x')$, i.e., the tree $\PRF$  evaluation with $x'$ as the input and $k$ as the key is the same as first evaluating $Y'\gets \BPRG_{x_1}(k)$ and if the outcome $Y'\neq \bot$ then performing the tree $\PRF$  evaluation with $x'_{1,m}$ (which is of length $m$ so we can use induction hypothesis) as the input and $Y'$ as key to getting an outcome $Y$.
By the pseudo determinism of $\BPRG$, for every $k$, there exists $\tilde{y}\in \{0,1\}^{2\secpar}$ such that
\[\Pr[\BPRG(k)\not\in\{\tilde{y},\bot\}]\leq \negl.\]
Let $y'=\tilde{y}_{\secpar,x_i}$. Hence, we conclude that 
\begin{equation}\label{eq:bot-pseudo-determinism-prg}
\Pr[Y'\gets \BPRG_{{x'}_i}(k): Y'\not \in\{y',\bot\}]\leq \Pr[\BPRG(k)\not\in\{\tilde{y},\bot\}]\leq \negl.
\end{equation}
Note that the induction hypothesis on $y'$ as key and $x'_{1,m}$ as input, there exists $y$ such that
\begin{equation}\label{eq:by-induction-hypothesis}
\Pr[Y\gets \treeprf(y',x'_{1,m}):  Y\not\in \{y,\bot\}]\leq m\cdot\negl.
\end{equation}
Note that the evaluation of $\treeprf(k,x')\not \in \{y,\bot\}$ only if either the evaluation of the first step  $\BPRG_{x_i}(k)\not\in\{y',\bot\} $ or   $\BPRG_{x_i}(k)=y'$ but $\treeprf(y',x'_{1,m})\not\in \{y,\bot\}$.
Hence,
\begin{align*}
    &\Pr[Y\gets \treeprf(k,x'): Y\not \in \{y,\bot\}]\\
    &\leq \Pr[Y'\gets \BPRG_{{x'}_i}(k): Y'\not \in\{y',\bot\}]\\
    + &\Pr[Y\gets \treeprf(y',x'_{1,m}):  Y\not\in \{y,\bot\}]\\
    &\leq \negl + m\cdot\negl &\text{By \Cref{eq:bot-pseudo-determinism-prg,eq:by-induction-hypothesis} respectively.}\\
    &=(m+1)\cdot\negl.
\end{align*}


Since $x'$ was arbitrary, this concludes the proof for the induction step.

Therefore, we conclude by induction that for every $m\in \NN$, input $x\in \{0,1\}^m$ and key $k\in \{0,1\}^\secpar$ there exists a string $y\in\{0,1\}^\secpar$, such that the tree $\PRF$  evaluation (see \cref{fig:bot-PRF-GGM-construction}) output is in $\{y,\bot\}$ with probability at least $(1-\negl)^m\geq 1-m\cdot \negl$. Since $m\in\poly$ and $\negl$ is a negligible function, this concludes the proof of Condition~\ref{it:bot-determinism} of correctness (see \cref{def:bot-prf-correctness}) for the construction given in \cref{fig:bot-PRF-GGM-construction}.

\end{proof}


\begin{proposition}[Pseudorandomness]\label{prop:pseudorandomness-adaptive-PRF}
Assuming the underlying $\botPRG$, $\BPRG$ satisfies multi-time $\botPR$ (see \Cref{def:multi-time-pseudorandomness}), the construction in \Cref{fig:bot-PRF-GGM-construction} satisfies pseudorandomness (see \Cref{def:bot-prf-security}). 
\end{proposition}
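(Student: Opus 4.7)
The plan is to follow the GGM hybrid argument~\cite{GGM86} in the recognizable-abort setting, reducing each hop to the multi-time $\botPR$ of the underlying $\BPRG$ (\Cref{def:multi-time-pseudorandomness}) via its parallel repetition (\Cref{thm:parallel-repetition-multi-sample-bot-pseudorandomness}). Fix a QPT distinguisher $D$ making at most $q=q(\secpar)\in\poly$ classical oracle queries in $\prfdistingexpt$. I will define a sequence of hybrid oracles $H_0,H_1,\ldots,H_m$ with $H_0$ coinciding with the real oracle $f_K(\cdot)=\treeprf(K,\cdot)$ and $H_m$ ultimately matching the ideal oracle $G(\cdot)$, and bound $\lvert\Pr[D^{H_{i-1}}=1]-\Pr[D^{H_i}=1]\rvert\leq\negl$ for every $i\in[m]$. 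Summing the $m\in\poly$ hops via the triangle inequality then yields a negligible total distinguishing advantage of $D$ against $\prfdistingexpt$.

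Hybrid $H_i$ maintains a lazy table that assigns an independent uniform ``lazy key'' $V^*_p\xleftarrow{\$}\{0,1\}^\secpar$ to each distinct prefix $p$ of length at most $i$ touched by $D$'s queries (with $V^*_\emptyset:=K$). On a query $x$ the oracle walks the GGM tree: at each depth $j\leq i$ it performs a fresh $\BPRG$ evaluation on the lazy parent key and, if that evaluation is alive, uses $V^*_{x_{1:j}}$ as the next key (aborting to $\bot$ otherwise), while at depths $j>i$ it uses the honest $\BPRG$-derived key under the standard $\bot$ propagation. By construction, $H_0$ equals $\treeprf(K,\cdot)=f_K$, so only the identification $H_m\approx G$ requires further attention.

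For the hop $H_{i-1}\to H_i$, the two hybrids differ only in the depth-$i$ transition: in $H_{i-1}$ the depth-$i$ key of the $t$-th query is $(\BPRG(V^*_{p_t}))_{\secpar,x_i^{(t)}}$ for a fresh $\BPRG$ evaluation on the lazy parent $V^*_{p_t}$ with $p_t=x_{1:i-1}^{(t)}$, whereas in $H_i$ it is $V^*_{p_t\cdot x_i^{(t)}}$ when that fresh evaluation is alive and $\bot$ otherwise. These two behaviours correspond exactly to Options $0$ and $1$ of \Cref{def:multi-time-pseudorandomness}: in Option 1 each sample is $y_t=\isbot(\BPRG(V^*_{p_t}),y)$ for a single uniform $y\in\{0,1\}^{2\secpar}$, whose two halves play the roles of the children lazy keys $V^*_{p_t\cdot 0}$ and $V^*_{p_t\cdot 1}$. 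Since $D$ touches at most $q$ depth-$(i-1)$ prefixes, each visited by at most $q$ queries, I instantiate a $q$-fold parallel copy of the $q$-sample multi-time $\botPR$ game, assigning a fresh $\botPR$ slot to each prefix online as it is discovered. The reduction $D'$ samples $K$ on its own, populates the shallower lazy entries itself, plugs the $\botPR$ samples into the depth-$i$ transition, and performs honest $\BPRG$ at depths $>i$; $D'$ never needs the challenger's secret. Its distinguishing advantage equals $D$'s and is bounded by a negligible function by \Cref{thm:parallel-repetition-multi-sample-bot-pseudorandomness}.

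The main obstacle I foresee is the endpoint identification $H_m\approx G$: the ideal oracle $G$ decides $\bot$ via the honest chain $K\to\BPRG(K)\to\BPRG(K_1)\to\dots$ on $\BPRG$-derived keys whereas $H_m$ decides $\bot$ via $K\to\BPRG(K)\to\BPRG(V^*_{x_1})\to\dots$ on lazy uniform keys. I plan to close this gap by appending a second $m$-length hybrid segment $H'_0=H_m,H'_1,\ldots,H'_m=G$ that, at each step, replaces one additional level of the $\bot$-chain's lazy uniform key with the honest $\BPRG$-derived one while keeping the non-$\bot$ output distribution $F(x)=V^*_x$ unchanged; each hop $H'_{j-1}\to H'_j$ is again a direct invocation of multi-time $\botPR$ with parallel repetition and contributes negligibly. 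The $\isbot$ construction of Option $1$ is precisely what enables this step, since it ensures that at each swapped level the marginal $\bot$-probability is carried unchanged, so the joint distribution of $\bot$-indicators across all $q$ queries is preserved. A final routine point is that the reduction must handle adaptivity online: fresh $\botPR$ slots are allocated as soon as $D$ reveals a new prefix, which is precisely why the \emph{multi-time} (rather than single-time) variant of $\botPR$ of $\BPRG$ is indispensable.
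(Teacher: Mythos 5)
Your proposal shares the paper's core ingredients — a GGM hybrid chain, a reduction to $q$-fold multi-time $\botPR$ via \Cref{thm:parallel-repetition-multi-sample-bot-pseudorandomness}, and the online assignment of a fresh $\botPR$ slot to each new depth-$(i-1)$ prefix to handle adaptivity — and your first segment (from $f_K$ to $H_m$) is essentially the paper's reduction $D_i$. However, you frame each hybrid as a single oracle and attempt to walk from $f_K$ all the way to $G$, which forces you to confront the endpoint mismatch $H_m \neq G$ and to invent a second hybrid segment. The paper avoids this problem by keeping both branches $b=0$ and $b=1$ of $\prfdistingexpt$ inside each hybrid $\hybrid_i$ with a \emph{shared} $\bot$-check, changing that check in both branches simultaneously. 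The final hybrid $\hybrid_m$ is then argued to have winning probability exactly $\tfrac12$: the $b=0$ output (the fresh lazy leaf key $K_x$) and the $b=1$ output ($F(x)$) are independent, identically distributed uniform functions, and both are independent of the shared $\bot$-check, so the bit $b$ carries no information. That observation replaces your entire second segment.

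As for the second segment you propose: it is salvageable, but as written it is underspecified in a way that matters. Each hop must instantiate the $\botPR$ game with challenger key equal to the \emph{parent} of the level being swapped, and the definition of $\botPR$ requires that key to be uniformly random. This forces the honestification to proceed from the leaf level upward (equivalently, lazification of $G$ proceeds root-to-leaf), so that the parent of the swapped level is still a fresh lazy uniform key. The more natural root-to-leaf reading of ``replaces one additional level of the $\bot$-chain's lazy uniform key with the honest $\BPRG$-derived one'' makes the parent a $\BPRG$-derived (hence only pseudorandom) value, and the $\botPR$ reduction does not go through. You also rely on the heuristic that ``the marginal $\bot$-probability is carried unchanged,'' but the hop changes the entire joint $\bot$-pattern at all depths below the swap, which must be handled by having the reduction re-derive the deeper levels honestly from the plugged-in challenge; that is doable but is exactly the kind of detail your sketch leaves out. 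Adopting the paper's experiment-level hybrids with a final $\tfrac12$-probability argument avoids all of this.
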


\begin{proof}[Proof of \Cref{prop:pseudorandomness-adaptive-PRF}]
    The proof follows directly from extending the GGM security to the case of recognizable abort.
Suppose $\BPRG$ satisfies multi-time $\botPR$. 
    Fix a QPT distinguisher $D$ against the $\botPRF$ (\Cref{fig:bot-PRF-GGM-construction}) in $\prfdistingexpt$ (see \Cref{fig:botprfadaptexpt}). 

    We note that even though the security needs to be proven for inputs in $\{0,1\}^m$, the $\botPRF$ is well-defined for inputs of arbitrary length, i.e., for every $x\in \{0,1\}^*$.
    We use the term \emph{tree evaluation of input $x\in \{0,1\}^* $ and key $k$} or $\treeprf(k,x)$ to denote the $\botPRF$ evaluation with input $x$ and key $k$. 


We do a series of hybrid arguments $\hybrid_0,\hybrid_1,\ldots,\hybrid_m$. For every $0\leq i\leq m$, $\hybrid_i$ is defined as follows.

\noindent \underline{$\hybrid_i$}:
\begin{enumerate}
\item Challenger $\ch$ samples a bit $b\xleftarrow{\$}\{0,1\}$ and keys $K_x$ for every $x\in \bigcup_{j=0}^{i}\{0,1\}^j$, where $\{0,1\}^0$ is the singleton string $\{\treeroot\}$.
\item $\ch$ samples a uniformly random function $f$ from $m$ bits to $\ell$ bits.
\item For every query $x$ made by the distinguisher $D$, $\ch$ evaluates $\BPRG(K_{x_{j,0}})$ for every $j\leq i-1$ in ascending order, and outputs $\bot$ if the evaluation fails at any stage, where recall that $x_{j,0}$ is the first $j$ prefix of $x$. 
\item If none of the evaluations output $\bot$ in the previous stage, use $K_{x^i}$ as the key and $x^{-(m-i)}$ as the input to do the tree $\PRF$  evaluation and outputs $\bot$ if the outcome is $\bot$. \label{it:bot-check}
\item Otherwise, if $b=0$, $\ch$ outputs the outcome of the tree evaluation, and if $b=1$, outputs $F(x)$.\label{it:final-output}
\item $D$ outputs a bit $b'$.
\item $D$ wins if $b'=b$.
\end{enumerate}

It is easy to see that $\hybrid_0$ is the same as $\prfdistingexpt(\secpar)$ (see \Cref{fig:botprfadaptexpt}) for the $\bot$-function class defined in \cref{fig:bot-PRF-GGM-construction}, where $K_\treeroot$ in $\hybrid_0$ corresponds to $K$ in $\prfdistingexpt(\secpar)$.

\begin{lemma}[Impossibility to win in $\hybrid_n$]\label{lemma:final-hybrid-guarantee}
For every distinguisher $D$,
    \[\Pr[\text{$D$ wins in $\hybrid_n$}]= \frac{1}{2}.\]
\end{lemma}

\begin{lemma}[Indistinguishability of hybrids]\label{lemma:indistinguishability-hybrids}
    For every $\epsilon$ and for every $i\in [n]$ and any QPT distinguisher $D$ satisfying, 
    \[|\Pr[\text{$D$ wins in $\hybrid_{i-1}$}]-\Pr[\text{$D$ wins in $\hybrid_{i}$}]|=\epsilon,\] there exists a QPT distinguisher $D_i$ that breaks the $t$-fold multi-time $\botPR$ (see \Cref{def:multi-time-pseudorandomness}) of $\BPRG$ with probability $\epsilon$, where $t\equiv t(\secpar)$ is a polynomial representing the number of queries that $D$ makes.
\end{lemma}

Before proving the lemmas, we show how to conclude the theorem assuming the lemmas. Note that by \Cref{thm:parallel-repetition-multi-sample-bot-pseudorandomness}, since $\BPRG$ satisfies multi-time $\botPR$, it also satisfies the $t$-fold  multi-time $\botPR$ since $t(\secpar)\in \poly$. Therefore by \Cref{lemma:indistinguishability-hybrids}, it holds that that for every $i\in [n]$, there exists a negligible function $\epsilon_i$, such that
\[|\Pr[\text{$D$ wins in $\hybrid_{i-1}$}]-\Pr[\text{$D$ wins in $\hybrid_{i}$}]|=\epsilon_i.\]

Therefore, by triangle inequality,
\[|\Pr[\text{$D$ wins in $\hybrid_{0}$}]-\Pr[\text{$D$ wins in $\hybrid_{n}$}]|\leq \sum_{i=1}^n\epsilon_i\leq n\cdot\max_{i\in [n]} \epsilon_i,\]
which is a negligible function of $\secpar$ since $n(\secpar)$ is a polynomial function of $\secpar$.
Combining the last result with \Cref{lemma:final-hybrid-guarantee}, we conclude that there exists a negligible function $\negl$, such that
\begin{align}
&\Pr[\prfdistingexpt(\secpar)=1]\\
&=\Pr[\text{$D$ wins in $\hybrid_{0}$}]\\
&\leq\Pr[\text{$D$ wins in $\hybrid_{n}$}]+\negl &\text{By the last equation}\\
&= \frac{1}{2}+\negl, &\text{By the \Cref{lemma:final-hybrid-guarantee}}
\end{align}
which concludes the theorem.
Next, we prove \Cref{lemma:final-hybrid-guarantee,lemma:indistinguishability-hybrids} to complete the proof.

\begin{proof}[Proof of \Cref{lemma:final-hybrid-guarantee}]
Note that in $\hybrid_n$, note that the challenger samples $K_x\xleftarrow{\$}\{0,1\}^\secpar$ independently.  Let $H(\cdot)$ be defined as $H(x)\equiv K_x$ for every $x\in \{0,1\}^n$. Clearly $H$ is a uniformly random function from $\left(\{0,1\}^{\secpar}\right)^{\{0,1\}^{\secpar}}$, and hence is identically distributed as $F$.
Hence in $\hybrid_n$, for every query $x$, if the outcome of the tree-evaluation is not $\bot$ in Item~\ref{it:bot-check}, then the outcome is $H(x)$, and hence as instructed in Item~\ref{it:final-output}, $\ch$ outputs $H(x)$ if $b=0$, and $F(x)$ if $b=1$. Since $F$ and $H$ are identically distributed independent of the rest of the experiment and the only difference between $b=0$ and $b=1$ case in $\hybrid_n$, is the use of $H$ and $F$ respectively, we conclude that the view of the distinguisher is independent of the sampled bit $b$, and hence 
\[\Pr[\text{$D$ wins $\hybrid_n$}]= \frac{1}{2}.\]
    
\end{proof}

\begin{proof}[Proof of \Cref{lemma:indistinguishability-hybrids}]
Recall that $t=t(\secpar)$ is the polynomial denoting the number of queries that are made by $D$.
Fix $i\in [n].$
We define $D_i$ that runs $D$ as follows. 
\begin{itemize}
    \item $D_i$ gets as input $t^2$ $2\secpar$-strings $\{\{v^j_r\}_{r\in [t]}\}_{j\in [t]}$, i.e., in $t$-tuples each tuple consisting of $t$ strings. Let $V^j=\{v^j_r\}_{r\in [t]}$ for every $j\in [t]$.
    \item $D_i$ plays the role of the challenger, samples a bit $b$, and runs $D$ by simulating the oracle as follows.
    \item For every query $x$, $D_i$ does the following:
    \begin{itemize}
        \item For every $0\leq j\leq i-2$, check if there exists a key $K_{x_{j,0}}$ associated to the $j$-bit string $x_{j,0}$, and if not found sample a key $K_{x_{j,0}}$ uniformly at random and mark it for the string $x_{j,0}$, where recall that $x_{j,0}$ denotes the first $j$ prefix of $x$ and $x_{0,0}$ denotes the fixed special symbol $\{\treeroot\}$. \label{it:D_i-sample-keys}
        \item Iteratively, evaluate $\BPRG(K_{x_{j,0}})$ for $0\leq j \leq i-2$, and abort if the outcome is $\bot$ at any stage.
        \item If all $i-1$ evaluations succeeded, i.e., all of their output was not $\bot$, check if there exists $j\in [t]$ such that $V^j$ is associated with the string $x_{i-1,0}$.
        \begin{itemize}
            \item If such a $j$ exists, select the first unused string from $V^j$, i.e., associated with no string, and mark it as associated with $x_{i-1,0}$. \label{it:find-old-tuple}
            \item Else, go to the first unused tuple $V^j$, i.e., associated with no string, and select the first string $v^j_1$ from it and mark the string $v^j_1$ as used. \label{it:find-new-tuple}
        \end{itemize}
        \item Let the selected $2\secpar$ string be $v$. Set $K_{x_{i,0}}\equiv v_{\secpar,x_i}$, i.e., if $x_i=0$ set $K_{x_{i,0}}$ to be the first $\secpar$ prefix of $v$, else set it to be the last $\secpar$ suffix of $v$.\label{it:insert-challenge}
        \item Perform the tree evaluation using $K_{x_{i,0}}$ as the key and $x_{\secpar-i,1}$ as the input, and output $\bot$ if the evaluation outcome is $\bot$.
        \item Otherwise if $b=0$, $\ch$ outputs the outcome of the tree evaluation, and if $b=1$, outputs $F(x)$.\label{it:final-output-D_i}
    \end{itemize}
  \item After the $t$-queries, $D$ outputs a bit $b'$.
  \item $D_i$ outputs $1$ if $b'=b$, and $0$ otherwise.    
\end{itemize}

Note that for every query $x$, $D$ only samples at most $n(\secpar)$ keys, in Step~\ref{it:D_i-sample-keys}, and there are only $t$-queries made by $D$. Clearly, the tree evaluation done is efficient. Therefore, simulating the queries made by $D$ can be done in polynomial time. Hence we conclude that the running time of $D_i$ is only a polynomial overhead of that of $D$, and hence $D_i$ is efficient. 

Next, note that while finding an unused tuple $V_j$ in Step~\ref{it:find-new-tuple}  or while finding an unused string inside a used tuple in $D_i$ needs to do either of Step~\ref{it:D_i-sample-keys} or  Step~\ref{it:find-new-tuple} only once per query and there are almost $t$ queries and $D_i$ was given as input $t$-tuples with each containing $t$-strings.
Next, note that if the $t^2$ strings are given to the adversary are evaluated by applying $\BPRG$ on the same $t$-times, for $t$ independent keys, then in the challenge simulation, Step~\ref{it:insert-challenge} is equivalent to sampling a key $K_{x_{i-1,0}}$ (if not sampled previously, otherwise using the previously sampled key), and doing the tree evaluation with $K_{x_{i-1,0}}$ as the key and $x_{\secpar-i+1,1}$ as the input. Hence the view of $D$ in this case is exactly the same as $\hybrid_{i-1}$.
On the other hand, if the $t^2$ input strings were sampled using $t$ uniformly and independently random strings, then in the challenge simulation, Step~\ref{it:insert-challenge} is equivalent to sampling a key $K_{x_{i-1,0}}$ (if not sampled previously, otherwise using the previously sampled key), then evaluating the $\BPRG$ on it and output $\bot$ if the outcome is $\bot$, else sampling a key $K_{x_{i,0}}$ (if not sampled previously, otherwise using the previously sampled key) which is used for all other queries as well, and doing the tree evaluation with $K_{x_{i,0}}$ as the key and $x_{\secpar-i,1}$ as the input. Hence the view of $D$ in this case is exactly the same as $\hybrid_{i}$.
Therefore, the probability of $D_i$ outputting $1$ in the $\BPRG$ case (respectively, the random case) is exactly the same as the probability of $D$ winning in $\hybrid_{i-1}$ (respectively, $\hybrid_i$). Therefore the distinguishing advantage of $D_i$ in the $t$-fold multi-time $\botPR$ is the same as

    \[|\Pr[\text{$D$ wins in $\hybrid_{i-1}$}]-\Pr[\text{$D$ wins in $\hybrid_{i}$}]|=\epsilon,\]
    which concludes the proof of the lemma.
\end{proof}
    
\end{proof}


\begin{remark}
    The previous works on $\PDPRF$s only achieve a nonadaptive form of pseudorandomness whereas in this work, we achieve adaptive pseudorandomness which is crucial for many of the applications considered in the work.
\end{remark}



\fi

\ifnum\iacr=0

\section{\texorpdfstring{$\bot$}{bot} Universal One-Way Hash Functions from \texorpdfstring{$\bot$}{bot} \textsf{PRG}}
\label{sec:qpduowhfsect}

In this section, we construct $\bot$-Universal One-Way Hash Functions ($\botUOWHF$) from $\botPRG$. First, we define $\botOWF$s and $\botUOWHF$. We introduce notions of entropy that capture the size of collision sets under any $\bot$-function. Next, we build $\botOWF$ from a $\botPRG$. With these tools at hand, we are able to adapt the proof of \cite{HHR+10} (which shows how to construct \textsf{UOWHF}s from \textsf{OWF}s via inaccessible entropy) to our setting achieving $\botUOWHF$s from $\botPRG$s.  

\subsection{Definitions}
We introduce the notion of $\botOWF$s\ifnum\iacr=1. \else and $\botUOWHF$.\fi \ifnum\iacr=1 Just like in the case of $\botUOWHF$s, the \else The \fi problem is that we need to allow the function to map to $\bot$ on a non-negligible fraction of inputs for this construction to work. Clearly, this allows an adversary to easily break security as it is easy to guess an inverse to $\bot$. Fortunately, in our applications, we do not care if the adversary find an inverse $\bot$ since we simply abort the operation if the output is $\bot$ anyway. Hence, in the following definition, security requires that it is difficult to find an inverse to a non-$\bot$ image. 

\begin{definition}[$\bot$-One-Way Functions]
\label{def:botOWF}
    Let $n=n(\lambda)$ and $m=m(\lambda)$ be polynomials in the security parameter $\lambda$. A family of QPT algorithms $\mathcal{F}\coloneqq \{F_\lambda \}_{\lambda\in\mathbb{N}}$ which on input $\{0,1\}^n$ output a string in $\{0,1\}^m$ or $\bot$,
    is a \emph{$(\mu,n,m)$-pseudodeterministic one-way function with recognizable abort ($\botOWF$)} if the following conditions hold:
    \begin{itemize}
        \item \textbf{Pseudodeterminism:} There exist a constant $c>0$ such that $\mu(\lambda)= O(\lambda^{-c})$ and for sufficiently large $\lambda\in \mathbb{N}$ there exists a set $\mathcal{G} \subseteq \{0,1\}^n$ such that the following holds:
        \begin{enumerate}
            \item \[\Pr_{x\gets \{0,1\}^n}\left[x\in\mathcal{G} \right] \geq 1-\mu(\lambda).\] 
            \item For every $x\in \mathcal{G}$ there exists a non-$\bot$ value $y\in\{0,1\}^m$ such that: 
            \begin{align}
                \Pr\left[F_\lambda(x)=y \right] \geq 1 - \negl[\lambda].
            \end{align} 
          
            \item For every $x\in \{0,1\}^n$, there exists a non-$\bot$ value $y\in\{0,1\}^m$ such that: 
            \begin{align}
                \Pr\left[F_\lambda(x)\in \{y,\bot\} \right] \geq 1 - \negl[\lambda].
            \end{align} 
            If, additionally, there exists a polynomial $p$ such that the following is satisfied:
            \begin{align}
                \Pr\left[ F_\lambda(x)=y \right] \geq \frac{1}{p(n)},
            \end{align} 
            then, we denote this value $y$ as $F[[x]]$. Otherwise, we set $F[[x]]$ to $\perp$. 
            \end{enumerate}
        \item \textbf{Security:} For any QPT algorithm $\adv$,
        \begin{align*}
            \Pr_{x\xleftarrow{\$} \{0,1\}^n}[F(\adv(F(x)))=F_\top(x)]\leq \negl[\lambda],
        \end{align*}
        where $F_\top(\perp) :=\perp$\footnote{This might seem redundant, but we need $F^{-1}_\top(\perp)$ to be non-empty to more easily define appropriate notions of entropy.} and\onote{The old text contained this, which seems irrelevant: "$F_\top(\perp) :=\perp$ and". Note that $F_\top$ is evaluated on $x$, which is a random string, and never could be $\bot$.}\mo{Yes, but I need bot to have an inverse for future entropy arguments to be well defined.} $F_\top(x) :=\textsf{Flip}(F(x)) $\footnote{$\textsf{Flip}$ simply sends $\bot$ to $\top$ and keeps all other values the same.} otherwise.
    \end{itemize}
\end{definition}

\begin{definition}
    We say a family of QPT algorithms $\mathcal{F}\coloneqq \{F_\lambda\}_{\lambda\in\mathbb{N}}$ is a \emph{pseudodeterministic function with recognizable abort ($\bot$-function)} if it satisfies the pseudodeterministic condition of Definition \ref{def:botOWF}.
\end{definition}
\ifnum\iacr=0
    We define the notion of pseudo-deterministic universal one-way hash function families as the natural non-deterministic version of universal one-way hash functions \cite{NY89}.
    
    \begin{definition}[$\botUOWHF$]\label{def:botUOWHF}
       Let $n=n(\lambda)$ and $m=m(\lambda)$ be polynomials in the security parameter $\lambda\in\mathbb{N}$. A family of QPT algorithms $\mathcal{H}_\lambda \coloneqq \{{H}_\lambda(k,\cdot)\}_{k\in \{0,1\}^\ell} $ that on input in $ \{0,1\}^n$ output an element in $\{0,1\}^m \cup \{\bot\}$ 
        is a \emph{$(\mu,\ell, d=n/m,m)$-pseudodeterministic universal one-way hash family with recognizable abort ($\botUOWHF$)} if the following conditions hold: 
        \begin{itemize}
        \item \textbf{Compression:} $d(\lambda)>1$ for all $\lambda\in \mathbb{N}$. 
            \item \textbf{Pseudodeterminism:} There exist a constant $c>0$ such that $\mu(\lambda)= O(\lambda^{-c})$ and for sufficiently large $\lambda\in \mathbb{N}$ and $F\in \mathcal{H}_\lambda$, there exists a set $\mathcal{G} \subseteq \{0,1\}^n$ such that the following holds:
            \begin{enumerate}
                \item \[\Pr_{x\gets \{0,1\}^n}\left[x\in\mathcal{G} \right] \geq 1-\mu(\lambda).\] 
                \item For every $x\in \mathcal{G}$ there exists a non-$\bot$ value $y\in\{0,1\}^m$ such that: 
                \begin{align}
                    \Pr\left[F(x)=y \right] \geq 1 - \negl[\lambda].
                \end{align} 
              
                \item For every $x\in \{0,1\}^n$, there exists a non-$\bot$ value $y\in\{0,1\}^m$ such that: 
                \begin{align}
                    \Pr\left[F(x)\in \{y,\bot\} \right] \geq 1 - \negl[\lambda].
                \end{align} 
            \end{enumerate}
            \item \textbf{Security:} Two conditions need to be satisfied:
            \begin{enumerate}
                \item \textbf{One-wayness:} For any QPT algorithm $\adv$ and $F\in \mathcal{H}_\lambda$,
            \begin{align*}
                \Pr_{x\gets \{0,1\}^n}[F(\adv(F(x)))=F_\top(x)]\leq \negl[\lambda].
            \end{align*}
            
            \item \textbf{Collision-resistance:} For every QPT algorithm $\adv$, if $\adv$ outputs $x\in \{0,1\}^n$, then is given $F\gets \mathcal{H}_\lambda$, and then outputs $x'\neq x$, then
            \begin{align*}
                \Pr\left[F(x)=F_\top(x')\right]\leq \negl[\lambda].
            \end{align*}
            \end{enumerate}
        \end{itemize}
    \end{definition}
    
    We emphasize that while the variants of pseudodeterministic QPT algorithms introduced in this section are called ``functions'', they do not technically act as functions due to the non-determinism.\anote{There is something wrong with the last sentence. Can you fix?}\amit{Do you mean "technically they are not a function due to non-determinism"?} This terminology is used to highlight their similarity to the classical notions. 
\fi
\subsection{Entropy}
We will introduce different notions of real and accessible entropy for collisions sets under a $\bot$-function.

\begin{definition}
Let $X$ be a classical random variable. We define the following notions of entropy:
\begin{itemize}
    \item The \emph{sample entropy} (also called the {surprise}) for an element $x\in X$ is given by $H_X(x)\coloneqq \log(\frac{1}{\Pr[X=x]})$.
    \item The \emph{Shannon entropy} is given by $H(X)\coloneqq \mathbb{E}_{x\gets X}\left[H_X(x)\right]$.
    \item The \emph{max-entropy} is given by $H_0(X)\coloneqq \log(\lvert \textsf{Supp}(X)\rvert)$.
    \item The \emph{min-entropy} is given by $H_\infty(X)\coloneqq \min_{x\in X}H_X(x)$.
\end{itemize}   
\end{definition}


We will need to study the entropy of inputs whose evaluations under a $\botOWF$ $F$ agree with non-negligible probability on a non-$\bot$ image. Our notions are an adaptation of those introduced in \cite{HHR+10} for standard deterministic functions. 

There are two types of entropy notions which we will need. Firstly, the \emph{real} entropy of an input $x$ corresponds to the entropy of the set of inputs whose evaluations agree with $x$ with non-negligible probability on a non-$\bot$ image. Secondly, the \emph{accessible} entropy of an input $x$ corresponds to the entropy of the set of inputs whose evaluations agree with $x$ with non-negligible probability on a non-$\bot$ image and that can be found by a QPT adversary. 

\begin{definition}
    Let $n=n(\lambda)$ and $m=m(\lambda)$ be polynomials in the security parameter $\lambda\in\mathbb{N}$ and let $\mu(\lambda)= O(\lambda^{-c})$ for some constant $c>0$. Let $F$ be a $(\mu,n,m)$-$\bot$-function. 
    \begin{itemize}
        \item $\textsf{Supp}_{F}(x)$ is defined as $\{\perp\}$ if $F[[x]]=\perp$ and $\{x'\in \{0,1\}^n:F[[x']]=F[[x]]\}$ otherwise. 
        \item We say $F_\top^{-1}$ has \emph{real Shannon entropy} $k$ if 
    \begin{align*}
        \underset{x\gets \{0,1\}^n}{\mathbb{E}}[\log(\lvert \textsf{Supp}_{F}(x)\rvert)]=k.
    \end{align*}
    \item We say $F_\top^{-1}$ has \emph{real max-entropy} at most $k$ if 
    \begin{align*}
        \Pr_{x\gets \{0,1\}^n}[\log(\lvert \textsf{Supp}_{F}(x)\rvert)\leq k]\geq 1-\negl[\lambda].
    \end{align*}
    \item We say $F_\top^{-1}$ has \emph{real min-entropy} at least $k$ if 
    \begin{align*}
        \Pr_{x\gets \{0,1\}^n}[\log(\lvert \textsf{Supp}_{F}(x)\rvert)\geq k]\geq 1-\negl[\lambda].
    \end{align*}
    \end{itemize}
    \end{definition}

We now define {accessible} Shannon and max entropy for $\bot$-functions.


We say that a QPT algorithm $\textsf{A}$ is a \emph{$\top$-collision-finder} for a $\bot$-function $F$ if it has the following structure for some QPT algorithm $\textsf{A}'$:

\smallskip \noindent\fbox{%
    \parbox{\textwidth}{%
\textbf{Algorithm} $\textsf{A}(x)$:
        \begin{enumerate}
        \item $x'\gets \textsf{A}'(x)$.
        \item Compute $y\gets F(x)$ and $y'\gets F_\top(x')$.
        \item If $y=y'$, then output $x'$. 
        \item Else if $y=\perp$, output $\bot.$ 
        \item Else, output $x.$
        \end{enumerate}
}}

In particular, $\textsf{A}$ searches for non-$\bot$ collisions. We require that $\textsf{A}$ actually checks that the collision occurs to capture the requirement that the collision occurs with non-negligible probability. Note that any algorithm $\mathcal{A}$ that, given $x$, finds a value $x'$ such that $F(x)$ and $ F_\top(x')$ collide with non-negligible probability can be plugged in algorithm $\textsf{A}$ by substituting the call to $\textsf{A}'$ with $\mathcal{A}$ to obtain a $\top$-collision-finder that succeeds with non-negligible probability. 

\begin{definition}
     Let $n=n(\lambda)$ and $m=m(\lambda)$ be polynomials in the security parameter $\lambda\in\mathbb{N}$ and let $\mu(\lambda)= O(\lambda^{-c})$ for some constant $c>0$. Let $F$ be a $(\mu,n,m)$-$\bot$-function. We say $F_\top^{-1}$ has \emph{accessible Shannon entropy} at most $k$ if for every QPT $\top$-collision-finder $\textsf{A}$,
    \begin{align*}
        H(\textsf{A}(X)|X)\leq k.
    \end{align*}
for all sufficiently large $n$, where $X$ is the random variable uniformly distributed on $\{0,1\}^n$. 
\end{definition}

\begin{definition}
    Let $n=n(\lambda)$ and $m=m(\lambda)$ be polynomials in the security parameter $\lambda\in\mathbb{N}$ and let $\mu(\lambda)= O(\lambda^{-c})$ for some constant $c>0$. Let $p\coloneqq p(n)\in [0,1]$ and let $F$ be a $(\mu,n,m)$-$\bot$-function. We say $F_\top^{-1}$ has \emph{$p$-accessible max-entropy at most $k$} if for every QPT $\top$-collision-finder $\textsf{A}$, there exists a family of sets $\{L(x)\}_{x\in \{0,1\}^n}$, where $L(x)\subseteq \{0,1\}^n$, each of size at most $2^k$, such that 
    \begin{align*}
        \Pr[\textsf{A}(X)\in L(X)]\geq 1-p.
    \end{align*}
    for all sufficiently large $n$. If $p$ is negligible in $n$, then we simply say that $F^{-1}_\top$ has accessible max-entropy at most $k$. 
\end{definition}

Note that the accessible max-entropy is bounded by the real max-entropy, since for $x\in \{0,1\}^n$, if $x'\in L(x)$, then $F_\top(x')=F(x)\neq \perp$ with non-negligible probability. This implies that $F[[x]]=F[[x']]\neq \perp$ so $x'\in \textsf{Supp}_{F}(x)$. 

\begin{definition}
Let $n=n(\lambda)$ and $m=m(\lambda)$ be polynomials in the security parameter $\lambda\in\mathbb{N}$ and let $\mu(\lambda)= O(\lambda^{-c})$ for some constant $c>0$. Let $F$ be a $(\mu,n,m)$-$\bot$-function and let $p=p(n)\in [0,1]$. We say $F$ is \emph{$p$-$\top$-collision-resistant on random inputs} if for every QPT $\top$-collision-finder $\adv$,
\begin{align*}
    \Pr_{x\gets \{0,1\}^n}[\textsf{A}(x)\in \{x,\bot\}]\geq 1-p. 
\end{align*}
If $p\in \negl[n]$, then we simply say $F$ is $\top$-collision-resistant on random inputs.
\end{definition}

To build $\botUOWHF$, we need to construct a function family $\mathcal{F}$ that satisfies $\top$-target-collision-resistance: no QPT adversary can output an input $x$, then gets $F\gets \mathcal{F}$, and then find a $x'\neq x$ such that $F[[x]]=F[[x']]\neq \bot$ with non-negligible probability.  

We note that a function which is $\top$-collision-resistant on random inputs can easily be converted into a $\top$-target-collision-resistant family of functions. Specifically, if $F$ is  $\top$-collision-resistant on random inputs, then $\{\tilde{F}_y\}_{y\in \{0,1\}^n}$ defined by $F_y(x)\coloneqq F(y+x)$ is a target collision resistant family of functions.

Therefore, our notions of $\top$-collision-finders and $\top$-collision-resistance are the right notions for building $\botUOWHF$. Furthermore, the notion of $\top$-accessible max-entropy is useful as it allows us to deduce $\top$-collision-resistance as stated in the following lemma. The proof follows directly from the definitions. 

\begin{lemma}
    Let $F$ be a $(\mu,n,m)$-$\bot$-function and let $p=p(n)\in [0,1]$. If ${F}_\top^{-1}$ has $p$-accessible max-entropy 0, then $F$ is $p$-$\top$-collision-resistant on random inputs.  
\end{lemma}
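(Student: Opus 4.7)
The plan is to apply the hypothesis of $p$-accessible max-entropy $0$ directly, combined with the wrapper structure of $\top$-collision-finders. Fix an arbitrary QPT $\top$-collision-finder $\textsf{A}$; the goal is $\Pr_{X\gets\{0,1\}^n}[\textsf{A}(X)\in\{X,\bot\}]\ge 1-p$, which is the definition of $p$-$\top$-collision-resistance on random inputs. By the hypothesis, there exists a family $\{L(x)\}_{x\in\{0,1\}^n}$ with $|L(x)|\le 2^0=1$ and $\Pr_X[\textsf{A}(X)\in L(X)]\ge 1-p$, so it suffices to show that the singleton $L(x)$ guaranteed by the definition can be taken inside $\{x,\bot\}$.

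The central observation is that the wrapper forces the output $\textsf{A}(x)$ to lie in $\{x,\bot\}\cup\textsf{Supp}_F(x)$: either $\textsf{A}$ returns the input $x$ (the fallback branch), or $\bot$ (when $F(x)=\bot$), or some $x'$ with $F_\top(x')=F(x)\neq\bot$, i.e., $x'\in\textsf{Supp}_F(x)$. The remark immediately preceding the lemma strengthens this to $L(x)\subseteq\textsf{Supp}_F(x)$. Since $x\in\textsf{Supp}_F(x)$ whenever $F[[x]]\ne\bot$, the canonical choice $L(x)=\{x\}$ is always a legitimate witness to the entropy bound, and with that choice,
\[
\Pr_X[\textsf{A}(X)\in\{X,\bot\}]\ge\Pr_X[\textsf{A}(X)\in L(X)]\ge 1-p,
\]
which is exactly the desired conclusion.

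The one subtlety, which I expect to be the main obstacle, is justifying the reduction to $L(x)\subseteq\{x,\bot\}$: the definition gives us \emph{some} singleton family, not the canonical one. My plan for this step is a short reduction. Suppose on a set of inputs of measure greater than $p$ the only singleton satisfying the entropy bound were of the form $L(x)=\{x^*\}$ with $x^*\neq x,\bot$; then $\textsf{A}$ would consistently output a specific non-trivial collision on those inputs. From such an $\textsf{A}$ we construct a symmetrised $\top$-collision-finder $\textsf{B}$ whose inner algorithm with probability $1/2$ returns $x$ and with probability $1/2$ runs $\textsf{A}$'s inner routine. The output distribution of $\textsf{B}$ is then spread across the two distinct values $x$ and $x^*$ on those inputs, forcing $|L_B(x)|\ge 2$ for any family witnessing the entropy of $\textsf{B}$, which contradicts accessible max-entropy $0$ applied to $\textsf{B}$. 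Hence $L(x)\subseteq\{x,\bot\}$ almost everywhere, closing the argument.
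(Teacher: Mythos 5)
The paper gives no argument beyond the remark that the lemma ``follows directly from the definitions,'' so you are being compared against what a correct argument would have to look like rather than against an explicit proof. You are right to notice that the naive reading --- declare $L(x)=\{x\}$ and appeal to the entropy bound --- is unjustified: the definition of $p$-accessible max-entropy only asserts that \emph{some} family $\{L(x)\}$ of singletons works for the given finder, and nothing in it pins that family down to the canonical one. Your first paragraph nonetheless presents the argument as if $L(x)=\{x\}$ could be imposed, which is circular; only the symmetrization in the last paragraph is a genuine attempt to close the gap, and you correctly flag it as the crux.

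The symmetrization does not close the gap as written. Suppose $\textsf{A}(X)\notin\{X,\bot\}$ with probability $q>p$, so that on a set $S$ of measure roughly $q$ the finder returns a nontrivial collision $z_x\neq x$ with high probability. Your mixed finder $\textsf{B}$ (inner routine returns $x$ w.p.\ $1/2$, otherwise runs $\textsf{A}'$) then spreads its output on $x\in S$ roughly evenly between $x$ and $z_x$. The step ``forcing $|L_B(x)|\ge 2$'' is where it breaks: $p$-accessible max-entropy does \emph{not} require the singleton $L_B(x)$ to capture all of the output mass, only a $1-p$ fraction of it over a random input. Choosing $L_B(x)$ to be either of the two modes already captures about $1/2$ of the mass on $S$, so $\Pr[\textsf{B}(X)\in L_B(X)]\approx 1-q/2$, and the hypothesis merely forces $q\le 2p$. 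In other words, the argument establishes $2p$-$\top$-collision-resistance, not $p$, and varying the mixing weight $\alpha$ cannot help since the resulting constraint $q\cdot\min(\alpha,1-\alpha)\le p$ is sharpest at $\alpha=1/2$. The clean fix --- and the likely intent behind the paper's one-line proof --- is to build $x\in L(x)$ into the definition of accessible max-entropy, as in the deterministic treatment of \cite{HHR+10}; this is harmless because the identity is always an accessible preimage via the fallback branch of the wrapper, and it makes max-entropy $0$ literally mean $L(x)=\{x\}$, from which the lemma is immediate. Without that condition, the factor of two appears unavoidable by the symmetrization route.
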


\subsection{\texorpdfstring{$\botOWF$}{bot-OWF}
from \texorpdfstring{$\botPRG$}{bot-PRG}}

In this section, we show that every length-tripling $\botPRG$ is a $\botOWF$ (up to the necessary padding). The proof is slightly more involved than its classical counterpart due to the $\bot$ evaluations.

\begin{theorem}
    \label{thm:OWF}
 Assuming the existence of a $(\mu,\ell)\text{-}\botPRG$ $G$ where $\ell(\lambda)\geq 3\lambda$, there exists a $(\mu,\ell,\ell)\text{-}\botOWF$.
   \end{theorem}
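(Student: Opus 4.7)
The plan is to construct $F\colon\{0,1\}^\ell\to\{0,1\}^\ell$ by truncating the input and applying $G$: set $F(x)\coloneqq G(x_{1\ldots\lambda})$, ignoring the last $\ell-\lambda$ bits of $x$. Pseudodeterminism transfers almost for free. Take the good set $\mathcal{G}\coloneqq\{x\in\{0,1\}^\ell : x_{1\ldots\lambda}\in\Klam\}$, where $\Klam$ is the good key-set of $G$ from \cref{def:bot-PRG}. Then $\Pr_{x\gets\{0,1\}^\ell}[x\in\mathcal{G}]=\Pr_{k\gets\{0,1\}^\lambda}[k\in\Klam]\geq 1-\mu$; the point-wise almost-$y$ property on $\mathcal{G}$ follows from \cref{def:bot-pseudodeterminism-part2}; and the $\{y,\bot\}$-concentration on every input follows from \cref{def:bot-pseudodeterminism-part3}.

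Security is proved by contrapositive. Suppose a QPT adversary $\adv$ inverts $F$ with non-negligible probability $\epsilon$. I would build a distinguisher $D$ against the $\bot$-pseudorandomness of $G$ (the single-time, $q=1$ case of \cref{def:multi-time-pseudorandomness}). On input $y$, $D$ outputs $0$ if $y=\bot$; otherwise it runs $x'\gets\adv(y)$, re-evaluates $y'\gets F(x')$, and outputs $1$ iff $y'=y$.

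In the $\botPRG$ world $y\gets G(k)$ for uniform $k$, which has the same distribution as $F(x)$ for uniform $x\in\{0,1\}^\ell$. Since $F_\top(x)=\top$ whenever $F(x)=\bot$ and $F$ itself never outputs $\top$, the winning event in the $\botOWF$ security game is automatically excluded on $\bot$-images, so the success probability $\epsilon$ of $\adv$ is entirely concentrated on non-$\bot$ outputs, giving $\Pr[D(y)=1]\geq \epsilon$. In the $\isbot$ world, conditional on $y\neq\bot$, $y$ is uniformly random in $\{0,1\}^\ell$ and independent of $G(k)$, so
\[
\Pr[D(y)=1]\ \leq\ \Pr_{r\gets\{0,1\}^\ell}[F(\adv(r))=r]\ \leq\ 2^{-\ell}\sum_{r}\max_{x'}\Pr[F(x')=r].
\]
The main routine point to double-check is that this last quantity is bounded even though $F$ is probabilistic. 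Swapping the order of summation and using $F(x')=G(x'_{1\ldots\lambda})$,
\[
\sum_r\max_{x'}\Pr[F(x')=r]\ \leq\ \sum_r\sum_{k\in\{0,1\}^\lambda}\Pr[G(k)=r]\ =\ \sum_k 1\ =\ 2^\lambda,
\]
so the $\isbot$-world acceptance probability is at most $2^{\lambda-\ell}\leq 2^{-2\lambda}$ by $\ell\geq 3\lambda$. Hence $D$ has distinguishing advantage at least $\epsilon-2^{-2\lambda}$, contradicting $\bot$-pseudorandomness of $G$.

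The main obstacle, and the only place the $\bot$-setting deviates non-trivially from the classical $\PRG$-to-$\OWF$ reduction, is justifying that the reduction does not lose probability mass on $\bot$-outputs: in the pseudorandom world the distinguisher must ignore $y=\bot$ (since $F(x')$ can equal $\bot$, which would spuriously match $y$ and inflate $D$'s output in \emph{both} worlds), while in the $\isbot$ world $y=\bot$ must also be routed to output $0$ so that the two behaviours cancel. The clean definition of $\isbot$ together with the observation that $F_\top(x)=\top\notin\mathrm{Range}(F)$ on $\bot$-inputs makes this alignment work automatically.
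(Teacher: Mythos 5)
Your proposal is correct and matches the paper's proof in all essentials: the same truncation-based construction, the same direct transfer of pseudodeterminism, and the same reduction to single-time $\bot$-pseudorandomness via a distinguisher that replays $\adv$ and re-evaluates $F$. The only (cosmetic) difference is that you bound the random-world acceptance probability directly by the union-style estimate $\sum_{r}\max_{x'}\Pr[F(x')=r]\leq 2^{\lambda}$, whereas the paper reaches the same $2^{\lambda-\ell}$-type bound by first defining the set $\mathrm{Im}_{\top}(G)$ of heavy images, bounding its cardinality by $2^{2\lambda}$, and then conditioning on the events $\mathsf{A},\mathsf{B}$ — your presentation of this step is cleaner but the underlying idea is identical.
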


See \ifnum\iacr=1 Supplement~\ref{app:owf proof}\else \cref{app:owf proof}\fi for the proof. 

\subsection{\texorpdfstring{$\bot$}{bot} Universal One-Way Hash Functions from \texorpdfstring{$\bot$}{bot} \textsf{OWF}s}
\label{sec:botUOWHF construction}

Universal One-Way Hash Functions were first introduced by Naor and Young \cite{NY89}, who showed how to use these functions to build digital signatures. Rompel was the first to construct \textsf{UOWHF}s from any arbitrary \textsf{OWF} \cite{R90}, thus basing digital signatures on \textsf{OWF}s. Later,  Haitner, Holenstein, Reingold, Vadhan, and Wee \cite{HHR+10} provided a shorter and simpler construction of \textsf{UOWHF}s based on the notion of accessible entropy. Their proof consists of two parts: $(1)$ Constructing a function with sufficient inaccessible Shannon entropy from any \textsf{OWF}. $(2)$ Building a \textsf{UOWHF} from any function with sufficient inaccessible entropy. 

We adapt their proof to achieve $\bot$-Universal One-Way Hash Functions from $\botOWF$s. Our proof follows the same route, but adapted to the notions of entropy for $\bot$-functions. This requires many steps to be modified and some bounds to be loosened in order for the construction to go through, especially in the first part of the construction. 

\subsubsection{Inaccessible Entropy from \texorpdfstring{$\bot$}{bot} \textsf{OWF}s}
Let $f$ be a $(\mu,n,n)\text{-}\botOWF$. Define the QPT algorithm $F$ with domain $\{0,1\}^n\times [n]$ and co-domain $\{0,1\}^n\cup \{\bot\}$ as follows: $F(x,i)$ runs $f(x)$ and outputs the first $i$ elements. In the case $f(x)$ returns $\bot$, then $F(x,i)$ outputs $\bot$. Note that $F$ is also a $\bot$-function with the same pseudodeterminism parameter $\mu$ as $f$. We will show a lower bound on the accessible Shannon entropy of $F^{-1}_\top$. In the next section, we amplify this entropy to build a $\botUOWHF$. 

\begin{theorem}
\label{thm:inaccessible entropy}
 $F^{-1}_\top$ has accessible Shannon entropy at most $H(F_\top^{-1}(F(Z))|F(Z))-1/(2^9\cdot n^4\cdot \log^2(n)),$ where $Z=(X,I)$ is uniformly distributed over $\{0,1\}^n\times [n]$.
\end{theorem}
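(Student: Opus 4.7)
The plan is to adapt the classical proof of Haitner, Holenstein, Reingold, Vadhan, and Wee~\cite{HHR+10} that $F(x,i) = f(x)_{1..i}$ has inaccessible Shannon entropy whenever $f$ is a $\OWF$, and to account for the recognizable abort behaviour of $\botOWF$s. I would argue by contrapositive: suppose for contradiction that some QPT $\top$-collision-finder $\textsf{A}$ achieves accessible Shannon entropy exceeding $H(F_\top^{-1}(F(Z))\mid F(Z)) - 1/(2^9 n^4 \log^2 n)$. From $\textsf{A}$ I will build a QPT algorithm $\textsf{Inv}$ that inverts $f$ on random inputs with non-negligible probability, contradicting the $\botOWF$ security of $f$.

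First I would relate real and accessible entropies through the same KL-divergence argument as in~\cite{HHR+10}. For $z = (x,i)$, let $p_z(z')$ be the probability that $\textsf{A}(z) = z'$, and let $q_z(z') = 1/|\textsf{Supp}_F(z)|$ be the uniform distribution on the real support of collisions (with the convention that $\bot$-outputs collapse to a single bucket, using the definition of $F[[\cdot]]$). A small gap between accessible Shannon entropy and real Shannon entropy $H(F_\top^{-1}(F(Z))\mid F(Z))$ forces, via the Pinsker-type reasoning in~\cite{HHR+10}, that $\textsf{A}(z)$ is statistically close to uniform on $\textsf{Supp}_F(z)$ for most $z$. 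The first subtlety is that $\textsf{A}$ may output $\bot$, but the $\top$-collision-finder syntax guarantees that when it does, no useful information is lost, so we can condition on non-$\bot$ outputs and absorb a $\mu(\lambda) + \negl$ loss from the pseudodeterminism of $f$.

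Next I would use this near-uniform sampler for preimages of $F$ to build the inverter $\textsf{Inv}$ for $f$. Given a challenge $y = f(x^*)$ with $x^* \xleftarrow{\$} \{0,1\}^n$, $\textsf{Inv}$ samples $i \xleftarrow{\$} [n]$, draws a random $x$ conditioned on $f(x)_{1..i} = y_{1..i}$ via rejection sampling against $\textsf{A}$, and then hopes that the next bit revealed by $\textsf{A}$'s collision matches $y_{i+1}$; an averaging argument over $i \in [n]$ then extracts a full preimage with probability at least $1/\poly(n)$. This is the standard ``bit-by-bit'' preimage recovery used in~\cite{HHR+10}, and the $1/(2^9 n^4 \log^2 n)$ slack in the statement is exactly what one obtains after the square roots introduced by going from KL-divergence to statistical distance and summing over the $n$ prefix lengths and the $\log n$ sample entropy buckets.

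The main obstacle will be rigorously handling the $\bot$ outputs throughout the entropy manipulations: the set $\textsf{Supp}_F(z)$ is defined using the (only heuristically-unique) value $F[[z]]$, which is itself only well-defined when $f$ evaluates to a non-$\bot$ value with probability at least $1/\poly$. I would therefore split the analysis into a ``good'' event (the input $x$ and all relevant prefixes are in the good set $\mathcal{G}$ of $f$, incurring at most a $n \cdot \mu(\lambda)$ loss by a union bound) and a ``bad'' event absorbed into the negligible slack. On the good event, $F_\top(x') = F[[x']]$ whenever $\textsf{A}$ outputs a non-$\bot$ $x'$, so the classical entropy identities go through verbatim with $F_\top$ in place of the deterministic classical $F$. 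This lets the~\cite{HHR+10} inversion argument carry over with only a loss of an additional additive $1/\poly$ term in the entropy gap, which is already absorbed by the stated $1/(2^9 n^4 \log^2 n)$ bound.
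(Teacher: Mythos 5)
Your high-level plan---convert the entropy gap into statistical closeness via the KL-divergence argument of \cite{HHR+10}, build a $\textsf{Sam}$-style bit-by-bit preimage sampler from the $\top$-collision-finder, and condition on the good set $\mathcal{G}$ to tame the $\bot$-outputs---does match the skeleton of the paper's proof, and you correctly identify $\bot$-handling as the main obstacle.

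Where the plan has a genuine gap is the claim that, once you condition on the good event and pay an $n\mu$ union-bound price, ``the classical entropy identities go through verbatim'' and the \cite{HHR+10} argument carries over with only an additive $1/\poly$ slack in the entropy gap. That is not what happens. The \cite{HHR+10} analysis of how often the rejection loop inside $\textsf{Sam}_{\text{Approx}}$ succeeds relies on a total-probability fact: if a $\beta$-fraction of inputs map to an image whose $i$-th bit is $0$, then a $(1-\beta)$-fraction map to one whose $i$-th bit is $1$. This breaks once a non-negligible fraction of inputs can evaluate to $\bot$, because ``good and next image bit $0$'' and ``good and next image bit $1$'' no longer partition the input space. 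The paper replaces this with a new combinatorial claim: letting $\alpha(x,i)$ be the conditional probability (over a random collision $x'$ agreeing with $x$ on the first $i-1$ non-$\bot$ image bits) that $x'$ also agrees on the $i$-th bit \emph{and} lies in $\mathcal{G}$, it shows $\Pr_{x\gets\mathcal{G}}[\alpha(x,i)<(1-\mu)\beta]<\beta$ by a counting argument on sets $S(i,\beta),\overline{S}(i,\beta)$ that exploits the lower bound $|\mathcal{G}|\geq(1-\mu)2^n$ on the good set. Without such a lemma the bound that $\textsf{Sam}_{\text{Approx}}$ aborts with probability at most $1/(4n)$ after $\lceil 20n\log n\rceil$ calls to $\textsf{A}'$ does not follow, and the inverter's success probability is uncontrolled; framing this as an additive entropy-bookkeeping loss misidentifies where the actual work is. A secondary inaccuracy: the inverter $\textsf{Inv}_{\text{Ideal}}$ iterates $i=1,\ldots,n$ and invokes $\textsf{Sam}$ at every step to extend the candidate preimage, rather than sampling a single random $i$; the per-$i$ flavor only enters through the hybrid telescoping over $\textsf{Inv}^i_{\text{Approx}}$ that accumulates the statistical-distance errors.
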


\begin{proof}
Assume for contradiction that there exists a $\top$-collision-finder $\textsf{A}$ for $F$ such that: 
\begin{align*}
    H(F_\top^{-1}(F(Z))|F(Z))-H(\textsf{A}(Z)|Z)\leq \epsilon= 1/(2^9\cdot n^4\cdot \log^2(n)) .
\end{align*}
Let $\textsf{A}'$ be the algorithm which on input $(x,i)$, outputs the first component of $\textsf{A}(x,i)$. Then, 
\begin{align*}
    H(F_\top^{-1}(F(Z))_1|F(Z))-H(\textsf{A}'(Z)|Z)\leq \epsilon ,
\end{align*}
where $F_\top^{-1}(F(Z))_1$ denotes the first component of $F_\top^{-1}(F(Z))$. We will use $\textsf{A}'$ to construct an inverter for $f$. This is done in two steps: $(1)$ Build an inverter assuming access to an inefficient oracle $\textsf{Sam}_{\text{Ideal}}$. $(2)$ Approximate $\textsf{Sam}_{\text{Ideal}}$ with an efficient algorithm using $\textsf{A}'$. 

 By the same arguments used in \cite{HHR+10}, it can be shown that:
\begin{align*}
    \| (Z,F^{-1}_\top(F(Z))_1)-(Z,\textsf{A}'(Z))\|\leq \sqrt{\epsilon} .
\end{align*}
Furthermore, since $Z$ is distributed uniformly on $(X,I)$ and $\Pr{\left[ X\in \mathcal{G}\right]}\geq 1-\mu$ ($\mathcal{G}$ is the ``deterministic'' set, see Definition \ref{def:botOWF}), then the triangle inequality gives
\begin{align*}
    \| (G,F^{-1}_\top(F(G,I))_1)-(G,\textsf{A}'(G,I))\|\leq \frac{\sqrt{\epsilon}}{1-\mu} .
\end{align*}
where $G$ is the random variable uniformly distributed over $\mathcal{G}$. 

\smallskip \noindent\fbox{%
    \parbox{\textwidth}{%
\textbf{Algorithm} $\textsf{Sam}_{\text{Ideal}}$:
\begin{itemize}
    \item \textbf{Input:} $x\in \{0,1\}^n$, $i\in [n]$, and $b\in \{0,1\}$.
    \item \textbf{Output:} 
    \begin{enumerate}
        \item Sample $y\gets f(x)$. If $y=\bot$, then output $\bot$.
        \item Otherwise, return $x'\gets F_\top^{-1}(y_{1,\ldots,i-1}\|b)_1$.
    \end{enumerate}
\end{itemize}
}}
\smallskip 

We now present an algorithm $\textsf{Inv}_{\text{Ideal}}$ that inverts $f$ with non-negligible probability using oracle access to $\textsf{Sam}_{\text{Ideal}}$. 

\smallskip \noindent\fbox{%
    \parbox{\textwidth}{%
\textbf{Algorithm} $\textsf{Inv}_{\text{Ideal}}$:
\begin{itemize}
    \item \textbf{Input:} $y\in \{0,1\}^n\cup \{\bot\}$.
        \item \textbf{Oracle:} $\textsf{Sam}_{\text{Ideal}}$
    \item \textbf{Output:} 
    \begin{itemize}
        \item If $y=\bot$, then output $\bot$.
        \item Otherwise, for each $i\in [n]:$
    \begin{itemize}
        \item Sample $x^i\gets \textsf{Sam}_{\text{Ideal}}(x^{i-1},i,y_i)$ (where $x^0$ is chosen randomly). 
    \end{itemize}
    \item Output $x^{n}$
    \end{itemize}
\end{itemize}
}}
    \smallskip 

\begin{claim}
    $\textsf{Inv}_{\text{Ideal}}$ inverts $f$ with non-negligible probability if $\mu<\frac{1}{n}$.
\end{claim}

\begin{proof}
We need to show that the probability $\Pr_{x\gets X}\left[f_\top(\textsf{Inv}_{\text{Ideal}}(y))=y:y\gets f(x) \right]$ is non-negligible. 

Let $x\gets X$ and $y\gets f(x)$. Consider the algorithm $\textsf{Inv}_{\text{Ideal}}(y)$. If $x^1\in \mathcal{G}$, then there exists $y^1$ satisfying $y^1_1=y_1$ 
such that:
 \begin{align}
    \Pr{\left[f(x^1)=y^1\right]} \geq 1 - \negl[\lambda] .
\end{align} 

Applying this argument for all $i\in [n]$, we get that if $x^1,x^2,\ldots, x^n\in \mathcal{G}$, then
 \begin{align}
                \Pr\left[f(x^n)=y \right] \geq 1 - \negl[\lambda] .
\end{align} 
So in this case, $\textsf{Inv}_{\text{Ideal}}$ inverts $y$ with high probability. Therefore, it is sufficient to show that $x^i\in \mathcal{G}$ for all $i\in [n]$ with non-negligible probability to deduce that $\textsf{Inv}_{\text{Ideal}}$ is an inverter of $f$. 

For any $i\in [n]$, let $\textsf{E}_i$ be the event that $x^i\in \mathcal{G}$ and let $\textsf{D}^i$ be the event that when $\textsf{Sam}_{\text{Ideal}}$ evaluates $f(x^i)$, it gets $f[[x^i]]$. Similarly, let $\textsf{E}^*$ be the event that $x\in \mathcal{G}$ and $\textsf{D}^*$ be the event that when $\textsf{Inv}_{\text{Ideal}}$ computes $f(x)$, it gets $f[[x]]$.
 Note that for any $i\in [n]$, $\Pr\left[ F^{-1}_\top(f[[x]],i)\in \mathcal{G}:x\gets \mathcal{G} \right] \geq (1-\mu)$ by the pseudodeterminism property of $f$. We want to bound the following probability:
\begin{align*}
    \Pr & \left[\textsf{E}^*\textsf{E}_1\textsf{E}_2\ldots \textsf{E}_n  \right]=\\
    &\Pr\left[\textsf{E}_n|\textsf{E}^*\textsf{E}_1\ldots \textsf{E}_{n-1} \right]\cdot \Pr\left[\textsf{E}_{n-1}|\textsf{E}^*\textsf{E}_1\ldots \textsf{E}_{n-2} \right]\cdot \ldots \cdot \Pr\left[\textsf{E}_1|\textsf{E}^*\right]\cdot \Pr\left[\textsf{E}^*\right].
\end{align*}
Given that $x$ is sampled uniformly at random, we have $\Pr\left[\textsf{E}^*\right]\geq (1-\mu)$. Next, 
\begin{align*}
    \Pr\left[\textsf{E}_1|\textsf{E}^*\right] &= \Pr\left[\textsf{E}_1|\textsf{E}^*\textsf{D}^*\right]\Pr\left[\textsf{D}^*| \textsf{E}^*\right]+\Pr\left[\textsf{E}_1|\textsf{E}^*\overline{D}^*\right]\Pr\left[\overline{D}^*| \textsf{E}^*\right] \\ 
    &\geq \Pr\left[\textsf{E}_1|\textsf{E}^*\textsf{D}^*\right] (1-\negl[\lambda]) +0 \\
    &\geq (1-\mu)(1-\negl[\lambda]).
\end{align*}
This is because $\Pr\left[\textsf{D}^*| \textsf{E}^*\right]\geq 1-\negl[\lambda]$ for some negligible function by the pseudodeterminism property of $f$ and $\Pr\left[\textsf{E}_1|\textsf{E}^*\textsf{D}^*\right]=\Pr\left[F^{-1}_\top(f[[x]],1)\in \mathcal{G}:x\gets \mathcal{G} \right]\geq (1-\mu)$. 
Continuing in this fashion, we have that
\begin{align*}
    \Pr\left[\textsf{E}^*\textsf{E}_1\textsf{E}_2\ldots \textsf{E}_n  \right]\geq (1-\negl[\lambda])^n(1-\mu)^n > \frac{3(1-\mu)^n}{4}.
\end{align*}
Note that $\frac{3(1-\mu)^n}{4}$ is non-negligible if $\mu<1/n$. Hence, $\textsf{Inv}_{\text{Ideal}}$ is an inverter of $f$.

\qed
\end{proof}

We will now approximate $\textsf{Sam}_{\text{Ideal}}$ with the following algorithm which is efficient given our assumption on the existence of an efficient $\top$-collision-finder $\textsf{A}'$ for $F$.

\smallskip \noindent\fbox{%
    \parbox{\textwidth}{%
\textbf{Algorithm} $\textsf{Sam}_{\text{Approx}}$:
\begin{itemize}
    \item \textbf{Input:} $x\in \{0,1\}^n$, $i\in [n]$, and $b\in \{0,1\}$.
\item \textbf{Oracle:} $\textsf{A}'$.
\item \textbf{Output:} 
\begin{itemize}
    \item Sample $y\gets f(x)$. If $y=\bot$, then output $\bot$.
    \item Otherwise, repeat $\lceil 20n\cdot \log(n)\rceil$ times:
    \begin{enumerate}
    \item $x'\gets \textsf{A}'(x,i)$.
    \item Compute $y'\gets f_\top(x')$.
    \item If $y_{1,\ldots,i-1}=y'_{1,\ldots,i-1}$ and $ y'_{i}=b$, then output $x'$. 
\end{enumerate}
\item Otherwise, abort. 
\end{itemize}

\end{itemize}
}}
    \smallskip 

\begin{claim}
    Let $i\in [n]$ and let $\delta_i\coloneqq \| (G,F^{-1}_\top(F(G,i))_1)-(G,\textsf{A}'(G,i))\|$, then
    \begin{align*}
        \| (G,\textsf{Sam}_{\text{Ideal}}(G,i,f(G)_{i}))-(G,\textsf{Sam}_{\text{Approx}}(G,i,f(G)_{i}))\|\leq \frac{1}{2n}+20n\cdot \log(n)\cdot \delta_i.
    \end{align*}
\end{claim}

\begin{proof}
The difference between the two distributions comes from two sources. Firstly, $\textsf{Sam}_{\text{Approx}}$ samples a collision using $\textsf{A}'$ whereas $\textsf{Sam}_{\text{Ideal}}$ samples a collision uniformly at random. In particular, each query to $\textsf{A}'$ contributes an error of $\delta_i$, and there are $20n\cdot \log(n)$ queries, so this accounts for $20n\cdot \log(n)\cdot \delta_i$ in the statistical distance. Secondly, the algorithm $\textsf{Sam}_{\text{Approx}}$, may abort without returning a valid collision. In more details, $\textsf{Sam}_{\text{Approx}}(x,i,b)$ aborts if for each iteration, the output $x'\gets \textsf{A}'(x,i)$ and the corresponding image $y'\gets f_\top(x')$ do not satisfy $y_{1,\ldots,i-1}=y'_{1,\ldots,i-1}$ and $ y'_{i}=b$. 

To understand the contribution of each source of error, we analyze both cases separately. Firstly, assume $\delta_i=0$. For $x\in \{0,1\}^n$, define $\alpha(x,i)\coloneqq 
\Pr[(f[[x]]_{i}=f[[x']]_{i}) \wedge (x'\in \mathcal{G})|f[[x]]_{1,\ldots, i-1}=f[[x']]_{1,\ldots, i-1}\neq \perp]$. 
Then,  the following claim holds for any $\beta>0$.
\begin{claim}
\begin{align*}
    \Pr[\alpha(G,i)<(1-\mu)\beta] := \Pr_{x\gets \mathcal{G}}{\left[\alpha(x,i)
    < (1-\mu)\beta\right]} < \beta .
\end{align*}    
\end{claim}

\begin{proof}
Assume for contradiction that $\Pr[\alpha(G,i)<(1-\mu)\beta] \geq \beta$. Let $i\in [n]$ and $y^i\in \{0,1\}^{i-1}$. Consider the following sets,
\begin{align*}
    A(y^i,b,i)&\coloneqq \{x\in \mathcal{G}:(f[[x]]_{1,\ldots, i-1}=y^i)\wedge (f[[x]]_i=b)\},\\
    E(y^i,i)&\coloneqq \{x\in \{0,1\}^n:f[[x]]_{1,\ldots, i-1}=y^i\},\\
    S(i,\beta)&\coloneqq \{x\in \mathcal{G}: |A(f[[x]]_{1,\ldots, i-1},f[[x]]_i,i)|\leq (1-\mu)\beta \cdot |E(f[[x]]_{1,\ldots,i-1},i)|\},\\
    \overline{S}(i,\beta)&\coloneqq \{x\in \mathcal{G}:x\notin S(i,\beta)\} .
\end{align*} 
Note that if $A(y^i,b,i)\subseteq S(i,\beta)$, then $|A(y^i,b,i)|\leq (1-\mu)\beta \cdot |E(y^i,i)|$. Furthermore, $|S(i,\beta)|\leq (1-\mu)\beta \cdot 2^n$. 

Our assumption can be rephrased as $|S(i,\beta)|\geq \beta (|S(i,\beta)|+|\overline{S}(i,\beta)|)$. As a result,
\begin{align*}
    |\mathcal{G}|=|S(i,\beta)|+|\overline{S}(i,\beta)|\leq \frac{1}{\beta}|S(i,\beta)|\leq \frac{1}{\beta} \cdot \beta (1-\mu)\cdot 2^n = (1-\mu)\cdot 2^n .
\end{align*}
This contradicts the pseudodeterminism condition of $f$, which states that $|\mathcal{G}|>(1-\mu)\cdot 2^n$ so we must have that $\Pr[\alpha(G,i)<(1-\mu)\beta] < \beta$.
\qed 
\end{proof}

Also, note that $\textsf{Sam}_{\text{Approx}}$ aborts with probability $(1-\frac{1}{4n}(1-\mu)-\negl[\lambda])^{20n\cdot \log(n)}<\frac{1}{4n}$ when $\alpha(x,i)\geq \frac{1}{4n}(1-\mu)$ and $\mu<1/n$. If $\textsf{Sam}_{\text{Approx}}$ does not abort then, it returns the same distribution as $\textsf{Sam}_{\text{Ideal}}$. Hence, we have:
\begin{align*}
\| &(G,\textsf{Sam}_{\text{Ideal}}(G,i,f(G)_{i}))-(G,\textsf{Sam}_{\text{Approx}}(G,i,f(G)_{i}))\| \\         &\leq \Pr[\alpha(X,i)< \frac{1}{4n}(1-\mu)]+\Pr[\textsf{Sam}_{\text{Approx}} \text{ aborts }|\alpha(G,i)\geq  \frac{1}{4n}(1-\mu)]\\
         &\leq \frac{1}{4n}+\frac{1}{4n}\\
         &\leq \frac{1}{2n}. 
\end{align*}
The statistical distance between the distribution of $\textsf{Sam}_{\text{Approx}}(G,i,f(G)_{i})$ when $\delta_i=0$ and the general case is at most the maximal number of calls made to $\textsf{A}'$ times $\| (G,F^{-1}_\top(F(G,i))_1)-(G,\textsf{A}'(G,i))\|$ which is bounded by $20n\cdot \log(n)\cdot \delta_i$. Therefore, 
\begin{equation*}
\begin{split}
    \| (G,\textsf{Sam}_{\text{Ideal}}(G,i,f(G)_{i}))-(G,\textsf{Sam}_{\text{Approx}}(G,i,f(G)_{i}))\| \\
    \leq \frac{1}{2n} + 20n\cdot \log(n)\cdot \delta_i .
\end{split}
\end{equation*}
\qed
    \end{proof}

The rest of the proof uses $\textsf{Sam}_{\text{Approx}}$ in place of $\textsf{Sam}_{\text{Ideal}}$ in $\textsf{Inv}_{\text{Ideal}}$ to invert $f$ as follows. For any $i\in [n]$, let $\textsf{Inv}^i_{\text{Approx}}$ be the algorithm which uses $\textsf{Sam}_{\text{Ideal}}$ in the first $i-1$ iterations and uses $\textsf{Sam}_{\text{Approx}}$ in the rest of the iterations. We want to show that $\textsf{Inv}^1_{\text{Approx}}$ succeeds with non-negligible probability. 
\begin{align*}
    &\Pr\left[\textsf{Inv}^1_{\text{Approx}}(f(G))\in f_\top^{-1}(f(G))\right]\\
    &\geq \Pr\left[\textsf{Inv}^{n+1}_{\text{Approx}}(f(G))\in f_\top^{-1}(f(G))\right] -\sum_{i\in [n]}\| \textsf{Inv}^{i+1}_\delta(f(G))-\textsf{Inv}^i_\delta(f(G))\| \\
    &= \Pr\left[\textsf{Inv}_{\text{Ideal}}(f(G))\in f_\top^{-1}(f(G))\right] -\sum_{i\in [n]}\| \textsf{Inv}^{i+1}_\delta(f(G))-\textsf{Inv}^i_\delta(f(G))\| \\
    &\geq 3(1-\mu)^n/4-\sum_{i\in [n]}\frac{1}{2n} + 20n\cdot \log(n)\cdot \delta_i\\
    &\geq 3(1-\mu)^n/4-\frac{1}{2}-20n^2\cdot \log(n)\cdot \delta.\\
\end{align*}
By our assumption $\delta= \frac{\sqrt{\epsilon}}{(1-\mu)}$, which with some elementary maths 
yields that $\Pr\left[\textsf{Inv}_\delta(f(G))\in f_\top^{-1}(f(G))\right]$ is non-negligible for large enough $n$ and $\mu<1/n^2$. 
\qed
\end{proof}

\subsubsection{\texorpdfstring{$\bot$}{bot} \textsf{UOWHF} from Inaccessible Entropy}

The next part of the $\botUOWHF$ construction involves amplifying the inaccessible Shannon entropy established in Theorem \ref{thm:inaccessible entropy} in order to build a function that compresses the input and achieves an accessible max-entropy of 0 following the same steps as in \cite{HHR+10}. The steps involve increasing the entropy gap between the input and output through repetition, reducing the entropy of the input by outputting a hash of the input, and reducing the output length by hashing the output. 

These steps can be applied in the same way to a $\botOWF$s to obtain a function that compresses the input and is $\bot$-target-collision-resistant. We note that in our case, our notion of accessible max-entropy differs from the corresponding notion in \cite{HHR+10} as ours is adapted to $\bot$-functions, however, this does not affect this part of the proof. Therefore, we do not include the proof of this part and instead, refer the reader to Sec.~4.2 in \cite{HHR+10}. 
\begin{theorem}
    \label{thm:uowhf}
    Let $f$ be a $(\mu,n,n)\text{-}\botOWF$\footnote{Note that a $\bot$-function is actually an algorithm which is $\bot$-deterministic.} with $\mu<1/n^{30}$. Then, there exists a $(\mu',k,m,\ell)\text{-}\botUOWHF$, with pseudodeterminism error $\mu'=O(n^{28}\mu)$, key length $k=\tilde{O}(n^{34})$, compression $m=n^2\log(n)$, and output length $\ell=\tilde{O}(n^{36})$.  
\end{theorem}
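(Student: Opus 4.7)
The plan is to follow the blueprint of HHR+10 (Section~4.2) with the Shannon-to-max entropy amplification, input hashing, and output compression pipeline, now carried out in the $\bot$-function setting. By Theorem~\ref{thm:inaccessible entropy}, the function $F(x,i)$ built from $f$ satisfies $H(F_\top^{-1}(F(Z))\mid F(Z)) - H(\textsf{A}(Z)\mid Z) \ge \Delta$ for $\Delta = 1/(2^9 n^4 \log^2 n)$ against every QPT $\top$-collision-finder $\textsf{A}$. The goal is to amplify this Shannon-entropy gap into a max-entropy gap and then flatten both sides so that the real max-entropy exceeds the accessible max-entropy by a multiplicative factor that allows compression.

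Step~1 (Entropy amplification via direct product). Define $F^t(z_1,\ldots,z_t) = (F(z_1),\ldots,F(z_t))$ for $t = \tilde{\Theta}(n^9)$ chosen so that $t\Delta \gg n$ up to polylog factors. By a Chernoff-style concentration argument (using Theorem~\ref{thm:chernoffbound}) applied over the $t$ independent coordinates, the real min-entropy of $(F^t)_\top^{-1}$ is at least $t\cdot H(F_\top^{-1}(F(Z))\mid F(Z)) - t\Delta/4$ and the accessible max-entropy is at most $t\cdot H(\textsf{A}(Z)\mid Z) + t\Delta/4$, each except with negligible probability. This gives a gap of at least $t\Delta/2 = \tilde{\Omega}(n^4 \log n)$ between real min-entropy and accessible max-entropy of $(F^t)_\top^{-1}$. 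The $\bot$-error of $F^t$ grows at most additively: by a union bound, a uniformly random tuple lies in the good set with probability at least $1 - t\mu$, which is $1 - O(n^{-21})$ under our hypothesis $\mu < n^{-30}$.

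Step~2 (Flattening the real entropy via input hashing). Apply a pairwise-independent hash $g \in \mathcal{G}$ to the input tuple, with $g$ mapping into $\{0,1\}^s$ for $s$ chosen slightly above the target accessible max-entropy. The new function is $\widehat F(z,g) = (F^t(z), g, g(z))$. A standard leftover-hash-lemma argument shows the real max-entropy of $\widehat F_\top^{-1}$ is bounded by $s + O(\log n)$ on all but a negligible fraction of inputs, while the accessible max-entropy is preserved up to the same negligible slack. Crucially, the $\top$-collision-finder definition automatically excludes collisions landing in $\bot$, so the accessible max-entropy bound from Step~1 transports unchanged through this step.

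Step~3 (Output compression and target collision resistance). Compress the output with another pairwise-independent hash $h$ mapping into $\{0,1\}^{m'}$, choosing $m'$ slightly above the accessible max-entropy but well below the real max-entropy, so $\widetilde F(z,g,h) = (h(F^t(z)), h, g, g(z))$ has input length larger than its output length by a factor of $\Omega(\log n)$, yielding the claimed compression $m = n^2 \log n$. The accessible max-entropy of $\widetilde F_\top^{-1}$ drops to~$0$, so by the lemma just before Section~6.3, $\widetilde F$ is $\top$-collision-resistant on random inputs. Finally, keying by a shift $y \in \{0,1\}^{|z|}$ and defining $H_{y,g,h}(z) = \widetilde F(z \oplus y, g, h)$ gives a $\botUOWHF$ family: any target collision-finder that picks $x$ first and then receives the key is reduced to a collision-finder on a uniformly random input by the standard Naor--Young shift argument.

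The main obstacle is Step~1: the $\bot$ events in the direct product need to be threaded carefully through both the real entropy lower bound and the accessible entropy upper bound. For the real side, we condition on the $\bot$-free event (which holds except with probability $t\mu$) and use that conditional distributions of the non-$\bot$ parts behave as independent copies. For the accessible side, the key observation is that any QPT $\top$-collision-finder for $F^t$ that successfully avoids $\bot$ on its output induces, by projecting to each coordinate, $\top$-collision-finders for $F$ (with correlated queries, but handled by a hybrid across coordinates as in HHR+10). Tracking constants, the compounded $\bot$-error is $O(t\mu) = O(n^{28} \mu)$, matching the claimed $\mu'$, and the key length $\tilde O(n^{34})$ and output length $\tilde O(n^{36})$ come from the sizes of $z$, the two hash keys $g, h$, and the output of $\widetilde F$, respectively.
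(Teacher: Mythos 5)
Your proposal takes essentially the same route as the paper, which likewise outlines the HHR+10 pipeline (direct-product repetition to amplify the Shannon-entropy gap from Theorem~\ref{thm:inaccessible entropy}, input hashing, output compression, and the Naor--Young shift to get target collision resistance) and then explicitly defers the technical details to Section~4.2 of~\cite{HHR+10}, adding only the remark that the $\bot$-adapted accessible max-entropy notion does not affect that part of the argument. The one internal discrepancy in your account is the parameter bookkeeping: you set $t=\tilde{\Theta}(n^9)$ in Step~1 but then conclude $\mu'=O(t\mu)=O(n^{28}\mu)$, which would require $t=O(n^{28})$; a single consistent choice of $t$ is what has to feed into the theorem's stated key length $\tilde{O}(n^{34})$, compression $n^2\log n$, and output length $\tilde{O}(n^{36})$.
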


\fi

\section{Digital Signatures}
\label{sec:Sig}
In this section, we construct digital signatures from $\SPRS$. This is done in several steps. In \cref{sec:sig def}, we adapt strong unforgeability for deterministic one-time signatures to the non-deterministic setting yielding what we call \emph{one-message signatures (OMS)}. 
In \cref{sec:sig 1}, we build length-restricted OMSs using our construction for $\botUOWHF$ (see \Cref{def:botUOWHF}). In \cref{sec:sig 1.5}, we enable signing long messages using $\botUOWHF$s. 
\ifnum\iacr=1
We build $\botUOWHF$s from $\botPRG$s in Supplement~\ref{sec:botUOWHF construction}. This requires multiple adaptions from the classical construction. First, we introduce the notion of $\botOWF$s and $\botUOWHF$. We further introduce notions of entropy that capture the size of collision sets under any $\bot$-function. Next, we build $\botOWF$ from a $\botPRG$. With these tools at hand, we are able to adapt the proof of \cite{HHR+10} (which shows how to construct \textsf{UOWHF}s from \textsf{OWF}s via inaccessible entropy) to our setting achieving $\botUOWHF$s from $\botPRG$s.  
\fi
Then, in \cref{sec:sig 2}, we enable signing many messages but where the signer needs to store previous signatures. As a result, we get a stateful signature scheme. Finally, in \cref{sec:sig 3}, we use the construction for $\botPRF$s to enable signing without storing previous signatures, resulting in a full-fledged stateless digital signature scheme.

\subsection{Definition (One-Message) Signatures}
\label{sec:sig def}
One-time SUF requires that an adversary which receives a signature of a message of its choice, cannot forge a new valid message-signature pair. However, in our case, the signing algorithm may be non-deterministic. This causes issues since the scheme may be insecure if the same message is signed multiple times, which occurs in our application of OMSs to many-time signatures. Hence, we need to guarantee SUF against an adversary that receives multiple signatures of a single message of its choice. We call this experiment one-message SUF.

\smallskip \noindent\fbox{%
    \parbox{\textwidth}{%
\textbf{Experiment} $\textsf{Sign}^{\text{OM-SUF}}_{\Pi, \adv}({\lambda})$:
\begin{enumerate}
    \item Sample $(\sk,\vk)\leftarrow \textsf{Gen}(1^\lambda)$.
    \item $\adv$ is given $\vk$ and classical access to the signing oracle $\textsf{Sign}(\sk, \cdot)$.
    \item Let $\mathcal{Q}$ denote the set of messages-signature pairs returned by the signing oracle.
    \item $\adv$ outputs a forgery $(m^*,\sigma^*).$
   \item The output of the experiment is $1$ if:
   \begin{enumerate}
       \item All of the queries were on the same message.
       \item $ (m^*,\sigma^*)\notin \mathcal{Q}$.
       \item $\textsf{Verify}(\vk, m^*, \sigma^*)=1$.
   \end{enumerate}
   \item Otherwise, the output is 0. 
\end{enumerate}}}
\smallskip

\begin{definition}
    A signature scheme $\Pi$ is \emph{one-message strongly unforgeable ({OM-SUF})}, if for any QPT $\adv$,
    \begin{align*} 
        \Pr [\textsf{Sign}^{\textit{OM-SUF}}_{\Pi, \adv}({\lambda})=1]\leq \negl[\lambda].
    \end{align*}
\end{definition}
\ifnum\iacr=1
    Next, we define the following tool that we use in our construction of OM-SUF.
    
    \begin{definition}[$\botUOWHF$]\label{def:botUOWHF}
       Let $n=n(\lambda)$ and $m=m(\lambda)$ be polynomials in the security parameter $\lambda\in\mathbb{N}$. A family of QPT algorithms\footnote{We emphasize that even though we call $\botUOWHF$s as $\bot$-functions they are actually algorithms. The term \emph{function} is used to emphasize their similarity to classical notions upto $\bot$-determinism.}
       $\mathcal{H}_\lambda \coloneqq \{{H}_\lambda(k,\cdot)\}_{k\in \{0,1\}^\ell} $ that on input in $ \{0,1\}^n$ output an element in $\{0,1\}^m \cup \{\bot\}$ 
        is a \emph{$(\mu,\ell, d=n/m,m)$-pseudodeterministic universal one-way hash family with recognizable abort ($\botUOWHF$)} if the following conditions hold: 
        \begin{itemize}
        \item \textbf{Compression:} $d(\lambda)>1$ for all $\lambda\in \mathbb{N}$. 
            \item \textbf{Pseudodeterminism:} There exist a constant $c>0$ such that $\mu(\lambda)= O(\lambda^{-c})$ and for sufficiently large $\lambda\in \mathbb{N}$ and $F\in \mathcal{H}_\lambda$, there exists a set $\mathcal{G} \subseteq \{0,1\}^n$ such that the following holds:
            \begin{enumerate}
                \item \[\Pr_{x\gets \{0,1\}^n}\left[x\in\mathcal{G} \right] \geq 1-\mu(\lambda).\] 
                \item For every $x\in \mathcal{G}$ there exists a non-$\bot$ value $y\in\{0,1\}^m$ such that: 
                \begin{align}
                    \Pr\left[F(x)=y \right] \geq 1 - \negl[\lambda].
                \end{align} 
              
                \item For every $x\in \{0,1\}^n$, there exists a non-$\bot$ value $y\in\{0,1\}^m$ such that: 
                \begin{align}
                    \Pr\left[F(x)\in \{y,\bot\} \right] \geq 1 - \negl[\lambda].
                \end{align} 
            \end{enumerate}
            \item \textbf{Security:} Two conditions need to be satisfied:
            \begin{enumerate}
                \item \textbf{One-wayness:} For any QPT algorithm $\adv$ and $F\in \mathcal{H}_\lambda$,
            \begin{align*}
                \Pr_{x\gets \{0,1\}^n}[F(\adv(F(x)))=F_\top(x)]\leq \negl[\lambda].
            \end{align*}
            
            \item \textbf{Collision-resistance:} For every QPT algorithm $\adv$, if $\adv$ outputs $x\in \{0,1\}^n$, then is given $F\gets \mathcal{H}_\lambda$, and then outputs $x'\neq x$, then
            \begin{align*}
                \Pr\left[F(x)=F_\top(x')\right]\leq \negl[\lambda].
            \end{align*}
            \end{enumerate}
        \end{itemize}
    \end{definition}
    
\fi
\subsection{(Length-Restricted) One-Message Signatures}

\label{sec:sig 1}
We now provide a construction for a signature scheme on length-restricted messages satisfying OM-SUF using $\botUOWHF$ (see \Cref{def:botUOWHF}). In this scheme, the message length is smaller than the length of the verification key, but we enable signing longer messages in the next section. In our scheme, there is a non-negligible probability that the verification of a valid signature fails, however, we will later achieve statistical correctness in the final signature scheme.

\begin{construct}
    \label{con:one-time sig}
    Let $\lambda\in \mathbb{N}$ be the security parameter. Let $q$ be polynomial in $\lambda$. Let $\mu= O(1/q^2)$. Let $\mathcal{H}_{\lambda}=\{H_{\lambda}(k,\cdot)\}_{k\in \{0,1\}^\ell}$ be a $(\mu,\ell,2,\lambda)$-$\botUOWHF$ family for some polynomial $\ell$ in $\lambda$. The construction for a OMS scheme on $q$-bit messages is as follows:
    \begin{itemize}
        \item $\textsf{KeyGen}(1^\lambda):$ 
        \begin{enumerate}
            \item For each $j\in [q]$, and $b\in \{0,1\}:$
            \begin{enumerate}
            \item Sample $k_{j,b}\gets \{0,1\}^\ell$.
                \item Sample $x_{j,b}\gets \{0,1\}^{2\lambda}$.
                \item Sample $y_{j,b}\gets H(k_{j,b},x_{j,b})$.
            \end{enumerate}
            \item Set $\sk\coloneqq (x_{j,b})_{j\in [q],b\in \{0,1\}}$ and $\vk\coloneqq (k_{j,b}, y_{j,b})_{j\in [q],b\in \{0,1\}}$.
            \item Output $(\sk,\vk)$.
        \end{enumerate}   

        \item $\textsf{Sign}(\sk,m):$ 
        Output $\sigma \coloneqq (x_{j,m_j})_{j\in [q]}$. 
        
        \item $\textsf{Verify}(\vk,m,\sigma):$ 
        \begin{enumerate}
        \item Interpret $\sigma$ as $(\tilde{x}_{j})_{j\in [q]}$.
        \item For each $j\in [q]$, compute $\tilde{y}_{j}\gets H(k_{j,m_j},\tilde{x}_{j})$. 
            \item If $\tilde{y}_{j}=y_{j,m_j}\neq \bot$ for all $j\in [q]$, then output $\top$.
            \item Otherwise, output $\perp $.
        \end{enumerate} 
    \end{itemize}
\end{construct}

\begin{lemma}[Correctness]\label{lem:ots corr}
    \cref{con:one-time sig} is $(1-\mu)^{4q}$-correct OMS scheme on $q$-bit messages assuming the existence of $\botUOWHF$s. 
\end{lemma}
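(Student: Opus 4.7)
The plan is to lower bound the probability that $\textsf{Verify}(\vk, m, \textsf{Sign}(\sk, m)) = \top$ for $(\sk, \vk) \leftarrow \textsf{KeyGen}(1^\lambda)$ and an arbitrary fixed message $m \in \{0,1\}^q$. Only the pseudodeterminism of the $\botUOWHF$ is used; neither one-wayness nor collision-resistance is needed for correctness.

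First, I would decompose the verification event as the conjunction over $j \in [q]$ of the per-index event $E_j := \{\tilde{y}_j = y_{j,m_j} \neq \bot\}$, where $y_{j,m_j} \gets H(k_{j,m_j}, x_{j,m_j})$ is the value produced in $\textsf{KeyGen}$ and $\tilde{y}_j \gets H(k_{j,m_j}, x_{j,m_j})$ is the value produced fresh in $\textsf{Verify}$. The events $\{E_j\}_{j \in [q]}$ are mutually independent because they depend on independently sampled keys $k_{j,m_j}$, independently sampled inputs $x_{j,m_j}$, and independent runs of $H$. Thus it suffices to bound $\Pr[E_j]$ uniformly in $j$ and take the product.

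Next, I would analyze $\Pr[E_j]$ using the pseudodeterminism properties of the $\botUOWHF$. For the fixed (random) function $H(k_{j,m_j}, \cdot)$, there is a good set $\mathcal{G}_{k_{j,m_j}} \subseteq \{0,1\}^{2\lambda}$ with $\Pr_{x}[x \in \mathcal{G}_{k_{j,m_j}}] \geq 1 - \mu$, and for any $x$ in this good set there is a fixed non-$\bot$ value $y^*$ such that each individual evaluation of $H(k_{j,m_j}, x)$ equals $y^*$ with probability at least $1 - \negl[\lambda]$. Since $x_{j,m_j}$ is drawn uniformly, by a union bound over the two independent evaluations, $\Pr[E_j] \geq (1-\mu)(1-\negl[\lambda])^2$.

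Finally, I would assemble the bound. Independence across $j$ gives $\Pr[\textsf{Verify} = \top] \geq \bigl((1-\mu)(1-\negl[\lambda])^2\bigr)^q$. To convert this to the stated $(1-\mu)^{4q}$ lower bound, I would use that for all sufficiently large $\lambda$ we have $\negl[\lambda] \leq \mu$, hence $(1-\negl[\lambda])^{2q} \geq (1-\mu)^{2q} \geq (1-\mu)^{3q}$ (the second inequality because $1 - \mu \in [0,1]$), and therefore $\bigl((1-\mu)(1-\negl[\lambda])^2\bigr)^q \geq (1-\mu)^{4q}$. The argument is essentially a direct computation; the only mild subtlety is packaging the negligible pseudodeterminism slack into the clean $(1-\mu)^{4q}$ bound, which is the main (very minor) obstacle.
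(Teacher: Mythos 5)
Your proposal is correct and follows essentially the same route as the paper: both proofs bound the success probability by multiplying the good-set probability from the pseudodeterminism condition by the per-evaluation agreement probability, then absorb the negligible slack into $(1-\mu)$ for large $\lambda$. The only (minor) difference is that you condition only on the $q$ relevant inputs $x_{j,m_j}$ being good and decompose by index, whereas the paper conditions on all $2q$ inputs $x_{j,b}$; this gives you a slightly tighter intermediate bound of $(1-\mu)^{3q}$ before relaxing to $(1-\mu)^{4q}$, but the structure of the argument is the same.
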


\begin{proof}
Fix $m\in \{0,1\}^q$ and sample $(\sk,\vk)\gets \textsf{KeyGen}(1^\lambda)$, where $\sk\coloneqq (x_{j,b})_{j\in [q],b\in \{0,1\}}$ and $\vk\coloneqq (k_{j,b}, y_{j,b})_{j\in [q],b\in \{0,1\}}$. Let $\mathcal{G}_{j,b}$ denote the good set of inputs for $H(k_{j,b},\cdot)$. 

By the pseudodeterminism of $\botUOWHF$s, the probability that $x_{j,b}\in \mathcal{G}_{j,b}$ for all $j\in [q]$ and $b\in \{0,1\}$ is at least $p\coloneqq (1-\mu)^{2q}$. In this case, there is a negligible function $\negl[\lambda]$ such that for any $j\in [q]$ and $b\in \{0,1\}$, there exists values $(\hat{y}_{j,b})_{j,b}$ such that $\Pr[H_{j,b}(k_{j,b},x_{j,b})=\hat{y}_{j,b}]\geq 1-\negl[\lambda]$. In which case, the verification algorithm will accept with probability at least $(1-\negl[\lambda])^{2q}$. 
Hence, the scheme is correct with probability at least $p(1- \negl[\lambda])^{2q}\geq (1-\mu)^{4q}$.
\qed
\end{proof}

\begin{lemma}[Security]
\label{lem:ots}
\cref{con:one-time sig} is OM-SUF signature scheme on $q$-bit messages assuming the existence of $\botUOWHF$s. 
\end{lemma}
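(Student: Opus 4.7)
The plan is to show that any forgery by a QPT adversary $\mathcal{A}$ in the OM-SUF game either yields a non-trivial collision or a preimage for the underlying $\botUOWHF$ family, so unforgeability follows by a union bound. First observe that since the values $x_{j,b}$ are fixed inside $\sk$, the signing algorithm is deterministic, so every query of $\mathcal{A}$ on the single queried message $m$ returns the same signature $\sigma=(x_{j,m_j})_{j\in[q]}$; hence $\mathcal{Q}$ contains exactly one pair, and a winning forgery $(m^*,\sigma^*)$ falls into one of two disjoint cases: (a) $m^*=m$ and $\sigma^*\neq\sigma$, or (b) $m^*\neq m$. In case (a) there is an index $j^*$ with $\sigma^*_{j^*}\neq x_{j^*,m_{j^*}}$, and the verification equation forces both values to hash under $H(k_{j^*,m_{j^*}},\cdot)$ to the common non-$\bot$ image $y_{j^*,m_{j^*}}$, producing a non-$\bot$ collision. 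In case (b), any index $j^*$ with $m^*_{j^*}\neq m_{j^*}$ gives $\sigma^*_{j^*}$ as a preimage of the non-$\bot$ value $y_{j^*,m^*_{j^*}}$, and crucially $\mathcal{A}$ never received $x_{j^*,m^*_{j^*}}$.

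To rule out case (a) I would build a collision-resistance adversary $\mathcal{B}_{\mathrm{col}}$ that first guesses $(j^*,b^*)\in[q]\times\{0,1\}$ uniformly, samples $x^*\gets\{0,1\}^{2\lambda}$, commits it to the $\botUOWHF$ challenger as its collision target, and receives a uniformly random hash key $k^*$. It plants $(x_{j^*,b^*},k_{j^*,b^*})=(x^*,k^*)$, completes all remaining $(x_{j,b},k_{j,b})$ honestly, computes every $y_{j,b}$ as in KeyGen, and hands $\vk$ to $\mathcal{A}$. Because $k^*$ and $x^*$ are both uniform, the induced distribution of $\vk$ is identical to a real one, and since $\mathcal{B}_{\mathrm{col}}$ knows every $x_{j,b}$ it answers signing queries perfectly. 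Whenever $\mathcal{A}$ produces a case-(a) forgery at a position $j^*$ and bit $b^*=m_{j^*}$ consistent with the guess, $\mathcal{B}_{\mathrm{col}}$ outputs $\sigma^*_{j^*}$ as the colliding input, winning with probability at least $\tfrac{1}{2q}\Pr[\text{(a)}]$.

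To rule out case (b) I would build a one-wayness adversary $\mathcal{B}_{\mathrm{ow}}$ that, given a challenge $(k^*,y^*)$ with $y^*$ drawn from $H(k^*,\cdot)$ on a uniform unknown input, guesses $(j^*,b^*)$ and plants $(k_{j^*,b^*},y_{j^*,b^*})=(k^*,y^*)$, sampling every other $(x_{j,b},k_{j,b})$ honestly and computing the remaining $y_{j,b}$ directly. On a signing query $m$ it aborts if $m_{j^*}=b^*$ and otherwise returns the honest signature, which does not require the unknown $x_{j^*,b^*}$. Conditioned on the guess satisfying $b^*=m^*_{j^*}\neq m_{j^*}$, the view of $\mathcal{A}$ is distributed exactly as in the real experiment, and $\sigma^*_{j^*}$ is a valid preimage of $y^*$; since any case-(b) forgery admits at least one such valid $(j^*,b^*)$ out of $2q$ possibilities, this event occurs with probability at least $\tfrac{1}{2q}\Pr[\text{(b)}]$. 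Summing the two reductions gives a total forging probability of at most $2q\cdot\negl[\lambda]$, which is still negligible.

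The main subtlety I expect will be verifying the perfect-simulation claim in the one-wayness reduction: one must argue that $x_{j^*,b^*}$ remains information-theoretically hidden from $\mathcal{A}$ despite $\mathcal{A}$ receiving signatures on $m$. This is precisely ensured by the abort rule, since the only signature that could reveal any information about $x_{j^*,b^*}$ is one on a message with $m_{j^*}=b^*$, and on such queries the reduction aborts. A secondary point to check is that the pseudodeterminism slack of $H$ is harmless: the $\bot$-correctness of $\botUOWHF$ (Definition~\ref{def:botUOWHF}) guarantees that for fixed $\sk,\vk$ and for each $\sigma^*_j$, the computed $\tilde y_j$ equals the canonical non-$\bot$ image with all but negligible probability, so the extracted collision or preimage is valid with the same probability as verification succeeds.
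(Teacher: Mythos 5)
Your proof is correct and follows essentially the same route the paper takes, which is to note that determinism of $\textsf{Sign}$ collapses repeated queries to a single one and then invoke the classical Lamport-with-\textsf{UOWHF} argument (the paper cites Goldreich, Sec.~6.5.2, without spelling it out): a same-message/different-signature forgery yields a non-$\bot$ collision, a different-message forgery yields a preimage of a planted challenge, and guessing the offending index $(j^*,b^*)$ costs a factor of $2q$. Your handling of the pseudodeterminism slack is also sound — since the forgery $\sigma^*$ is produced before the challenger's fresh evaluation of $H$, both the verification-acceptance probability and the reduction's success probability are governed by the same non-$\bot$ evaluation probability of $H(k^*,\sigma^*_{j^*})$, so the reduction preserves the advantage up to negligible terms.
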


The proof is the same as in the classical case given in Section 6.5.2 in \cite{Gol04}. Notice that in this scheme, \ifnum\iacr=0 getting \fi access to multiple signatures of the same message does not affect the experiment since the signing algorithm is deterministic. 

\subsection{One-Message Signatures}
\label{sec:sig 1.5}

We now show how to transform a \emph{length-restricted} OM-SUF signature scheme into a OM-SUF signature scheme based on the approach in the classical case given in Section 6.5.2 of \cite{Gol04}. This is important since we will need to sign messages that are double the length of the verification key in order to build many-message signatures in the following sections. 

\begin{construct}
\label{con:one-time sig 2}
    Let $\lambda\in \mathbb{N}$ be the security parameter. Let $q$ be polynomial in $\lambda$ and $\mu= O(1/q^2)$. Let $\mathcal{H}_{\lambda}=\{H_{\lambda}(k,\cdot)\}_{k\in \{0,1\}^\ell}$ be a $(\mu,\ell,q/\lambda,\lambda)$-$\botUOWHF$ family for some polynomial $\ell$ in $\lambda$ such that $\ell+\lambda < q/2$. Let $(\overline{\textsf{KeyGen}}, \overline{\textsf{Sign}},\overline{\textsf{Verify}})$ be the algorithms of Construction \ref{con:one-time sig} on $(\ell +\lambda)$-bit messages. The construction for a OMS scheme on $q$-bit messages is as follows:
    \begin{itemize}
        \item $\textsf{KeyGen}(1^\lambda):$ 
        \begin{enumerate}
            \item Sample $(\overline{\sk},\overline{\vk})\gets \overline{\textsf{KeyGen}}(1^\lambda)$.
            \item Sample $k\gets \{0,1\}^\ell$. 
            \item Set $\sk\coloneqq (\overline{\sk},k)$ and $\vk\coloneqq \overline{\vk}$.
            \item Output $(\sk,\vk)$. 
        \end{enumerate}

        \item $\textsf{Sign}(\sk,m):$ 
        \begin{enumerate}
            \item Compute $y\gets H(k,m)$. If $y=\bot$, output $\bot$.
            \item Otherwise, compute $ s \gets \overline{\textsf{Sign}}(\overline{\sk}, k\|y)$.
            \item Output $\sigma \coloneqq (k, s)$.
        \end{enumerate}

        \item $\textsf{Verify}(\vk,m,\sigma):$ 
        \begin{enumerate}
        \item If $\sigma=\bot$, output $\bot$.
        \item Otherwise, interpret $\sigma$ as $(k, s)$.
        \item Compute $y\gets H(k,m)$. If $y=\bot$, output $\bot$.
        \item Otherwise, output $\overline{\textsf{Verify}}(\vk, k\|y, s)$.
        \end{enumerate} 
    \end{itemize}
\end{construct}

\begin{lemma}[Correctness]
\label{lem:corr ots}
    \cref{con:one-time sig 2} is $(1-\mu)^{2q+3}$-correct OMS scheme assuming the existence of $\botUOWHF$s. 
\end{lemma}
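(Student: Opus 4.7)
Fix an arbitrary message $m \in \{0,1\}^q$ and consider sampling $(\sk, \vk) \gets \textsf{KeyGen}(1^\lambda)$ (where $\sk = (\overline{\sk}, k)$), computing $\sigma \gets \textsf{Sign}(\sk, m)$, and running $\textsf{Verify}(\vk, m, \sigma)$. I would decompose the event ``$\textsf{Verify}$ accepts'' into three sub-events: (E1) the hash $y \gets H(k, m)$ computed during signing equals the canonical non-$\bot$ value $y^\star$ guaranteed by the third clause of pseudodeterminism in \Cref{def:botUOWHF}; (E2) the inner one-time signature $s \gets \overline{\textsf{Sign}}(\overline{\sk}, k \| y^\star)$ verifies under $\overline{\textsf{Verify}}$; and (E3) the recomputation $y' \gets H(k, m)$ during verification again returns $y^\star$. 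Verification succeeds whenever all three hold.

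\textbf{Bounds for the sub-events.} For (E2), I would invoke \Cref{lem:ots corr} on the inner length-restricted scheme operating on $(\ell+\lambda)$-bit inputs, yielding probability at least $(1-\mu)^{4(\ell+\lambda)}$; the hypothesis $\ell+\lambda < q/2$ then gives $(1-\mu)^{4(\ell+\lambda)} \geq (1-\mu)^{2q}$. For (E1) and (E3), I would first argue that over the uniform choice of $k$, the fixed message $m$ lands in the good set $\mathcal{G}_k$ of $H(k, \cdot)$ with probability at least $1 - \mu$; conditional on this event, pseudodeterminism ensures that each of the two independent evaluations of $H(k, m)$ returns $y^\star$ except with negligible probability. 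Hence (E1) and (E3) jointly hold with probability at least $(1-\mu)(1-\negl[\lambda])$. Multiplying the three contributions, the correctness is at least $(1-\mu)^{2q} \cdot (1-\mu)(1-\negl[\lambda]) \geq (1-\mu)^{2q+3}$ for sufficiently large $\lambda$, since $\mu = \Omega(\lambda^{-c})$ dominates any negligible slack and a constant power of $(1-\mu)$ absorbs the $(1-\negl[\lambda])$ factor.

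\textbf{Main obstacle.} The delicate step is the intermediate claim that $\Pr_k[m \in \mathcal{G}_k] \geq 1-\mu$ for a worst-case $m$: \Cref{def:botUOWHF} states the pseudodeterminism density bound with the quantifiers in the reverse order (for each fixed $F$, most inputs are good), whereas here we fix the input and randomize the key. I would address this by a double-counting argument on the joint bad set $\{(k,x) : x \notin \mathcal{G}_k\}$, combined with the structural symmetry of the $\botUOWHF$ family produced by the construction of \Cref{sec:botUOWHF construction}, whose outputs on a fixed input over random keys inherit the same kind of pseudodeterministic behaviour as on random inputs over a fixed key (via the underlying $\botPRG$). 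Alternatively, one may simply adopt the mildly strengthened pseudodeterminism where $\Pr_k[x \in \mathcal{G}_k] \geq 1-\mu$ holds for every $x$, which our \Cref{sec:botUOWHF construction} construction satisfies. Once this point is settled, everything else is a routine combination of \Cref{lem:ots corr} with the pointwise pseudodeterminism of $H$.
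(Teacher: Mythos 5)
Your decomposition and bookkeeping match the paper's own proof: condition on the fixed $m$ landing in the good set of $H(k,\cdot)$ (probability at least $1-\mu$ over the key), on both the sign-time and verify-time evaluations of $H(k,m)$ returning the same canonical value (probability at least $(1-\negl[\lambda])^2$), and then invoke \Cref{lem:ots corr} for the inner scheme on the $(\ell+\lambda)$-bit message $k\|\hat y$ (probability at least $(1-\mu)^{4(\ell+\lambda)}\geq(1-\mu)^{2q}$ using $\ell+\lambda<q/2$); multiplying gives $(1-\mu)^{2q+3}$ for large $\lambda$ since $\negl[\lambda]\ll\mu(\lambda)$. So there is no difference in route.

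Where you add value is the ``main obstacle'' you flag: it is a genuine issue that the paper's proof silently steps over. The pseudodeterminism clause of \Cref{def:botUOWHF} bounds $\Pr_{x}[x\in\mathcal G]$ for each fixed hash function, whereas the correctness argument fixes $m$ and needs $\Pr_{k}[m\in\mathcal G_k]\geq 1-\mu$; averaging over the joint bad set alone does not give a worst-case-input bound (some fixed $m$ could in principle be bad for almost every key). Be careful that your proposed ``double-counting'' is therefore not sufficient by itself; what actually closes the gap is exactly the structural symmetry you then invoke. The $\botUOWHF$ family produced in \Cref{sec:botUOWHF construction} is keyed by an additive shift, $\tilde F_y(x)\coloneqq F(x+y)$, so for a fixed $x$ the distribution over a random key $y$ coincides with the distribution of $F$ over a random input, giving $\Pr_k[m\in\mathcal G_k]\geq 1-\mu$ for every $m$. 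Your alternative fix of strengthening the pseudodeterminism requirement in \Cref{def:botUOWHF} to hold pointwise in $x$ over a random $F\gets\mathcal H_\lambda$ is the cleanest way to state what is actually used, and the construction trivially satisfies it.
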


\begin{proof}
Fix $m\in \{0,1\}^q$ and sample $(\sk,\vk)\gets \textsf{KeyGen}(1^\lambda)$ as described in the construction. Let $\mathcal{G}$ denote the good set of inputs for $H(k,\cdot)$. The probability that $m\in \mathcal{G}$ is at least $1-\mu$. 

If $m\in \mathcal{G}$, there is a negligible function $\negl[\lambda]$ and value $\hat{y}$ such that $\Pr[H(k,m)=\hat{y}]\geq 1-\negl[\lambda]$. In this case, it is clear that the verification algorithm accepts with high probability if and only if $\overline{\textsf{Verify}}(\vk, k\|\hat{y}, \overline{\textsf{Sign}}(\sk, k\|\hat{y}))$ accepts. By Lemma \ref{lem:ots corr}, this occurs with at least $(1-\mu)^{4(\lambda+\ell)}$ probability. 

All in all, the scheme is correct with probability at least $(1-\mu)^{4(\lambda+\ell)}(1-\mu)(1- \negl[\lambda])^{2}\geq (1-\mu)^{2q+3}$.
\qed
\end{proof}

\begin{lemma}[Security]
\label{lem:ots 2}
\cref{con:one-time sig 2} is OM-SUF signature scheme assuming the existence of $\botUOWHF$s. 
\end{lemma}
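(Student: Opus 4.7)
The plan is to follow the classical hybrid/case-analysis proof of the Naor--Yung-style hash-then-sign transformation (cf.~Section 6.5.2 of~\cite{Gol04}), with modifications that account for the pseudodeterminism of the underlying $\botUOWHF$ and for the fact that OM-SUF permits many repeated signing queries on the same message. I would fix a QPT adversary $\adv$ playing the OM-SUF game against \cref{con:one-time sig 2}. Its queries are all on a single message $m$, and it receives signatures $(k, s_1), \ldots, (k, s_t)$ with $k$ fixed by $\textsf{KeyGen}$; eventually it outputs a forgery $(m^*, (k^*, s^*))$ with $H(k^*, m^*) = y^* \neq \bot$ and $\overline{\textsf{Verify}}(\overline{\vk}, k^* \| y^*, s^*) = 1$.

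The first step is to condition on a ``good pseudodeterminism'' event $\mathcal{E}$, namely that every signing-time evaluation of $H(k, m)$ and the verification-time evaluation of $H(k^*, m^*)$ each return the same fixed non-$\bot$ value $y$ (resp. $y^*$). By the pseudodeterminism of $\botUOWHF$ (applied to the fixed $m$ and the random $k$, using that $\vk$ is independent of $k$ together with an averaging argument over $\adv$'s distribution on $m$), the event $\overline{\mathcal{E}}$ happens with at most $\mu + \negl$ probability, which we will later absorb. Conditioned on $\mathcal{E}$, the underlying length-restricted OMS is invoked only on the single input $k \| y$.

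Next, I would split the winning event into two cases. Case A: $(k^* \| y^*, s^*) \neq (k \| y, s_i)$ for every $i$. Then $s^*$ is a valid \emph{fresh} signature under the underlying scheme on an input that is either new (if $k^* \| y^* \neq k \| y$) or on the originally-signed input with a new signature (if $k^* \| y^* = k \| y$ but $s^* \notin \{s_i\}$). Either sub-case yields a OM-SUF breaker for \cref{con:one-time sig} via a straightforward simulation: the reduction runs $\overline{\textsf{Gen}}$, samples $k$, forwards $\overline{\vk}$ as $\vk$, answers each signing query by forwarding the single message $k \| y$ to its own oracle and appending $k$, and finally outputs $(k^* \| y^*, s^*)$ as its forgery. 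By \cref{lem:ots}, this happens with negligible probability. Case B: $(k^* \| y^*, s^*) = (k \| y, s_i)$ for some $i$. Since the forgery must be new as a pair, $m^* \neq m$, yet $H(k, m^*) = y = H(k, m)$ is a non-$\bot$ collision. Here I would reduce to the collision-resistance property of $\botUOWHF$: the reduction picks $(\overline{\sk}, \overline{\vk}) \gets \overline{\textsf{KeyGen}}$, sends $\overline{\vk}$ to $\adv$, extracts $m$ from $\adv$'s first query and commits to $x = m$ at the $\botUOWHF$ challenger, receives $F = H(k,\cdot)$, computes $y = F(m)$, uses $\overline{\sk}$ to answer all signing queries as $(k, \overline{\textsf{Sign}}(\overline{\sk}, k \| y))$, and on the forgery outputs $x' = m^*$, which satisfies $F(m^*) = y = F_\top(m)$.

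The main obstacle is the pseudodeterminism bookkeeping: the underlying scheme's OM-SUF assumes that all of its signing-oracle queries are on a single message, so I must argue that every call to $H(k, m)$ during signing returns the same $y$ with overwhelming probability, even when $m$ is chosen adversarially. The independence of $k$ from $\vk$ is what makes this tractable, but care is needed because the $\botUOWHF$ pseudodeterminism is stated over random inputs rather than pointwise. A secondary subtlety is verifying that both sub-cases of Case A genuinely constitute OM-SUF forgeries on the underlying scheme (including the sub-case $k^* \| y^* = k \| y$ with $s^* \notin \{s_i\}$, which uses the strong-unforgeability aspect of \cref{lem:ots} rather than mere unforgeability). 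Combining the two case probabilities with the bound on $\Pr[\overline{\mathcal{E}}]$ gives an overall negligible winning probability, completing the proof.
\qed
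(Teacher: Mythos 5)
Your high-level case decomposition is the same as the paper's: split the winning event according to whether the projected underlying forgery $(k^*\|y^*, s^*)$ is fresh relative to $(k\|y, s)$, and reduce Case A to OM-SUF of \cref{con:one-time sig} and Case B to the collision-resistance of the $\botUOWHF$. That part is correct, and you also correctly flag that the first sub-case of A uses \emph{strong} unforgeability.

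The genuine gap is in the conditioning step, and it is a real one, not a bookkeeping detail. Your event $\mathcal{E}$ demands that every signing-time evaluation of $H(k,m)$ return a fixed \emph{non-}$\bot$ value $y$. Two problems. First, the claimed bound $\Pr[\overline{\mathcal{E}}]\le \mu+\negl$ does not follow: the $1-\mu$ guarantee in the pseudodeterminism definition is a statement about a \emph{random} input landing in the good set $\mathcal{G}$, and the averaging argument you invoke does not apply because $m$ is adversarially chosen (after seeing $\vk$) and nothing in \cref{def:botUOWHF} rules out a fixed $m$ lying outside $\mathcal{G}$ for essentially every key $k$, in which case $H(k,m)=\bot$ with probability close to $1$). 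Second, even granting the bound, $\mu$ is an inverse polynomial, not negligible, so it cannot be ``absorbed'' into a negligible final bound; $\Pr[\text{win}]\le \Pr[\text{win}\mid\mathcal{E}]+\Pr[\overline{\mathcal{E}}]$ would leave a $\mu$ term.

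The fix, which is what the paper does, is to drop the non-$\bot$ requirement from the good event and use only the \emph{pointwise} $\bot$-determinism (Condition~3 of \cref{def:botUOWHF}): for the fixed adversarial $m$ and for every $k$ there is a fixed non-$\bot$ $y$ with $\Pr[H(k,m)\in\{y,\bot\}]\ge 1-\negl$, so by a union bound over the $t$ queries, \emph{except with negligible probability} the signing oracle only ever returns $\bot$ or $(k,s)$ with $s=\overline{\textsf{Sign}}(\overline{\sk},k\|y)$. The $\bot$ replies are benign — the reduction in Case A simply forwards them without querying its own one-message oracle, so the single-message constraint on the underlying scheme is still respected — and the case analysis proceeds exactly as you sketch, now with a genuinely negligible error term. (A small slip in Case A as written: the reduction should \emph{not} run $\overline{\textsf{Gen}}$ itself; it receives $\overline{\vk}$ from its own OM-SUF challenger and answers via its oracle, as you then go on to describe.)
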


\begin{proof}
Assume there exists a QPT algorithm $\adv$ such that $\Pr [\textsf{Sign}^{\textit{OM-SUF}}_{\Pi, \adv}({\lambda})=1]$ is non-negligible. The experiment samples a key pair $(\sk,\vk)\gets \textsf{KeyGen}(1^\lambda)$. Let $m$ be the message queried by $\adv$. We assume that $\adv$ queries the same message as, otherwise, the experiment outcome is 0. 
There exists a negligible function $\negl[\lambda]$ and value $y$ such that $\Pr[H(k,m)\in \{\bot, y\}]\geq 1-\negl[\lambda]$. Let $ s = \overline{\textsf{Sign}}(\overline{\sk}, k\|y)$. With high probability, the signing oracle will respond to all queries submitted by $\adv$ with either $\bot$ or $(k,s)$. 

Let $(\tilde{m}, (\tilde{k},\tilde{s}))$ be the forgery submitted at the end of the experiment and let $\tilde{y}\gets H(\tilde{k},\tilde{m})$ be the evaluation computed by the experiment during verification. 
If the output of the experiment is 1, then there are 2 cases to consider: 
\begin{itemize}
\item $(\tilde{k},\tilde{y},\tilde{s})\neq (k,y,s):$ In this case, $\adv$ has produced a new valid message-signature pair under the the OMS scheme, contradicting SUF (Lemma \ref{lem:ots}). 
    \item $(\tilde{k},\tilde{y},\tilde{s})= (k,y,s)$ and $m\neq \tilde{m}:$ In this case, the pair $(m,\tilde{m})$ act as a collision on $H(k,\cdot)$. Given that $\adv$ must choose $m$ prior to knowing $k$, this implies that $\adv$ breaks the collision-resistance property of $H$.    
\end{itemize}
Hence, both cases contradict security assumptions, yielding a contradiction. 
\qed
\end{proof}


\subsection{Definition (Many-Message) Signatures}

We recall the definition of a (many-message) signature scheme. 

\begin{definition}[Digital Signatures]
A \emph{digital signature (DS)} scheme over classical message space $\hildd{M}$ consists of the following QPT algorithms: 
\begin{itemize}
    \item $\textsf{Gen}(1^\lambda)$: Outputs a secret key $\sk$ and a verification key $\vk$.
    \item $\textsf{Sign}(\sk,\mu):$ Outputs a signature ${\sigma}$ for $\mu \in \hildd{M}$ using $\sk$. 
    \item $\textsf{Verify}(\vk,\mu', {\sigma'})$: Verifies whether ${\sigma'}$ is a valid signature for $\mu' \in \hildd{M}$ using $\vk$ and correspondingly outputs $\top/\perp$.
\end{itemize}
\end{definition}

\begin{definition}[Correctness]
A digital signature scheme is $p$-\emph{correct} if for any message $\mu\in \hildd{M}$:
\begin{align*} \Pr{\left[
\begin{tabular}{c|c}
 \multirow{2}{*}{$\textsf{Verify}(\vk,\mu, {\sigma})=\top\ $} &   $(\sk,\vk)\ \leftarrow \textsf{Gen}(1^\lambda)$ \\ 
&  ${\sigma}\ \leftarrow \textsf{Sign}(\sk,\mu)$\\
 \end{tabular}\right]} \geq p .
\end{align*}
If $p\geq 1-\negl[\lambda]$, then we simply say that the scheme is \emph{statistically correct}. 
\end{definition}

We recall the security experiment testing \emph{strong unforgeability} (SUF) of a digital signature scheme. 

\smallskip \noindent\fbox{%
    \parbox{\textwidth}{%
\textbf{Experiment} $\textsf{Sign}^{\text{SUF}}_{\Pi, \adv}({\lambda})$:
\begin{enumerate}
    \item Sample $(\sk,\vk)\leftarrow \textsf{Gen}(1^\lambda)$.
    \item $\adv$ is given $\vk$ and classical access to the signing oracle $\textsf{Sign}(\sk, \cdot)$.
    \item Let $\mathcal{Q}$ denote the set of messages-signature pairs returned by the signing oracle.
    \item $\adv$ outputs a forgery $(m^*,\sigma^*).$
   \item The output of the experiment is $1$ if
   \[
       (m^*,\sigma^*)\notin \mathcal{Q} \text{ and } \textsf{Verify}(\vk, m^*, \sigma^*)=1 .
   \]
   \item Otherwise, the output is 0. 
\end{enumerate}}}
\smallskip

\begin{definition}
    A signature scheme $\Pi$ is \emph{strongly unforgeable ({SUF})}, if for any QPT $\adv$,
    \begin{align*} 
        \Pr [\textsf{Sign}^{\textit{SUF}}_{\Pi, \adv}({\lambda})=1]\leq \negl[\lambda].
    \end{align*}
\end{definition}

\subsection{Stateful (Many-Message) Signatures}
\label{sec:sig 2}

We upgrade a OMS scheme to a many-message scheme following the approach given in \cite{NY89} with minimal modifications. The signer needs to store previous signatures so this approach is called \emph{stateful}. 

The idea is to construct a tree where each node contains a pair of keys sampled from a one-time DS scheme. In any non-leaf node, the secret key is used to sign the verification keys of the children nodes. Meanwhile, the leaf keys are used to sign messages. We let $\alpha$ signify the label of the root node in the authentication tree. More generally, we label the rest of the nodes in a natural way: for any node labeled $\alpha v$, we label its left child and right child by $\alpha v0$ and $\alpha v1$, respectively. 

\begin{construct}
\label{con:sig}
Let $n\coloneqq n(\lambda)$ be even and polynomial in the security parameter $\lambda\in \mathbb{N}$. Let $({\textsf{OT.Gen}},\textsf{OT.Sign},\textsf{OT.Ver})$ be the algorithms of the OTS scheme given in Construction \ref{con:one-time sig 2} on messages of length at least twice the length of the verification key \footnote{This can be achieved using a $\botUOWHF$ with sufficient compression in Construction \ref{con:one-time sig 2}.}. Let $\textbf{M}$ denote the signer's memory, which evolves as more messages are signed. The construction for a many-message (stateful) signature scheme on $n$-bit messages is as follows:
\begin{itemize}
    \item $\textsf{KeyGen}(1^\lambda):$ Sample $(\textsf{s}_\alpha,\textsf{v}_\alpha)\gets \textsf{OT.Gen}(1^\lambda)$
    and set $\textbf{M}=\emptyset$. 
    
    \item $\textsf{Sign}(\sk,\textbf{M},m):$ 
    At any point in the procedure, if any computation returns $\bot$, then abort and output $\bot$. 
    \begin{enumerate}
            \item For each $i\in [n]$: 
        \begin{enumerate}
        \item For $\tau =0,1$, try to retrieve the key pair associated to the node labeled $\alpha m_1m_2\ldots m_{i-1}\tau$. If the node is not found, then generate a key pair:
                \begin{align*}
            (\textsf{s}_{\alpha m_1m_2\ldots m_{i-1}\tau},\textsf{v}_{\alpha m_1m_2\ldots m_{i-1}\tau})\gets \textsf{OT.Gen}(1^\lambda).
        \end{align*}
        and store it in node $\alpha m_1m_2\ldots m_{i-1}\tau$ in $\textbf{M}$. 
        
        \item Try to retrieve a signature in node $ \alpha m_1m_2\ldots m_{i-1}$. If no signature is found then generate one:
        \begin{align*}
            \sigma_{\alpha m_1m_2\ldots m_{i-1}}\gets \textsf{OT.Sign}(\textsf{s}_{\alpha m_1m_2\ldots m_{i-1}}, (\textsf{v}_{\alpha m_1m_2\ldots m_{i-1}0},\textsf{v}_{\alpha m_1m_2\ldots m_{i-1}1})).
        \end{align*} 
        and store it in node $\alpha m_1m_2\ldots m_{i-1}$ in $\textbf{M}$.
        \end{enumerate}
    \item Sign the message $\sigma_{\alpha m}\gets \textsf{OT.Sign}(\textsf{s}_{\alpha m},m)$.
    \item Output $(\sigma_{\alpha m_1\ldots m_{i-1}},\textsf{v}_{\alpha m_1\ldots m_{i-1}0},\textsf{v}_{\alpha m_1\ldots m_{i-1}1})_{i\in [n]}$.    \end{enumerate}

     \item $\textsf{Verify}(\vk,m,\sigma):$ 
     \begin{enumerate}
         \item If $\sigma=\bot$, then output $\bot$. 
         \item Otherwise, interpret $\sigma$ as 
         \begin{align*}
             (\sigma_{\alpha m_1\ldots m_{i-1}},\textsf{v}_{\alpha m_1\ldots m_{i-1}0},\textsf{v}_{\alpha m_1\ldots m_{i-1}1})_{i\in [n]}.
         \end{align*} 
    \item Output $\top$ if the following conditions are satisfied:
    \begin{enumerate}
        \item For each $i\in [n]$: 
        \begin{align*}
            \textsf{OT.Ver}(\textsf{v}_{\alpha m_1\ldots m_{i-1}}, (\textsf{v}_{\alpha m_1\ldots m_{i-1}0},\textsf{v}_{\alpha m_1\ldots m_{i-1}1}), \sigma_{\alpha m_1\ldots m_{i-1}})=\top.
        \end{align*}
        \item $\textsf{OT.Ver}(\textsf{v}_{\alpha m}, m, \sigma_{\alpha m})=\top$.
    \end{enumerate} 
    \item Otherwise, output $\perp$.  
         \end{enumerate}

\end{itemize}
\end{construct}

The security proof follows similarly to the proof in the classical case \cite{Gol04}.

\begin{lemma}[Sect.~6.4.2.2. of \cite{Gol04}]
\label{lem:sec 2}
\cref{con:sig} is a SUF many-message (stateful) signature scheme on $n$-bit messages assuming the existence of $\botUOWHF$s.
\end{lemma}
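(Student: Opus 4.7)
The plan is to adapt the standard Naor-Young/Goldreich proof for tree-based signatures (Sect.~6.4.2.2 of \cite{Gol04}) to our pseudodeterministic setting, reducing strong unforgeability of \cref{con:sig} to the OM-SUF security of the underlying OMS from \cref{con:one-time sig 2} (\cref{lem:ots 2}). Suppose toward contradiction that a QPT adversary $\adv$ making $q = q(\secpar) \in \poly$ signing queries wins $\textsf{Sign}^{\text{SUF}}_{\Pi,\adv}(\lambda)$ with non-negligible probability $\epsilon$. Throughout the experiment at most $N \coloneqq 2q(n+1)+1$ distinct OMS key pairs are ever instantiated (one root, plus the two children of each node visited, across the $q$ signing queries and the forgery path).

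The reduction $\mathcal{B}$, given a challenge verification key $\textsf{v}^*$ from its OM-SUF challenger, samples $i^* \xleftarrow{\$} [N]$ as a guess for which OMS instance the forgery will target. It runs $\adv$, lazily generating OMS key pairs as the tree is visited: the $i^*$-th key pair encountered is replaced by $\textsf{v}^*$ (with signing delegated to the OM-SUF oracle), and all others are sampled honestly by $\mathcal{B}$ via $\textsf{OT.Gen}$. Crucially, as in the stateful scheme, $\mathcal{B}$ caches all signatures it generates, so each OMS key pair is asked to sign at most one distinct message: at an internal node, that message is the (fixed, honestly sampled) pair of children verification keys; at the leaf $\alpha m$, it is the specific message $m$. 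Hence every query to the OM-SUF oracle is on the same message, which is exactly what that notion permits.

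When $\adv$ outputs its forgery $(m^*, \sigma^*)$, the verification algorithm traverses a path in the authentication tree and checks a chain of OMS signatures: one at each internal node signing the pair of children verification keys, and one at the leaf signing $m^*$. Since the forgery must differ from some message-signature pair returned to $\adv$, there exists a node $\alpha$ on the path at which the forgery first diverges from $\mathcal{B}$'s simulated transcript. Concretely, either (a) $\alpha$ is an internal node whose forged child-key pair differs from the one $\mathcal{B}$ signed (a new message, breaking unforgeability), (b) the child-key pair matches but the signature on it differs from the cached one (breaking the \emph{strong} part of OM-SUF), or (c) all internal signatures match and the forgery on $m^*$ at the leaf is either a signature of a message $m^*$ never queried or a new signature on a previously queried $m^*$. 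Conditioning on the event that $\mathcal{B}$'s guess $i^*$ pointed to this divergence node, which happens with probability at least $1/N$, $\mathcal{B}$'s output is a valid OM-SUF forgery, contradicting \cref{lem:ots 2} since $\epsilon/N$ is non-negligible.

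The main conceptual point, and the reason we formulated OM-SUF rather than a plain one-time SUF in \cref{sec:sig def}, is the following: because the underlying OMS is $\bot$-pseudodeterministic, the analysis must be robust to the adversary observing multiple evaluations of the signer with the same challenge key and the same message (for instance, when verifying the forgery at a node that was also invoked during signing). OM-SUF exactly captures this, and once it is in place the classical tree argument goes through essentially unchanged; the only place any new care is required is confirming that the reduction's caching ensures the OM-SUF oracle is never queried on two distinct messages, which is immediate from the construction. The remaining details (number of instantiated nodes, guessing probability, divergence analysis) are identical to the classical case.
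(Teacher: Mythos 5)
Your proof is correct and takes the same approach the paper intends: the paper simply cites Sect.~6.4.2.2 of \cite{Gol04} and asserts that the classical tree argument carries over, and your write-up supplies the omitted reduction, correctly identifying OM-SUF as the abstraction that absorbs the pseudodeterministic re-invocations of the OMS on a fixed message (since a $\bot$ signature is not cached and the same node may be re-signed across queries). The guessing/divergence analysis and the $1/N$ loss are the standard ones from Goldreich's proof, unchanged.
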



\subsection{Stateless Digital Signatures}
\label{sec:sig 3}

We upgrade \cref{con:sig} to a stateless digital signature scheme using our construction for $\botPRF$s following a similar approach to \cite{Gol04}, which is based on traditional $\PRF$s. The idea is to use a $\botPRF$ to determine the output of the coin tosses used in the OMS key generation procedures employed in the authentication tree. 
Specifically, let ${\textsf{OT.Gen}}(1^\lambda)$ be a pseudodeterministic QPT algorithm for Construction \ref{con:one-time sig 2}. The algorithm first classically samples random inputs and keys for appropriate $\botUOWHF$s. We let ${\textsf{OT.Gen}}(1^\lambda;r)$ denote the algorithm where these random decisions are instead determined by $r$. 

Instead of choosing $r$ randomly each time, it can be set to the outcome of a $\botPRF$, allowing the signer to generate any key-pair in the authentication key on demand without needing to store previous signatures. While the keys generated in this fashion are not deterministic and may sometimes include $\perp$, we simply abort and output $\perp$ in this scenario. 

\begin{construct}
\label{con:many-time sig}
Let $\lambda\in \mathbb{N}$ be the security parameter and $n\coloneqq n(\lambda)$ be an even polynomial on $\lambda$. Let $({\textsf{OT.Gen}},\textsf{OT.Sign},\textsf{OT.Ver})$ be the algorithms of the OTS scheme given in Construction \ref{con:one-time sig 2} on messages at least twice the length of the verification key such that the key generation algorithm makes $q$ coin tosses ($q$ polynomial in $\lambda$). For $i\in [n]$, let $\textsf{PRF}^i$ be a $(\mu,i+q,q)$-$\botPRF$. The algorithms for a (stateless) DS scheme on $n$-bit messages is as follows:
\begin{itemize}
    \item $\textsf{KeyGen}(1^\lambda):$ 
    \begin{enumerate}
        \item Sample $(\textsf{s}_\alpha,\textsf{v}_\alpha)\gets \textsf{OT.Gen}(1^\lambda)$.
        \item For each $i\in [n]$ and $\tau=0,1$, sample $k_i\gets \{0,1\}^\lambda$ and $r_{i,\tau}\gets \{0,1\}^{i+q}$. 
        \item Set $\sk\coloneqq (\textsf{s}_\alpha,{k}_i,r_{i,\tau})_{i\in [n],\tau \in \{0,1\}}$ and $\vk\coloneqq \textsf{v}_\alpha$.
        \item Output $(\sk,\vk)$. 
    \end{enumerate} 
    
    \item $\textsf{Sign}(\sk,m):$ 
        At any point in the procedure, if any computation returns $\bot$, then abort and output $\bot$. 
    \begin{enumerate}
        \item For each $i\in [n]:$
   \begin{enumerate}
        \item For $\tau =0,1$, generate the key pair:
        \begin{align*}
            (\textsf{s}_{\alpha m_1m_2\ldots m_{i-1}\tau},\textsf{v}_{\alpha m_1m_2\ldots m_{i-1}\tau})\gets \textsf{OT.Gen}(1^\lambda;\textsf{PRF}^i(k_i, m_1m_2\ldots m_{i-1}\tau 0^{q-1}\oplus r_{i,\tau})).
        \end{align*}
        \item Sign the verification keys:
        \begin{align*}
            \sigma_{\alpha m_1m_2\ldots m_{i-1}}\gets \textsf{OT.Sign}(\textsf{s}_{\alpha m_1m_2\ldots m_{i-1}}, (\textsf{v}_{\alpha m_1m_2\ldots m_{i-1}0},\textsf{v}_{\alpha m_1m_2\ldots m_{i-1}1})).
        \end{align*}
    \end{enumerate}
    \item Sign the message $\sigma_{\alpha m}\coloneqq \textsf{OT.Sign}(\textsf{s}_{\alpha m},m)$.
    \item Output $(\sigma_{\alpha m_1\ldots m_{i-1}},\textsf{v}_{\alpha m_1\ldots m_{i-1}0},\textsf{v}_{\alpha m_1\ldots m_{i-1}1})_{i\in [n]}$.
  \end{enumerate}

     \item $\textsf{Verify}(\vk,m,\sigma):$ 
        \begin{enumerate}
         \item If $\sigma=\bot$, then output $\bot$. 
    \item Otherwise, interpret $\sigma$ as $(\sigma_{\alpha m_1\ldots m_{i-1}},\textsf{v}_{\alpha m_1\ldots m_{i-1}0},\textsf{v}_{\alpha m_1\ldots m_{i-1}1})_{i\in [n]}$. 
    \item Output $\top$ if the following conditions are satisfied:
    \begin{enumerate}
        \item For each $i\in [n]$: 
        \begin{align*}
            \textsf{OT.Ver}(\textsf{v}_{\alpha m_1\ldots m_{i-1}}, (\textsf{v}_{\alpha m_1\ldots m_{i-1}0},\textsf{v}_{\alpha m_1\ldots m_{i-1}1}), \sigma_{\alpha m_1\ldots m_{i-1}})=\top.
        \end{align*}
        \item $\textsf{OT.Ver}(\textsf{v}_{\alpha m}, m, \sigma_{\alpha m})=\top$.
    \end{enumerate} 
    \item Otherwise, output $\perp$. 
    \end{enumerate}
\end{itemize}
\end{construct}

\begin{lemma}\label{lemma:cor-digital-sig}
 \cref{con:many-time sig} is $(1-\mu)^p$-correct where $p={4n\lambda+10n}$ assuming the existence of $\botPRF$s and $\botUOWHF$s. 
\end{lemma}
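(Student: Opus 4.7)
The plan is to bound correctness by tracking two independent sources of failure: $\botPRF$ evaluations that output $\bot$, and individual OTS operations that fail even when supplied with well-formed randomness. First I would fix an arbitrary $m \in \{0,1\}^n$, sample $(\sk,\vk) \gets \textsf{KeyGen}(1^\lambda)$, and observe that the signing procedure invokes $\textsf{PRF}^i(k_i,\cdot)$ exactly once for each pair $(i,\tau) \in [n] \times \{0,1\}$, for a total of $2n$ evaluations. Since each $k_i$ is sampled uniformly, the pointwise far-from-$\bot$ property in \cref{def:bot-prf-correctness} gives that each evaluation outputs a non-$\bot$ value with probability at least $1-\mu$; taking a product, the joint probability that all $2n$ evaluations yield non-$\bot$ outputs is at least $(1-\mu)^{2n}$.

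Next, I would condition on this event and argue that the coins fed into $\textsf{OT.Gen}$ are computationally indistinguishable from uniform via the adaptive pseudorandomness of $\botPRF$ (\cref{def:bot-prf-security}). Because the predicate ``$\textsf{Verify}$ accepts the resulting signature'' is QPT-decidable, any $1/\poly$ gap between the correctness probability under pseudorandom coins and the correctness probability under truly uniform coins would yield a $\botPRF$ distinguisher with $1/\poly$ advantage, contradicting security. Replacing the PRF outputs with truly uniform strings therefore costs only a negligible additive term, which I would absorb into an extra $(1-\mu)^{O(n)}$ factor for sufficiently large $\lambda$. In this uniform-coins hybrid, the scheme is statistically identical to the stateful \cref{con:sig} with fresh OTS key-pairs.

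In the hybrid, signing $m$ consists of $n$ OTS signings of pairs of child verification keys (one per level of the authentication path) together with a final OTS signing of $m$ itself, and verification runs the corresponding OTS verification $n+1$ times. Each OTS signs a message of length at most $M = O(\lambda)$, so by \cref{lem:corr ots} each OTS operation succeeds with probability at least $(1-\mu)^{2M+3}$, and since the coins along the path are independent (in the hybrid) the joint correctness across all $n+1$ operations is at least $(1-\mu)^{(n+1)(2M+3)}$. Multiplying together the bound $(1-\mu)^{2n}$ for the PRF step, the bound $(1-\mu)^{(n+1)(2M+3)}$ for the OTS step, and the small slack from the pseudorandom-to-uniform reduction, one obtains correctness at least $(1-\mu)^{4n\lambda + 10n}$ for an appropriate choice of constants, as claimed.

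The main obstacle is the hybrid argument in the second step: correctness is an information-theoretic quantity while $\botPRF$ security is computational, so one must phrase the ``verification succeeds'' event as the output of a QPT distinguisher whose input is the joint $2n$-tuple of PRF outputs, and then replace all of them simultaneously with uniform via a single appeal to adaptive pseudorandomness. This is clean here because the inputs $m_1\cdots m_{i-1}\tau 0^{q-1} \oplus r_{i,\tau}$ fed to each $\textsf{PRF}^i$ are distinct and the keys $k_i$ are independent across $i$, so the joint distribution of PRF outputs maps precisely onto the adaptive distinguishing game of \cref{def:bot-prf-security}.
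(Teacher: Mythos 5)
Your proposal is correct and follows essentially the same route as the paper's proof: bound the probability that the $2n$ $\botPRF$ evaluations along the authentication path are all non-$\bot$ by $(1-\mu)^{2n}$, use the pseudorandomness of the $\botPRF$s to replace the coins fed to $\textsf{OT.Gen}$, invoke \cref{lem:corr ots} for the OTS correctness in the resulting hybrid, and multiply the factors.

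One phrasing worth tightening. You say you ``condition on'' the all-non-$\bot$ event and then argue the coins are indistinguishable from ``truly uniform.'' Two small points. First, \cref{def:bot-prf-security} gives indistinguishability of $f_K$ from the $\bot$-respecting oracle $G$ (which returns $\bot$ exactly where $f_K$ does and a fresh random value otherwise), not from truly uniform strings --- the event ``no $\bot$ occurred'' is identically distributed in both worlds, which is what makes the outputs \emph{look} uniform once $\bot$ is ruled out. Second, conditioning on an event and then appealing to a computational indistinguishability guarantee is not a free move: the distinguishing advantage on the conditional distributions scales like $\negl/\Pr[\text{all non-}\bot]$. Here $\Pr[\text{all non-}\bot]\geq (1-\mu)^{2n}$ is inverse-polynomial, so the loss stays negligible, but it is cleaner to avoid the conditioning altogether --- as the paper implicitly does --- by directly bounding $\Pr[\textsf{Verify accepts} \wedge \text{all non-}\bot]$, which is a QPT-decidable predicate of the oracle's $2n$ responses: the hybrid to the $\bot$-random world changes this joint probability by at most $\negl$, and in the $\bot$-random world the two factors $(1-\mu)^{2n}$ and the OTS-correctness bound multiply exactly. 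Finally, since the construction uses $n$ independent keys $k_1,\dots,k_n$ for $n$ separate families $\textsf{PRF}^1,\dots,\textsf{PRF}^n$, the replacement is formally a sequence of $n$ hybrids (one per key), not a single appeal to \cref{def:bot-prf-security}, but this only multiplies the negligible loss by $n$.
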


\begin{proof}
Fix $m\in \{0,1\}^n$ and sample $(\sk,\vk)\gets \textsf{KeyGen}(1^\lambda)$. To sign $m$, for each $j\in [n]$ and $\tau=0,1$, we evaluate $\textsf{PRF}^j$ on the input $(k_j,x^{j,\tau})$ where $x^{j,\tau}\coloneqq m_1m_2\ldots m_{j-1}\tau 0^{q-1}\oplus r_{j,\tau}$. Let $\mathcal{G}_j$ denote the good set of inputs for $\textsf{PRF}^j(k_j,\cdot)$. 
By the pseudodeterminism of the $\botPRF$s and the randomness of the inputs $x^{j,\tau}$, there is at least $(1-\mu)^{2n}$ probability that $x^{j,\tau}\in \mathcal{G}_j$ for all $j\in [n]$ and $\tau =0,1$. 

Next, we also need correctness of $n$ OMSs employed in the signing procedure. By Lemma \ref{lem:corr ots}, Construction \ref{con:one-time sig 2} satisfies $(1-\mu)^{2\lambda+3}$-correctness using a $\botUOWHF$ with sufficient compression. 

A small issue is that in our setting, we use pseudorandomness through $\botPRF$s instead of actual randomness when generating the key pairs for the OMSs. This may affect the pseudodeterminism of the $\botUOWHF$ utilized in the OTS scheme. However, if using pseudorandomness significantly worsens the correctness of the OTS scheme, then this easily gives an adversary that can distinguish between the outputs of a $\botPRF$ and a random $\bot$-function, which yields a contradiction. Hence, the correctness error of the pseudorandom based OTS is within $(1-\negl[\lambda])$ factor from $(1-\mu)^{4\lambda+3}$. We generate $2n$ one-message signatures when signing $m$. 

To sum up, the probability that a signature of $m$ passes verification is $(1-\mu)^{2n}(1-\mu)^{4n\lambda+6n}(1-\negl[\lambda])^{2n}\leq (1-\mu)^{4n\lambda+10n}$.
    \qed
\end{proof}

\begin{theorem}
\label{thm:sec sig}
    \cref{con:many-time sig} is a SUF digital signature scheme assuming the existence of $\botPRF$s and $\botUOWHF$s.
\end{theorem}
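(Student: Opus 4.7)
The plan is to reduce to the SUF of the stateful scheme of \cref{con:sig} (established in \cref{lem:sec 2}) via a standard sequence of hybrids that successively replace each $\textsf{PRF}^i$ with a $\bot$-preserving random function. Let $H_0$ denote the SUF experiment for \cref{con:many-time sig}, and for $i \in [n]$ let $H_i$ be obtained from $H_{i-1}$ by replacing $\textsf{PRF}^i(k_i,\cdot)$ by the oracle $G_i$ that, on each fresh input $x$, outputs $\bot$ whenever $\textsf{PRF}^i(k_i,x) = \bot$ and otherwise outputs an independent uniformly random $q$-bit string (repeated queries yield the same answer). Note that $G_i$ is exactly the $b=1$ oracle of the $\botPRF$ experiment $\prfdistingexpt$ of \cref{def:bot-prf-security}.

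For the hybrid indistinguishability step, I would construct, for each $i \in [n]$, a distinguisher $\mathcal{B}_i$ against the adaptive pseudorandomness of $\textsf{PRF}^i$ with advantage at least $|\Pr[H_{i-1}=1]-\Pr[H_i=1]|$. Given oracle $O \in \{\textsf{PRF}^i(k_i,\cdot), G_i\}$, the reduction $\mathcal{B}_i$ samples $(\textsf{s}_\alpha,\textsf{v}_\alpha)$, the remaining keys $k_j$ ($j \neq i$), and the pads $r_{j,\tau}$ honestly, and simulates the signing oracle of the SUF game to $\adv$ by using $G_j$ for $j<i$, the true $\textsf{PRF}^j$ for $j>i$, and its own oracle $O$ for level $i$, querying $O$ on $m_1 \cdots m_{i-1}\tau 0^{q-1} \oplus r_{i,\tau}$ whenever the signing procedure requires the coins for the node labeled $\alpha m_1\cdots m_{i-1}\tau$. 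Since $r_{i,\tau}$ is uniform and private, two signing queries with the same $(i{-}1)$-prefix hit the same $O$-input (matching the tree-node reuse), while different prefixes induce distinct and essentially uniform inputs, and repeated messages cause only $O$-query \emph{repetitions}, which the adaptive $\botPRF$ definition allows. When $\adv$ outputs a forgery, $\mathcal{B}_i$ outputs $1$ iff $\textsf{Verify}$ accepts; by \cref{def:bot-prf-security}, the gap between $H_{i-1}$ and $H_i$ is thus negligible.

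For the final step, observe that in $H_n$ every $\textsf{PRF}^j$ has been replaced by $G_j$. Then the randomness used to generate the OT key pair at each node of the authentication tree is (i) fixed $\bot$ with some probability, or (ii) an independent uniform string, freshly sampled on the first visit and faithfully reused on subsequent visits --- which is exactly the behavior of the stateful signer in \cref{con:sig} (its memory $\mathbf{M}$ stores each node's key pair once generated), augmented by the possibility that some node's coins are $\bot$, in which case the signing query simply aborts and returns $\bot$. A valid forgery in $H_n$ against a message $m^*$ must therefore supply valid OT verifications along the root-to-leaf path of $m^*$, and by the standard reasoning of \cref{lem:sec 2} this reduces to a forgery under one of the OM-SUF signers along the path, contradicting \cref{lem:ots 2}. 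Combining this with the hybrid steps yields $\Pr[\textsf{Sign}^{\text{SUF}}_{\Pi,\adv}(\lambda)=1] \leq \negl$.

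The main obstacle is not any single step but the careful bookkeeping around $\bot$-events: I need to ensure that (a) $\bot$-aborts in signing queries do not leak the hidden pads $r_{i,\tau}$ nor the OT keys of unvisited nodes --- which follows because an aborted signing reveals only the occurrence of $\bot$, and $\bot$ is matched identically across the $b=0$ and $b=1$ branches of the $\botPRF$ game --- and (b) the reduction $\mathcal{B}_i$ correctly handles repeated queries, which relies exactly on the multi-time adaptive nature of \cref{def:bot-prf-security} (a crucial feature that $\PDPRF$s lacked and which motivated the recognizable-abort formulation). Once these two points are handled, the reduction to the stateful scheme goes through verbatim as in the classical proof of \cite{Gol04}.
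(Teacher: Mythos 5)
Your proposal follows essentially the same route as the paper's proof: the same sequence of $n$ hybrids replacing each $\textsf{PRF}^i$ with its $\bot$-matched random counterpart, the same reduction of adjacent-hybrid gaps to the adaptive multi-time $\botPRF$ game (with repeated queries matched to tree-node reuse), and the same reduction of the final hybrid to the stateful scheme's SUF (\cref{lem:sec 2}). The only cosmetic difference is that you unpack the argument of \cref{lem:sec 2} along the forgery's root-to-leaf path, whereas the paper constructs a black-box reduction $\mathcal{B}$ directly against $\Pi_{\textsf{stf}}$ that forwards queries and aborts on $\bot$; both are sound and reach the same conclusion.
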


\begin{proof}
Let $\textbf{H}_0$ be the actual scheme. For $i\in [n]$, let $\textbf{H}_i$ be the hybrid that is the same as $\textbf{H}_{i-1}$ except $\textsf{PRF}^j(k_i,\cdot)$ is replaced with the $\bot$-random function $R_{\textsf{PRF}^i}$, where $R_{\textsf{PRF}^i}$ is defined as the algorithm which evaluates $\textsf{PRF}^i(k_i,\cdot)$ and if the result is $\bot$, it returns $\bot$, and otherwise outputs a random string (see \cref{def:bot-prf-security}). Such a hybrid is efficient since to evaluate $R_{\textsf{PRF}^i}$, the experiment does not need to sample an entire random function, but rather it can compute $\textsf{PRF}^i(k_i,\cdot)$ and if the result is $\bot$, it returns $\bot$, and otherwise it outputs a random output and stores the evaluation in its memory for future use. 

We first argue that the final hybrid is secure. Then, we argue that if there exists $j\in [n]$ such that there exists a QPT distinguisher that can differentiate between hybrids $\textbf{H}_{j-1}$ and $\textbf{H}_{j}$, then there exists QPT distinguisher that can distinguish ${\textsf{PRF}^j}$ and $R_{\textsf{PRF}^j}$, contradicting the security of $\botPRF$s. Let $\Pi_{\textsf{stf}}$ be the stateful scheme (\cref{con:sig}).

The scheme $\textbf{H}_n$ is the same as $\Pi_{\textsf{stf}}$ except the key generation algorithm uses random $\bot$-functions instead of making random choices during key generation. Now if no random $\bot$-function evaluates to $\perp$ during the procedure, then a valid signature is generated in the same way as in $\Pi_{\textsf{stf}}$. Alternatively, if there is a $\perp $, then the process is aborted and the signature is $\perp$. Informally, this scenario clearly does not aid the adversary in forging signatures since whether the signature is $\perp$ or not is independent of the signing key. 

To prove this formally, assume for contradiction that there exists a QPT $\adv$ such that $\Pr[\textsf{Sign}^{\text{SUF}}_{\textbf{H}_n, \adv}({\lambda})=1]$ is non-negligible. We can construct an adversary $\mathcal{B}$ such that $\Pr[\textsf{Sign}^{\text{SUF}}_{\Pi_{\textsf{stf}}, \mathcal{B}}({\lambda})=1]$ is non-negligible. 

$\mathcal{B}$ in the SUF security experiment for $\Pi_{\textsf{stf}}$ gets query access to the signing oracle and a verification key $\vk$. $\mathcal{B}$ samples $k_i\leftarrow \{0,1\}^\lambda$ for each $i\in [n]$ and runs $\adv$ on $\vk$. Assume $\adv$ requests the signature of a message $m$. $\mathcal{B}$ queries the oracle on $m$ and gets as response $\sigma_m$. $\mathcal{B}$ computes $ {\textsf{PRF}^i}(k_i, m_1m_2\ldots m_{i}\tau 0^{q-1}\oplus r_{i,\tau})$ for all $i\in [n]$ and $\tau\in \{0,1\}$, and if any of the evaluations return $\perp$, then $\mathcal{B}$ returns $\perp$ to $\adv$ and otherwise returns $\sigma_m$. $\adv$ submits an output $(m^*,\sigma^*)$. $\mathcal{B}$ simply outputs the same pair $(m^*,\sigma^*)$. Clearly, $\Pr[\textsf{Sign}^{\text{SUF}}_{\Pi_{\textsf{stf}}, \mathcal{B}}({\lambda})=1]$ is non-negligible which contradicts \cref{lem:sec 2}. Hence, $\textbf{H}_n$ is a SUF scheme. 

Now assume there exists a QPT adversary $\mathcal{A}$ that can distinguish between hybrid $\textbf{H}_{j-1}$ and $\textbf{H}_j$ with non-negligible advantage. We construct a QPT algorithm $\mathcal{D}$ with oracle access to a function $G $ which distinguishes whether $G$ is a $\bot$-random function or a $\botPRF$ with non-negligible probability. 
First, $\mathcal{D}$ generates $(\sk,\vk)\leftarrow \textsf{KeyGen}(1^\lambda)$ and runs $\adv(\textsf{v})$. Whenever $\adv$ queries the signing oracle, $\mathcal{D}$ answers the query by following the signing procedure, except it replaces any computation of ${\textsf{PRF}^i}(k_i,\cdot)$ with $R_{{\textsf{PRF}^i}}$ if $i<j$ and with $G$ if $i=j$.

$\adv$ succeeds in distinguishing the two hybrids with non-negligible probability meaning $\mathcal{D}$ distinguishes whether between a $\botPRF$ and a random $\bot$-function, yielding a contradiction. Therefore, no QPT adversary can distinguish between any two consecutive hybrids with non-negligible advantage.
Hence, by the triangle inequality, $H_0$ is a SUF signature scheme. 
\qed
\end{proof}

We can improve correctness, reducing the error to a negligible value, through repetition. In particular, Construction \ref{con:many-time sig} is $(1-\mu)^p$-correct where $p=(1-\mu)^{4n\lambda+10n}$. As long as use $\botUOWHF$s and $\botPRF$s with pseudodeterminism error of $\mu<1/p^2$, then $(1-\mu)^p\geq  \frac{1}{4}$. 

Therefore, by signing a message under $4\lambda$ independent keys, then one of the signatures will pass verification except with negligible probability. In the resulting scheme, the verifier accepts as long as one of the signatures is valid. It is easy to see that this scheme satisfies statistical correctness and unforgeability. However, it does not satisfy strong unforgeability since it is easy to create a new signature given a valid signature. 

It seems difficult to achieve strong unforgeability along with statistical correctness through our approach and, hence, this is left as an open problem. Nevertheless, our results give the option to choose between statistical correctness and strong unforgeability.

To sum up, the existence of a stateless DS satisfying UF and statistical correctness based on $\botPRF$ and $\botUOWHF$ is implied by \cref{thm:sec sig,lem:sec 2,lem:ots 2}. Furthermore, by \cref{cor:bot-prg-from-sprs,thm:uowhf,cor:bot-PRF-from-SPRS}, $\botPRF$ and $\botUOWHF$ can be based on $\SPRS$. All in all, we get the following:

\begin{theorem}[Digital signatures from $\SPRS$]
    \label{thm:main sig}
    Let $\lambda \in \mathbb{N}$ be the security parameter. Assuming the existence of $c\log(\lambda)$-$\PRS$ with large enough $c\in \mathbb{N}$, there exists stateless (many-message) digital signatures on $\lambda$-bit messages with classical keys satisfying UF as well as statistical correctness. 
\end{theorem}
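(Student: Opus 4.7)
The plan is to chain together the primitive-by-primitive reductions already developed in the paper and then boost correctness by parallel repetition. Concretely, I would proceed in four layers. First, starting from a $c\log(\secpar)$-$\PRS$ with a sufficiently large constant $c$, invoke \cref{cor:bot-prg-from-sprs} to obtain a multi-time $\botPR$ secure $\botPRG$ with pseudodeterminism parameter $\mu(\secpar) = O(\secpar^{-c/12 + 1})$, which can be made an arbitrarily small inverse polynomial by choosing $c$ large. Second, feed this $\botPRG$ through \cref{thm:bot_prf_from_bot_prg} (equivalently \cref{cor:bot-PRF-from-SPRS}) to construct $\botPRF$ families of any input length $m(\secpar) \in \poly$ we require, and through \cref{thm:uowhf} (or rather the chain \cref{thm:OWF} $\to$ \cref{thm:uowhf}) to obtain a $\botUOWHF$ with any desired compression.

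Third, plug these $\botPRF$s and $\botUOWHF$s into \cref{con:one-time sig}, \cref{con:one-time sig 2}, and finally \cref{con:many-time sig}. By \cref{lem:ots 2} and \cref{thm:sec sig}, the resulting stateless scheme $\Pi$ is SUF. By \cref{lemma:cor-digital-sig}, its correctness is at least $(1-\mu)^{p}$ with $p = 4n\secpar + 10n$, where $n = \secpar$ is the message length. Since \cref{cor:bot-prg-from-sprs}, \cref{thm:uowhf}, and \cref{cor:bot-PRF-from-SPRS} permit us to drive $\mu$ to any inverse polynomial by increasing $c$, we may pick $c$ large enough so that $\mu < 1/p^{2}$, which guarantees $(1-\mu)^{p} \geq \tfrac{1}{4}$. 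Thus $\Pi$ is $\tfrac{1}{4}$-correct and SUF with classical keys, secret keys, and signatures.

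Fourth and finally, amplify correctness from constant to statistical by parallel repetition. I would sample $4\secpar$ independent key pairs $(\sk_i, \vk_i) \gets \Pi.\textsf{KeyGen}(1^{\secpar})$, set the new verification key to $(\vk_1, \ldots, \vk_{4\secpar})$ and the new signing key to $(\sk_1, \ldots, \sk_{4\secpar})$, sign a message $m$ by running $\Pi.\textsf{Sign}(\sk_i, m)$ independently for each $i$, and accept a signature $(\sigma_1, \ldots, \sigma_{4\secpar})$ whenever at least one $\sigma_i$ is a valid $\Pi$-signature of $m$ under $\vk_i$. Since each coordinate independently fails verification with probability at most $\tfrac{3}{4}$, the probability that all $4\secpar$ fail is at most $(3/4)^{4\secpar} = \negl$, giving statistical correctness. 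Unforgeability of the repeated scheme follows from a standard hybrid argument: any forger against the repetition yields, via a guess of the successful index $i^{\star} \in [4\secpar]$, a forger against a single instance of $\Pi$, with only a $1/(4\secpar)$ loss, contradicting the SUF (hence UF) security of $\Pi$ proved in \cref{thm:sec sig}.

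The only subtle point, and the one I would treat most carefully, is parameter budgeting. The reductions \cref{thm:uowhf} and \cref{cor:bot-PRF-from-SPRS} both inflate the pseudodeterminism error (the $\botUOWHF$ reduction multiplies by $O(n^{28})$, and the GGM reduction contributes a factor of the input length), and \cref{con:many-time sig} further compounds these errors across $2n$ OMS key generations and a depth-$n$ tree. I would therefore choose $c$ so that after tracking all polynomial blow-ups the final per-primitive $\mu$ is dominated by $1/p^{2}$ with $p = 4\secpar^{2} + 10\secpar$; this is a routine but essential calculation. Note that strong unforgeability is \emph{not} preserved under the repetition step, since an adversary can flip any unused coordinate; consequently the theorem as stated claims only UF plus statistical correctness, which is exactly what this plan delivers.
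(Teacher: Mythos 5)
Your plan matches the paper's own route step for step: the chain $\SPRS \Rightarrow \botPRG \Rightarrow \{\botPRF,\botUOWHF\}$, instantiating \cref{con:one-time sig}, \cref{con:one-time sig 2}, and \cref{con:many-time sig} to obtain a $\tfrac{1}{4}$-correct SUF scheme, followed by $4\lambda$-fold parallel repetition with an OR-verification to boost correctness to statistical while relaxing SUF to UF. The parameter-budgeting observation and the standard coordinate-guessing reduction you spell out for UF of the repeated scheme are exactly the (largely implicit) argument the paper relies on, so the proposal is correct.
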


A direct application of our digital signature scheme pertains to symmetric encryption. It is well-known that a CPA-secure symmetric encryption scheme can be constructed from $\PRF$s. This construction can be adapted to $\botPRF$s without difficulty. Another well-known point is that a CPA-secure scheme can be upgraded to enable CCA-security by signing the ciphertexts. Hence, as a corollary to \cref{cor:bot-PRF-from-SPRS} and \cref{thm:main sig}, we obtain the following result:

\begin{corollary}
Assuming the existence of $\SPRS$, there exists a CCA-secure symmetric encryption scheme with classical ciphertexts.
\end{corollary}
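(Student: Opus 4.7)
The plan is to combine two standard reductions. First, I would construct a CPA-secure SKE from a $\botPRF$ (which exists from $\SPRS$ by Corollary~\ref{cor:bot-PRF-from-SPRS}), adapting the textbook $\PRF$-based construction to handle $\bot$ outputs. Second, I would lift CPA to CCA via encrypt-then-MAC, where the MAC is itself built directly from a $\botPRF$.

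For the CPA step, the natural construction is $\enc(k, m) = (r, f_k(r) \oplus m)$ with $r$ sampled uniformly from the input domain of $f_k$. Because $f_k(r) = \bot$ occurs with probability at most $1/\poly$ over $(k,r)$, repeating the sampling of $r$ polynomially many times yields statistical correctness: the probability that all attempts abort drops to negligible. Decryption recomputes $f_k(r)$ and XORs; pseudodeterminism (Definition~\ref{def:bot-prf-correctness}) ensures the mask matches with overwhelming probability whenever encryption did not abort. CPA-security follows from Definition~\ref{def:bot-prf-security}: in the hybrid where $f_k$ is replaced by the oracle $G$ (which outputs $\bot$ wherever $f_k$ does and a truly uniform string otherwise), each fresh $r$ either triggers a re-randomization on the encryption side or yields a uniform mask, so the ciphertext information-theoretically hides $m$ modulo the negligible probability of an $r$-collision across the queries.

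For the CCA step, I would define $\mathrm{MAC}_{k'}(c) \coloneqq f_{k'}(c)$ for an independent $\botPRF$ key $k'$, output ciphertexts of the form $(c, \tau)$, and reject in decryption any $(c,\tau)$ whose tag fails. Strong unforgeability of this MAC reduces to $\botPRF$-pseudorandomness: forging a valid tag on a new message amounts to predicting the pseudorandom evaluation of $f_{k'}$ at a point the adversary never queried, which happens with negligible probability. Moreover, pseudodeterminism rules out ``replay-with-a-different-tag'' attacks, since for each $c$ there is at most one non-$\bot$ valid tag. The standard hybrid argument then reduces CCA-security of the composed scheme to CPA-security of the inner scheme together with strong unforgeability of the outer MAC.

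The main obstacle is purely bookkeeping of $\bot$-events: one must verify that the $\bot$-handling by re-randomization composes across encryption, MAC evaluation, and verification without degrading correctness below statistical, and without breaking the pseudorandomness reductions. Both are handled because every $\bot$-probability is bounded by $1/\poly$ (so polynomial repetition amplifies to negligible), and because the $\botPRF$ of Definition~\ref{def:bot-prf-security} is adaptively and multi-time secure, so the CPA, MAC, and CCA reductions all tolerate repeated classical queries on the same key, including the repeated queries induced by re-randomization on $\bot$-evaluations.
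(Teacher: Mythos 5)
Your route is genuinely different from the paper's. Both of you begin from the same CPA-secure SKE built directly on a $\botPRF$, but for the CCA upgrade the paper invokes \Cref{thm:main sig} and does encrypt-then-\emph{sign} with the full digital-signature scheme (hence pulling in $\botUOWHF$ and the authentication tree of \Cref{sec:Sig}). You instead propose encrypt-then-\emph{MAC} with a MAC built directly from a second $\botPRF$ key, which is the lighter-weight classical construction and precisely the worked example the paper's own technical overview gives for the ``pseudo-determinism with recognizable abort'' template (sample $\secpar$ independent keys, tag with all of them, accept if some non-$\bot$ evaluation agrees). In that sense your decomposition is arguably the more economical one: it needs only \Cref{cor:bot-PRF-from-SPRS}, not the entire signature machinery. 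Either reduction is sound at the level of rigor the paper itself uses for this corollary, which is a one-paragraph sketch.

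Two points, however, deserve more care than your sketch gives them. First, your claim that re-sampling $r$ in the CPA encryption already yields statistical correctness overstates what the \emph{pointwise far from bot} condition of \Cref{def:bot-prf-correctness} gives you: that bound is over a random \emph{key} for each fixed input, so there remains a $1/\poly$ fraction of keys for which $f_k(\cdot)$ aborts on a constant (or even overwhelming) fraction of inputs, and re-sampling $r$ under such a key does not help. The fix is the one the paper describes in its template, namely $\secpar$-fold key repetition. Second, once you do repeat keys for the MAC, the natural ``accept if at least one non-$\bot$ coordinate matches'' verification does \emph{not} give strong unforgeability (the paper explicitly flags this problem for its amplified signature scheme): an adversary who holds a valid tag vector can perturb the aborted coordinates to mint a distinct valid tag for the same ciphertext, and strong unforgeability is exactly what encrypt-then-MAC needs to stop the adversary re-randomizing the challenge tag and feeding it to the decryption oracle. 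You therefore need a verification rule that demands consistency of \emph{every} non-$\bot$ coordinate with the stored tag (so that the only freedom is in coordinates that are rejected anyway), or some equivalent device, for your ``pseudodeterminism rules out replay-with-a-different-tag'' step to actually go through.
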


\bibliographystyle{splncs04}
\bibliography{mybib2}

\begin{thebibliography}{10}
\providecommand{\url}[1]{\texttt{#1}}
\providecommand{\urlprefix}{URL }
\providecommand{\doi}[1]{https://doi.org/#1}

\bibitem{GSM20}
Alagic, G., Jeffery, S., Ozols, M., Poremba, A.: On quantum chosen-ciphertext attacks and learning with errors. Cryptography  \textbf{4}(1), ~10 (2020)

\bibitem{AGQ22}
Ananth, P., Gulati, A., Qian, L., Yuen, H.: Pseudorandom (function-like) quantum state generators: New definitions and applications. In: Theory of Cryptography: 20th International Conference, TCC 2022, Chicago, IL, USA, November 7--10, 2022, Proceedings, Part I. pp. 237--265. Springer (2022)

\bibitem{ALY23}
Ananth, P., Lin, Y.T., Yuen, H.: Pseudorandom strings from pseudorandom quantum states. arXiv preprint arXiv:2306.05613  (2023)

\bibitem{AQY22}
Ananth, P., Qian, L., Yuen, H.: Cryptography from pseudorandom quantum states. In: Advances in Cryptology--CRYPTO 2022: 42nd Annual International Cryptology Conference, CRYPTO 2022, Santa Barbara, CA, USA, August 15--18, 2022, Proceedings, Part I. pp. 208--236. Springer (2022)

\bibitem{BCG02}
Barnum, H., Cr{\'e}peau, C., Gottesman, D., Smith, A., Tapp, A.: Authentication of quantum messages. In: The 43rd Annual IEEE Symposium on Foundations of Computer Science, 2002. Proceedings. pp. 449--458. IEEE (2002)

\bibitem{BGH23}
Barooti, K., Grilo, A.B., Huguenin{-}Dumittan, L., Malavolta, G., Sattath, O., Vu, Q., Walter, M.: Public-key encryption with quantum keys. In: Rothblum, G.N., Wee, H. (eds.) Theory of Cryptography - 21st International Conference, {TCC} 2023, Taipei, Taiwan, November 29 - December 2, 2023, Proceedings, Part {IV}. Lecture Notes in Computer Science, vol. 14372, pp. 198--227. Springer (2023). \doi{10.1007/978-3-031-48624-1\_8}

\bibitem{BBS23}
Behera, A., Brakerski, Z., Sattath, O., Shmueli, O.: Pseudorandomness with proof of destruction and applications. Cryptology ePrint Archive  (2023)

\bibitem{BM84}
Blum, M., Micali, S.: How to generate cryptographically strong sequences of pseudorandom bits. SIAM Journal on Computing  \textbf{13}(4),  850--864 (1984). \doi{10.1137/0213053}, \url{https://doi.org/10.1137/0213053}

\bibitem{BZ13}
Boneh, D., Zhandry, M.: Quantum-secure message authentication codes. In: Advances in Cryptology--EUROCRYPT 2013: 32nd Annual International Conference on the Theory and Applications of Cryptographic Techniques, Athens, Greece, May 26-30, 2013. Proceedings 32. pp. 592--608. Springer (2013)

\bibitem{BM24}
Bouaziz{-}Ermann, S., Muguruza, G.: Quantum pseudorandomness cannot be shrunk in a black-box way (2024), \url{https://doi.org/10.48550/arXiv.2402.13324}

\bibitem{ZO20}
Brakerski, Z., Shmueli, O.: Scalable pseudorandom quantum states. In: Annual International Cryptology Conference. pp. 417--440. Springer (2020)

\bibitem{CW97}
Carter, J.L., Wegman, M.N.: Universal classes of hash functions. In: Proceedings of the ninth annual ACM symposium on Theory of computing. pp. 106--112 (1977)

\bibitem{CM24}
Coladangelo, A., Mutreja, S.: On black-box separations of quantum digital signatures from pseudorandom states. arXiv preprint arXiv:2402.08194  (2024)

\bibitem{Gol04}
Goldreich, O.: The Foundations of Cryptography - Volume 2: Basic Applications. Cambridge University Press (2004). \doi{10.1017/CBO9780511721656}, \url{http://www.wisdom.weizmann.ac.il/\%7Eoded/foc-vol2.html}

\bibitem{GGM86}
Goldreich, O., Goldwasser, S., Micali, S.: How to construct random functions. J. {ACM}  \textbf{33}(4),  792--807 (1986). \doi{10.1145/6490.6503}, \url{https://doi.org/10.1145/6490.6503}

\bibitem{GNW11}
Goldreich, O., Nisan, N., Wigderson, A.: On Yao’s XOR-Lemma, p. 273–301. Springer Berlin Heidelberg (2011). \doi{10.1007/978-3-642-22670-0_23}, \url{http://dx.doi.org/10.1007/978-3-642-22670-0_23}

\bibitem{HHR+10}
Haitner, I., Holenstein, T., Reingold, O., Vadhan, S., Wee, H.: Universal one-way hash functions via inaccessible entropy. In: Advances in Cryptology--EUROCRYPT 2010: 29th Annual International Conference on the Theory and Applications of Cryptographic Techniques, French Riviera, May 30--June 3, 2010. Proceedings 29. pp. 616--637. Springer (2010)

\bibitem{HMY23}
Hhan, M., Morimae, T., Yamakawa, T.: From the hardness of detecting superpositions to cryptography: Quantum public key encryption and commitments. In: Hazay, C., Stam, M. (eds.) Advances in Cryptology - {EUROCRYPT} 2023 - 42nd Annual International Conference on the Theory and Applications of Cryptographic Techniques, Lyon, France, April 23-27, 2023, Proceedings, Part {I}. Lecture Notes in Computer Science, vol. 14004, pp. 639--667. Springer (2023). \doi{10.1007/978-3-031-30545-0\_22}

\bibitem{JLS18}
Ji, Z., Liu, Y.K., Song, F.: Pseudorandom quantum states. In: Advances in Cryptology--CRYPTO 2018: 38th Annual International Cryptology Conference, Santa Barbara, CA, USA, August 19--23, 2018, Proceedings, Part III 38. pp. 126--152. Springer (2018)

\bibitem{KMNY23}
Kitagawa, F., Morimae, T., Nishimaki, R., Yamakawa, T.: Quantum public-key encryption with tamper-resilient public keys from one-way functions (2023)

\bibitem{K21}
Kretschmer, W.: Quantum pseudorandomness and classical complexity. In: 16th Conference on the Theory of Quantum Computation, Communication and Cryptography (2021)

\bibitem{L79}
Lamport, L.: Constructing digital signatures from a one way function (1979)

\bibitem{MU05}
Mitzenmacher, C.M.M., Upfal, E.: probability and computing: Randomized algorithms and probabilistic analysis. Cambridge University Press, Cambridge (2005)

\bibitem{MY22}
Morimae, T., Yamakawa, T.: Quantum commitments and signatures without one-way functions. In: Advances in Cryptology--CRYPTO 2022: 42nd Annual International Cryptology Conference, CRYPTO 2022, Santa Barbara, CA, USA, August 15--18, 2022, Proceedings, Part I. pp. 269--295. Springer (2022)

\bibitem{NY89}
Naor, M., Yung, M.: Universal one-way hash functions and their cryptographic applications. In: Symposium on the Theory of Computing (1989), \url{https://api.semanticscholar.org/CorpusID:14383498}

\bibitem{R90}
Rompel, J.: One-way functions are necessary and sufficient for secure signatures. In: Proceedings of the twenty-second annual ACM symposium on Theory of computing. pp. 387--394 (1990)

\bibitem{Yan22}
Yan, J.: General properties of quantum bit commitments (extended abstract). In: Agrawal, S., Lin, D. (eds.) Advances in Cryptology - {ASIACRYPT} 2022 - 28th International Conference on the Theory and Application of Cryptology and Information Security, Taipei, Taiwan, December 5-9, 2022, Proceedings, Part {IV}. Lecture Notes in Computer Science, vol. 13794, pp. 628--657. Springer (2022). \doi{10.1007/978-3-031-22972-5\_22}, \url{https://doi.org/10.1007/978-3-031-22972-5\_22}

\end{thebibliography}

\appendix
\section{Generalization Lemmas}

We show simple and frequently satisfied conditions allowing to replace $\PRG$s and $\PRF$s with $\PDPRG$ and $\PDPRF$, respectively,
while preserving security. 

The next lemma shows that $\PDPRG$ can safely replace $\PRG$s in a setting where the generator is evaluated on uniformly sampled inputs.

\begin{lemma}
\label{lem:gener PRG}
Let $\textsf{Exp}^F_{\adv}(\lambda)$ be a security experiment with binary output using oracle access to a function $F$ against an adversary $\adv$. Assume the experiment queries $F$ on uniformly sampled inputs. Let $G$ be a truly random function and $\tilde{G}$ be a $\PDPRG$, both mapping $\{0,1\}^{m(\lambda)}$ to $ \{0,1\}^{\ell(\lambda)}$. Assume that for any QPT adversary $\adv$:
  \begin{align*}
      \Pr[\textsf{Exp}^{G}_{\adv}(\lambda)=1] \leq \negl[\lambda].
  \end{align*}
    Then, for any QPT adversary $\adv'$:
        \begin{align*}
         \Pr[\textsf{Exp}^{\tilde{G}}_{\adv'}(\lambda)=1] \leq \negl[\lambda].
    \end{align*} 
 \end{lemma}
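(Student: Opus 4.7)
The plan is to combine a one-shot statistical replacement (swapping the truly random function for an oracle that ignores its input and returns fresh randomness) with a standard hybrid reduction against the pseudorandomness of $\tilde{G}$. The key observation enabling both steps is that by hypothesis every query to $F$ is on an input freshly and uniformly sampled from $\{0,1\}^{m(\lambda)}$.

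First, I would introduce an intermediate oracle $H$ that, on every query, returns an independent uniformly random string from $\{0,1\}^{\ell(\lambda)}$, ignoring its input. The oracles $G$ and $H$ induce identically distributed transcripts conditioned on the event that no two queries share the same input: both respond to each fresh input with a uniform and independent $\ell$-bit string. A birthday-bound over the $q(\lambda)=\poly(\lambda)$ queries made during the experiment bounds the collision probability by $q(\lambda)^2/2^{m(\lambda)+1}$, which is negligible for any cryptographically meaningful $m(\lambda)=\omega(\log\lambda)$. Hence $|\Pr[\textsf{Exp}^G_{\adv}(\lambda)=1]-\Pr[\textsf{Exp}^H_{\adv}(\lambda)=1]|\leq\negl$, so the assumed security against $G$ yields the same bound against $H$.

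Next, I would do the standard $q$-query hybrid argument to move from $H$ to $\tilde{G}$. Define hybrids $\hybrid_0,\ldots,\hybrid_q$, where in $\hybrid_i$ the first $i$ oracle queries are answered by fresh uniform strings and the remaining queries by $\tilde{G}(k)$ on a freshly sampled seed $k\gets\{0,1\}^{m(\lambda)}$. Then $\hybrid_0=\textsf{Exp}^{\tilde{G}}_{\adv'}$ and $\hybrid_q=\textsf{Exp}^H_{\adv'}$. Any $\adv'$ distinguishing neighboring hybrids $\hybrid_{i-1},\hybrid_i$ with advantage $\delta$ yields a single-sample distinguisher against $\tilde{G}$ (see \cref{def:pdprg_pseudorandomness}) with the same advantage: simulate all other oracle answers internally in the appropriate distribution and plug the pseudorandomness challenge $y^\ast$ in as the $i$-th answer. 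Crucially, the reduction can perform this simulation because the experiment itself samples each input uniformly and independently, so no adversarially-chosen seed has to be handled. Summing over the $q(\lambda)$ hybrid transitions and applying the triangle inequality gives $|\Pr[\textsf{Exp}^{\tilde{G}}_{\adv'}(\lambda)=1]-\Pr[\textsf{Exp}^H_{\adv'}(\lambda)=1]|\leq\negl$, and chaining with the first step yields the lemma.

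The main obstacle is that the single-sample pseudorandomness of $\PDPRG$ breaks down if the same seed is used twice, because pseudodeterministic evaluations may disagree across repeated calls. The uniform-input hypothesis is precisely what rules this out in the reduction, via the birthday bound in the first step; conceptually this is why the main body needs the stronger $\botPRG$ notion to handle constructions (like the GGM tree) where seeds are re-evaluated.
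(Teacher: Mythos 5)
Your proof is correct and follows the same high-level route as the paper's own: treat the experiment itself as a distinguisher and reduce to the pseudorandomness of the $\PDPRG$. You are, however, substantially more careful than the paper. The paper's proof simply defines $\mathcal{B}^F$ as running $\textsf{Exp}^F_{\adv'}$ and outputting its result, and then asserts in one line that a gap between $\mathcal{B}^G$ and $\mathcal{B}^{\tilde{G}}$ ``breaks $\PDPRG$ security,'' leaving implicit two issues that your write-up handles explicitly. First, the pseudorandomness of a $\PDPRG$ is a single-sample notion, so turning a distinguisher that makes polynomially many oracle queries into a single-sample one requires the hybrid argument you give. Second, a truly random function $G$ answers repeated inputs consistently, while $\tilde{G}$ is a quantum algorithm re-executed fresh on each call and may answer repeated seeds inconsistently due to pseudodeterminism; your intermediate oracle $H$ together with the birthday bound is exactly what reconciles the two worlds. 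The only caveat worth noting is that the birthday bound requires $m(\lambda)=\omega(\log\lambda)$, a condition not spelled out in the lemma statement but automatic for a $\PDPRG$, whose seed length equals the security parameter.
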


\begin{proof}
    Assume for contradiction that the theorem is false. In other words, if ${G}$ is a truly random function, then for any QPT adversary $\adv$:
  \begin{align*}
      \Pr[\textsf{Exp}^{G}_{\adv}(\lambda)=1] \leq \negl[\lambda].
  \end{align*}
    However, there exists a QPT adversary $\adv'$ and polynomial $p$ such that:
        \begin{align*}
         \Pr[\textsf{Exp}^{\tilde{G}}_{\adv'}(\lambda)=1] \geq \frac{1}{p(\lambda)}.
    \end{align*} 

    Define $\mathcal{B}^{F}_{\adv'}$ as the algorithm with oracle access to an arbitrary function $F$ which runs $\textsf{Exp}^{F}_{\adv'}(\lambda)$ using oracle access to $F$ and outputs the result of the experiment. 
    
    Note that if $F$ is a truly random function, then $\Pr[\mathcal{B}^{F}_{\adv'}=1] \leq \negl[\lambda]$ by our assumption. However, if $F=\tilde{G}$, then $\Pr[\mathcal{B}^{\tilde{G}}_{\adv'}=1] \geq \frac{1}{p(\lambda)}$. Hence, $\mathcal{B}$ can distinguish $\tilde{G}$ from random on uniformly sampled inputs and thus breaks $\PDPRG$ security.
    \qed
\end{proof}

Next, we present a similar result but for $\PDPRF$. This result is not relevant for our work but may be of independent interest. See \cite{ALY23} for the rigorous definition of $\PDPRF$.

\begin{lemma}
\label{thm:gener PRF}
Let $\textsf{Exp}^F_{\adv}(\lambda)$ be a security experiment with binary output using oracle access to a function $F$ against an adversary $\adv$. Assume for any QPT $\adv$, the experiment queries the function $F$ non-adaptively on distinct inputs (except with negligible probability). Let $G:\{0,1\}^{m(\lambda)}\rightarrow \{0,1\}^{\ell(\lambda)}$ be a truly random function and $\tilde{G}:\{0,1\}^\lambda \times \{0,1\}^{m(\lambda)}\rightarrow \{0,1\}^{\ell(\lambda)}$ be a $\PDPRF$. Assume that for any QPT adversary $\adv$:
\begin{align*}
    \Pr[\textsf{Exp}^{G}_{\adv}(\lambda)=1] \leq \negl[\lambda].
\end{align*}
    Then, for any QPT adversary $\adv'$:
        \begin{align*}
         \Pr_{k\leftarrow \{0,1\}^\lambda}[\textsf{Exp}^{\tilde{G}(k,\cdot)}_{\adv'}(\lambda)=1] \leq \negl[\lambda].
    \end{align*} 
 \end{lemma}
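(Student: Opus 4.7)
The plan is to argue by reduction, mirroring the proof structure of \cref{lem:gener PRG}. Suppose for contradiction that the conclusion fails: there exists a QPT adversary $\adv'$ and a polynomial $p$ such that
\begin{align*}
    \Pr_{k\leftarrow \{0,1\}^\lambda}\bigl[\textsf{Exp}^{\tilde{G}(k,\cdot)}_{\adv'}(\lambda)=1\bigr] \geq \frac{1}{p(\lambda)},
\end{align*}
while by hypothesis, for every QPT $\adv$, $\Pr[\textsf{Exp}^{G}_{\adv}(\lambda)=1] \leq \negl[\lambda]$ when $G$ is a truly random function. From $\adv'$ I would construct a distinguisher $\mathcal{B}$ against the pseudorandomness of $\tilde{G}$: given oracle access to a function $F$ (which is either $\tilde{G}(k,\cdot)$ for a uniformly sampled key $k$, or a truly random function), $\mathcal{B}$ simulates $\textsf{Exp}^{F}_{\adv'}(\lambda)$ end-to-end by forwarding every oracle query made inside the experiment to its own oracle $F$, and outputs whatever bit the experiment outputs.

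The correctness of the reduction reduces to two observations. First, by construction, $\Pr[\mathcal{B}^{\tilde{G}(k,\cdot)}=1]$ equals the winning probability of $\adv'$, which is at least $1/p(\lambda)$, whereas $\Pr[\mathcal{B}^{G}=1] \leq \negl[\lambda]$ by the assumption on $\textsf{Exp}$. Hence $\mathcal{B}$ distinguishes $\tilde{G}$ from a random function with non-negligible advantage. Second, I need to check that $\mathcal{B}$ constitutes a legitimate distinguisher under the $\PDPRF$ security definition of~\cite{ALY23}, which only guarantees non-adaptive security on distinct queries. This is precisely where the hypothesis on $\textsf{Exp}$ is used: since the experiment (together with any adversary, including $\adv'$) queries $F$ non-adaptively on pairwise distinct inputs except with negligible probability, $\mathcal{B}$ inherits this query pattern, so its queries to its own oracle conform to what the $\PDPRF$ security game permits (up to a negligible loss absorbed into the final bound).

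The main subtlety, and the only place where a careful argument is needed, is the interplay between ``non-adaptive distinct queries except with negligible probability'' and the precise form of the $\PDPRF$ security game. If the $\PDPRF$ definition demands strictly distinct non-adaptive queries, then I would modify $\mathcal{B}$ so that it aborts and outputs a fixed bit whenever the experiment would otherwise repeat or adaptively issue a query; the probability of this abort is negligible by hypothesis, so it changes $\mathcal{B}$'s distinguishing advantage by at most a negligible amount, preserving the contradiction. Putting it all together, $\mathcal{B}$ wins the $\PDPRF$ pseudorandomness game with non-negligible advantage, contradicting the security of $\tilde{G}$ and completing the proof.
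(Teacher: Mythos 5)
Your proposal is correct and matches the paper's approach: the paper merely remarks that the proof ``follows the same ideas as the proof of Lemma~\ref{lem:gener PRG}'', which is exactly the reduction you give, where the would-be adversary $\adv'$ is wrapped into a distinguisher $\mathcal{B}^F$ that simulates $\textsf{Exp}^F_{\adv'}$ and outputs the experiment's bit. You also correctly spell out the subtlety the paper leaves implicit, namely that the hypothesis on non-adaptive, distinct queries (up to negligible failure) is what makes $\mathcal{B}$ a legitimate distinguisher under the selective-security, unique-query $\PDPRF$ definition of~\cite{ALY23}, with the negligible bad event absorbed by an abort.
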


The proof follows the same ideas as the proof of \cref{lem:gener PRG}. 

\section{Quantum Public-Key Encryption}
\label{sec:pub key intro}

\subsection{Definition}
We recall the definition of quantum public-key encryption with tamper-resilient quantum public-keys~\cite{KMNY23} or simply, $\qpke$. For the sake of simplicity, we only consider bit encryption which can be further boosted to encrypt messages of arbitrary fixed length by standard parallel repetition. \anote{I think we do not need the discussion below, so I am putting them in notes instead.} \anote{In this setting, multiple key copies must be created and distributed due to the ``no-cloning theorem". Hence, we include an algorithm $\textsf{KeySend}$ describing how to send a quantum public key and an algorithm $\textsf{KeyReceive}$ describing how to process a public-key copy and extract a classical string, which we call the \emph{encryption key}, that enables secure encryption. We define our scheme with pure-state public-keys, however, mixed-state public-keys constructions exist \cite{KMNY23,BS232}, and can be defined in a similar manner.}

\begin{definition}[Quantum Public Key Encryption with tamper-resilient quantum public keys]
A \emph{quantum public encryption scheme (QPKE)} over message space $\hildd{M}$ consists of the following QPT algorithms\footnote{In~\cite{KMNY23}, $\skgen$ and $\dec$ were defined to be PPT algorithms, but we consider a more general definition allowing all algorithms to be QPT, to accommodate our construction.}: 
\begin{itemize}
    \item $\skgen(1^\lambda):$ On input a security parameter $1^\lambda$, outputs $(\sk,\vk)$, where $\sk$ is a classical private key and $\vk$ is a classical verification key from the security parameter $\lambda$. 
    \item $\pkgen(\sk):$ On input a classical secret key $\sk$, either outputs $\bot$ or outputs a quantum (possibly, mixed) state $\pk$ as the quantum public key.
    \item $\enc(\vk,\pk,b):$ On input a classical verification key $\vk$, a quantum public key $\pk$, and a classical message bit $b$, either outputs $\bot$ or a classical ciphertext $\ct$. 
    \item $\dec(\sk, \ct)$: On input a classical secret key $\sk$, and a classical cipher text $\ct$, either outputs $\bot$ or a message bit $b$.
\end{itemize}
\paragraph{Correctness}
For every $\delta(\cdot)$, a QPKE scheme is said to be $1-\delta(\lambda)$-\emph{correct} (or, $\delta(\lambda)$ correctness error) if for any message bit $b \in \{0,1\}$,
\begin{align*} \Pr{\left[
\begin{tabular}{c|c}
 \multirow{4}{*}{$b'=b\ $} &   $(\sk,\vk)\ \gets \skgen(1^\lambda)$ \\ 
 & $pk\gets \pkgen(\sk)$\\
 & $\ct\ \gets \enc(\vk,\pk,b)$\\
 & $b'\gets  \dec(\sk,\ct)$\\
 \end{tabular}\right]} \geq 1-\delta(\lambda).
\end{align*}\label{def:qpke-correctness}
\paragraph{IND-pkT-CPA security}
    A QPKE scheme $\Pi$ is $\cpatamp$ secure, i.e., satisfies tamper-proof CPA indistinguishability if for any QPT $\adv$ in security experiment $\QPKE^\cpatamp_{\Pi,\adv}$ (see \Cref{fig:pkt-cpa}), there exists a negligible function $\negl[\lambda]$ such that
    \begin{align*}
         \Pr[\QPKE^\cpatamp_{\Pi,\adv}({\lambda})=1] \leq \frac{1}{2} + \negl[\lambda].
    \end{align*}\label{def:qpke}
\end{definition}

\begin{figure}[!htb]
   \begin{center} 
   \begin{tabular}{|p{16cm}|}
    \hline 
\begin{center}
\underline{$\QPKE^{\cpatamp}_{\Pi,\adv}({\lambda})$}: 
\end{center}
\begin{enumerate}
\item Challenger $\ch$ generates $(\sk,\vk)\gets \skgen(1^\lambda)$.
    \item $\adv$ sends a number $m(\lambda)$.
    \item $\ch$ generates $\pk_i\gets \pkgen(\sk)$ for every $i\in [m(\lambda)]$, and sends $(\{\pk_i\}_{i\in[m(\lambda)]},\vk)$ to $\adv$.
    \item $\adv$ sends a state $\pk'$.
    \item $\ch$ samples a bit $b\xleftarrow{\$}  \{0,1\}$ and generates $\ct\gets \enc(\vk,\pk',b)$, and sends $\ct$ to $\adv$.
    \item $\adv$ submits a bit $b'$.
    \item The output of the experiment is $1$ if $b=b'$, and $0$ otherwise. 
\end{enumerate}
\ \\ 
\hline
\end{tabular}
    \caption{CPA security experiment for tamper-proof QPKE.}
    \label{fig:pkt-cpa}
    \end{center}
\end{figure}


\subsection{Construction}\label{subsec:construct-qpke-proof}

In this section, we show that $\qpke$ can be constructed based on $\PDPRF$.

\begin{theorem}\label{thm:qpke-from-pdprf-restated}
    Assuming the existence of a $c\log(\secpar)$-$\PRS$ for some large enough $c\in \NN$, there exists a $\cpatamp$ secure $\QPKE$ scheme (see \Cref{def:qpke}).
\end{theorem}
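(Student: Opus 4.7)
The plan is to follow the two-stage approach sketched in the technical overview, leveraging the generic transformation of~\cite{KMNY23} together with our $\SPRS$-based digital signatures.

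First, I would instantiate the generic construction of $\cpatamp$ secure $\QPKE$ from~\cite{KMNY23}, which takes as input a strongly unforgeable digital signature with unique signatures. Our digital signature scheme from $\SPRS$ (obtained via \Cref{thm:main sig} together with the strongly unforgeable variant discussed in \Cref{sec:sig 3}) is not deterministic, but each one-message signature generated by the scheme is strongly unforgeable up to $\bot$ evaluations of the underlying $\botPRF$ and $\botUOWHF$. Plugging this scheme into the template of~\cite{KMNY23} yields a scheme $\Pi$ whose correctness degrades only because of these $\bot$ events, so by choosing the pseudodeterminism parameters sufficiently small one obtains a $(1-1/\poly)$-correct $\QPKE$ scheme. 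The security argument of~\cite{KMNY23} essentially reduces $\cpatamp$ security to strong unforgeability of the signature; since our digital signatures remain strongly unforgeable (the $\bot$ evaluations only hurt correctness, not unforgeability), the same reduction gives that $\Pi$ is $\cpatamp$ secure.

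Next, to boost the inverse-polynomial correctness of $\Pi$ to negligible, I would perform a $\secpar$-fold parallel repetition to obtain $\Pi^\secpar$. Concretely, $\skgen$ would sample $\secpar$ independent key triples $(\sk_i,\vk_i)$ of $\Pi$; $\pkgen$ would run $\pkgen$ of $\Pi$ with each $\sk_i$ in parallel; encryption would encrypt $b$ independently under each of the $\secpar$ tampered public keys; decryption would decrypt each component, output $\bot$ if all components return $\bot$, output $\bot$ if the non-$\bot$ decryptions disagree, and otherwise output the common non-$\bot$ value. The correctness then amplifies: the probability that every one of the $\secpar$ independent runs outputs $\bot$ (or is incorrect) is negligible by a Chernoff bound (\Cref{thm:chernoffbound}).

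For security of $\Pi^\secpar$, I would give a standard hybrid reduction analogous to the classical proof that parallel repetition lifts single-bit CPA security to multi-key CPA security. Define hybrids $H_0,\dots,H_\secpar$ where in $H_j$ the first $j$ coordinates encrypt $0$ and the last $\secpar-j$ encrypt $b$; indistinguishability of consecutive hybrids reduces to $\cpatamp$ security of a single copy of $\Pi$, where the reduction embeds its challenge ciphertext and tampered public key into the $j$-th coordinate while simulating the other $\secpar-1$ coordinates honestly using freshly sampled key tuples. The main subtlety will be handling the tampering step: $\adv$ returns a joint state $\pk'$ across all $\secpar$ coordinates, and the reduction must decompose this into a single tampered public key for the embedded challenge plus simulated components for the rest. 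Since the reduction knows all other secret keys, it can still produce the corresponding ciphertexts on its own. Combining \Cref{thm:main sig}, \Cref{cor:bot-PRF-from-SPRS}, and the amplification above yields the theorem.
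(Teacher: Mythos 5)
Your proposal takes essentially the same two-step route as the paper: first instantiate the~\cite{KMNY23} generic transformation with the $(1-1/\poly)$-correct strongly unforgeable signature from \Cref{thm:main sig} to obtain a $(1-1/\poly)$-correct $\cpatamp$-secure scheme $\Pi$, then amplify correctness to negligible error via $\secpar$-fold parallel repetition with the exact same decryption rule (abort if all coordinates return $\bot$ or if the non-$\bot$ decryptions disagree), and prove $\cpatamp$ security of $\Pi^\secpar$ by an LOR-style hybrid argument embedding the single-copy challenge in one coordinate. Two small notes: the correctness amplification is just the product bound $\delta^\secpar$ rather than a Chernoff bound, and your observation about the adversary's tampered state potentially being entangled across coordinates is a subtlety the paper glosses over but that the reduction handles exactly as you say, since each coordinate's $\enc$ acts only on its own part of the returned register.
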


\begin{proof}
The proof follows by combining \Cref{thm:qpke-from-pdprf-restated-imperfect-correctness,thm:qpke-parallel-repetition-lift}.
\qed
\end{proof}

\begin{theorem}\label{thm:qpke-from-pdprf-restated-imperfect-correctness}
    Assuming the existence of a $c\log(\secpar)$-$\PRS$ for some large enough $c\in \NN$, there exists a $(1-\frac{1}{\secpar})$-correct $\QPKE$ scheme that is $\cpatamp$ secure (see \Cref{def:qpke}).
\end{theorem}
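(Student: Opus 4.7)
The plan is to invoke the generic construction of~\cite{KMNY23} that builds $\cpatamp$-secure QPKE from a strongly unforgeable digital signature scheme with unique signatures, and instantiate it with the $(1/4)$-correct strongly unforgeable (SUF) digital signature scheme obtained from $\SPRS$ in \cref{sec:sig 3} (\emph{prior} to the correctness-amplification step that sacrifices strong unforgeability in favor of standard unforgeability). Since \cref{cor:bot-prg-from-sprs} and \cref{cor:bot-PRF-from-SPRS} show that $\botPRG$ and $\botPRF$ are available from $c\log(\secpar)$-$\PRS$ for suitably large $c$, and \ifnum\iacr=1 Supplement~\ref{sec:botUOWHF construction}\else \cref{sec:botUOWHF construction}\fi gives $\botUOWHF$, the underlying SUF scheme exists from $\SPRS$. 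Thus the whole task reduces to verifying that the~\cite{KMNY23} transformation tolerates two features of our signatures: (i) the signing algorithm occasionally outputs $\bot$, and (ii) although signatures are \emph{unique up to $\bot$} (since each $\botPRF$/$\botUOWHF$ evaluation returns either a fixed value or $\bot$), they are not literally deterministic.

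First I would recall the~\cite{KMNY23} construction at a high level: the classical verification key $\vk$ includes the public key of the signature scheme, and each quantum public key issued by $\pkgen(\sk)$ is a quantum state together with a classical signature (under $\sk$) that authenticates the components of the public key needed for encryption. The encryption algorithm $\enc(\vk,\pk',b)$ first checks this embedded signature against $\vk$, aborting to $\bot$ on failure, and only then uses $\pk'$ to produce a ciphertext; the decryption is classical and uses $\sk$. Tamper-resilience is reduced to SUF of the signature scheme: any effective modification of $\pk$ by the adversary would require producing a new valid signature.

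The adaptation is syntactically straightforward: wherever $\pkgen$ or $\enc$ calls the signer or verifier and receives $\bot$, it propagates $\bot$ as its own output. For correctness, I would bound the failure probability as follows. An honest encryption/decryption fails only if (a) $\pkgen$ itself outputs $\bot$ because an internal signing call aborts, or (b) the embedded signature fails to verify in $\enc$, or (c) the decryption step fails. Each of these is controlled by the $(1-\frac{1}{\poly})$-correctness of the base signature; by choosing the $\botPRF$ and $\botUOWHF$ parameters so that the underlying signature has correctness $1-\frac{1}{q(\secpar)}$ for a sufficiently large polynomial $q$, a union bound over the constantly many signing/verification invocations in $\pkgen$, $\enc$, $\dec$ gives overall correctness at least $1-\frac{1}{\secpar}$, as required. (Note that we can freely choose $q$ as large a polynomial as we wish by tuning the $\SPRS$ parameters from \cref{cor:bot-prg-from-sprs,cor:bot-PRF-from-SPRS}.)

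For $\cpatamp$ security, I would mimic the reduction in~\cite{KMNY23}: any QPT adversary $\adv$ that wins $\QPKE^{\cpatamp}_{\Pi,\adv}$ with non-negligible advantage is converted into an SUF forger against the signature scheme. The main obstacle here is the mismatch between \emph{unique} signatures (as demanded in~\cite{KMNY23}) and our unique-\emph{up-to}-$\bot$ signatures. I would handle this by observing that the reduction in~\cite{KMNY23} only uses uniqueness to argue that the signature embedded in the adversarially-submitted $\pk'$ must equal the honestly-generated one whenever it verifies; in our setting, an adversary who supplies any signature that verifies under $\vk$ on a message not honestly signed already yields an SUF-forger, while a signature that \emph{equals} an honestly-produced one on the correct message preserves the ideal behaviour used in the security analysis. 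The case where the honest signer returned $\bot$ on that message is captured by the $\bot$-recognizability of our scheme, and can be absorbed into the $1/\poly$ correctness slack without affecting the distinguishing advantage by more than a negligible amount after a standard hybrid argument. Putting these together gives the theorem, with the full proof deferred to Supplement~\ref{subsec:construct-qpke-proof}.
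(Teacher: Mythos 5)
Your proposal is correct and takes essentially the same route as the paper: instantiate the~\cite{KMNY23} generic construction with the pre-amplification SUF signature scheme from \cref{sec:sig 3} (parametrized to have inverse-polynomial correctness error), argue correctness of the resulting QPKE from the signature's unique-up-to-$\bot$ property, and observe that the~\cite{KMNY23} security reduction tolerates the occasional $\bot$ since it only makes the adversary's task harder. Your treatment of the unique-vs.-unique-up-to-$\bot$ subtlety and the explicit union bound over signing/verification calls is somewhat more detailed than the paper's terse argument, but the underlying reasoning is the same.
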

\begin{proof}
By \Cref{thm:main sig}, assuming the existence of a $c\log(\secpar)$-PRSs with sufficiently large $c$ and $\secpar$, there exists a $(1-\frac{1}{\secpar})$-correct digital signature scheme for messages of length $\secpar$ that is strongly unforgeable and has unique non-$\bot$ signatures, meaning the signature string for any message $m$ is either a fixed string $\sigma$ or $\bot$, except with negligible probability.

    We use this $(1-\frac{1}{\secpar})$-correct digital signature with the above-mentioned properties to instantiate the same generic construction of $\qpke$ in~\cite{KMNY23}. It is easy to see that the resulting $\QPKE$ scheme is $(1-\frac{1}{\secpar})$-correct by the unique non-$\bot$ signatures and the fact that for a fixed message the signature is not $\bot$ except with probability $\frac{1}{\secpar}$. 
    
    Next note that in the $\cpatamp$-security game for the construction in~\cite{KMNY23}, having a $(1-\frac{1}{\secpar})$-correct digital signature instead of a digital signature with negligible correctness only means that in some cases, the public key given by the challenger to the adversary at the very beginning might be $\bot$, which makes it only harder for the adversary. Hence it is easy to see that if the construction in~\cite{KMNY23} is $\cpatamp$ secure when instantiated with a digital signature with negligible correctness error and strong unforgeability, then so is the case when the construction is instantiated with a digital signature with $(1-\frac{1}{\poly})$-correctness, unique non-$\bot$ signatures and strong unforgeability, as is the case with our construction. By~\cite[Theorem 6.1]{KMNY23}, the construction in~\cite{KMNY23} is $\cpatamp$ secure when instantiated with a digital signature with negligible correctness error, unique signatures and strongly unforgeability is $\cpatamp$ secure. Therefore, the same holds for our construction as well. 
    \qed
\end{proof}

\begin{theorem}\label{thm:qpke-parallel-repetition-lift}
    Suppose $\Pi$ is a $(1-\delta(\secpar)$-correct $\QPKE$ that  is $\cpatamp$ secure, where $\delta(\secpar)\in \poly$. Let $q=q(\secpar)\equiv \secpar.$ Define $\Pi^q$ to be $q$-parallel repetition of $\Pi$, i.e., 
    \begin{enumerate}
        \item $\Pi^q.\skgen\equiv(\Pi.\skgen)^{\tensor q}$.
        \item $\Pi^q.\pkgen\equiv(\Pi.\pkgen)^{\tensor q}$.
        \item $\Pi^q.\enc\equiv(\Pi.\enc)^{\tensor q}$.
        \item $\Pi^q.\dec((\sk_1,\ldots,\sk_q),(\ct_1,\ldots,\ct_q))$:
        \begin{itemize}
            \item For every $i\in [q(\secpar)]$, generate $b_i\gets \Pi.\dec(\sk_i,\ct_i)$, and if $b_i\neq \bot$, add $i$ to $S$. 
            \item Output $\bot$ if $S=\emptyset$ or $b_i\neq b_j$ for some $i,j\in S$, else output $b_i$ where $i\equiv\min(S)$.  
        \end{itemize}
    \end{enumerate}
    Then,
    \begin{enumerate}
        \item $\Pi^{q(\secpar)}$ has negligible correctness error.
        \item $\Pi^{q(\secpar)}$ is $\cpatamp$ secure.
    \end{enumerate}
\end{theorem}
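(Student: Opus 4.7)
The theorem splits cleanly into a correctness part and a security part, and I plan to treat them independently. For correctness, I would decompose the per-copy error $\delta$ of $\Pi$ into two contributions: the probability $\gamma$ that $\Pi.\dec$ outputs $\bot$ and the probability $\beta$ that it outputs the wrong bit $1-b$, with $\gamma + \beta \leq \delta$. Because $\Pi^q.\dec$ misfires only when some coordinate returns $1-b$ (causing either a disagreement in $S$ or a direct wrong output) or all $q$ coordinates abort (so $S = \emptyset$), a union bound combined with the independence of the $q$ key pairs gives a total correctness error of at most $q\beta + \gamma^q$. For the $\Pi$ constructed in \Cref{thm:qpke-from-pdprf-restated-imperfect-correctness}, the underlying strongly unforgeable signature has unique non-$\bot$ signatures, which forces $\beta$ to be negligible, and instantiating with $\delta = 1/\secpar$ and $q = \secpar$ yields a correctness error at most $\secpar \cdot \negl + \secpar^{-\secpar}$, which is negligible.

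For $\cpatamp$ security I plan to use a standard coordinate-by-coordinate hybrid argument. Define games $G_0, G_1, \ldots, G_q$ identical to the $\cpatamp$ game for $\Pi^q$ (\Cref{fig:pkt-cpa}) except that in the challenge step the challenger encrypts the fixed bit $1$ in positions $1, \ldots, i$ and the fixed bit $0$ in positions $i+1, \ldots, q$, ignoring the hidden bit entirely. The adversary's view conditioned on $b = 0$ (resp.\ $b = 1$) in the real game coincides exactly with $G_0$ (resp.\ $G_q$), so the $\cpatamp$ advantage of any QPT $\adv$ against $\Pi^q$ equals (up to a factor of $1/2$) the distinguishing gap between $G_0$ and $G_q$, and by the triangle inequality it suffices to bound the gap between neighbouring hybrids $G_{i-1}$ and $G_i$.

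To bound that neighbouring gap, I would embed a single-copy $\cpatamp$ challenge of $\Pi$ into coordinate $i$. Given $\vk^*$ and a pool of public-key copies from the external challenger, the reduction $\adv^*$ samples $(\sk_j, \vk_j) \gets \skgen(1^\secpar)$ honestly for every $j \neq i$, sets $\vk_i := \vk^*$, runs $\adv$ on $(\vk_1, \ldots, \vk_q)$, and answers the parallel-copy request by generating copies itself in positions $j \neq i$ using the $\sk_j$'s and forwarding the externally supplied copies in position $i$. When $\adv$ submits tampered $(\pk'_1, \ldots, \pk'_q)$, $\adv^*$ relays $\pk'_i$ to the external challenger, computes $\ct_j \gets \enc(\vk_j, \pk'_j, 1)$ for $j < i$ and $\ct_j \gets \enc(\vk_j, \pk'_j, 0)$ for $j > i$, and inserts the received ciphertext at coordinate $i$. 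If the external hidden bit is $0$ the joint distribution is exactly $G_{i-1}$, and if it is $1$ it is exactly $G_i$, so $\adv$'s distinguishing gap transfers verbatim into $\adv^*$'s $\cpatamp$ advantage against $\Pi$, which is negligible. The main obstacle I anticipate is not in this hybrid argument, which is routine, but in the correctness analysis: the bound $q\beta + \gamma^q$ does \emph{not} shrink to negligible from $(1-\delta)$-correctness alone, so the proof genuinely depends on the fact that the instantiating scheme from \Cref{thm:qpke-from-pdprf-restated-imperfect-correctness} only ever errs to $\bot$ and never to the complementary bit; I would flag this ``unique non-$\bot$ decryption'' feature as an implicit hypothesis on $\Pi$ when stating the lemma.
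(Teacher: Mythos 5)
Your security argument is essentially the paper's: the paper likewise recasts $\cpatamp$ as an indistinguishability game, runs a coordinate-by-coordinate hybrid over the $q$ positions, and reduces each neighbouring gap to a single $\cpatamp$ instance of $\Pi$ with a $1/q$ loss. Your reduction spells out what the paper leaves as ``standard hybrid arguments \ldots\ essentially the same as \ldots\ parallel repetition of CPA-secure bit-encryption,'' and the embedding you describe (honest keys at $j\neq i$, the external challenge at coordinate $i$) is the right one.

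On correctness you are more careful than the paper, and you have spotted a genuine gap. The paper asserts that $\Pi^q$ succeeds iff \emph{at least one} of the $q$ parallel decryptions succeeds, concluding a failure probability of $\delta^q$. That equivalence is false for a generic $(1-\delta)$-correct $\Pi$: by the stated $\Pi^q.\dec$, a single coordinate returning the complementary bit $1-b$ alongside a correct coordinate forces a disagreement in $S$ and hence a $\bot$, so the union of the $q$ per-coordinate success events does \emph{not} cover $\Pi^q$'s success event. As you observe, the $q\beta + \gamma^q$ decomposition shows that one must additionally assume the per-copy error is (up to negligible) $\bot$-only, i.e.\ $\beta = \negl$; only then does the bound collapse to $\gamma^q + \negl \cdot q$, which is negligible for $\gamma = 1/\secpar$, $q=\secpar$. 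This hypothesis does hold for the scheme produced by \Cref{thm:qpke-from-pdprf-restated-imperfect-correctness} because of unique non-$\bot$ signatures, so the end-to-end claim (\Cref{thm:qpke-from-pdprf-restated}) is fine, but you are right that \Cref{thm:qpke-parallel-repetition-lift} as stated omits this hypothesis and the paper's one-line correctness argument silently uses it. Flagging it as an explicit assumption on $\Pi$ is the correct fix.
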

\begin{proof}
The proof of the correctness follows by noting that the correctness experiment of $\Pi^q$ is the same as $q$-parallel correctness experiment of $\Pi$ where success for the experiment with respect to $\Pi^q$ corresponds to the event that at least one the $q$-parallel correctness experiment of $\Pi$ are successful. Hence, the probability of success in the correctness experiment for $\Pi^q$ or the correctness precision is at least $1-\delta^q$ which is overwhelming by the choice of $q$. 

We first reconsider the $\cpatamp$ security game (see \Cref{fig:pkt-cpa}) in the form of an indistinguishability game and then use standard hybrid arguments wherein we consider $q-1$ intermediate hybrids, and then a reduction to the $\cpatamp$-security of $\Pi$ with a loss of $\frac{1}{q}$ on the success probability. The hybrid argument is essentially the same as that in the proof of LOR security guarantees of the encryption scheme, or in the proofs showing parallel repetition of CPA-secure bit-encryption implies CPA-secure encryption for arbitrary message length.
\qed
\end{proof}
\ifnum\iacr=1
    
\fi
\ifnum\iacr=1

\fi

\section{Proofs} \label{app:signatures}

\subsection{Proof of \cref{thm:OWF}}
\label{app:owf proof}
\begin{proof}
    Define the function $F$ by $F(z)=F(x,y)\coloneqq G(x)$ where the input $z\in \{0,1\}^{\ell(\lambda)}$ is parsed into $x\in \{0,1\}^{\lambda} $ and $y\in \{0,1\}^{\ell(\lambda)-\lambda}$. It is clear that $F$ satisfies the pseudo-determinism condition for $\botOWF$s (Definition \ref{def:botOWF}) by the pseudodeterminism of $G$. We just need to show that $F$ also satisfies the security condition. 

    Assume this is not true, then there exists a QPT algorithm $\adv$ and polynomial $p(\cdot)\geq 0$ such that:
    \begin{align*}
        \Pr_{z\gets \{0,1\}^{\ell(\lambda)}}[F(\adv(F(z)))=F_\top (z)]\geq \frac{1}{p(\lambda)}.
    \end{align*}
This implies: 
    \begin{align*}
        \Pr_{x\gets \{0,1\}^{\lambda}}[G(\adv(G(x)))=G_\top(x)]\geq \frac{1}{p(\lambda)}.
    \end{align*}
Define the set $\text{Im}_\top(G)$ of pseudodeterministic images of $G$ as elements $y\neq \perp$ such that there exists $x \in \{0,1\}^\lambda$ satisfying $\Pr[G(x)=y]\geq \frac{1}{2^\lambda}$. Each element of the input can map to at most $2^\lambda$ elements in $\text{Im}_\top(G)$ so $|\text{Im}_\top(G) |\leq 2^{2\lambda}$. This means $\Pr_{y\gets \{0,1\}^{\ell(\lambda)}}[y\in \text{Im}_\top(G)]\leq \negl[\lambda]$ since $\ell(\lambda)\geq 3\lambda$. 

Suppose we sample a bit $b\gets \{0,1\}$, an input $x\gets \{0,1\}^{\lambda}$, a string $y_0\gets R_G(x)$\footnote{$R_G(x)$ evaluates $G(x)$ and outputs $\perp$ if the result is $\perp$ and outputs a random string in $\{0,1\}^{\ell(\lambda)}$ otherwise.}, and $y_1\gets G(x)$. By the security of $\botPRG$, no QPT algorithm can distinguish $b$ from random given $y\coloneqq y_b$. Now, let $\textsf{A}$ be the event that $G(\adv(y)=y\neq \bot$, $\textsf{B}$ be the event that $y\in \text{Im}_\top(G)$, and $\overline{\textsf{B}}$ be the event that $y\notin \text{Im}_\top(G)$.

Then, we have the following\footnote{For any two events $\textsf{E}$ and $\textsf{E}'$, we let $\textsf{E}\textsf{E}'$ denote the event where both occur.}:
\[
   \Pr[b=0|\textsf{A}]=\Pr[b=0|\textsf{A}\textsf{B}]\Pr[\textsf{B} |\textsf{A}] +\Pr[b=0|\textsf{A} \overline{\textsf{B}} ]\Pr[\overline{\textsf{B}} |\textsf{A}] \leq \negl[\lambda].
\]
This is because 
$\Pr[b=0|\textsf{A}\textsf{B}]\leq \negl[\lambda]$ since in event $A$, $y\neq \bot$ but $\Pr_{y\gets \{0,1\}^{\ell(\lambda)}}[y\in \text{Im}_\top(G)]\leq \negl[\lambda]$. Furthermore, $\Pr[ \overline{\textsf{B}}|\textsf{A}]\leq \negl[\lambda]$ by definition of $\text{Im}_\top(G)$. 
Now, we construct a QPT algorithm $\adv'$ which can distinguish outputs of $G$ from random.  

\smallskip \noindent\fbox{%
    \parbox{\textwidth}{%
\textbf{Algorithm} $\adv'(y)$:
        \begin{enumerate}
            \item Run $\adv(y)$ and let $x'$ be the result. 
            \item If $G_\top(x')=y$ then output 1 and output 0 otherwise. 
        \end{enumerate}}}
        
It is easy to see that $\adv'$ can distinguish $b$ from random using $\adv$ and break $\botPRG$ security, yielding a contradiction. Hence, $F$ satisfies the security condition of $\botOWF$s. 
\qed
\end{proof}

\subsection{Proof of Theorem \ref{thm:pdprg_implies_botPRG}}
\label{sec:proof_pdprg_implies_botPRG}
\begin{proof}[Proof of \cref{thm:pdprg_implies_botPRG}]
    The theorem follows by combining these two propositions: \begin{enumerate}
        \item Weak $\botPRG$ from weak $\PDPRG$ (\cref{prop:PropPDPRG1}). 
        \item  Multi-time $\botPRG$ from weak $\botPRG$ assuming $\ell(\secpar)>\secpar^2$ (\cref{prop:PropPDPRG2}).
    \end{enumerate}
    \qed
\end{proof}

\begin{proposition}[$\botPRG$ from $\PDPRG$] 
    Let $G$ be a weak $\epsilon$-pseudorandom $(\mu,\nu,\ell)$-$\PDPRG$, then there exist a $(\mu,\ell)$-$\botPRG$, described in \cref{con:weak_botPRG_from_weak_pdprg}, with weak pseudorandomness $(\epsilon(\secpar)+\mu(\secpar) +\nu(\secpar)+\negl)$, for some negligible function $\negl$.
    \label{prop:PropPDPRG1}
\end{proposition}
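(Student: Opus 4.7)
The construction from Construction~\ref{con:weak_botPRG_from_weak_pdprg} (sketched in the technical overview in \cref{ssec:tech_overview}) runs the underlying weak $\PDPRG$ $G$ independently $\secpar$ times on the same seed $k$, outputs the unique value appearing in at least $0.6\,\secpar$ of the evaluations when such a value exists, and otherwise outputs $\bot$. My plan is to verify the three pseudodeterminism conditions of \cref{def:bot-PRG} separately, and then bound the weak pseudorandomness by a short hybrid argument.

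Taking the good set of the new generator to be exactly the good set $\Klam$ of $G$, condition~\ref{def:bot-pseudodeterminism-part1} is immediate. For condition~\ref{def:bot-pseudodeterminism-part2}, let $y_k$ denote the dominant output of $G(k)$ for $k\in\Klam$, so $\Pr[G(k)=y_k]\geq 1-\nu$. Since $\nu=O(\secpar^{-c})$ tends to zero, a Chernoff bound (\cref{thm:chernoffbound}) applied to the $\secpar$ independent samples of $G(k)$ shows that the count of $y_k$ exceeds $0.6\,\secpar$ except with probability $e^{-\Omega(\secpar)}$, so the new generator outputs $y_k$ with overwhelming probability.

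Condition~\ref{def:bot-pseudodeterminism-part3} is the main obstacle, since no structural information is available for bad seeds. The central observation is that the threshold $0.6$ strictly exceeds $1/2$, so at most one value can meet it. Fixing any seed $k$ and writing $p_v := \Pr[G(k)=v]$ and $X_v$ for the count of $v$ among the $\secpar$ samples, I would split into two cases. If some $y^\ast$ satisfies $p_{y^\ast}\geq 1/2$, I would set $y := y^\ast$ and exploit the inclusion $\{\exists v'\neq y^\ast : X_{v'}\geq 0.6\,\secpar\}\subseteq \{X_{y^\ast}\leq 0.4\,\secpar\}$ (valid because the $X_v$ sum to $\secpar$), combined with a Chernoff lower-tail estimate on $X_{y^\ast}$. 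If instead every $p_v<1/2$, I would pick $y$ arbitrarily and use the disjoint-majorities identity $\Pr[\exists v : X_v\geq 0.6\,\secpar] = \sum_v \Pr[X_v\geq 0.6\,\secpar]$, splitting the sum into heavy atoms with $p_v\geq 0.1$ (at most ten of them, each contributing $e^{-\Omega(\secpar)}$ by Chernoff since $p_v<0.6$) and light atoms with $p_v<0.1$, for which the factorial-moment bound $\Pr[X_v\geq 0.6\,\secpar]\leq \binom{\secpar}{\lceil 0.6\,\secpar\rceil}\,p_v^{\lceil 0.6\,\secpar\rceil}$ together with $\sum_v p_v^{0.6\,\secpar}\leq (0.1)^{0.6\,\secpar-1}$ yields a combined contribution of $2^{-\Omega(\secpar)}$. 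The care required here (in particular, the disjoint-majority identity and the two-regime split) will be the most delicate part of the write-up.

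For weak pseudorandomness, I would insert an intermediate game in which the distinguisher is given a single fresh output $G(k)$ in place of the new generator's output $G'(k)$. The advantage between this intermediate game and the uniform game is at most $\epsilon$ by the assumed weak $\epsilon$-pseudorandomness of $G$, since on this path the distinguisher never sees $\bot$ and so can be viewed as a valid $\PDPRG$ distinguisher. The remaining gap, between $G'(k)$ and $G(k)$, is at most $\mu+\nu+\negl$: conditional on $k\in\Klam$ (probability $\geq 1-\mu$), both distributions concentrate on the same value $y_k$ with respective error terms $\negl$ (from condition~\ref{def:bot-pseudodeterminism-part2}) and $\nu$ (from the pseudodeterminism of $G$), so their total variation distance is at most $\nu+\negl$; on the remaining $\mu$-mass of bad seeds the total variation is trivially bounded by $1$. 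The triangle inequality then delivers the claimed advantage $\epsilon+\mu+\nu+\negl$.
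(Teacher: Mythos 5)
Your proposal is correct and follows the paper's structure closely in three of the four parts: condition~\ref{def:bot-pseudodeterminism-part1} by reusing $\Klam$, condition~\ref{def:bot-pseudodeterminism-part2} by a Chernoff bound on the dominant value of a good key, and the weak pseudorandomness by the same three-hybrid argument (vote output, single $\PDPRG$ output, uniform) with error terms $\mu+\nu+\negl$ and $\epsilon$. Where you diverge from the paper is in condition~\ref{def:bot-pseudodeterminism-part3}, the bad-key case. The paper factors the analysis through two abstract lemmas: one (\cref{vote-50}) showing that for any set $S_k$ of probability mass at most $1/2$ under $G(k)$ the vote lands in $S_k$ with probability $e^{-\Omega(\secpar)}$, and a combinatorial partitioning lemma (\cref{lem: Set-Division}) showing that when no single atom has mass $\geq 1/2$ the whole output space can be covered by three sets of mass at most $1/2$ each; a union bound over the three sets then closes the argument. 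You instead work directly with the atom probabilities: in the dominant case you use the inclusion into the lower-tail event for $y^*$, and in the non-dominant case you exploit the disjoint-majorities identity $\Pr[\exists v: X_v\geq 0.6\secpar]=\sum_v\Pr[X_v\geq 0.6\secpar]$ and a heavy/light split with a binomial factorial-moment bound for the light atoms. Both routes are sound; the paper's buys uniformity (both cases reduce to the same Chernoff lemma applied to at most three sets), while yours is more self-contained and avoids stating the auxiliary partitioning lemma at the cost of a somewhat more delicate light-atom calculation. One small caveat worth flagging in a write-up: your heavy-atom and \cref{vote-50}-style bounds both invoke Chernoff with $\delta=0.6/p_v-1$, which can exceed $1$ when $p_v$ is small, so you should cite a Chernoff variant valid for $\delta\geq 1$ rather than the $0<\delta<1$ version that \cref{thm:chernoffbound} states (this issue is present in the paper's own use as well).
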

\begin{proof}
    Consider the following construction.
    \begin{definition}[Voting Algorithm]
    $\textsf{vote}_\secpar(y_1,...,y_\secpar)$ outputs $y$ if at least $60\%$ of $y_i$ for $i \in [\secpar]$ satisfy $y_i = y$, and outputs $\bot$ if no such $y$ exists.
    \end{definition}
\begin{construct}[Weak $\botPRG$]
    Let $G$ be some weak $(\mu,\nu,\ell)$-$\PDPRG$.
    We define a family of QPT algorithms $\Gvote = \{\Gvote_\secpar\}_{\secpar\in\mathbb{N}}$ given by $\Gvote_\secpar(k)\coloneqq\textsf{vote}_\secpar(y_1,...,y_\secpar)$, 
    where $y_1 \gets G_\secpar(k),...,y_\secpar \gets G_\secpar(k)$.
    \label{con:weak_botPRG_from_weak_pdprg}
\end{construct}
    We will now show the construction above is a $(\mu,\ell)$-$\botPRG$ with weak pseudorandomness $(\epsilon(\secpar)+\mu(\secpar) +\nu(\secpar)+\negl)$.
    \begin{itemize}
        \item \textbf{Pseudodeterminism:} 
        \begin{itemize}
            \item \cref{def:bot-pseudodeterminism-part1}: We define for all $\secpar$, $\Klam'=\Klam$, where $\Klam$ is the set from \cref{def:PD-PRG} for $G$. Then \cref{def:bot-pseudodeterminism-part1} holds trivially.
            \item \cref{def:bot-pseudodeterminism-part2}: By \cref{def:pseudodeterminism-part2} for G, for all $k\in\Klam'$, there exists a $y\in\{0,1\}^{\ell(\secpar)}$ such that,
            \begin{align*}
                \Pr[Y\gets G_\secpar(k) : Y = y]\geq 1- \nu(\secpar)
            \end{align*}
            We define for all $i\in [\secpar]$, the Poisson trial 
            \begin{align*}
                X_i=
                \begin{cases}
                1, \ Y_i = y
                \\ 0, \ otherwise.
                \end{cases}
            \end{align*}
            where $Y_i$ are defined as in \cref{con:weak_botPRG_from_weak_pdprg}.
            For sufficiently large $\secpar$, for all $i\in [\secpar]$,
            \begin{align*}
                 \Pr[X_i = 1] = \Pr[Y_i\gets G_\secpar(k):Y_i=y] \geq 0.9.
            \end{align*}
            By \cref{thm:chernoffbound}, 
            \begin{align}
               \Pr[Y\neq y : Y\gets \Gvote_\secpar(k)]  
               \leq e^{-\secpar/5}=\negl. \label{eq:GoodVoteProb}
            \end{align}
            \item \cref{def:bot-pseudodeterminism-part3}: We use two combinatorial lemmas for this item, which are proved in \cpageref{proof:vote-50}.
            \begin{lemma}
                Let $G$ be some $\PDPRG$ with stretch $\ell(\secpar)$. Let $k\in\{0,1\}^\secpar$ and let $S_k\subset\{0,1\}^{\ell(\secpar)}$ such that 
                \begin{align}
                    \Pr\left[ Y \gets G_\secpar(k) :Y \in S_k \right]\leq 0.5. 
                    \label{eq:skprobability}
                \end{align} 
                Then $\Pr\left[Y \gets \Gvote_\secpar(k): Y\in S_k\right] \leq e^{-\secpar/100}$.
                \label{vote-50}
            \end{lemma}
            \begin{lemma}\label{lem: Set-Division}
                Let $k\in \{0,1\}^\secpar $ and $G$ be $\PDPRG$ such that for all $y\in\{0,1\}^{\ell(\secpar)}$, $\Pr\left[ Y\gets G_\secpar(k):Y = y\right]< 0.5$. Then it is possible to divide $\{0,1\}^{\ell(\secpar)}$ into 3 sets $S_1,S_2,S_3$ such that for any $i \in [3]$ it holds $\Pr\left[Y \gets G_\secpar(k): Y \in S_i\right] \leq 0.5$.
            \end{lemma}
            
            Let $k\in\{0,1\}^\secpar$, and let $y\in\{0,1\}^{\ell(\secpar)}$ such that $y$ is a most probable outcome of $G_\secpar(k)$. Consider the following cases.
            \begin{enumerate}
                \item $\Pr\left[Y\gets G_\secpar(k): Y=y\right] \geq 0.5$. 
            
                Let $S =\{0,1\}^{\ell(\secpar)}\setminus\{y\}$, then,
                \begin{align*}
                    \Pr\left[Y\gets G_\secpar(k): Y\in S\right] \leq 0.5.
                \end{align*}
                By \cref{vote-50},
                \begin{align*}
                    \Pr\left[Y\gets \Gvote_\secpar(k): Y\in S\right] \leq e^{-\secpar/100} = \negl.
                \end{align*}
                Therefore for the compliment part,
                \begin{align*}
                    \Pr\left[Y\gets \Gvote_\secpar(k) : Y\in \{y,\bot\} \right] \geq 1 - e^{-\secpar/100}=1- \negl.
                \end{align*}
                \item $\Pr\left[Y\gets G_\secpar(k): Y=y\right]< 0.5$. 
            Then for all $y'\in\{0,1\}^{\ell(\secpar)}$, it holds, 
            \begin{align*}
                {\Pr\left[Y\gets G_\secpar(k):Y=y' \right]< 0.5}.
            \end{align*}
            By \cref{lem: Set-Division}, it is possible to divide $\{0,1\}^{\ell(\secpar)}$ into 3 sets $S_1,S_2,S_3$ such that for any $i \in [3]$,
            \begin{align*}
                \Pr\left[Y\gets G_\secpar(k): Y\in S_i\right] < 0.5.
            \end{align*} 
            By \cref{vote-50},
            \begin{align*}
                \Pr\left[Y \gets \Gvote_\secpar(k) : Y\in S_1\cup S_2 \cup S_3\right]\leq 3\cdot e^{-\secpar/100}.
            \end{align*} 
            Therefore for the compliment part, 
            \begin{align}
                \Pr[Y \gets \Gvote_\secpar(k): Y = \bot]\geq 1 -3\cdot e^{-\secpar/100} =1-\negl.
            \end{align}
        \end{enumerate}
    \end{itemize} 
    \item\textbf{Pseudorandomness:} Consider the following hybrids,
    \begin{itemize}
        \item $H_0 = \Pr_{k\xleftarrow{\$} \{0,1\}^\secpar}\left[ y\gets \Gvote_\secpar(k): \adv(y)=1 \right] $
        \item $H_1 = \Pr_{k\xleftarrow{\$} \{0,1\}^\secpar}\left[y\gets G_\secpar(k) : \adv(y) =1\right]$
        \item $H_2 = \Pr_{y\xleftarrow{\$} \{0,1\}^{\ell(\secpar)}}\left[\adv(y) =1\right]$
    \end{itemize}
    
    By the $\epsilon(\secpar)$-pseudorandomness of $G$ (see \cref{def:pdprg_pseudorandomness}), for all QPT $\adv$, there exists $\secpar_0$ such that the following holds for any $\secpar\geq \secpar_0$:
    \begin{equation}
        \left|H_2 -H_1\right| \leq \epsilon(\secpar).
    \end{equation}
    By \cref{def:PD-PRG}, \cref{def:pseudodeterminism-part2}, for any $k\in\Klam$, there is a $y\in\{0,1\}^{\ell(\secpar)}$ such that, $\Pr[Y\gets G_\secpar(k): Y=y]\geq 1-\nu(\secpar)$. By \cref{eq:GoodVoteProb}, $\Pr[Y\gets \Gvote_\secpar(k): Y=y]\geq 1-\negl$.
    
    Therefore,
    \begin{align*}
        \Pr\left[\begin{matrix}
            Y_1 \gets G_\secpar(k)
            \\ Y_2 \gets \Gvote_\secpar(k)
        \end{matrix} : Y_1=Y_2 =y\right] 
        \geq 1 - \nu(\secpar) - \negl.
    \end{align*}
    
    By using the above with \cref{def:pseudodeterminism-part1}, we can switch the statement for all $k\in \Klam$, to a random $k\in \{0,1\}^\secpar$, with a correction of $\mu(\secpar)$, i.e.: 
    \begin{align*}
    \Pr_{k\xleftarrow{\$} \{0,1\}^\secpar}\left[ 
    \begin{matrix}
        Y_1 \gets G_\secpar(k)
        \\ Y_2 \gets \Gvote_\secpar(k)
    \end{matrix}:Y_1=Y_2\right] 
    \geq 1 -\mu(\secpar) -\nu(\secpar) - \negl. 
    \end{align*}
    The statistical closeness of $G_\secpar, \Gvote_\secpar$ we have shown above implies,
    \begin{align*}
        \left|H_1 - H_0\right| &= \left|\Pr_{k\xleftarrow{\$} \{0,1\}^\secpar}\left[ Y_1 \gets G_\secpar(k) : \adv(Y_1) =1\right] - \Pr_{k'\xleftarrow{\$} \{0,1\}^\secpar}\left[ Y_2\gets \Gvote_\secpar(k'): \adv(Y_2)=1 \right]\right| \\
         &\leq \Pr_{k\xleftarrow{\$} \{0,1\}^\secpar}\left[\begin{matrix} Y_1 \gets G_\secpar(k) \\ Y_2 \gets \Gvote_\secpar(k)
        \end{matrix} : Y_1\neq Y_2\right] \leq \mu(\secpar) +\nu(\secpar) + \negl.
    \end{align*}
    Therefore,
    \begin{align*}
        \left|H_2 - H_0\right| \leq \left|H_2 - H_1\right| + \left|H_1 - H_0\right| \leq \epsilon(\secpar)+\mu(\secpar) +\nu(\secpar) + \negl.
    \end{align*}
    \end{itemize}
    \qed
\end{proof}

Next, we will prove the two lemmas which were mentioned above:

\begin{proof}[Proof of \cref{vote-50}]\label{proof:vote-50}
    Let $k\in\{0,1\}^\secpar$ and let $S_k\subset\{0,1\}^{\ell(\secpar)}$ such that \cref{eq:skprobability} holds. Recall \cref{con:weak_botPRG_from_weak_pdprg} of $\Gvote_\secpar(k)$. We define for all $i\in [\secpar]$, the Poisson trial 
    \begin{align*}
        X_i=
        \begin{cases}
        1,  & \text{if }  y_i\in S_k\\ 
        0,  & \text{otherwise,}
        \end{cases}
    \end{align*}
    where $y_i$ are defined as in \cref{con:weak_botPRG_from_weak_pdprg}.
    Therefore, by applying the Chernoff bound in \cref{thm:chernoffbound},
    \begin{align*}
         \Pr[y \gets \Gvote(k): y\in S_k] \leq e^{-\secpar/100}. 
    \end{align*}
    \qed
\end{proof}

\begin{proof}[Proof of \cref{lem: Set-Division}]
    Consider the following algorithm for division:
    \begin{itemize}
        \item Set $i:=1$, and $S_1:=\emptyset$.
        \item For each $y\in\{0,1\}^\secpar$ do:
        \begin{enumerate}
            \item If $\Pr\left[Y \gets G_\secpar(k): Y\in S_i\right] +\Pr\left[Y \gets G_\secpar(k) :Y=y\right]\leq 0.5$, then add $y$ to $S_i$.
            \item Else: 
            \begin{enumerate}
                \item $i\gets i+1.$
                \item Set $S_i=\{y\}$. 
            \end{enumerate}
        \end{enumerate}
    \end{itemize}
    
    We will next show that the algorithm will always terminate with at most $3$ sets. By contradiction, suppose the algorithm finished with $i>3$, then:
    \begin{enumerate}
        \item  $\Pr\left[ Y \gets G_\secpar(k)\right]: Y\in S_1]+\Pr\left[ Y \gets G_\secpar(k): Y\in S_2\right]>0.5$.
        \item $\Pr\left[Y \gets G_\secpar(k): Y\in S_3 \right]+\Pr\left[Y \gets G_\secpar(k): Y\in S_4\right]>0.5$.
    \end{enumerate}
    then $\Pr\left[Y \gets G_\secpar(k): Y\in S_1 \cup S_2 \cup S_3 \cup S_4 \right]>1$ since the sets are disjoint, and this is a contradiction.
    \qed
\end{proof} 

\begin{proposition}[Multi-Time $\bot-$Pseudorandomness From Weak Pseudorandomness]
    Suppose there exists $\epsilon$-pseudorandom $(\mu,\ell)$-$\botPRG$ with stretch $\ell(\secpar)>\secpar^2$ and there exist a constant $c>1$ such that $\mu(\secpar)= O(\secpar^{-c})$. Then there exists a $(\secpar\mu,\ell')$-$\botPRG$ with stretch $\ell'(\secpar^2) = \ell(\secpar)$ and multi-time $\botPR$.
    \label{prop:PropPDPRG2}
\end{proposition}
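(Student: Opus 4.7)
The plan is to construct $G'$ by XOR-combining $\lambda$ independent evaluations of $G$ (with $\bot$-propagation), then verify the three pseudodeterminism conditions, and finally reduce multi-time $\bot$-pseudorandomness to the single-time $\bot$-pseudorandomness of the construction, amplified via a quantum adaptation of Levin's XOR lemma.

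\textbf{Construction and correctness.} Define $G'_\secpar : \{0,1\}^{\secpar^2} \to \{0,1\}^{\ell(\secpar)} \cup \{\bot\}$ by parsing the seed as $K = (k_1, \dots, k_\secpar) \in (\{0,1\}^\secpar)^\secpar$, computing $y_i \gets G_\secpar(k_i)$ for each $i$, outputting $\bot$ if any $y_i = \bot$, and outputting $y_1 \oplus \cdots \oplus y_\secpar$ otherwise. Take the good set $\mathcal{K}'_\secpar := \mathcal{K}_\secpar^\secpar$, so $\Pr[K \in \mathcal{K}'_\secpar] \geq (1-\mu)^\secpar \geq 1 - \secpar\mu$. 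For $K \in \mathcal{K}'_\secpar$, pseudodeterminism of $G$ yields canonical values $y_i^*$ with $\Pr[G(k_i) = y_i^*] \geq 1 - \negl$, so by a union bound $G'(K) = \bigoplus_i y_i^*$ except with negligible probability; the analogous union bound over Condition~\ref{def:bot-pseudodeterminism-part3} of \cref{def:bot-PRG} gives the $\{y,\bot\}$ guarantee for arbitrary $K$.

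\textbf{Multi-time $\bot$-pseudorandomness.} Fix a polynomial $q$ and a QPT distinguisher $\adv$. The crucial structural observation is that in the multi-time game the $\bot$-pattern of the $q$ outputs is identically distributed in the real and ideal worlds, because both worlds use fresh but identically distributed evaluations of $G$ on the same $k_i$'s to decide $\bot$; only the non-$\bot$ \emph{value} differs. Conditioned on $K \in \mathcal{K}'_\secpar$, all non-$\bot$ outputs equal $\bigoplus_i y_i^*$ in the real world and a single fresh uniform $u$ in the ideal world, up to a negligible event. Hence, up to an additive $\secpar\mu + \negl$ loss for conditioning on $K \in \mathcal{K}'_\secpar$ and collapsing duplicate outputs, multi-time distinguishing reduces to the single-time indistinguishability of $\bigoplus_i G(k_i)$ from a uniform string in $\{0,1\}^{\ell(\secpar)}$.

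\textbf{Single-time amplification via XOR.} Starting from only $\epsilon = 1/\poly$ single-time pseudorandomness of $G$, I would invoke a quantum adaptation of Levin's XOR lemma applied bit-by-bit: since $G(k)$ being $\epsilon$-pseudorandom implies that each individual output bit is $\epsilon$-unpredictable (any predictor yields a distinguisher), the $\secpar$-fold XOR $\bigoplus_i G(k_i)$ has every output bit $O(\secpar \cdot \epsilon^\secpar)$-unpredictable, which is negligible. A standard Goldreich--Levin-style hybrid over the $\ell(\secpar)$ output bits then converts per-bit unpredictability into indistinguishability of the full string from uniform with an additional $\ell$ factor, still negligible since $\ell(\secpar)$ is polynomial. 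This discharges the reduction above and yields multi-time $\bot$-pseudorandomness of $G'$.

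\textbf{Main obstacle.} The delicate point is the quantum adaptation of Levin's XOR lemma in the presence of $\bot$-aborts. The reduction from a string-distinguisher on $\bigoplus_i G(k_i)$ to a bit-predictor on a single $G(k^*)$ must plant the target $k^*$ in one coordinate and simulate the other $\secpar - 1$ evaluations (including the $\bot$-checks) honestly; the $\bot$-recognizable structure of $G$ is exactly what makes this simulation possible, since $\bot$ is publicly checkable from each $G(k_i)$. Tracking the polynomial blowups in the quantum XOR-lemma reduction (advice, oracle calls, Impagliazzo-hardcore style sampling) is where the hypotheses $\ell(\secpar) > \secpar^2$ and $c > 1$ enter: the former ensures that a $1/\poly$ starting advantage is amplified past any polynomial bound in the output length, and the latter guarantees that the residual $\secpar\mu$ pseudodeterminism loss remains $O(\secpar^{-(c-1)}) = O(\secpar^{-c'})$ for some $c' > 0$, fitting \cref{def:bot-PRG}.
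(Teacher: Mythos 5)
Your construction of $G^{\bot\oplus}$ and its pseudodeterminism verification are correct and match the paper exactly. The gap is in the reduction from multi-time to single-time $\bot$-pseudorandomness.

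You condition on the event $K \in \mathcal{K}'_\secpar = \Klam^\secpar$ (all $\secpar$ sub-keys good) and accept an additive loss of $\secpar\mu + \negl$. But $\secpar\mu(\secpar) = O(\secpar^{1-c})$ with $c > 1$ is only polynomially small, not negligible, so the resulting bound on the distinguisher's advantage is not negligible and the proof does not establish multi-time $\botPR$. The intuition you rely on — that conditioned on all keys being good, the $q$ real outputs collapse to a single fixed value and the $q$ ideal outputs collapse to a single fresh uniform value, reducing multi-time to single-time — is sound on that event, but the event is not large enough to throw away its complement. When $K \notin \mathcal{K}'_\secpar$, the output need not be $\bot$ (it is determined by $G$ applied to a mixture of good and bad keys), and without a separate argument for this case the advantage is unbounded by your analysis.

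The paper's proof avoids this by \emph{not} conditioning on all keys being good. Instead, it Bernoulli-samples which coordinates receive good vs.\ bad keys, and uses a Chernoff bound (\cref{eq:ChernoffforS}) to argue that with overwhelming probability at least $\secpar/2$ coordinates are good. It then treats the XOR over the good coordinates as a one-time pad and shows (via \cref{lem:mcpr-claim1,lem:mcpr-claim2,lem:mcpr-claim2.5,lem:mcpr-claim3}) that this pad is computationally indistinguishable from uniform \emph{even given the bad keys $K_{bad}$ as side information}, because the restricted generator $G|_{\Klam}$ still has $(1 - 2(\mu+\epsilon))$-far-from-$1$ advantage by \cref{lem:mcpr-claim1}, which the quantum XOR lemma (\cref{thm:Xor-Amp}) amplifies to negligible over $\geq \secpar/2$ copies. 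The bad-key evaluations then only affect the $\bot$-pattern, which (as you correctly observe) is identically distributed in both worlds, so the one-time-pad argument closes the game. To repair your proof, you would need to replace the all-good conditioning by a "majority-good" conditioning together with an argument that the good-key sub-XOR alone is a computational one-time pad conditioned on the bad keys, which is essentially what the paper's hybrids $H_1$ through $H_6$ carry out.

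Two smaller points: (i) the "collapsing duplicate outputs" step should be stated as a statistical indistinguishability claim (the pseudodeterminism of $G$ on good keys gives it, with error $q \cdot \negl$), and (ii) when you invoke the XOR lemma you are amplifying the advantage of a predictor for the good-key restriction, which is $2(\mu+\epsilon)$ by \cref{lem:mcpr-claim1}, not the raw $\epsilon$ of $G$, and the amplification is over the number of \emph{good} coordinates (at least $\secpar/2$), not over all $\secpar$ coordinates.
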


\begin{proof}
Consider the following construction,
\begin{construct}
    Let $G$ be a $\botPRG$ with stretch $\ell(\secpar)>\secpar^2$. Define the family of QPT Algorithms $G^{\bot\oplus} = \{G_\secpar^{\bot\oplus}\}_{\secpar\in\mathbb{N}}$
    and $G_\secpar^{\bot\oplus}(k_1,...,k_\secpar):=(\bot\bigoplus)^\secpar_{i=1}G_\secpar(k_i)$  where the operator $\bot\bigoplus_\secpar$ is defined by: 
    \[\bot\bigoplus_\secpar (x_1,...,x_\secpar) = \begin{cases}
        \bigoplus^\secpar_{i=1}x_i, & \text{if } \forall i\in [\secpar] \ x_i\neq\bot \\
        \bot, & \text{otherwise.}
    \end{cases}\]
    We sometimes use $a\bot\bigoplus b$ to denote $\bot\bigoplus_2 (a,b)$. 
    \label{con:XorbotPRG}
\end{construct}
We will now show that assuming the conditions in \cref{prop:PropPDPRG2} regarding $G$ hold, the construction above is a $(\secpar\mu,\ell')$-$\botPRG$ with stretch $\ell'(\secpar^2) = \ell(\secpar)$ and has multi-time $\botPR$.

\begin{itemize}
    \item \textbf{Pseudodeterminism:} 
    \begin{enumerate}
        \item By the pseudodeterminism \cref{def:bot-pseudodeterminism-part1} of $G$, there exists a set $\Klam\subset \{0,1\}^\secpar$ such that,
        \begin{align*}
            \Pr_{k\xleftarrow{\$} \{0,1\}^\secpar}[k\in\Klam] \geq 1-\mu(\secpar).
        \end{align*}
        Then by Bernoulli's inequality,
        \begin{align}
            \Pr_{k_1\xleftarrow{\$} \{0,1\}^\secpar,...,k_\secpar\xleftarrow{\$} \{0,1\}^\secpar}[k_1,...,k_\secpar \in\Klam] \geq (1 - \mu(\secpar))^\secpar \geq 1 - \secpar\mu(\secpar).
            \label{eq:secparpseudodeterminism}
        \end{align}
        We define $\mathcal{K}'_{\secpar^2} = \{(k_1,...,k_\secpar): k_1,...,k_\secpar\in \Klam\}$.
        Hence \cref{def:bot-pseudodeterminism-part1} holds for $\mathcal{K}'_{\secpar^2}$.
        \item Let $k' = k_1,...,k_\secpar \in \mathcal{K}'_{\secpar^2}$ and let $y_1,...,y_\secpar$ be the
        most probable outcomes of $G(k_1),...,G(k_\secpar)$, then by the same argument as \cref{eq:secparpseudodeterminism}, 
        \begin{align*}
            \Pr\left[\begin{matrix}
                Y_1\gets G_\secpar(k_1) \\
                \ldots \\
                Y_\secpar \gets G_\secpar(k_\secpar)
            \end{matrix}: 
            Y_1 =y_1,\ldots,Y_\secpar = y_\secpar\right]\geq 1 - \secpar\cdot\negl := 1-\widetilde\negl,
        \end{align*}
        This means that the most probable outcome of $G^{\bot\oplus}_\secpar(k'):=\bigoplus^\secpar_{i=1}G(k_i)$ is $y^* = \bigoplus^\secpar_{i=1} y_i$ and it holds 
        \begin{align*}
        \Pr[Y\gets G_\secpar^{\bot\oplus}(k'): Y = y^*] \geq 1 - \widetilde\negl.
        \end{align*}
        Hence \cref{def:bot-pseudodeterminism-part2} holds.
        
        \item Let $k'= (k_1,\ldots, k_\secpar) \in \{0,1\}^{\secpar^2}$, then by the pseudodeterminism of 
        $G$, $\forall i \in [\secpar]$, there exists a $y_i\in\{0,1\}^{\ell(\secpar)}$ such that, 
        \begin{align*}
            \Pr\left[Y_i \gets G_\secpar(k_i): Y_i \in \{y_i,\bot\}\right] \geq 1 - \negl.
        \end{align*} 
        Then there exists $y^*=\bigoplus^\secpar_{i=1}y_i$ such that, 
        \begin{align*}
            \Pr[Y \gets G^{\bot\oplus}_\secpar(k): Y \in \{y^*, \bot\}]\geq (1-\negl)^\secpar \geq 1- \secpar\cdot\negl = 1-\widetilde\negl.
        \end{align*}
    \end{enumerate}
        
    \item \textbf{Stretch:}
    Recall that by our assumption, the output length of $G_\secpar$, denoted $\ell(\secpar)$,  satisfies $\ell(\secpar)>\secpar^2$. The input length of $G^{\bot \oplus}_\secpar$ is $\secpar^2$ and  its output length, $\ell'(\secpar^2)=\ell(\secpar)$ and therefore $\ell'(n^2)>n^2$.
    \item \textbf{Pseudorandomness:} We will now show that $G^{\bot\oplus}$ has multi-time $\botPR$  (see \cref{def:multi-time-pseudorandomness}), by using the following hybrids:
    \begin{itemize}
        \item \textbf{$H_0$:} This hybrid is the same as the multi-time definition recast as a guessing game.
        \begin{enumerate}
            \item The challenger samples $k_1,...k_\secpar \xleftarrow{\$} \{0,1\}^\secpar$ and $y'\xleftarrow{\$}\{0,1\}^{\ell(\secpar)}$.
            \item The challenger calculates $\forall j\in[q] \ y_j\gets G^{\bot\oplus}_\secpar(k_1,...,k_\secpar)$.
            \item The challenger samples $b \xleftarrow{\$}\{0,1\}$. If $b=0$, the challenger sends the adversary $y_1,...,y_q$, and if $b=1$ then the challenger sends $\isbot(y_1,y'),...,\isbot(y_q,y')$.
            \item The adversary outputs $b'$.
            \item If $b=b'$ then the output of this hybrid is $1$.
        \end{enumerate}
        \item \textbf{$H_1$:}  Let \[p=\Pr_{k\xleftarrow{\$}\{0,1\}^\secpar}[k\in\Klam].\] 
        In this hybrid, instead of randomly sampling the keys $k_1,\ldots,k_\secpar$, the challenger first samples a Bernoulli trial with parameter $p$ that determines whether a key is from $\Klam$ or not, and then continues to sample a key accordingly. 
        Therefore, this hybrid is statistically equal to the previous hybrid. 
           
        \begin{enumerate}
            \item \sout{The challenger samples $k_1,...k_\secpar \xleftarrow{\$} \{0,1\}^\secpar$} \\
            \fbox{The challenger samples $\forall i \in[\secpar] \ x_i\gets \textbf{B}(p)$ and  $\forall i \in [\secpar]$}. \\ 
            \fbox{ The challenger samples $k_i \xleftarrow{\$} 
            \begin{cases}
                \Klam, \text{if } x_i=1 \\
                \Klam^c \coloneqq\{0,1\}^\secpar\setminus\Klam, \text{if } x_i=0
            \end{cases}$ }\\
             and $y'\xleftarrow{\$}\{0,1\}^{\ell(\secpar)}$.
           \item The challenger calculates $\forall j\in[q] \ y_j\gets G^{\bot\oplus}_\secpar(k_1,...,k_\secpar)$.
            \item The challenger samples $b \xleftarrow{\$}\{0,1\}$. If $b=0$, the challenger sends the adversary $y_1,...,y_q$, and if $b=1$ then the challenger sends $\isbot(y_1,y'),...,\isbot(y_q,y')$.
            \item The adversary outputs $b'$.
            \item If $b=b'$ then the output of this hybrid is $1$.            
        \end{enumerate}
        \item \textbf{$H_2$:}
        In this hybrid, instead of using \cref{con:XorbotPRG} we will use the given $\botPRG$ in \cref{prop:PropPDPRG2}. This hybrid is  statistically equal to the previous hybrid. 
        \begin{enumerate}
            \item The challenger samples $\forall i \in[\secpar] \ x_i\gets \textbf{B}(p)$ and  $\forall i \in [\secpar]$. \\
            The challenger samples $k_i \xleftarrow{\$} 
            \begin{cases}
                \Klam, \text{if } x_i=1 \\
                \Klam^c, \text{if } x_i=0
            \end{cases}$  \\
             and $y'\xleftarrow{\$}\{0,1\}^{\ell(\secpar)}$.
           \item \sout{The challenger calculates $\forall j\in[q] \ y_j\gets G^{\bot\oplus}_\secpar(k_1,...,k_\secpar)$.} \\
           \noindent\fbox{The challenger calculates $\forall j\in[q] \ y_j\gets \bot\bigoplus_\secpar (G_\secpar(k_1),...,G(k_\secpar))$.}
            \item The challenger samples $b \xleftarrow{\$}\{0,1\}$. If $b=0$, the challenger sends the adversary $y_1,...,y_q$, and if $b=1$ then the challenger sends $\isbot(y_1,y'),...,\isbot(y_q,y')$.
            \item The adversary outputs $b'$.
            \item If $b=b'$ then the output of this hybrid is $1$.
        \end{enumerate}
        \item \textbf{$H_3$:} In this hybrid, the evaluation on the ``good'' keys (the keys from $\Klam$) is only done once. By the pseudodeterminism property of the $\botPRG$, it is assured that the output would be the same, and not $\bot$, with overwhelming probability. Hence, the two hybrids are statistically indistinguishable.
        \begin{enumerate}
            \item The challenger samples $\forall i \in[\secpar] \ x_i\gets \textbf{B}(p)$ and  $\forall i \in [\secpar]$. \\
            The challenger samples $k_i \xleftarrow{\$} 
            \begin{cases}
                \Klam, \text{if } x_i=1 \\
                \Klam^c , \text{if } x_i=0
            \end{cases}$  \\
             and $y'\xleftarrow{\$}\{0,1\}^{\ell(\secpar)}$.
            \item \sout{The challenger calculates $\forall j\in[q] \ y_j\gets \bot\bigoplus_\secpar (G_\secpar(k_1),...,G(k_\secpar))$.} \\
            \fbox{ The challenger calculates $y^{good}\gets (\bot\bigoplus)_{i : x_i=1}G_\secpar(k_i)$.} \\
            \fbox{The challenger calculates $\forall j\in[q],\ y^{bad}_j\gets (\bot\bigoplus)_{i:x_i=0}G_\secpar(k_i)$.} \\
            \fbox{The challenger calculates $\forall j\in [q], \ y_j:= y^{good}\bot\bigoplus y^{bad}_j$}
            \item The challenger samples $b \xleftarrow{\$}\{0,1\}$. If $b=0$, the challenger sends the adversary $y_1,...,y_q$, and if $b=1$ then the challenger sends $\isbot(y_1,y'),...,\isbot(y_q,y')$.
            \item The adversary outputs $b'$.
            \item If $b=b'$ then the output of this hybrid is $1$.    

                \end{enumerate}
        \item \textbf{$H_4$:} In this hybrid the challenger samples a random string that is used instead of the outcome of the ``good'' keys.
        The computational indistinguishability from the previous hybrid is shown in \cref{lem:mcpr-claim3}.
        \begin{enumerate}
            \item The challenger samples $\forall i \in[\secpar] \ x_i\gets \textbf{B}(p)$ and  $\forall i \in [\secpar]$. \\
            The challenger samples $k_i \xleftarrow{\$} 
            \begin{cases}
                \Klam, \text{if } x_i=1 \\
                \Klam^c, \text{if } x_i=0
            \end{cases}$ \\
                and $y'\xleftarrow{\$}\{0,1\}^{\ell(\secpar)}$.
            \item \sout{The challenger calculates $y^{good}\gets (\bot\bigoplus)_{i : x_i=1}G_\secpar(k_i)$.} \\
            \fbox{The challenger samples $y^{good}\xleftarrow{\$}\{0,1\}^{\ell(\secpar)}$} \\
            The challenger calculates $\forall j\in[q],\ y^{bad}_j\gets (\bot\bigoplus)_{i:x_i=0}G_\secpar(k_i)$. \\
            The challenger calculates $\forall j\in [q], \ y_j:= y^{good}\bot\bigoplus y^{bad}_j$
            \item The challenger samples $b \xleftarrow{\$}\{0,1\}$. If $b=0$, the challenger sends the adversary $y_1,...,y_q$, and if $b=1$ then the challenger sends $\isbot(y_1,y'),...,\isbot(y_q,y')$.
            \item The adversary outputs $b'$.
            \item If $b=b'$ then the output of this hybrid is $1$.
        \end{enumerate}
        \item \textbf{$H_5$:} \lnote{Read with Or}In this hybrid we  to separate the notion of $\bot$ from the output value. In this hybrid we act as if $G$ has a probability of $\bot$ and is completely deterministic otherwise. We use the pseudodeterminism \cref{def:bot-pseudodeterminism-part3} of $G$, to show that $\forall k\in \Klam^c, \exists y_k\in\{0,1\}^{\ell(\secpar)} \ s.t \ G_\secpar(k)\in\{\bot,y_k\}$. This hybrid is statistically indistinguishable from the previous hybrid.
        \begin{enumerate}
            \item The challenger samples $\forall i \in[\secpar] \ x_i\gets \textbf{B}(p)$ and  $\forall i \in [\secpar]$. \\
            The challenger samples $k_i \xleftarrow{\$} 
            \begin{cases}
                \Klam, \text{if } x_i=1 \\
                \Klam^c, \text{if } x_i=0
            \end{cases}$  \\
                and $y'\xleftarrow{\$}\{0,1\}^{\ell(\secpar)}$.
            \item The challenger samples $y^{good}\xleftarrow{\$}\{0,1\}^{\ell(\secpar)}$ \\
            \sout{The challenger calculates $\forall j\in[q],\ y^{bad}_j\gets (\bot\bigoplus)_{i:x_i=0}G_\secpar(k_i)$.} \\
            \fbox{The challenger calculates $\forall j\in[q] \ y^{bad}_j=(\bot\bigoplus)_{i:x_i=0}\isbot (G(k_i),y_{k_i})$} \\
            The challenger calculates $\forall j\in [q], \ y_j:= y^{good}\bot\bigoplus y^{bad}_j$
            \item The challenger samples $b \xleftarrow{\$}\{0,1\}$. If $b=0$, the challenger sends the adversary $y_1,...,y_q$, and if $b=1$ then the challenger sends $\isbot(y_1,y'),...,\isbot(y_q,y')$.
            \item The adversary outputs $b'$.
            \item If $b=b'$ then the output of this hybrid is $1$.

        \end{enumerate}        
        \item \textbf{$H_6$:} In this hybrid we say $y^{good}$, which is a random string acts as a One Time Pad to $y^{bad}$. This hybrid is statistically indistinguishable from the previous hybrid. Note that in this hybrid there is no difference between $b=0$ and $b=1$, and at this point the adversary wins with probability $1/2$.
        \begin{enumerate}
            \item The challenger samples $\forall i \in[\secpar] \ x_i\gets \textbf{B}(p)$ and  $\forall i \in [\secpar]$. \\
            The challenger samples $k_i \xleftarrow{\$} 
            \begin{cases}
                \Klam, \text{if } x_i=1 \\
                \Klam^c, \text{if } x_i=0
            \end{cases}$  \\
                and $y'\xleftarrow{\$}\{0,1\}^{\ell(\secpar)}$.
            \item The challenger samples $y^{good}\xleftarrow{\$}\{0,1\}^{\ell(\secpar)}$ \\
            The challenger calculates $\forall j\in[q] \ y^{bad}_j=(\bot\bigoplus)_{i:x_i=0}\isbot (G(k_i),y_{k_i})$
            \item The challenger samples $b \xleftarrow{\$}\{0,1\}$. If $b=0$, \sout{the challenger sends the adversary $y_1,...,y_q$} \fbox{the challenger sends the adversary $\isbot(y^{bad}_1,y),...,\isbot(y^{bad}_q,y)$}, and if $b=1$ then the challenger sends $\isbot(y_1,y'),...,\isbot(y_q,y')$.
            \item The adversary outputs $b'$.
            \item If $b=b'$ then the output of this hybrid is $1$.

        \end{enumerate}
    \end{itemize} 
\end{itemize}
    \qed
\end{proof}
The following lemmas were used in the proof above to show $H_3, H_4$ are computationally indistinguishable. 
\begin{lemma}
    For all QPT $\adv$, $|H_3-H_4|\leq \negl$.
    \label{lem:mcpr-claim3}
\end{lemma}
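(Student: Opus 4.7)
The plan is to observe that $H_3$ and $H_4$ differ only in the definition of $y^{good}$---in $H_3$ it is $\bot\bigoplus_{i:x_i=1}G_\secpar(k_i)$, in $H_4$ it is a fresh uniform string on $\{0,1\}^{\ell(\secpar)}$---while every other random variable (the Bernoulli bits $x_i$, the bad keys $k_i$ for $i$ with $x_i=0$ and the corresponding $y^{bad}_j$'s, the reference string $y'$, and the challenge bit $b$) is sampled identically in the two hybrids. A straightforward reduction therefore reduces the lemma to showing that the distribution of $y^{good}$ produced in $H_3$ is negligibly computationally close to the uniform distribution on $\{0,1\}^{\ell(\secpar)}$: a distinguisher between $H_3$ and $H_4$ with advantage $\alpha$ immediately yields, via this reduction, a distinguisher of $y^{good}$ from uniform with the same advantage, because all other variables can be perfectly simulated without access to $y^{good}$.

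Next I would dispose of the $\bot$ symbol appearing in the $\bot$-XOR defining $y^{good}$ in $H_3$. Since each good key $k_i$ (for $x_i=1$) is uniform in $\Klam$, Item~\ref{def:bot-pseudodeterminism-part2} of the $\botPRG$ pseudodeterminism guarantees $\Pr[G_\secpar(k_i)=\bot]\leq \negl$. A union bound over the at most $\secpar$ good indices shows that, except with negligible probability, all good-key evaluations are non-$\bot$, so $y^{good}$ collapses to the ordinary XOR $\bigoplus_{i:x_i=1}G_\secpar(k_i)$. Moreover, the number of good indices $n_g=|\{i:x_i=1\}|$ is a sum of independent Bernoulli$(p)$ trials with $p\geq 1-\mu(\secpar)=1-O(\secpar^{-c})$ and $c>1$, so by \cref{thm:chernoffbound}, $n_g\geq \secpar/2$ holds except with probability $\exp(-\Omega(\secpar))\in\negl$. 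Conditioning on these two overwhelmingly likely events, $y^{good}$ is the ordinary XOR of at least $\secpar/2$ independent evaluations of $G_\secpar$ on keys drawn uniformly from $\Klam$.

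Finally I would apply the amplification step. The $\epsilon$-pseudorandomness of $G_\secpar$ over uniform keys transfers to $\epsilon' = (\epsilon+\mu)/(1-\mu) = O(\secpar^{-c})$-pseudorandomness when the key is restricted to $\Klam$, by absorbing the $\mu$-fraction of ``off-distribution'' keys into the distinguishing advantage. A quantum-compatible version of Levin's variant of Yao's XOR lemma then implies that the XOR of $t\geq \secpar/2$ independent $\epsilon'$-pseudorandom samples is at most $(O(\epsilon'))^{\Omega(t)}$-indistinguishable from uniform, which for $\epsilon'\in O(\secpar^{-c})$ with $c>1$ is superpolynomially small and hence negligible. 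Combining with the two $\negl$ conditioning errors above yields $|H_3-H_4|\leq \negl$. The main obstacle I expect is invoking the XOR lemma in the precise form required here: the textbook statements are usually for Boolean predicates or one-sided hardness, whereas we need a multi-bit, two-sided indistinguishability amplification against non-uniform QPT adversaries with quantum advice. One safe route is to first pass to the Boolean case through a Goldreich--Levin-style inner-product predicate and then apply the quantum XOR lemma for Boolean predicates, paying attention that each step preserves non-uniformity and the quantum-advice model of the adversary.
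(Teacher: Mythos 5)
Your outline is essentially the same as the paper's: reduce to showing that $y^{good}$ alone is pseudorandom, dispose of $\bot$'s on good keys via Item~\ref{def:bot-pseudodeterminism-part2}, use Chernoff to ensure at least $\secpar/2$ good-key evaluations, transfer single-copy pseudorandomness from uniform keys to $\Klam$ (your bound $(\epsilon+\mu)/(1-\mu)$ matches the paper's $2(\mu+\epsilon)$ from \cref{lem:mcpr-claim1}), and then apply a quantum-advice-friendly XOR amplification. However, the sentence ``all other variables can be perfectly simulated without access to $y^{good}$'' is too quick: the bad keys in $K_{bad}$ (and the conditional sampling $k_i\xleftarrow{\$}\Klam^c$) are not efficiently samplable, since membership in $\Klam$ is not decidable in QPT. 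The reduction cannot just run the simulator. The paper resolves this by first applying the law of total probability to condition on $|S|\geq \secpar/2$ (your Chernoff step), and then, in \cref{lem:mcpr-claim2.5}, observing that if the averaged-over-$(S,K_{bad})$ distinguishing advantage were non-negligible, then some fixed pair $(S,K'_{bad})$ achieves the same advantage and can be hard-wired as classical (hence quantum) advice; only then does one land in the clean setting of \cref{lem:mcpr-claim2} where the amplification theorem applies. Without this averaging-and-fixing step, your reduction is not a QPT algorithm. On the final amplification step, you propose to go via a Goldreich--Levin hard-core predicate to reach the Boolean XOR lemma; the paper instead uses the equivalence between pseudorandomness and next-bit unpredictability (\cref{thm:ALY-equivalence-unpredictability-pseudorandomness}, adapted from~\cite{ALY23}) to reduce to a correlation-to-predicate XOR lemma~\cite{GNW11}, stated directly for non-efficiently-samplable ensembles and quantum-advice circuits (\cref{thm:Xor-Amp}). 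Both routes are plausible, but the paper's version avoids the Goldreich--Levin decoding overhead and, crucially, its statement already tolerates the non-samplability of the $\Klam$-restricted ensemble, which a textbook XOR lemma would not.
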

\begin{proof}
   Let $p=\Pr_{k\xleftarrow{\$}\{0,1\}^\secpar}[k\in\Klam]$ where $\Klam$ is the set from the pseudodeterminism \cref{def:bot-PRG} for G.
   We define a distribution $D$ over $2^{[\secpar]}$: 
    \begin{enumerate}
        \item $\forall i \in[\secpar], \ x_i\gets \textbf{B}(p)$
        \item $ S: = \{i:x_i=1\}$
        \item Output $S$.
    \end{enumerate}
    We define a distribution $H(S)$: 
     \begin{enumerate}
         \item $\forall i \in [\secpar], \ k_i \xleftarrow{\$} \begin{cases}
             \Klam, \text{if } i\in S \\
             \Klam^c, \text{otherwise}
         \end{cases}$
         \item $\forall i \in S, y_i\gets G_\secpar(k_i)$
         \item $y = (\bot\bigoplus)_{i\in S} y_i$
         \item $K_{bad} = \{k_i: i \notin S \}$
         \item Output $(y, K_{bad})$.
     \end{enumerate}
    Then for all QPT $\adv$,
    \begin{align*}
         \left|\Pr\left[\begin{matrix}
            S\gets D 
            \\ (y, K_{bad}) \gets H(S)
         \end{matrix}:\adv(y,S,K_{bad})=1 \right] - \Pr\left[ \begin{matrix} 
            S\gets D \\
            (y,K_{bad})\gets H(S) \\
            y'\xleftarrow{\$}\{0,1\}^{\ell(\secpar)}
         \end{matrix}: A(y',S,K_{bad})=1\right]\right| \leq \negl.
     \end{align*}
    \lnote{Read with Or}
    We complete the proof by the law of total probability, by partitioning the above to an event and it's complement. We separate the events $|S|\geq \secpar/2$ and $|S|<\secpar/2$, using $P(A) = P(B)\cdot P(A|B)+P(B^c)\cdot P(A|B^c)$,
    \begin{align*}
    \epsilon:=&\left|\Pr\left[\begin{matrix}
            S\gets D 
            \\ (y, K_{bad}) \gets H(S)
         \end{matrix}:\adv(y,S,K_{bad})=1 \right] - \Pr\left[ \begin{matrix}
            S\gets D \\
            (y,K_{bad})\gets H(S) \\
            y'\xleftarrow{\$}\{0,1\}^{\ell(\secpar)}
         \end{matrix}: A(y',S,K_{bad})=1\right]\right| \\
        =& \Bigg| \Pr[S \gets D :|S|\geq\secpar/2] \cdot \bigg(\Pr\left[\begin{matrix}
            S\gets D_{\geq\secpar/2} \\
            (y, K_{bad}) \gets H(S)
         \end{matrix}:\adv(y,S,K_{bad})=1 \right] \\
        -& \Pr\left[ \begin{matrix}
            S\gets D_{\geq\secpar/2}\\
            (y,K_{bad})\gets H(S) \\
            y'\xleftarrow{\$}\{0,1\}^{\ell(\secpar)}
         \end{matrix}: A(y',S,K_{bad})=1\right] \bigg) \\
        + &\Pr[S \gets D : |S|<\secpar/2] \cdot \bigg(\Pr\left[\begin{matrix}
            S\gets D_{<\secpar/2} \\
            (y, K_{bad}) \gets H(S)
         \end{matrix}:\adv(y,S,K_{bad})=1 \right] \\
        -& \Pr\left[ \begin{matrix}
            S\gets D_{<\secpar/2} \\
            (y,K_{bad})\gets H(S) \\
            y'\xleftarrow{\$}\{0,1\}^{\ell(\secpar)}
         \end{matrix}: A(y',S,K_{bad})=1\right]\bigg)\Bigg|,
    \end{align*}
    where $D_{\geq\secpar/2}, D_{<\secpar/2}$ represent $D$ given either $|S|\geq\secpar/2, |S|<\secpar/2$ respectively. By triangle inequality,
    \begin{align}
    \epsilon \leq & \Pr[S \gets D :|S|\geq\secpar/2] \cdot \Bigg|\Pr\left[\begin{matrix}
            S\gets D_{\geq\secpar/2}
            \\ (y, K_{bad}) \gets H(S)
         \end{matrix}:\adv(y,S,K_{bad})=1 \right] \\
        -& \Pr\left[ \begin{matrix}
            S\gets D_{\geq\secpar/2}\\
            (y,K_{bad})\gets H(S) \\
            y'\xleftarrow{\$}\{0,1\}^{\ell(\secpar)}
         \end{matrix}: A(y',S,K_{bad})=1\right] \Bigg| \\
        +& \Pr[S \gets D : |S|<\secpar/2] \cdot \Bigg|\Pr\left[\begin{matrix}
            S\gets D_{<\secpar/2} \\
            (y, K_{bad}) \gets H(S)
         \end{matrix}:\adv(y,S,K_{bad})=1 \right] \\
        -& \Pr\left[ \begin{matrix}
            S\gets D_{<\secpar/2}\\
            (y,K_{bad})\gets H(S) \\
            y'\xleftarrow{\$}\{0,1\}^{\ell(\secpar)}
         \end{matrix}: A(y',S,K_{bad})=1\right]\Bigg|
        \label{eq:claim3proof}
    \end{align}
    Since $p\geq 1-\mu(\secpar)$, for sufficiently large $\secpar$, by the Chernoff bound in \cref{thm:chernoffbound}, 
    \begin{align}
        \Pr\left[|S|\leq\secpar/2 : S \gets D\right] \leq e^{-\secpar/5} = \textsf{negl}_1(\secpar).
        \label{eq:ChernoffforS}
    \end{align}
    By \cref{lem:mcpr-claim2.5} for all QPT $\adv$, and $S\subseteq[\secpar]$ such that $|S|>\secpar/2$
    \begin{align}
         &\Bigg|\Pr\left[\begin{matrix}
            S\gets D_{\geq\secpar/2}
            \\ (y, K_{bad}) \gets H(S)
         \end{matrix}:\adv(y,S,K_{bad})=1 \right] \\
        -& \Pr\left[ \begin{matrix}
            S\gets D_{\geq\secpar/2}\\
            (y,K_{bad})\gets H(S) \\
            y'\xleftarrow{\$}\{0,1\}^{\ell(\secpar)}
         \end{matrix}: A(y',S,K_{bad})=1\right] \Bigg| \leq \textsf{negl}_2(\secpar),
             \label{eq:negl2fromclaim2.5}
    \end{align}
    for some negligible function $\textsf{negl}_2(\secpar)$.
    Using \cref{eq:ChernoffforS,eq:negl2fromclaim2.5}, we can bound $\epsilon$ in \cref{eq:claim3proof}:
    \begin{align*}
        \epsilon&\leq \Pr[S \gets D : |S|>\secpar/2 ]\cdot \textsf{negl}_2(\secpar) \\ &+ \textsf{negl}_1(\secpar) \cdot  \Bigg|\Pr\left[\begin{matrix}
            S\gets  D_{<\secpar/2} \\
            (y, K_{bad}) \gets H(S)
         \end{matrix}:\adv(y,S,K_{bad})=1 \right] \\
        -& \Pr\left[ \begin{matrix}
            S\gets  D_{<\secpar/2}\\
            (y,K_{bad})\gets H(S) \\
            y'\xleftarrow{\$}\{0,1\}^{\ell(\secpar)}
         \end{matrix}: A(y',S,K_{bad})=1\right]\Bigg| \\ &\leq \textsf{negl}_3(\secpar),
    \end{align*}
   where $\textsf{negl}_3(\secpar) = \textsf{negl}_1(\secpar)+\textsf{negl}_2(\secpar)$.

   By the data processing inequality, where if $C$ and $C'$ are computationally indistinguishable distributions, then for any QPT algorithm $\bdv$, the distributions $\bdv(C)$ and $\bdv(C')$ are also computationally indistinguishable.
   In our case, we know that for every QPT $\adv$,
   \begin{align*}
    &\left|\Pr\left[\begin{matrix}
            S\gets D \\ 
            (y, K_{bad}) \gets H(S)
         \end{matrix}:\adv(y,S,K_{bad})=1 \right] - \Pr\left[ \begin{matrix}
            S\gets D \\
            (y,K_{bad})\gets H(S) \\
            y'\xleftarrow{\$}\{0,1\}^{\ell(\secpar)}
         \end{matrix}: A(y',S,K_{bad})=1\right]\right|
    \end{align*}
   and therefore, for every QPT algorithm $\bdv$ and QPT $\adv$
   \begin{align*}
    &\left|\Pr\left[\begin{matrix}
            S\gets D \\ 
            (y, K_{bad}) \gets H(S)
         \end{matrix}:\adv(\bdv(y,S,K_{bad}))=1 \right] - \Pr\left[ \begin{matrix}
            S\gets D \\
            (y,K_{bad})\gets H(S) \\
            y'\xleftarrow{\$}\{0,1\}^{\ell(\secpar)}
         \end{matrix}: \adv(\bdv(y',S,K_{bad}))=1\right]\right| \leq \negl
    \end{align*}
    \qed
\end{proof}
We  define $\bdv_{q(\secpar)}(y,S,K_{bad})$ as follows:
\begin{enumerate}
    \item calculate $\forall j\in[q], \ y^{bad}_j\gets (\bot\bigoplus)_{k:k\in K_{bad}}G_\secpar(k)$. \\
    \item calculate $\forall j\in[q], \ y_j:= y\bot\bigoplus y^{bad}_j$.
    \item return $\isbot(y^{bad}_1,y_1),...,\isbot(y^{bad}_q,y_q)$.
\end{enumerate}
Then,
\begin{align*}
   \left|H_3 - H_4\right|  &= \bigg|\Pr\left[\begin{matrix}
        S\gets D \\ 
        (y, K_{bad}) \gets H(S)
        \end{matrix}:\adv(\bdv_{q(\secpar)}(y,S,K_{bad}))=1 \right] 
        \\&- \Pr\left[ 
        \begin{matrix}
            S\gets D \\
            (y, K_{bad})\gets H(S) \\
            y'\xleftarrow{\$}\{0,1\}^{\ell(\secpar)}
        \end{matrix}: \adv(\bdv_{q(\secpar)}(y',S,K_{bad}))=1\right]\bigg|  \leq \negl.
\end{align*}

\begin{lemma}
     Let $G$ be a weak $\epsilon$-pseudorandom $(\mu,\ell)$-$\botPRG$. For every $S \subseteq [\secpar]$ such that $|S|>\secpar/2$  and for all QPT $\adv$ the following holds:
    \begin{align*}
         \left|\Pr\left[(y, K_{bad}) \gets H(S): \adv(y,S,K_{bad})=1\right] - \Pr\left[\begin{matrix}
             (y,K_{bad})\gets H(S) \\
             y'\xleftarrow{\$}\{0,1\}^{\ell(\secpar)}
             \end{matrix} :A(y',S,K_{bad})=1\right]\right| \leq \negl.
     \end{align*}
     for some negligible function.
     \label{lem:mcpr-claim2.5}
\end{lemma}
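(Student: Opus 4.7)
\textbf{Proof Proposal for \Cref{lem:mcpr-claim2.5}.}
The plan is to invoke a quantum version of Yao's (Levin's) XOR lemma to conclude that the $\bot\bigoplus$ of $|S|>\secpar/2$ independent samples from a weakly pseudorandom distribution is negligibly close to uniform, even given the auxiliary data $(S,K_{bad})$. First I would simplify the sampling of the good keys: since $|\Klam|/2^\secpar \geq 1-\mu(\secpar)$ with $\mu\in 1/\poly$, the distribution of $k\xleftarrow{\$}\Klam$ is $\mu$-statistically close to $k\xleftarrow{\$}\{0,1\}^\secpar$, so $G_\secpar(k)$ remains $(\epsilon+\mu)$-pseudorandom under this new distribution, a quantity still bounded by some inverse polynomial.

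Next I would remove the $\bot$ nuisance by conditioning. For each $k\in\Klam$, the pseudodeterminism guarantee (\Cref{def:bot-pseudodeterminism-part2}) says $\Pr[G_\secpar(k)=\bot]\leq\negl$; by a union bound over the $|S|\leq \secpar$ good indices, the probability that any good evaluation yields $\bot$ is still negligible. Conditioned on this overwhelming event, $\bot\bigoplus$ agrees with ordinary bitwise XOR, and removing the conditioning only incurs a negligible additive loss in the distinguishing advantage.

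With the setup cleaned up, the heart of the proof is to show that $\bigoplus_{i\in S} G_\secpar(k_i)$ with $k_i$ i.i.d.\ uniform from $\Klam$ (and independent of $K_{bad}$) is negligibly indistinguishable from uniform on $\{0,1\}^{\ell(\secpar)}$. I would apply the quantum XOR lemma in its non-uniform form: if each $G_\secpar(k)$ is $\delta$-indistinguishable from uniform by QPT adversaries with polynomial quantum advice, then the XOR of $n$ independent copies is $(2\delta)^n$-indistinguishable. Instantiating with $\delta=\epsilon+\mu+\negl\in 1/\poly$ and $n=|S|>\secpar/2$, the resulting advantage is at most $(2/\poly)^{\secpar/2}$, which is negligible in $\secpar$. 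The auxiliary inputs $S$ and $K_{bad}$ can be absorbed into the non-uniform advice of the reduction, since they are independent of the good keys and their outputs; alternatively, the reduction samples $K_{bad}$ itself after receiving $S$.

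The main obstacle will be pinning down the precise form of the quantum XOR lemma we rely on so that it handles (i) non-uniform quantum advice encoding $(S,K_{bad})$, and (ii) the slightly perturbed source distribution (uniform over $\Klam$ rather than over $\{0,1\}^\secpar$). Both issues are routine in the classical setting but require some care to cite or re-prove cleanly in the quantum regime; the cleanest route is probably to apply the XOR lemma to the slightly-shifted source $k\xleftarrow{\$}\Klam$ directly and push the $\mu$-statistical slack through the final union-bound calculation.
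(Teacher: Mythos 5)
Your approach is essentially the same as the paper's: you observe that $G_\secpar(k)$ for $k\xleftarrow{\$}\Klam$ remains weakly pseudorandom (the paper proves this as a separate lemma via a reverse-triangle-inequality argument, while you use statistical closeness of $\Klam$ to $\{0,1\}^\secpar$; both give an inverse-polynomial bound), you note that the $\bot$ events are negligibly rare for good keys, you amplify via a non-uniform quantum XOR lemma, and you absorb $(S,K_{bad})$ into the reduction's advice — exactly the averaging-plus-advice step the paper uses to reduce this lemma to the $K_{bad}$-free version.

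Two small remarks. First, the quoted bound of $(2\delta)^n$ is not quite the right form for the (Levin-style) XOR lemma the paper proves — the amplification theorem there gives $\delta^t + \epsilon'$ with a circuit-size and advice-size tradeoff depending on $\epsilon'$ — but this does not affect the conclusion that the advantage becomes negligible. Second, your suggested alternative to advice-hardcoding, namely that "the reduction samples $K_{bad}$ itself after receiving $S$," does not work: $K_{bad}$ consists of keys drawn from $\Klam^c$, which is not an efficiently samplable set. The averaging argument followed by fixing $(S,K'_{bad})$ as a (classical, encodable as quantum) advice string — which is what the paper does and what your primary route proposes — is the correct way to handle this.
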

\begin{proof}
     Let $G$ be a weak $\epsilon$-pseudorandom $(\mu,\ell)$-$\botPRG$. Let $S\subseteq [\secpar]$ such that $t:=|S|>\secpar/2$. Consider the following hybrids,
    \begin{itemize}
        \item $H_0 := \left|\Pr\left[(y, K_{bad}) \gets H(S): \adv(y,S,K_{bad})=1\right] - \Pr\left[\begin{matrix}
             (y,K_{bad})\gets H(S) \\
             y'\xleftarrow{\$}\{0,1\}^{\ell(\secpar)}
             \end{matrix} :\adv(y',S,K_{bad})=1\right]\right| $
        \item $H_1 := \left|\Pr\left[(y,K_{bad})\gets H(S):\adv(y)=1\right] - \Pr\left[y\xleftarrow{\$}\{0,1\}^{\ell(\secpar)} : \adv(y)=1\right]\right|$
    \end{itemize}
    By \cref{lem:mcpr-claim2}, $H_1\leq \negl$ for some negligible function. Suppose for the sake of contradiction the distinguishing advantage is non-negligible in $H_0$. Since $(S,K_{bad})$ are independent of $y$ there exist $(K'_{bad},S)$ we can rewrite $H_0$ in the following way,
    \begin{align*}
        \left|\EE_{S,K_{bad}}\left[\Pr\left[\begin{matrix}
        \forall i \in S, \ k_i \xleftarrow{\$} \Klam \\
        \forall i \in S, \ y_i\gets G_\secpar(k_i) \\ 
        y = (\bot\bigoplus)_{i\in S} y_i 
        \end{matrix}: \adv(y,S,K_{bad})=1\right] - \Pr\left[\begin{matrix}
             (y,K_{bad})\gets H(S) \\
             y'\xleftarrow{\$}\{0,1\}^{\ell(\secpar)}
             \end{matrix} :\adv(y',S,K_{bad})=1\right]\right]\right|.
    \end{align*}
    By our assumption, this is non-negligible, then there exist $S,K'_{bad}$ such that,
    \begin{align*}
        \Pr\left[\begin{matrix}
        \forall i \in S, \ k_i \xleftarrow{\$} \Klam \\
        \forall i \in S, \ y_i\gets G_\secpar(k_i) \\ 
        y = (\bot\bigoplus)_{i\in S} y_i 
        \end{matrix}: \adv(y,S,K'_{bad})=1\right] - \Pr\left[\begin{matrix}
             (y,K_{bad})\gets H(S) \\
             y'\xleftarrow{\$}\{0,1\}^{\ell(\secpar)}
             \end{matrix} :\adv(y',S,K'_{bad})=1\right] > \frac{1}{v(\secpar)},
    \end{align*}
    for some polynomial $v$. Hence, fixing the string $S, K'_{bad}$ given to $\adv$ the distinguishing advantage is,
    \begin{align*}
        \left|\Pr\left[(y, K_{bad}) \gets H(S): \adv(y,S,K'_{bad})=1\right] - \Pr\left[\begin{matrix}
             (y,K_{bad})\gets H(S) \\
             y'\xleftarrow{\$}\{0,1\}^{\ell(\secpar)}
             \end{matrix} :\adv(y',S,K'_{bad})=1\right]\right| > \frac{1}{v(\secpar)}
    \end{align*}
    Since $S,K'_{bad}$, is a fixed string it can be considered as a classical advice (which can be encoded as a quantum advice) \lnote{check if necessary} for $\adv$. Therefore,
    \begin{align*}
        \left|\Pr\left[(y, K_{bad}) \gets H(S): \adv(y)=1\right] - \Pr\left[\begin{matrix}
             (y,K_{bad})\gets H(S) \\
             y'\xleftarrow{\$}\{0,1\}^{\ell(\secpar)}
             \end{matrix} :\adv(y')=1\right]\right| > \frac{1}{v(\secpar)}
    \end{align*}
    By the above, $H_1$ is non-negligible, in contradiction to \cref{lem:mcpr-claim2}.

     \onote{I don't understand how Lemma 15 shows lemma 14. Is that some sort of data processing inequality? It isn't clear. The main problem is the following: Recall that in order to use the data processing inequality in the computational setting, one has to use an efficient "processor". In this case, even though $K_{bad}$ is independent of y, it is not efficiently generatable. Perhaps this could be resolved by using a classical advice? In this case, the advice could be a fixed bad set of keys.}\amit{My initial plan (where I tried to aim for a uniform reduction) was to show that for every fixed bad key $k_{bad}$, the adversary has negligible distinguishing advantage in the security game given in lemma 15, where along with the challenge $y$, we also give the fixed string $k_{bad}$ in both the cases. The proof should follows from Lemma 15, since we are only giving a fixed key, but now I am thinking this proof is itself non-uniform. 
    So I guess the simplest argument to say the following: 
    Suppose for the the sake of contradiction the distinguishing advantage is non-negligible in \Cref{lem:mcpr-claim2.5}, then there exists a fixed bad key $k_{bad}$ such that the adversary has at least the same distinguishing probability in the security game in \Cref{lem:mcpr-claim2} up to replacing $K_{bad}$ with the fixed string $k_{bad}$. Next we can use $k_{bad}$ as a classical advice encoded as a quantum advice to break \Cref{lem:mcpr-claim2}, which is a contradiction. Is that okay?}
    \qed
\end{proof}

\begin{lemma}
    Let $G$ be a weak $\epsilon$-pseudorandom $(\mu,\ell)$-$\botPRG$, then for any $t>\secpar/2$ and for all QPT $\adv$:
     \begin{align*}
         \left|\Pr\left[\begin{matrix}
             \forall i \in [t] \ k_i\xleftarrow{\$} \Klam \\
             \forall i \in [t] \ y_i\gets G_\secpar(k_i) \\
             y = (\bot\bigoplus)^t_{i=1}y_i
         \end{matrix}:\adv(y)=1\right] - \Pr\left[y\xleftarrow{\$}\{0,1\}^{\ell(\secpar)} : \adv(y)=1\right]\right| \leq \negl.
     \end{align*}
     For some $\negl$ and where $\Klam$ is the set of keys from \cref{def:bot-PRG}.
     \label{lem:mcpr-claim2}
\end{lemma}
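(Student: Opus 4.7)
The plan is to view this as an instance of Yao's XOR lemma (Levin's version) adapted to QPT non-uniform distinguishers, applied to the source $D_\secpar := G_\secpar(K)$ with $K \xleftarrow{\$} \Klam$. Roughly, a single sample from $D_\secpar$ is only $(\epsilon+\mu)$-indistinguishable from uniform, but the XOR of $t > \secpar/2$ independent samples amplifies this to negligible indistinguishability.

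First I would strip the $\bot$'s away. Since each $k_i \in \Klam$, part~\ref{def:bot-pseudodeterminism-part2} of \cref{def:bot-PRG} gives $\Pr[G_\secpar(k_i) = \bot] \leq \negl$, and a union bound over $t \leq \secpar$ samples shows that with probability $1 - \secpar \cdot \negl$ we have $y_i \neq \bot$ for every $i$. Conditioned on this event, $(\bot\bigoplus)_{i=1}^t$ coincides with ordinary bitwise XOR, so it suffices to compare the distribution of $\bigoplus_{i=1}^t G_\secpar(k_i)$ with $k_i \xleftarrow{\$} \Klam$ against the uniform distribution on $\{0,1\}^{\ell(\secpar)}$, losing only a negligible additive term.

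Next I would bound the single-sample advantage. Let $D_\secpar$ denote the distribution of $G_\secpar(K)$ with $K \xleftarrow{\$} \Klam$. Using the weak $\epsilon$-pseudorandomness of $G$ together with $\Pr_{K \xleftarrow{\$} \{0,1\}^\secpar}[K \in \Klam] \geq 1 - \mu(\secpar)$, a standard hybrid shows that $D_\secpar$ is $\delta(\secpar)$-computationally indistinguishable from the uniform distribution on $\{0,1\}^{\ell(\secpar)}$, where $\delta(\secpar) := \epsilon(\secpar) + \mu(\secpar) = O(\secpar^{-c})$ for some $c > 1$. In particular $2\delta(\secpar) < 1/2$ for all sufficiently large $\secpar$.

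Finally I would invoke a quantum version of Yao's XOR lemma: if a distribution $D$ on $\{0,1\}^\ell$ is $\delta$-computationally indistinguishable from uniform against non-uniform QPT adversaries, then the XOR $\bigoplus_{i=1}^t D$ of $t$ independent samples is $\bigl((2\delta)^t + \negl\bigr)$-computationally indistinguishable from uniform. Plugging in $\delta = \delta(\secpar)$ and $t > \secpar/2$ yields distinguishing advantage at most $\secpar^{-c\secpar/2}$, which is negligible. Combining with the negligible loss from the $\bot$-stripping step gives the claim.

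The main obstacle is making the quantum XOR lemma rigorous in the non-uniform setting. The classical proof (Levin, or Goldreich--Impagliazzo--Levin--Luby--Venkatesan) reduces a distinguisher for $D^{\oplus t}$ to a single-sample distinguisher by sampling the other $t-1$ positions itself and invoking a hard-core/isolation argument. Carrying this to QPT with quantum advice requires some care, because the advice state cannot in general be copied; however, the reduction only evaluates the adversary once per hybrid, so a single copy of the advice suffices and the classical argument transfers essentially verbatim. A slightly subtle point worth handling explicitly is that the reduction needs to sample efficiently from $D_\secpar$, which is immediate: run $G_\secpar$ on a random $\secpar$-bit seed, and abort (resampling) if the output is $\bot$, which occurs only on a $\mu+\negl$ fraction of seeds and hence adds only a negligible simulation error.
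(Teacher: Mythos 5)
Your high-level route matches the paper's: strip $\bot$'s using pseudodeterminism, bound the single-sample advantage against uniform by $O(\mu+\epsilon)$, and amplify via a Yao/Levin-style XOR lemma over $t>\secpar/2$ copies. The paper's Lemma~\ref{lem:mcpr-claim1} plays the role of your second step, and its Theorem~\ref{thm:Xor-Amp} (built from \Cref{thm:XOR-lemma-for-correlations,thm:ALY-equivalence-unpredictability-pseudorandomness,thm:non-uniform-XOR-amplification-next-bit-unpredictabiity}, following~\cite{GNW11}) plays the role of your third. However, the two points you flag as ``subtle but handleable'' are in fact the load-bearing difficulties, and both of your proposed resolutions are incorrect.

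First, the claim that ``the reduction only evaluates the adversary once per hybrid, so a single copy of the advice suffices'' does not hold. The straightforward hybrid that evaluates the adversary once per step only gives an additive $t\cdot\delta$ bound, which is useless here since $\delta$ is inverse-polynomial. To get the multiplicative $\delta^{t}$ bound one needs an isolation/hard-core argument in the style of Levin/GNW, whose reduction invokes the distinguisher polynomially many times (e.g.\ to estimate acceptance probabilities). With a quantum advice state this forces the reduction to consume one fresh copy of $\rho$ per invocation, since the advice cannot generically be reused after measurement. The paper addresses exactly this in the proof of \Cref{lemma:isolation-lemma}: the blow-up in the number of advice copies is tracked alongside the blow-up in circuit size, which is why the statement of \Cref{thm:XOR-lemma-for-correlations} is parameterized by an advice-size function $r(\cdot)$ as well as a circuit-size function $s(\cdot)$.

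Second, the resampling trick does not sample from $D_\secpar$. Resampling until $G_\secpar(K)\neq\bot$ with $K$ uniform over $\{0,1\}^\secpar$ produces the distribution of $G_\secpar(K)$ conditioned on non-$\bot$, which still has $\Theta(\mu)$ total-variation weight on outputs coming from keys outside $\Klam$ (a bad key may abort with probability, say, $1/2$ and otherwise output a fixed $y$, so rejection sampling does not filter it out). Hence each simulated sample is off from $D_\secpar$ by $\Theta(\mu)$, not $\negl$, and filling in $t-1\approx\secpar/2$ positions accumulates a $\Theta(t\mu)=\Theta(\secpar^{1-c})$ error, which is polynomially small but not negligible; the final bound would not give the lemma. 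The paper sidesteps this entirely: \Cref{thm:Xor-Amp} is deliberately stated for ensembles that are ``not necessarily efficiently samplable,'' and the needed samples from $D_\secpar$ are supplied as non-uniform advice rather than sampled on the fly. So the samplability concern you raise is real, but the fix is to lean on non-uniformity (as the paper does), not rejection sampling.
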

\begin{proof}   
    Let $G$ be a weak $\epsilon$-pseudorandom $(\mu,\ell)$-$\botPRG$ and let $t>\secpar/2$. By \cref{lem:mcpr-claim1},
    \begin{align*}
        \left|  \Pr_{k\xleftarrow{\$} \Klam}\left[ y\gets G_\secpar(k): \adv(y)=1\right]-\Pr_{y\xleftarrow{\$} \{0,1\}^{\ell(\secpar)}}[\adv(y)=1]\right|\leq 2(\mu+\epsilon).
    \end{align*}
    We move to a hybrid where all $y_i$ are non-$\bot$. By \cref{def:bot-pseudodeterminism-part2} this hybrid is statistically indistinguishable from the previous one. Now $\bot\bigoplus$ can be switched with $\bigoplus$. By \cref{thm:Xor-Amp} for $\delta = 2(\mu+\epsilon)$ and $t(n) = n/2$,
    \begin{align*}
        \left|\Pr\left[\begin{matrix}
             \forall i \in [t] \ k_i\xleftarrow{\$} \Klam \\
             \forall i \in [t] \ y_i\gets G_\secpar(k_i) \\
             y = \bigoplus^t_{i=1}y_i
         \end{matrix}:\adv(y)=1\right] - \Pr\left[y\xleftarrow{\$}\{0,1\}^{\ell(\secpar)} : \adv(y)=1\right]\right| \leq \negl.
    \end{align*}
    Hence for $t>\secpar/2$,
    \begin{align*}
         \left|\Pr\left[\begin{matrix}
             \forall i \in [t] \ k_i\xleftarrow{\$} \Klam \\
             \forall i \in [t] \ y_i\gets G_\secpar(k_i) \\
             y = (\bot\bigoplus)^t_{i=1}y_i
         \end{matrix}:\adv(y)=1\right] - \Pr\left[y\xleftarrow{\$}\{0,1\}^{\ell(\secpar)} : \adv(y)=1\right]\right| \leq \negl.
     \end{align*}
     \qed
\end{proof}
\begin{lemma}
     Let $G$ be a weak $\epsilon$-pseudorandom $(\mu,\ell)$-$\botPRG$, then for all QPT $\adv$:
     \begin{align*}
        \left|  \Pr_{k\xleftarrow{\$} \Klam}\left[ y\gets G_\secpar(k): \adv(y)=1\right]-\Pr_{y\xleftarrow{\$} \{0,1\}^{\ell(\secpar)}}[\adv(y)=1]\right|\leq 2(\mu+\epsilon).
    \end{align*}
     Where $\Klam$ is the set of keys from \cref{def:bot-PRG}.
     \label{lem:mcpr-claim1}
\end{lemma}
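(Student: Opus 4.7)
The plan is to go via a short triangle inequality that splits the uniform distribution on $\{0,1\}^\secpar$ into the good part $\Klam$ and its bad complement $\Klam^c$, then absorb the bad part using the pseudodeterminism bound $\mu$ and absorb the good-vs-uniform gap using the weak $\epsilon$-pseudorandomness of $G$.

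Concretely, let $p \coloneqq \Pr_{k\xleftarrow{\$}\{0,1\}^\secpar}[k\in\Klam]$, so $p \geq 1-\mu$ by \cref{def:bot-pseudodeterminism-part1}. Introduce the shorthand
\[
A \coloneqq \Pr_{k\xleftarrow{\$}\{0,1\}^\secpar}[\adv(G_\secpar(k))=1], \quad
B \coloneqq \Pr_{k\xleftarrow{\$}\Klam}[\adv(G_\secpar(k))=1], \quad
D \coloneqq \Pr_{y\xleftarrow{\$}\{0,1\}^{\ell(\secpar)}}[\adv(y)=1],
\]
and let $C$ denote the analogous quantity with $k$ drawn uniformly from $\Klam^c$. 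By the law of total probability, $A = p\cdot B + (1-p)\cdot C$, hence
\[
|A-B| = (1-p)\cdot|B-C| \leq 1-p \leq \mu.
\]
Meanwhile, the weak $\epsilon$-pseudorandomness of $G$ (\cref{def:pdprg_pseudorandomness} applied to $\botPRG$'s, which is directly inherited since any $\botPRG$ is in particular a weak $\PDPRG$ when $\bot$ is treated as a fixed symbol, and in any case this is the pseudorandomness assumption of the hypothesis) gives $|A-D|\leq \epsilon$.

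Combining the two bounds via the triangle inequality yields
\[
|B-D| \leq |B-A| + |A-D| \leq \mu + \epsilon,
\]
which is certainly at most $2(\mu+\epsilon)$ and therefore establishes the claimed bound. I do not foresee a real obstacle here: the statement is essentially a change-of-measure argument, and the slack between the tight bound $\mu+\epsilon$ and the claimed $2(\mu+\epsilon)$ simply reflects the authors' preference for a loose but clean constant. The only subtlety worth double-checking is that the pseudorandomness assumption in the hypothesis refers to $k$ drawn uniformly from all of $\{0,1\}^\secpar$ (not only from $\Klam$); this is indeed the convention used throughout the $\botPRG$ definitions, so the reduction goes through without modification.
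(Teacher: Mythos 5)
Your proof is correct and uses the same core idea as the paper—decompose the uniform-key probability as $A = p\cdot B + (1-p)\cdot C$, use weak $\epsilon$-pseudorandomness to bound $|A-D|$, and use $1-p\le\mu$ to absorb the bad-key contribution. The only difference is in the algebra: the paper applies the reverse triangle inequality to $|pB+(1-p)C-D|$ and then invokes $p\ge 1/2$ (valid for large enough $\secpar$) to land on the bound $2(\mu+\epsilon)$, whereas you first bound $|A-B|=(1-p)|C-B|\le\mu$ and then apply the forward triangle inequality, giving the tighter $\mu+\epsilon$ without needing the asymptotic $p\ge 1/2$ step; your route is cleaner, and the factor-of-$2$ slack in the stated lemma is, as you surmise, just looseness.
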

\begin{proof} 
     By the pseudorandomness of $G$,
     \begin{align}
        &\left|  \Pr_{k\gets \{0,1\}^\secpar}\left[y\gets G_\secpar(k):\adv(y)=1\right]-\Pr_{y\gets \{0,1\}^{\ell(\secpar)}}[\adv(y)=1]\right|\leq \epsilon(\secpar). \\
        \label{eq:mcprclaim1proof}
    \end{align}
    Then we rewrite the first term,
    \begin{align*}
        &\Pr_{k\gets \{0,1\}^\secpar}\left[y\gets G_\secpar(k):\adv(y)=1\right] \\
        & = \Pr\left[k\in \Klam\right]\cdot\Pr_{k\xleftarrow{\$} \Klam}\left[y\gets G_\secpar(k):\adv(y)=1 \right] \\ 
        &+ \Pr\left[k\in \Klam^c\right]\cdot\Pr_{k\xleftarrow{\$} \Klam^c}\left[y\gets G_\secpar(k):\adv(y)=1\right].
    \end{align*}
    where $\Klam^c = \{0,1\}^\secpar \setminus \Klam$.
    Then by the reverse triangle inequality of \cref{eq:mcprclaim1proof},
    \begin{align*}
         &\epsilon(\secpar) \geq \bigg|\Pr\left[k\in \Klam\right]\cdot\Pr_{k\xleftarrow{\$} \Klam}\left[y\gets G_\secpar(k):\adv(y)=1 \right] \\
         &+ \Pr\left[k\in \Klam^c\right]\cdot\Pr_{k\xleftarrow{\$} \Klam^c}\left[y\gets G_\secpar(k):\adv(y)=1\right] - \Pr_{y\gets \{0,1\}^{\ell(\secpar)}}[\adv(y)=1]\bigg| \\ 
         &\geq \Pr\left[k\in \Klam\right]\left|\Pr_{k\xleftarrow{\$} \Klam}\left[y\gets G_\secpar(k):\adv(y)=1 \right]- \Pr_{y\gets \{0,1\}^{\ell(\secpar)}}[\adv(y)=1]\right| \\
         &- \Pr[k\in \Klam^c]\left|\Pr_{k\xleftarrow{\$} \Klam^c}\left[y\gets G_\secpar(k):\adv(y)=1\right]- \Pr_{y\gets \{0,1\}^{\ell(\secpar)}}[\adv(y)=1]\right|
    \end{align*}
     By \cref{def:bot-pseudodeterminism-part1} of the pseudodeterminism of $G$ for sufficiently large $\secpar$,
     \begin{align*}
         \epsilon(\secpar) \geq 0.5 \left|\Pr_{k\xleftarrow{\$} \Klam}\left[y\gets G_\secpar(k):\adv(y)=1 \right]- \Pr_{y\gets \{0,1\}^{\ell(\secpar)}}[\adv(y)=1]\right| - \mu\cdot 1 
     \end{align*}
     Hence,
     \begin{align*}
         \left|\Pr_{k\xleftarrow{\$} \Klam}\left[y\gets G_\secpar(k):\adv(y)=1 \right]- \Pr_{y\gets \{0,1\}^{\ell(\secpar)}}[\adv(y)=1]\right| \leq 2(\mu + \epsilon).
     \end{align*}
     \qed
\end{proof}

\subsection{XOR Amplification of Pseudorandomness}
\onote{Change title heading to something more concise.}\anote{Is it better?}
In the entirety of this section, we use the term pseudorandomness to denote the indistinguishability of a distribution from the uniformly random distribution. For any quantum algorithm $A$ with quantum advice $\sigma$, we denote a circuit implementation of it by a tuple $(C,\rho)$ where $C$ is a quantum circuit (with classical) and $\rho$ is a quantum state such that on input $x$, $C(\rho,x)$ produces the same distribution as $A$ on input $x$ with advice $\sigma$. All the definitions in this section that are defined concerning an algorithm with quantum advice can be naturally extended to circuit implementations, and vice versa.
The main result in this section which will be used in the proof of \cref{lem:mcpr-claim2} is as follows.

\begin{theorem}
    Let $\Xx$ be a probabilistic ensemble on $\{0,1\}^{\ell(\secpar)}$ which is not necessarily efficiently samplable. Suppose  there exists a  polynomial function $p(\cdot)$ such that for every QPT algorithm $A$ with quantum advice, \[\left|\Pr[y\gets X_\secpar: A(y)=1]-\Pr[y\xleftarrow{\$}\{0,1\}^{\ell(\secpar)}: A(y)=1]\right|\leq \delta ,\]
where $\delta \equiv 1-\frac{1}{p(\secpar)}$.
Let $t(\secpar)\equiv p(\secpar)\cdot\secpar$, and let $\Xx^t\equiv \{X^t_\secpar\}_\secpar$, 
where $X^t_\secpar$ is the $t(\secpar)$-fold independent copies of $X_\secpar$. Then, for every QPT  algorithm $B$ with quantum advice, there exists a negligible function $\negl$ such that 
\[\left|\Pr[y\gets  X^t_\secpar: B(y)=1]-\Pr[y\xleftarrow{\$}\{0,1\}^{\ell(\secpar)}: B(y)=1]\right|\leq \negl.\]
\label{thm:Xor-Amp}
\end{theorem}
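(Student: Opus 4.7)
The statement is the non-uniform quantum analog of Yao's XOR lemma, which amplifies the weak pseudorandomness gap $\delta = 1-1/p(\secpar)$ into a negligible one by XORing $t = p(\secpar)\cdot\secpar$ i.i.d.\ samples. I will argue by contrapositive: assume some QPT distinguisher $B$ with quantum advice has advantage at least $1/q(\secpar)$ against $\Xx^t$ for some polynomial $q$ on infinitely many $\secpar$; I will build a QPT $A$ with quantum advice whose advantage against $\Xx$ strictly exceeds $\delta$ on the same set, contradicting the hypothesis.

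The key technical ingredient is the quantum non-uniform Yao XOR lemma: if $X_\secpar$ is $\delta$-indistinguishable from uniform against polynomial-size QPT adversaries with quantum advice, then for every inverse-polynomial $\epsilon$, the XOR $X^{\oplus t}_\secpar$ is $(\delta^t + \epsilon)$-indistinguishable from uniform against QPT adversaries with quantum advice and running time $\poly(\secpar,1/\epsilon)$. In our setting $\delta^t = (1 - 1/p)^{p\secpar} \le e^{-\secpar}$ is negligible, so instantiating the lemma with $\epsilon = 1/(2q(\secpar))$ yields indistinguishability up to $e^{-\secpar} + 1/(2q(\secpar)) < 1/q(\secpar)$ for sufficiently large $\secpar$, directly contradicting $B$'s assumed advantage. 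The parameter choice $t = p\cdot\secpar$ is calibrated precisely so that $\delta^t$ is driven below any fixed polynomial threshold.

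The quantum non-uniform XOR lemma itself can be established by lifting Levin's classical proof. The reduction attaches $t-1$ auxiliary i.i.d.\ samples from $X_\secpar$ as part of $A$'s classical advice (legitimate in the non-uniform model, since the hypothesis does not require $X_\secpar$ to be efficiently samplable, only that no efficient distinguisher succeeds), then uses an isolation/hybrid argument with a random position to convert the XOR-side gap into a per-sample gap. The main obstacle in the quantum adaptation is that Levin's reduction invokes the alleged distinguisher $B$ polynomially many times, whereas $B$'s quantum advice $\rho$ is a single state; this is resolved by placing $\rho^{\otimes \poly(\secpar)}$ in $A$'s quantum advice, which keeps $A$ within the non-uniform QPT class. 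All other steps — averaging, conditional probabilities, and a Markov-style counting argument — are classical in character and transfer unchanged, since each invocation of $B$ is a one-shot execution on a freshly prepared classical input (no rewinding or mid-computation measurement of $B$ is required). An equivalent route, were the direct Levin argument deemed too delicate, would be to invoke the quantum Impagliazzo hardcore lemma to extract a hardcore component of $X_\secpar$ and then apply a coupling/union-bound showing that the XOR becomes close to uniform once any single summand falls in the hardcore set.
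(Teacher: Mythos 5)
Your proposal takes a genuinely different route from the paper, and the route has a gap. The paper does \emph{not} prove a ``direct XOR lemma for distinguishing distributions.'' Instead, it first converts indistinguishability-from-uniform into \emph{next-bit unpredictability} (a quantum-advice analogue of \cite[Lemma~A.3]{ALY23}, stated here as \Cref{thm:ALY-equivalence-unpredictability-pseudorandomness}), observes that next-bit prediction is exactly an instance of \emph{algorithmic correlation with a (randomized) predicate} (\Cref{prop:correlation-unpredictability}), then applies a quantum-advice adaptation of the Goldreich--Nisan--Wigderson isolation lemma and XOR lemma for such correlations (\Cref{lemma:isolation-lemma,thm:XOR-lemma-for-correlations}), and finally translates back to indistinguishability. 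Your ``key technical ingredient'' skips this bridge and asserts that ``if $X$ is $\delta$-indistinguishable from uniform then $X^{\oplus t}$ is $(\delta^t+\epsilon)$-indistinguishable,'' claiming it follows by ``lifting Levin's classical proof.''

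This is where the argument breaks down. Levin's proof, the GNW isolation lemma, and the hybrid-with-a-random-position technique you invoke are all statements about \emph{predicting a bit-valued (possibly randomized) predicate $P(x)$}, where the correlation $\EE[(-1)^{C(x)}(-1)^{P(x)}]$ multiplies under XOR because $P^{\oplus t}$ is a well-defined bit on the $t$-tuple $(x_1,\dots,x_t)$. In the distinguishing game there is no such predicate on the sample: the distinguisher must guess which world it is in, and there is no per-sample bit to XOR. Concretely, the naive reduction you sketch --- hard-wire $x_2,\dots,x_t\gets X$, feed $x\oplus x_2\oplus\cdots\oplus x_t$ to $B$ --- only shows that $X^{\oplus t}$ is \emph{at least} as hard to distinguish as $X$, because XORing a single uniform string already yields the uniform distribution; it gives $\delta$, not $\delta^t$, and a one-coordinate-at-a-time hybrid collapses for the same reason. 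Worse, the bare ``distinguishing XOR lemma'' is simply false in general: take $X$ uniform over even-parity strings in $\{0,1\}^\ell$; every efficient distinguisher has advantage at most $1/2 = \delta$, yet $X^{\oplus t}$ is uniform over even-parity strings for every $t$, so its distinguishing advantage never decays (the paper's theorem is correspondingly only meaningful when $\delta$ is bounded below $1/2$ so that every next-bit predictor has non-trivial error, which is exactly what the next-bit-unpredictability reformulation detects). Your tensoring of the quantum advice to accommodate repeated invocations of $B$ is correct and matches the paper, and your closing remark about a hardcore-set route gestures in a plausible direction, but as written it would need a dense-model-style theorem for distributions rather than Impagliazzo's hardcore lemma for predicates and a computational rather than purely statistical coupling argument, none of which is developed. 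The missing idea is precisely the detour through next-bit unpredictability.
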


\begin{proof}
We will prove the theorem in three steps. 
\begin{itemize}
\item First, we show in \Cref{thm:XOR-lemma-for-correlations} that algorithmic correlation to predicates (see \Cref{def:algorithmic-correlation}) can be amplified via the XOR operation even when the algorithms are allowed to have quantum advice, by observing a generalization of the proof in~\cite{GNW11}, who proved the same statement but only concerning classical advice. 
\item Second, we use the proof of~\cite[Lemma A.3]{ALY23} up to minor adaption, to show in \Cref{thm:ALY-equivalence-unpredictability-pseudorandomness} that pseudorandomness (i.e., indistinguishability from the uniformly random distribution) of any (arbitrary) distribution is equivalent to the next-bit unpredictability (see \cref{def:next-bit-unpredictability}). 
\item Third, we observe that next-bit unpredictability can be formulated as an algorithmic correlation with respect to a predicate, and therefore can be amplified using the first step, see \Cref{thm:non-uniform-XOR-amplification-next-bit-unpredictabiity}.
\end{itemize}
The proof of the theorem follows immediately by combining the amplification result for next-bit unpredictability from the third step (\Cref{thm:non-uniform-XOR-amplification-next-bit-unpredictabiity}) and the equivalence result from the second step (\Cref{thm:ALY-equivalence-unpredictability-pseudorandomness}).
\end{proof}
\subsubsection{XOR Amplification for Algorithmic Correlation}\label{sec:XOR-amp-correlation}
We define the following definition of algorithmic correlation to predicate.

\begin{definition}[Algorithmic Correlation to Predicates~{\cite{GNW11}}]\label{def:algorithmic-correlation}
    For every $\secpar\in \NN$, Let $P$ be a randomized algorithm/process mapping $\{0,1\}^\secpar$ to $\{0,1\}$, and let $\Xx\equiv\{X_\secpar\}_\secpar$ be a probabilistic ensemble on $\{0,1\}^\secpar$. Then the correlation of a circuit family with quantum advice $\Cc=\{(C_\secpar,\sigma_\secpar)\}_\secpar$ with input space $\{0,1\}^n$, is defined as 
    \[c(\secpar)=\EE_{X_\secpar,C_\secpar,P}\left[(-1)^{C_n(\sigma_n,X_n)}(-1)^{P(X_n)}\right].\]
\end{definition}

In this section, we will generalize the non-uniform XOR amplification result~\cite[Lemma 2]{GNW11} in the setting of quantum advice.
The first step in achieving the generalized result would be the following generalization of the isolation lemma~\cite[Lemma 4]{GNW11}.

\begin{lemma}[Isolation lemma~{(Adapted from~\cite{GNW11})}]\label{lemma:isolation-lemma}
    Let $P_1$, $P_2$ be two predicates, $\ell:\NN\rightarrow\NN$ be a length-function such that $\ell(\secpar)\leq \secpar$. Let $P(x)\equiv P_1(y)\oplus P_2(z)$ where $x=y\|z$ and $y=\ell(|x|)$. Similarly, let $\Xx=\equiv\{X_\secpar\}_\secpar$ be a probabilistic ensemble such that the marginal on the first $\ell(\secpar)$ bits are independent of that of the rest of the bits. Let $\Yy\equiv\{Y_\secpar\}_\secpar$ and $\Zz\equiv\{Y_\secpar\}_\secpar$ be the projection of $\Xx$ onto the first $\ell(\secpar)$ and onto the last $\secpar-\ell(\secpar)$ bits respectively.

    Suppose that $\delta_1$ (respectively, $\delta_2$) is an upper bound on the correlation of families of $\Cc^{s_1,r_1}$ (respectively, $\Cc^{s_2,r_2}$) with $P_1$ over $\Yy$ ($P_2$ over $\Zz$). Then,  for every function $\epsilon:\NN\rightarrow \RR$, the function
    \[\delta(\secpar)\equiv\delta_1(\ell(\secpar))\cdot\delta_2(\secpar-\ell(\secpar)) + \epsilon(\secpar),\]
    is an upper bound for the correlation of $\Cc^{s,r}$) with $P$ over $\Xx$, where 
    \[s(\secpar)\equiv \min\left(\frac{s_1(\ell(n))}{\frac{\poly(\secpar)}{\epsilon(\secpar)}}, s_2(\secpar-\ell(\secpar))-\secpar\right),\]
    and
     \[r(\secpar)\equiv \min\left(\frac{r_1(\ell(n))}{\frac{\poly(\secpar}{\epsilon(\secpar)})}, r_2(\secpar-\ell(\secpar))-\secpar\right).\]
\end{lemma}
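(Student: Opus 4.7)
The plan is to adapt the classical isolation argument of~\cite{GNW11} to the quantum-advice setting. Suppose toward contradiction there exists a circuit family $\Cc = \{(C_\secpar, \sigma_\secpar)\}_\secpar$ of size $s(\secpar)$ with quantum advice of $r(\secpar)$ qubits whose correlation with $P$ over $\Xx$ exceeds $\delta(\secpar) = \delta_1(\ell(\secpar))\cdot\delta_2(\secpar-\ell(\secpar)) + \epsilon(\secpar)$. Writing $x = y\|z$ and using that the marginals of $\Xx$ on $y$ and $z$ are independent (so $\Yy$ and $\Zz$ are independent), the total correlation decomposes as
\[
c(\secpar) \;=\; \EE_{y\gets Y_{\ell(\secpar)}}\!\left[(-1)^{P_1(y)}\cdot g(y)\right],\qquad g(y) \;:=\; \EE_{z,C_\secpar,P_2}\!\left[(-1)^{C_\secpar(\sigma_\secpar,\,y\|z)\,\oplus\,P_2(z)}\right],
\]
so that $|g(y)|$ measures the conditional correlation with $P_2$ when $y$ is fixed.

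First I would split into cases. If, on at least an $\epsilon(\secpar)/2$-fraction of $y$'s, one has $|g(y)| > \delta_2(\secpar-\ell(\secpar))$, then hardwiring such a $y$ into the classical description of the circuit (leaving the quantum advice $\sigma_\secpar$ untouched) yields a family violating the assumed bound $\delta_2$ with advice and size budgets $r_2(\secpar-\ell(\secpar))-\secpar$ and $s_2(\secpar-\ell(\secpar))-\secpar$, accounting for the $\secpar$ bits used to encode~$y$. Otherwise $|g(y)| \leq \delta_2(\secpar-\ell(\secpar))$ for a $(1-\epsilon/2)$-fraction of $y$'s, and I would convert $g(\cdot)$ into an efficient predictor for $P_1$.

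The key step is estimator construction. I would build a new circuit $C'_\secpar$ that, on input $y$, draws $k = \Theta(\poly(\secpar)/\epsilon(\secpar)^2)$ i.i.d.\ samples $z_i\gets Z_{\secpar-\ell(\secpar)}$, runs $C_\secpar$ on each $(y\|z_i)$ using a fresh copy of the advice, computes $b_i = C_\secpar(\sigma_\secpar, y\|z_i)\oplus P_2(z_i)$, and outputs the sign of the empirical mean $\frac{1}{k}\sum_i (-1)^{b_i}$ (with a random tiebreaker). By Hoeffding, this empirical mean approximates $g(y)$ to within $\epsilon(\secpar)/\poly(\secpar)$ except with negligible probability, and a short computation shows that the correlation of $C'_\secpar$ with $P_1$ over $\Yy$ is at least $|c(\secpar)|/\delta_2(\secpar-\ell(\secpar)) - O(\epsilon(\secpar))$, which by assumption exceeds $\delta_1(\ell(\secpar))$. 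The advice used is $k\cdot r(\secpar)$ qubits and the size is $k\cdot s(\secpar) + \poly(\secpar)$, matching the reductions by a factor $\poly(\secpar)/\epsilon(\secpar)$ in the bounds on $s_1,r_1$ stated in the lemma.

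The main obstacle, distinguishing this from the classical argument, is precisely the no-cloning constraint on the quantum advice: each of the $k$ evaluations of $C_\secpar$ in the estimator requires its own independent copy of $\sigma_\secpar$, which is why the advice budget in the conclusion is exactly $r_1$ divided by $\poly(\secpar)/\epsilon(\secpar)$ rather than $r_1$ itself. A subtler point I would need to verify is that $P_2$ can be evaluated inside $C'_\secpar$; if $P_2$ is not efficient one hardwires its truth table for a small random sample of $z$'s (paying an additional $\poly(\secpar)/\epsilon(\secpar)$ factor in the advice), which is still absorbed by the reduction. All remaining pieces---the averaging argument across $y$, the Chernoff bound for estimator accuracy, and the absorption of $O(\epsilon(\secpar))$ slack into the additive $\epsilon(\secpar)$ in $\delta(\secpar)$---are routine.
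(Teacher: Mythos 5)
Your proposal is correct and takes essentially the same approach as the paper: both defer to the GNW11 isolation argument, with the key quantum adaptation being that each of the $\poly(\secpar)/\epsilon(\secpar)$ circuit evaluations in the estimation step consumes a fresh copy of the quantum advice by no-cloning, which is exactly why $r_1$ is scaled down by that factor. The paper's proof is a one-sentence observation that GNW11 carries over with this modification, whereas you spell out the decomposition, case split, and Hoeffding-based estimator in full detail, but the underlying argument is identical.
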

\begin{proof}
    We observe that the proof of~\cite[Lemma 4]{GNW11} also works for the case of quantum advice. This is because, in the reduction, any algorithm that is executed $w$ times can be simulated using $w$ copies of the quantum advice. Therefore, the blowup in the size of the quantum advice is at most the blowup in the circuit size.\anote{I do not know if this is enough.}
    \qed
\end{proof}

Next, we conclude the XOR amplification results for algorithmic correlation using the Isolation lemma (see \Cref{lemma:isolation-lemma}).

\begin{theorem}[{Adapted from~\cite{GNW11}}]\label{thm:XOR-lemma-for-correlations}
For any function $s:\NN\rightarrow\NN$ and $r:\NN\rightarrow\NN$, let $\Cc^{s,r}$ denote the set of all circuits of size at most $s(\secpar)$ with quantum advice of at most $r(\secpar)$ qubits, where $\secpar$ is the input length for the circuit.
   Let $P$ be a randomized algorithm/process mapping $\{0,1\}^n$ to $\{0,1\}$ and $\Xx\equiv\{X_\secpar\}_\secpar$ be a probabilistic ensemble on $\{0,1\}^\secpar$. For every $t:\NN\rightarrow \NN$, and $x_1,\ldots,x_{t(\secpar)}\in \{0,1\}^\secpar$ define
   \[P^{t}(x_1,x_2,\ldots,x_{t(\secpar)})=\bigoplus_{i=1}^{t(\secpar)}P(x_i),\Xx^t\equiv\{X^t_\secpar\}_\secpar,\]
   where $X^t_\secpar$ is the $t(\secpar)$-fold independent copies of $X_\secpar$.

   Then for any functions $s(\cdot),r(\cdot)$, if the correlation (see \cref{def:algorithmic-correlation}) between any quantum circuit with quantum advice in $\Cc^{s,r}$ and $P$ is at most $\delta(\secpar)$, such that $1-\delta(\secpar)$ is at least $\frac{1}{p(\secpar)}$ where $p(\secpar)\in\poly$, then for every function $\epsilon:\NN\rightarrow \NN$, the correlation between any quantum circuit with quantum advice in $\Cc^{s',r'}$ and $P^t$ is at most $\delta^t(\secpar)$, where 
   \[s'(\secpar\cdot t(\secpar))\equiv \poly[\frac{\epsilon(\secpar)}{\secpar}]\cdot s(\secpar)-\poly[\secpar\cdot t(\secpar)],\]
   \[r'(\secpar\cdot t(\secpar))\equiv \poly[\frac{\epsilon(\secpar)}{\secpar}]\cdot r(\secpar)-\poly[\secpar\cdot t(\secpar)],\]
   and
   \[ \delta^t(\secpar)\equiv  \delta(\secpar)^{t(\secpar)}+\epsilon(\secpar). \]
\end{theorem}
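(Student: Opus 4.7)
The plan is to prove the theorem by induction on $t(\secpar)$, applying the Isolation Lemma (\cref{lemma:isolation-lemma}) at each step. This is a direct adaptation of the proof of the XOR Lemma for non-uniform classical circuits due to Goldreich, Nisan, and Wigderson. The essential observation is that the Isolation Lemma has already been stated in the paper for quantum advice, so once that lemma is in hand, the structure of the XOR-amplification proof carries over verbatim to our setting. The hypothesis ``$1-\delta(\secpar) \geq 1/p(\secpar)$'' is inherited from the assumption of the isolation lemma needed to make the correlation $\delta_1 \cdot \delta_2$ meaningfully smaller than $\delta_2$.

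For the base case $t=1$, the statement holds by the hypothesis of the theorem. For the inductive step, I would write
\[
P^t(x_1,\ldots,x_t) \;=\; P(x_1) \,\oplus\, P^{t-1}(x_2,\ldots,x_t),
\]
and observe that under $X^t_\secpar$, the first $\secpar$ coordinates are independent of the remaining $(t-1)\secpar$ coordinates, so the hypothesis of \cref{lemma:isolation-lemma} is satisfied with $\ell(\secpar t) = \secpar$, $P_1 = P$, and $P_2 = P^{t-1}$. By the inductive hypothesis the correlation of circuits in a suitable class with $P^{t-1}$ over $X^{t-1}_\secpar$ is at most $\delta^{t-1} + \epsilon'$, and by assumption the correlation with $P$ is at most $\delta$. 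The Isolation Lemma then gives a correlation bound of
\[
\delta \cdot (\delta^{t-1}+\epsilon') + \epsilon'' \;=\; \delta^t + O(\epsilon/t),
\]
on an appropriately shrunken class, where the additive slacks $\epsilon',\epsilon''$ are each chosen to be $\Theta(\epsilon(\secpar)/t(\secpar))$ so that the cumulative additive error after $t-1$ inductive steps is at most $\epsilon(\secpar)$.

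The main obstacle is parameter tracking for the circuit size $s$ and the quantum-advice length $r$. A naive left-deep induction loses a multiplicative factor of $\poly(\secpar t/\epsilon(\secpar))$ per step, which compounds to $\poly(\secpar t/\epsilon(\secpar))^{t(\secpar)}$; this is exponential in $t$ and too large to yield the stated bound $s'(\secpar t) \equiv \poly[\epsilon(\secpar)/\secpar]\cdot s(\secpar)-\poly(\secpar t)$. To avoid this, I would perform the recursion in a balanced fashion: split $P^t$ as $P^{\lfloor t/2 \rfloor} \oplus P^{\lceil t/2 \rceil}$ and apply the Isolation Lemma to two halves of roughly equal length; this brings the recursion depth down to $O(\log t)$ and ensures only a polynomial aggregate loss in circuit and advice sizes, as required. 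The quantum-advice accounting is handled exactly as in the Isolation Lemma itself: since the hypothesized circuits are invoked only $\poly(\secpar/\epsilon(\secpar))$ times in the reduction, the combined advice is bounded by the same factor times $r(\secpar)$, and this extends through the balanced recursion.

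Finally, I would verify that at each inductive step the resulting correlation $\delta^{t-1}+\epsilon'$ still satisfies the hypothesis ``$1-(\delta^{t-1}+\epsilon') \geq 1/p'(\secpar)$'' needed for a further application of the Isolation Lemma; this follows because $\delta(\secpar) \leq 1-1/p(\secpar) < 1$, so $\delta^{t-1}$ stays bounded away from $1$ for any fixed $\secpar$, and the additive slack $\epsilon'$ is negligible in comparison. Combining the per-level parameter losses across the $O(\log t)$ recursion levels then yields the stated bounds on $s'$, $r'$, and $\delta^t$, completing the induction.
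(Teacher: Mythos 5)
Your overall approach — induct on $t$ and apply \cref{lemma:isolation-lemma} once per inductive step — is the same as the paper's, which simply invokes repeated application of the Isolation Lemma ``in the same manner as in~\cite{GNW11}.'' However, your parameter accounting misreads the Isolation Lemma, and the balanced-recursion fix you propose is both unnecessary and actively counterproductive.

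The Isolation Lemma in \cref{lemma:isolation-lemma} is deliberately \emph{asymmetric}: the bound $s(\secpar) \equiv \min\bigl(s_1(\ell(\secpar))/(\poly(\secpar)/\epsilon(\secpar)),\, s_2(\secpar-\ell(\secpar)) - \secpar\bigr)$ divides only $s_1$ (the class for $P_1$, on the first $\ell$ bits) by $\poly(\secpar)/\epsilon(\secpar)$, while $s_2$ suffers merely an additive loss of $\secpar$. The argument in~\cite{GNW11}, which you are meant to reproduce, orients the decomposition so that at each step the \emph{fresh single copy of $P$} sits on the $P_1$ side (the first $\secpar$ bits) while the recursive $P^{t-1}$ sits on the $P_2$ side. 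Thus the multiplicative loss is paid each step against a circuit that starts fresh at size $s$, not against the already-shrunken circuit for $P^{t-1}$. Unrolling $s_t = \min\bigl(s/(\poly(\secpar t)/\epsilon_t),\; s_{t-1}-\secpar t\bigr)$ with $s_1 = s$ and $\epsilon_t = \Theta(\epsilon/t)$, the dominant constraint is the single multiplicative factor at the last step plus the accumulated additive losses, i.e.\ $s_t \approx s\cdot \poly[\epsilon(\secpar)/\secpar] - \poly[\secpar\, t(\secpar)]$, which is exactly the stated $s'$. There is no geometric compounding, so the $\poly(\secpar t/\epsilon)^{t}$ blow-up you worry about simply does not occur in the left-deep recursion.

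The balanced split $P^t = P^{\lceil t/2\rceil}\oplus P^{\lfloor t/2\rfloor}$ makes things worse, not better: both sides are now recursive, so $P_1 = P^{\lceil t/2\rceil}$ is necessarily an already-shrunken instance, and it must eat the multiplicative $\poly/\epsilon$ loss again at every one of the $\Theta(\log t)$ levels. This compounds to a $\poly(\secpar t/\epsilon)^{\Theta(\log t)}$ loss, quasi-polynomial in $\secpar$, which does not fit the stated bound. You should drop the balanced recursion and instead use the left-deep decomposition, always assigning the fresh singleton $P$ to the side of the Isolation Lemma that absorbs the multiplicative loss.
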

\begin{proof}
The proof follows by the repeated use of the isolation lemma in the same manner as in~\cite{GNW11}.
\qed
\end{proof}
\subsubsection{Next-Bit-Unpredictability and Its Equivalence to Pseudorandomness.}
We define the following property of distributions, called next bit unpredictability.
\begin{definition}[Next-bit-unpredictability]\label{def:next-bit-unpredictability}
    Let $\Xx\equiv\{X_\secpar\}_\secpar$ be a probabilistic ensemble on $\{0,1\}^{\ell(\secpar)}$. For every algorithm $A$ with a quantum advice $\rho$ and $i\in [\ell(\secpar)]$, the $i^{th}$ bit unpredictability of $A$ with respect to $\Xx$ is defined as
    $\gprob^{A,\rho}_{X_\secpar,i}\equiv \Pr[y\gets  D(1^\secpar), y'\gets A(y_{0,i-1}): y'=y_{i}  ].$ 
\end{definition}
It is easy to verify that the following holds.
\begin{proposition}\label{prop:correlation-unpredictability}
Let  $\Xx\equiv\{X_\secpar\}_\secpar$ be a probabilistic ensemble on $\{0,1\}^{\ell(\secpar)}$. For every $i\in [\ell(\secpar)]$, let $P_i$ be a randomized process defined on $i$-bits as follows. For every $x\in \{0,1\}^{i-1}$, $P_i$ is conditional distribution on the $i^{th}$ bit of $X_\secpar$ conditioned on the first $i$ bits being $x$. 
    Then the correlation (see \Cref{def:algorithmic-correlation}) between any quantum circuit $C$ with a quantum advice $\rho$, and $P_i$ is the same as $2\cdot \gprob^{C,\rho}_{X_\secpar,i}-1$.  
    Moreover, for every $t$, the correlation of a quantum circuit $C'$ with a quantum advice $\rho'$ with respect to the predicate $P^t$ where $P\equiv P_i$, and where $P^t$ is defined as in \Cref{thm:XOR-lemma-for-correlations}, is the same as $2\cdot \gprob^{C,\rho}_{X^t_\secpar,i}-1$.
\end{proposition}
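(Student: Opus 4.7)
The plan is to unfold the definitions in a routine algebraic manipulation. Starting from the correlation
\[c(\secpar) = \mathbb{E}\left[(-1)^{C(\sigma, X_\secpar)} (-1)^{P_i(X_\secpar)}\right],\]
I apply the elementary identity $(-1)^a (-1)^b = (-1)^{a \oplus b}$ followed by $\mathbb{E}[(-1)^Z] = 2\Pr[Z = 0] - 1$ for a $\{0,1\}$-valued $Z$, obtaining $c(\secpar) = 2 \Pr[C(\sigma, X_\secpar) = P_i(X_\secpar)] - 1$. Since $P_i$ is defined to sample the $i$-th bit of $X_\secpar$ from the true conditional distribution given the first $i-1$ bits, the joint distribution of $((X_\secpar)_{0,i-1}, P_i(X_\secpar))$ is identical to the joint distribution of $((X_\secpar)_{0,i-1}, (X_\secpar)_i)$. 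Thus $\Pr[C(\sigma, X_\secpar) = P_i(X_\secpar)]$ equals the probability that $C$ correctly predicts the $i$-th bit of $X_\secpar$ from its $(i-1)$-bit prefix, which is precisely $\gprob^{C,\rho}_{X_\secpar, i}$, yielding the first claim.

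For the second claim, the same algebraic manipulation applied to $X^t_\secpar$ and $P^t$ gives
\[c'(\secpar) = 2 \Pr\!\left[C'(\sigma', X^t_\secpar) = \bigoplus_{j=1}^{t} P_i\bigl((X^t_\secpar)_j\bigr)\right] - 1,\]
where $(X^t_\secpar)_j$ denotes the $j$-th independent copy. Interpreting $\gprob^{C', \rho'}_{X^t_\secpar, i}$ as the probability that $C'$, given the collection of length-$(i-1)$ prefixes of each of the $t$ copies, outputs the XOR of the $i$-th bits across those copies, the right-hand side is by the same conditional-distribution argument equal to $\gprob^{C', \rho'}_{X^t_\secpar, i}$.

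I do not anticipate any substantive obstacle. The entire proof is a bookkeeping exercise matching two equivalent ways of writing the same probability; the only minor subtlety is reconciling the input length of $C$ and $P_i$ required by the correlation definition with the shorter prefix used in the unpredictability definition, which is handled by noting that $P_i$ depends only on the first $i-1$ coordinates, so extending $C$'s input to the full string (or restricting it to the prefix) does not alter any of the relevant probabilities.
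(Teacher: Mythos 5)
Your proposal is correct and matches the routine unfolding that the paper leaves implicit (``It is easy to verify that the following holds''). You also correctly flag the two notational subtleties that a literal reading of the definitions would obscure: the correlation must be computed over the prefix ensemble, so that $C$ and $P_i$ both act on the first $i-1$ coordinates only; and in the $t$-fold claim, $\gprob^{C',\rho'}_{X^t_\secpar,i}$ must be interpreted as predicting the XOR of the $i$-th bits across the $t$ independent copies (given the $t$ prefixes), which is exactly the quantity the XOR-lemma machinery of \cref{thm:XOR-lemma-for-correlations} bounds.
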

Next, we observe that~\cite[Lemma A.3]{ALY23} can be generalized to arbitrary (not necessarily efficiently samplable) distributions as follows.
\begin{theorem}[Equivalence of Pseudorandomness and next bit unpredictability]\label{thm:ALY-equivalence-unpredictability-pseudorandomness}
   Let $\XX\equiv\{X_\secpar\}$ be a probabilistic ensemble on $\{0,1\}^{\ell(\secpar)}$ which is not necessarily efficiently samplable. Then the following holds.
   \begin{enumerate}
    \item Suppose for every QPT algorithm $A$ with quantum advice, \[\left|\Pr[y\gets X_\secpar: A(y)=1]-\Pr[y\xleftarrow{\$}\{0,1\}^{\ell(\secpar)}: A(y)=1]\right|\leq \delta ,\] then for every $i\in[\ell(\secpar)]$, and QPT algorithm $B$,  \[\Pr[y\gets  D(1^\secpar), y'\gets B(y_{0,i-1}): y'=y_{i}  ]\leq \frac{1}{2}+\delta.\]
    \label{it:pseudorandomness}
    \item Suppose for every $i\in[\ell(\secpar)]$, and QPT algorithm $A$,  \[\Pr[y\gets  D(1^\secpar), y'\gets A(y_{0,i-1}): y'=y_{i}  ]\leq \frac{1}{2}+\delta,\]
    then  for every QPT algorithm $B$, \[\left|\Pr[y\gets D(1^\secpar): B(y)=1]-\Pr[y\xleftarrow{\$}\{0,1\}^{\ell(\secpar)}: B(y)=1]\right|\leq \ell(\secpar)\cdot\delta .\]
    \label{it:next-bit-unpredictability}
   \end{enumerate}
\end{theorem}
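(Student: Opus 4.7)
The plan is to adapt the classical Yao-style equivalence between pseudorandomness and next-bit unpredictability to this non-uniform quantum setting with an arbitrary, possibly non-samplable, distribution $\XX$. Both implications will be established by simple black-box reductions that preserve any quantum advice attached to the adversary, so the generalization beyond~\cite[Lemma A.3]{ALY23} requires no new ingredients beyond observing that the distribution never needs to be sampled inside the reduction.

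For the first implication (pseudorandomness implies next-bit unpredictability), given a QPT next-bit predictor $B$ with quantum advice $\rho$ targeting position $i$, I would construct a distinguisher $A$ that, on input $y\in\{0,1\}^{\ell(\secpar)}$, parses $y$ as $y_{0,i-1}\|y_i\|\cdots$, runs $y'\gets B(y_{0,i-1})$, and outputs $1$ iff $y'=y_i$. When $y\gets X_\secpar$, a direct calculation yields $\Pr[A(y)=1] = \gprob^{B,\rho}_{X_\secpar,i}$; when $y$ is uniform, the $i$-th bit is independent of the prefix and hence $\Pr[A(y)=1]=1/2$. The hypothesized distinguishing bound $\delta$ therefore forces $\gprob^{B,\rho}_{X_\secpar,i} \leq 1/2 + \delta$, and the quantum advice of $B$ is simply reused by $A$.

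For the second implication, I would use a standard hybrid argument. Define the hybrid distribution $H_i$ by sampling $y\gets X_\secpar$, keeping its first $i$ bits, and appending $\ell(\secpar)-i$ uniformly random bits, so that $H_0$ is uniform and $H_{\ell(\secpar)}=X_\secpar$. If some QPT $B$ distinguishes $X_\secpar$ from uniform with advantage strictly exceeding $\ell(\secpar)\cdot\delta$, the triangle inequality produces an index $i\in[\ell(\secpar)]$ such that $B$ separates $H_{i-1}$ from $H_i$ by more than $\delta$ in signed advantage (after flipping $B$'s output bit if necessary). I would then build a predictor $P$ at position $i$ that, on input $z\in\{0,1\}^{i-1}$ (naturally distributed as the $(i-1)$-bit prefix of $X_\secpar$), samples $b\xleftarrow{\$}\{0,1\}$ and $r\xleftarrow{\$}\{0,1\}^{\ell(\secpar)-i}$, runs $c\gets B(z\|b\|r)$, and outputs $b$ if $c=1$ and $1-b$ otherwise. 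Writing $H_{i-1}$ as the equi-probable mixture of $H_i$ and the ``wrong $i$-th bit'' variant yields the standard calculation $\gprob^{P}_{X_\secpar,i} = 1/2 + \big(\Pr[B(H_i)=1] - \Pr[B(H_{i-1})=1]\big) > 1/2 + \delta$, contradicting the hypothesis.

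The main (minor) obstacle is precisely the non-samplability of $\XX$, which in the classical proof usually manifests as the reduction needing fresh samples from the distribution to simulate the distinguishing experiment. Here this is sidestepped because the predictor's own input already supplies a fresh prefix drawn from $X_\secpar$, and only uniform bits are ever generated internally; the quantum advice of $B$ is passed verbatim to both reductions, which are single-query and non-adaptive in $B$, so the entire argument remains black-box in the adversary and its advice state and goes through uniformly in the length parameter.
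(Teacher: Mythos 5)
Your proposal is correct and follows essentially the same route as the paper, which simply invokes the standard Yao-style equivalence (Lemma A.3 of~\cite{ALY23}) noting it carries over with minor adaptation. The key point you rightly isolate—that neither reduction ever needs to sample from the possibly non-samplable ensemble $\XX$, since the predictor's input already supplies the required prefix and only uniform bits are generated internally—is exactly the ``minor adaptation'' the paper has in mind, and your hybrid calculation with $H_0,\ldots,H_{\ell(\secpar)}$ is the standard one.
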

\begin{proof}
    The proof is the same as that of~\cite[Lemma A.3]{ALY23} up to minor adaptation.
    \qed
\end{proof}

\subsubsection{Non-Uniform XOR Amplification for Next-Bit Unpredictability}
Next, we use \Cref{thm:XOR-lemma-for-correlations} on the predicates defined in \Cref{prop:correlation-unpredictability} to get the following XOR amplification result for next-bit unpredictability.
\begin{theorem}[Non-uniform XOR amplification for next-bit unpredictability]\label{thm:non-uniform-XOR-amplification-next-bit-unpredictabiity}

Let $\Xx$ be a probabilistic ensemble on $\{0,1\}^{\ell(\secpar)}$ which is not necessarily efficiently samplable. Suppose there exists a  polynomial function $p(\cdot)$ such that for every $i\in [\ell(\secpar)]$, and for every QPT algorithm $A$ with quantum advice,  \[\Pr[y\gets  X_\secpar, y'\gets A(y_{0,i-1}): y'=y_{i}  ]\leq \frac{1}{2}+\delta(\secpar),\]
where $\delta(\secpar) = \frac{1}{p(\secpar)}$.
Let $t(\secpar)\in \Theta(\secpar)$, and let $\Xx^t\equiv \{X^t_\secpar\}_\secpar$, 
where $X^t_\secpar$ is the $t(\secpar)$-fold independent copies of $X_\secpar$. 
Then, Then, for every QPT  algorithm $B$ with quantum advice, there exists a negligible function $\negl$ such that 
\[\Pr[y\gets  X^t_\secpar, y'\gets B(y_{0,i-1}): y'=y_{i}  ]\leq \frac{1}{2}+\negl.\]
\end{theorem}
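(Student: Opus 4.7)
The plan is to invoke the XOR amplification result for predicate correlations (Theorem~\ref{thm:XOR-lemma-for-correlations}) combined with the correspondence between next-bit unpredictability and predicate correlation (Proposition~\ref{prop:correlation-unpredictability}), sandwiching the desired amplification between two translations.

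First I would fix $i \in [\ell(\secpar)]$ and let $P_i$ be the predicate from Proposition~\ref{prop:correlation-unpredictability} defined over the marginal distribution of $X_\secpar$ on its first $i-1$ bits. By the first part of that proposition, the hypothesis $\gprob \leq 1/2 + 1/p(\secpar)$ translates into the bound: every polynomial-size quantum circuit with polynomial-size quantum advice has correlation with $P_i$ at most $2/p(\secpar)$. Without loss of generality, $2/p(\secpar) < 1$ eventually (otherwise the hypothesis is vacuous), so $1 - 2/p(\secpar) \geq 1/q(\secpar)$ for some polynomial $q$, which is the prerequisite for Theorem~\ref{thm:XOR-lemma-for-correlations}.

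I would then apply Theorem~\ref{thm:XOR-lemma-for-correlations} with this predicate, amplification parameter $t(\secpar) \in \Theta(\secpar)$, and an inverse polynomial $\epsilon(\secpar)$, obtaining that the correlation of any polynomial-size circuit with the XORed predicate $P_i^{t}$ (over $t(\secpar)$ independent samples of $X_\secpar|_{[i-1]}$) is at most $(2/p(\secpar))^{t(\secpar)} + \epsilon(\secpar)$. The first summand is negligible because $2/p(\secpar)$ is bounded strictly away from $1$ eventually and $t = \Theta(\secpar)$. The second summand can be chosen smaller than any inverse polynomial; since the hypothesis is quantified over all polynomial-size base-case adversaries, the $\poly(\epsilon^{-1},\secpar)$ size blow-up in the reduction is absorbed, and the overall correlation bound against polynomial-size adversaries is negligible.

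The final step is a reduction: given any QPT adversary $B$ for the next-bit unpredictability of $X^t_\secpar$ at position $i$, I define $B'$ on input $(z_1, \ldots, z_{t(\secpar)}) \in (\{0,1\}^{i-1})^{t(\secpar)}$ to compute $z = \bigoplus_j z_j$ and return $B(z)$. Sampling $Y^{(1)}, \ldots, Y^{(t)}$ i.i.d.\ from $X_\secpar$ and setting $y := \bigoplus_j Y^{(j)} \sim X^t_\secpar$, one has $y_{0,i-1} = \bigoplus_j Y^{(j)}_{0,i-1}$, and the conditional distribution of $y_i = \bigoplus_j Y^{(j)}_i$ given the tuple of $(i-1)$-bit prefixes matches that of $P_i^{t}$ evaluated on the same tuple (by independence of the samples and the definition of $P_i$). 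Hence $\gprob^{B'}$ for $P_i^{t}$ over $(X_\secpar|_{[i-1]})^{t}$ equals $\gprob^{B}_{X^t_\secpar,i}$, so the correlation of $B'$ with $P_i^{t}$ is exactly $2\gprob^{B}_{X^t_\secpar,i} - 1$. Combining with the previous paragraph yields $\gprob^{B}_{X^t_\secpar,i} \leq 1/2 + \negl(\secpar)$; taking the worst case over the polynomially many $i$ concludes the proof. The main technical obstacle will be the size/advantage trade-off in Theorem~\ref{thm:XOR-lemma-for-correlations}---each shrinking choice of $\epsilon$ inflates the base-case adversary size---but this is handled by the standard bookkeeping just outlined, since the hypothesis is quantified over every polynomial-size adversary.
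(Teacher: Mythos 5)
Your proposal is correct and matches the paper's argument in both structure and ingredients: translate the hypothesis into an algorithmic-correlation bound via Proposition~\ref{prop:correlation-unpredictability}, amplify via Theorem~\ref{thm:XOR-lemma-for-correlations}, and translate back via the reduction $B'(z_1,\dots,z_t)=B(\bigoplus_j z_j)$. The paper packages the final step as a proof by contradiction (fixing an alleged inverse-polynomial advantage $\delta'$ and setting $\epsilon=\delta'/2$), whereas you argue directly by ranging $\epsilon$ over all inverse polynomials, but the size/$\epsilon$ bookkeeping that you correctly flag is the same in either formulation.
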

\begin{proof}
Suppose for the sake of contradiction, there exists a QPT adversary $A$ with a polynomial-size quantum advice $\rho$ and $i\in [\ell(\secpar)]$, such that $\gprob^{A,\rho}_{X^t_\secpar,i}\geq \frac{1}{2}+\delta'(\secpar)$ for some inverse polynomial function $\delta'(\cdot)$. Note that since $A$ runs in polynomial time and the advice state is also on a polynomial number of qubits, there exists polynomial functions $s'(\secpar),r'(\secpar)\in \poly$ such that $A$ with the quantum advice has a circuit implementation $(C',\rho')$ where $C'$ is a circuit of size $s'(\secpar)$ associated with a quantum advice state $\rho'$ of size $r'(\secpar)$.\anote{Self: Update it in the preliminaries, what a circuit implementation of an algorithm with quantum advice means. Also note that we use algorithms and circuits interchangeably since any definition concerning Algorithms in our context, can be naturally extended to the circuit implementations of the algorithms.}
Let $\epsilon(\secpar)\equiv \frac{1}{2\delta'(\secpar)}$.

Let ${s}(\secpar),{r}(\secpar)\in \poly$ be large enough polynomial functions such that,
\[s'(\secpar\cdot t(\secpar))\leq \poly[\frac{\epsilon(\secpar)}{\secpar}]\cdot s(\secpar)-\poly[\secpar\cdot t(\secpar)],\]
   \[r'(\secpar\cdot t(\secpar))\leq \poly[\frac{\epsilon(\secpar)}{\secpar}]\cdot r(\secpar)-\poly[\secpar\cdot t(\secpar)].\]
Since $s'(\secpar),r'(\secpar),t(\secpar),\frac{1}{\epsilon(\secpar)}\in \poly$, such ${s}(\secpar),{r}(\secpar)\in \poly$ exists.

Next note that by the hypothesis in the theorem, every QPT algorithm with a quantum advice, and hence any polynomial size circuit implementation $C,\rho$, satisfies that 
\[\Pr[y\gets  X_\secpar, y'\gets C(\rho,y_{0,i-1}): y'=y_{i}  ]\leq \frac{1}{2}+\delta.\]
Hence, in particular, the above holds for every $(C,\rho)\in \Cc^{s,r}$. 
Therefore by \Cref{prop:correlation-unpredictability}, for every $(C,\rho)\in \Cc^{s,r}$, the correlation of $(C,\rho)$ with respect to the predicate $P\equiv P_i$ is
\[2\Pr[y\gets  X_\secpar, y'\gets C(\rho,y_{0,i-1}): y'=y_{i}  ]-1\leq \delta,\]
 where $P_i$ is as defined in \Cref{prop:correlation-unpredictability}.

 Hence by \Cref{thm:XOR-lemma-for-correlations}, it holds that for every circuit with quantum advice in $\Cc^{s',r'}$, and in particular for $(C,\rho)$, the correlation with respect to $P^t$, is at most 
 \begin{equation}\label{eq:correlation-bound-t-case}
 \delta(\secpar)^t +\epsilon(\secpar)= \delta(\secpar)^{t(\secpar)}+\frac{\delta'}{2}<\delta,
 \end{equation}
 since by the choice of $t(\secpar)$, $\delta(\secpar)^{t(\secpar)}$ is a negligible function.

However, note that by \Cref{prop:correlation-unpredictability}, the correlation of any quantum circuit with a quantum advice $(\tilde{C},\tilde{\rho})$ with respect to the predicate $P^t$ is the same as $2\cdot \gprob^{C,\rho}_{X^t_\secpar,i}$.
Therefore by \Cref{eq:correlation-bound-t-case}, we conclude that
\[2\cdot \gprob^{C,\rho}_{X^t_\secpar,i}-1<\delta,\]
which further implies
\[\gprob^{C,\rho}_{X^t_\secpar,i}< \frac{1}{2}+\delta,\]
which is in contradiction to the assumption made at the very beginning of the proof, that $\gprob^{C,\rho}_{X^t_\secpar,i}\geq \frac{1}{2}+\delta$. 
\qed
\end{proof}

\end{document}